\documentclass{lmcs}

\keywords{term rewriting, termination, weighted path order, semantic path order}

\AtBeginDocument{%
}

\usepackage[T1]{fontenc}

\usepackage{booktabs}
\usepackage{amssymb}
\usepackage{amsmath}
\usepackage{ebproof}
\usepackage{stmaryrd}
\usepackage{xspace}

\newcommand{\NN}{\mathbb{N}}
\newcommand{\ZZ}{\mathbb{Z}}
\renewcommand{\AA}{\mathcal{A}}
\newcommand{\BB}{\mathcal{B}}
\newcommand{\DD}{\mathcal{D}}
\newcommand{\FF}{\mathcal{F}}
\newcommand{\GG}{\mathcal{G}}
\newcommand{\PP}{\mathcal{P}}
\newcommand{\RR}{\mathcal{R}}
\newcommand{\UU}{\mathcal{U}}
\newcommand{\TT}{\mathcal{T}}
\newcommand{\VV}{\mathcal{V}}

\newcommand{\m}[1]{\mathsf{#1}}

\newcommand{\seq}[2][n]{{#2_1},\dots,{#2_{#1}}}

\newcommand{\hole}{\square}
\newcommand{\rt}{\mathrm{root}}
\newcommand{\Var}{{\mathcal{V}\mathsf{ar}}}
\newcommand{\Ce}{\mathcal{C}_\mathcal{E}}
\newcommand{\DP}{\mathsf{DP}}
\newcommand{\idinc}{\m{h}}

\newcommand{\superterm}{\trianglerighteqslant}
\newcommand{\prsuperterm}{\rhd}
\newcommand{\nprsuperterm}{\ntriangleright}
\newcommand{\xto}[1]{\xrightarrow{\smash{#1}}}
\newcommand{\rto}{\xto{\epsilon}}

\newcommand{\lex}{\mathsf{lex}}

\newcommand{\lexprod}{\otimes}

\newcommand{\spo}{\mathsf{S}}
\newcommand{\mspo}{\mathsf{MS}}
\newcommand{\wpo}{\mathsf{W}}
\newcommand{\gwpo}{\mathsf{G}}
\newcommand{\mgwpo}{\mathsf{MG}}
\newcommand{\kbo}{\mathsf{K}}

\newcommand{\Dershowitz}{\mathsf{D}}
\newcommand{\SN}{\mathsf{SN}}

\newcommand{\WPO}[1]{\hyperref[wpo:#1]{#1}}
\newcommand{\GWPO}[1]{\hyperref[gwpo:#1]{#1}}
\newcommand{\SPO}[1]{\hyperref[spo:#1]{#1}}
\newcommand{\DER}[1]{\hyperref[dershowitz:#1]{#1}}
\newcommand{\KBOGWPO}[1]{\hyperref[kbogwpo:#1]{#1}}
\newcommand{\KBOSPO}[1]{\hyperref[kbospo:#1]{#1}}
\newcommand{\WPORP}[1]{\hyperref[wpoRP:#1]{#1}}
\newcommand{\GWPORP}[1]{\hyperref[gwpoRP:#1]{#1}}
\newcommand{\SPORP}[1]{\hyperref[spoRP:#1]{#1}}

\newcommand{\LinPol}{\texttt{L}}
\newcommand{\MaxPlus}{\texttt{M}}
\newcommand{\Finite}{\texttt{F}_2}
\newcommand{\Trivial}{\texttt{T}}
\newcommand{\Prec}{\texttt{P}}
\newcommand{\GWPOtt}{\texttt{GWPO}}
\newcommand{\SPOtt}{\texttt{SPO}}
\newcommand{\WPOtt}{\texttt{WPO}}

\newcommand{\emptylist}{[\,]}
\renewcommand{\emptyset}{\varnothing}
\newcommand{\idrel}{=}

\newcommand{\AProVE}{\textsf{AProVE}\xspace}
\newcommand{\NaTT}{\textsf{NaTT}\xspace}
\newcommand{\Mint}{\textsf{Mint}\xspace}

\begin{document}

\title{Semantic Path Orders are Generalized~Weighted~Path~Orders}
\thanks{This research was supported by JSPS KAKENHI Grant Numbers
JP22K11900 and JP25KJ1363.}

\author[T.~Saito]{Teppei Saito\lmcsorcid{https://orcid.org/0009-0001-9786-0044}}
\author[N.~Hirokawa]{Nao Hirokawa\lmcsorcid{https://orcid.org/0000-0002-8499-0501}}

\address{JAIST, Japan}
\email{saito@jaist.ac.jp, hirokawa@jaist.ac.jp}

\begin{abstract}
\noindent
Monotonic semantic path orders (MSPOs) are powerful reduction orders, but their
effective use in theorem proving and termination analysis raises three issues:
efficient comparison of terms, ground totality, and a reduction-pair formulation.
Inspired by weighted path orders (WPOs),
we introduce generalized weighted path orders (GWPOs)
which encompass both WPOs and MSPOs.
GWPOs and MSPOs coincide as classes of reduction orders,
thereby yielding efficient and ground-total subclasses of MSPOs.
We further formulate the reduction-pair version of the GWPO for the dependency pair method,
from which a reduction-pair version of the MSPO is obtained.
In contrast to the reduction-order setting, the correspondence between GWPOs and MSPOs no longer fully carries over,
establishing the GWPO as a framework of independent interest.
Effectiveness of the resulting reduction pairs is assessed by
standard termination benchmarks.
\end{abstract}

\maketitle

\section{Introduction}
\label{S:introduction}
Term orders are a fundamental tool in termination analysis of term
rewrite systems, and they also underlie completion-based automated theorem
proving. Classical examples include
the Knuth--Bendix order (KBO)~\cite{KB70} and the lexicographic path order (LPO)~\cite{KL80},
along with numerous variants and generalizations.
Typically, these orders are used as so-called \emph{reduction orders} $>$,
which characterize termination of term rewrite systems $\RR$
by orientability 
$\ell > r$ of all rules $\ell \to r \in \RR$.

The \emph{semantic path order} (SPO) was introduced by Kamin and L\'{e}vy alongside the LPO
in their unpublished but celebrated note~\cite{KL80}.
It generalizes the precedence of the LPO (used to compare 
function symbols) to relations on terms, thereby
allowing semantic ingredients to be used in term comparison.
Despite its expressive power for termination analysis, the SPO does not necessarily satisfy the
monotonicity property required of reduction orders.
Building on Geser's work~\cite{Ges92,Ges96},
Borralleras, Ferreira, and Rubio~\cite{BFR00,Bor03} introduced a reduction-order version of the SPO
dubbed the \emph{monotonic semantic path order} (MSPO).
Their work paved the way to automated termination analysis with the SPO~\cite{BR07}.

The motivation of this article lies in making the (M)SPO more effective in termination analysis and theorem proving.
This raises the following three questions.
\begin{itemize}
\item Under what conditions is an MSPO 
\emph{ground total}?
This property means that ground terms are totally ordered, and it
is a standard requirement for the refutational completeness of theorem proving procedures,
such as ordered completion~\cite{BDP89} and superposition calculus~\cite{BG94}.
\item Can terms be compared by an MSPO 
\emph{efficiently}? This is also important for theorem proving,
where term comparisons are performed frequently.
For classical orders such as the KBO and the LPO,
efficient comparison algorithms have been studied, for instance, in ~\cite{Loc06KBO,Loc06LPO}.
\item Can MSPOs be formulated as \emph{reduction pairs} (a more general form of reduction orders)?
Such a formulation (based on pairs of quasi-orders and well-founded orders on terms)
is needed for their effective use in termination proving based on the dependency pair method~\cite{AG00}.
\end{itemize}

The technical key lies in the \emph{weighted path order} (WPO)~\cite{YKS15},
another generalization of the LPO and also the KBO\@.
It combines a semantic comparison based on an interpretation (e.g., a polynomial interpretation~\cite{Lan75})
with a syntactic comparison based on a precedence.
Regarding the three questions above, more is known about the WPO than about the MSPO\@.
First, there is a reasonable sufficient condition for a WPO to be a ground-total reduction order.
Second, the WPO admits a subclass with an efficient implementation, namely the KBO~\cite{Loc06KBO}.
Third, WPOs have been formulated as reduction pairs~\cite{YKS15} and used successfully in automated termination provers~\cite{YKS14,Sch20}.
In our previous work~\cite{SH23}, we showed that WPOs formulated as reduction orders
can be represented as MSPOs. This connection suggests that the
aforementioned features of the WPO might be transferred to the MSPO.

To this end, we introduce a generalization of the WPO 
dubbed the \emph{generalized weighted path order} (GWPO).
The definition shares the same recursion structure as the WPO,
but the parameters of the WPO are generalized from an algebra and a precedence
to reduction triples, which are basic ingredients of the MSPO~\cite{Bor03}.
By definition, the WPO and the (M)SPO are immediate special cases of the GWPO.

Using the GWPO, the issues of ground totality and efficiency are addressed as follows.  
As with the WPO, there is a reasonable sufficient condition for a GWPO to be a ground-total reduction order,
and the GWPO admits an efficient subclass analogous to the KBO\@.
These results can then be transferred to the MSPO by using a simulation result 
that every GWPO formulated as a reduction order can be represented as an MSPO\@.
As a byproduct, this implies that the GWPO and the MSPO coincide as classes of reduction orders.

Following the WPO, we also formulate GWPOs as reduction pairs.
In this formulation, WPOs and (M)SPOs again arise as immediate special cases.
Interestingly, the aforementioned simulation result for reduction orders does not fully carry over to the reduction-pair setting,
which implies that the GWPO is of independent interest.
We also evaluate the resulting reduction pairs on standard benchmark problems for automated termination proving~\cite{TPDB}.

This article is based largely on the first author's PhD thesis~\cite{Sai26},
and substantially extends our conference paper~\cite{SH23}.
More precisely, the main result of the conference paper that WPOs are (M)SPOs is a special case of
\autoref{T:gwpo-spo}, which states that GWPOs are SPOs.
The other results are new with respect to the conference paper.

The remainder of the paper is organized as follows.
\autoref{S:preliminaries} collects basic facts about orders, term rewriting, and interpretation methods.
The next two sections are devoted to the main results of this article,
and begin with a brief overview of their subsections.
\autoref{S:reduction-order} is devoted to GWPOs as reduction orders,
and \autoref{S:reduction-pair} to GWPOs as reduction pairs.
The article concludes with \autoref{S:conclusion}, which discusses related and future work.

\section{Preliminaries}
\label{S:preliminaries}

\subsection{Orders and Order Pairs}
A strict order $>$ on a set $X$ is called a \emph{well-founded order} if it is well-founded, i.e.,
there is no infinite descending sequence $x_0 > x_1 > \cdots$.
A quasi-order $\geqslant$ on a set $X$ is \emph{well-founded} if its strict part is well-founded.
For example, the standard strict order $>_\NN$ and quasi-order $\geqslant_\NN$ on natural numbers $\NN$ are well-founded.

A pair $({\geqslant}, {>})$ of relations on the same set $X$ is \emph{compatible}
if ${> \cdot \geqslant} \subseteq {>}$ and ${\geqslant \cdot >} \subseteq {>}$.
Here $R \cdot S$ denotes the \emph{composition} of $R$ and $S$ defined as
$\{ (x,z) \mid \text{$(x,y) \in R$ and $(y,z) \in S$} \}$.
The pair $({\geqslant},{>})$ is an \emph{order pair} if it is compatible, 
$\geqslant$ is a quasi-order and $>$ is a strict order;
it is \emph{well-founded} if $>$ is well-founded; and
it is \emph{normal} if $> {\subseteq} \geqslant$.\footnote{%
We note that there are some variations in the definition of an order pair.
For instance, what is called an order pair in \cite[Definition~1]{Yam22IJCAR} corresponds to
a normal order pair in our terminology.}
For example, $({\geqslant_\NN}, {>_\NN})$ is a well-founded and normal order pair.
We say that a relation $R$ \emph{extends} another relation $S$ if $S \subseteq R$.
Similarly, we say that an order pair $({\geqslant},{>})$ extends another one $({\sqsupseteq}, {\sqsupset})$ on $X$
if $\geqslant$ extends $\sqsupseteq$ and $>$ extends $\sqsupset$.
We call $({X \times X}, {\emptyset})$ a \emph{trivial} order pair.

Given a set $X$ and a natural number $n$,
we write $X^{n}$ for the set of all tuples $(x_1, \ldots, x_n)$ of length $n$
consisting of elements of $X$.
We also write $X^{*}$ for the set of all tuples of arbitrary finite length over $X$.
The elements of $X^{*}$ are called \emph{lists}.
We recall the \emph{lexicographic extension}, which extends an order pair to lists.

\begin{defi}
\label{D:lex-ext}
Let $({\geqslant}, {>})$ be a pair of relations on a set $A$.  
The \emph{lexicographic extension}
$({\geqslant^\lex}, {>^\lex})$ is a pair of relations on $A^{*}$ defined recursively:
$(\seq[m]{a}) \geqslant^{\lex} (\seq[n]{b})$ if one of the following cases holds.
\begin{itemize}
    \item $n = 0$
    \item $m, n > 0$ and $a_1 > b_1$
    \item $m, n > 0$ and $a_1 \geqslant b_1$, and moreover $(a_2, \ldots, a_m) \geqslant^\lex (b_2, \ldots, b_n)$
\end{itemize}
Similarly, $(\seq[m]{a}) >^{\lex} (\seq[n]{b})$ if one of the following cases holds. 
\begin{itemize}
    \item $m > 0$ and $n = 0$
    \item $m, n > 0$ and $a_1 > b_1$
    \item $m, n > 0$ and $a_1 \geqslant b_1$, and moreover $(a_2, \ldots, a_m) >^\lex (b_2, \ldots, b_n)$
\end{itemize}   
\end{defi}

It is known that $({\geqslant^\lex}, {>^\lex})$ is an order pair on $A^*$ if $({\geqslant}, {>})$ is.
However, $>^\lex$ is not well-founded even if $>$ is.
This is witnessed by
the lexicographic extension $({\geqslant^\lex}, {>^\lex})$ of $({\geqslant_\NN}, {>_\NN})$,
in which
\(
(1) >^\lex (0, 1) >^\lex (0, 0, 1) >^\lex \cdots
\)
holds.
Well-foundedness is recovered by imposing a fixed length bound on $A^*$,
see \cite[Section~3]{ST13} for a related discussion.
Given a set $A$,
we write $A^{\leqslant n}$ for the union of $A^{i}$ for all natural numbers $i \leqslant n$.

\begin{prop}
\label{P:lex-ext}
If $({\geqslant}, {>})$ is a well-founded order pair,
then the restriction of $>^\lex$ to $A^{\leqslant n}$ for a natural number $n \in \NN$ is well-founded.
\end{prop}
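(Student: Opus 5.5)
We prove the statement by induction on $n$. For $n = 0$ the set $A^{\leqslant 0}$ contains only the empty list, and no list satisfies $() >^\lex b$, because the definition requires $m > 0$. So the restriction is trivially well-founded.

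For the inductive step, suppose that $>^\lex$ restricted to $A^{\leqslant n}$ is well-founded, and assume for contradiction that there is an infinite descending chain $\alpha_0 >^\lex \alpha_1 >^\lex \cdots$ in $A^{\leqslant n+1}$. No $\alpha_i$ with $i \geqslant 1$ can be empty, since $\alpha_{i-1} >^\lex \alpha_i$ with $\alpha_i$ empty would force $\alpha_{i+1}$ to satisfy $() >^\lex \alpha_{i+1}$, which is impossible. So for $i \geqslant 1$ we may write $\alpha_i = (a^i_1) \cdot \beta_i$ with $\beta_i \in A^{\leqslant n}$.

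By the definition, each step $\alpha_i >^\lex \alpha_{i+1}$ falls into one of two cases. Either $a^i_1 > a^{i+1}_1$, or $a^i_1 \geqslant a^{i+1}_1$ and $\beta_i >^\lex \beta_{i+1}$. In both cases $a^i_1 \geqslant a^{i+1}_1$ or $a^i_1 > a^{i+1}_1$ holds, so the heads form a chain in ${\geqslant} \cup {>}$.

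The main point is to show that only finitely many steps are strict on the heads. Suppose instead that infinitely many steps are strict. Each maximal block of consecutive steps has the form $x \geqslant \cdots \geqslant y > z$. Since $\geqslant$ is transitive and $({\geqslant},{>})$ is compatible (${\geqslant \cdot >} \subseteq {>}$), such a block collapses to $x > z$. Chaining these blocks with transitivity of $>$ gives an infinite descending $>$-chain among the heads, contradicting well-foundedness of $>$. We do not need normality here; compatibility and transitivity suffice.

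Hence there is an index $N$ such that every step $i \geqslant N$ is of the second kind. Then $\beta_N >^\lex \beta_{N+1} >^\lex \cdots$ is an infinite descending chain in $A^{\leqslant n}$, which contradicts the induction hypothesis.

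We expect the only delicate step to be the collapsing argument, because ${\geqslant} \cup {>}$ chains need not be strict. It is handled exactly by the compatibility of the order pair as above. The remaining steps are routine unfolding of Definition~\ref{D:lex-ext}.
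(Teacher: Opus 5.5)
Your proof is correct. The paper states this proposition without proof; it only refers to \cite[Section~3]{ST13} for a related discussion. So there is no argument of the paper's to compare against, and your induction on the length bound is the standard one: peel off the heads, show that only finitely many head steps can be strict, then apply the induction hypothesis to the tails.

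You also handle correctly the one point where this paper's setting differs from the textbook one. Order pairs here need not be normal, so the heads only form a chain in ${\geqslant} \cup {>}$. It is exactly transitivity of $\geqslant$ together with ${\geqslant \cdot >} \subseteq {>}$ that collapses each block into a single strict step.

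Two small simplifications are possible. First, no element of the chain, including $\alpha_0$, can be empty, simply because each $\alpha_i$ has a $>^\lex$-successor and $() >^\lex \beta$ never holds. Second, transitivity of $>$ is not needed, since the endpoints of consecutive blocks already form a descending $>$-chain.
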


Quite often, for a single strict order $>$ (rather than an order pair),
we consider the strict part $>^\lex$ of the lexicographic extension of $({\idrel}, {>})$,
where $=$ is the identity relation.
In that case, we may simply call $>^\lex$ the lexicographic extension of $>$.

\subsection{Term Rewriting}
\label{SS:rewriting}

We begin with the definition of terms.
A \emph{signature} $\FF$ is a designated set of function symbols.
Each function symbol is associated with a unique natural number called \emph{arity},
which specifies the number of arguments of the function.
When we need to indicate the arity $n$ of a function symbol $f$,
we write $f^{(n)}$ for $f$.
A function symbol $f^{(n)}$ is called a \emph{constant} if 
$n = 0$,
\emph{unary} if $n = 1$, and
\emph{binary} if $n =2$.
Let $\VV$ be a designated countably infinite set of \emph{variables}
with $\FF \cap \VV = \varnothing$.
The set $\TT(\FF,\VV)$ of \emph{terms} is defined inductively: every variable is a term,
and $f(t_1, \ldots, t_n)$ is a term if $f^{(n)} \in \FF$ and $t_1, \ldots, t_n$ are terms.
The set $\TT(\FF,\VV)$ may be 
written as
$\TT$ when $\FF$ and $\VV$ are clear from the context.

We define a few convenient notations for terms.
For binary function symbols like $+$, we may use infix notation $t_1 + t_2$
to denote the term $+(t_1, t_2)$.
For example, let $\FF = \{ \cdot^{(2)}, \m{S}^{(0)}, \m{K}^{(0)} \}$, and $x, y \in \VV$.
Then we have $(\m{K} \cdot x) \cdot y \in \TT(\FF, \VV)$.
Also, we may use vector notation $t = f(\bar{t})$
to indicate that $t$ is a term headed by the function symbol $f$. 
In this case, $f$ is the \emph{root symbol} of $t$, denoted by $\rt(t)$.
For instance, $\cdot$ is the root symbol of $(\m{K} \cdot x) \cdot y$.
The set of all variables occurring in a term $t$ is denoted by $\Var(t)$.
The \emph{size} $|t|$
of a term $t$ is the number of function symbols and variables occurring in $t$.
For example, we have 
$\Var(t) = \{x,y\}$ and
$|t| = 5$ for $t = (\m{K} \cdot x) \cdot y$.

Let $\hole$ be a fresh constant with $\hole \notin \FF$, which we call the \emph{hole}.
\emph{Contexts} are terms over
$\FF \cup \{ \hole \}$ that contain exactly one hole.  The term resulting
from replacing $\hole$ in a context $C$ by a term $t$ is denoted by $C[t]$.
We write $s \superterm t$ if there is a context $C$ with $s = C[t]$.
In such a case, $t$ is a \emph{subterm} of $s$.
The strict part of $\superterm$ is denoted by $\prsuperterm$.
Similarly, $s \prsuperterm t$ reads as $t$ is a \emph{proper subterm} of $s$.
For example, the term $(\m{K} \cdot x) \cdot y$ has four proper subterms, namely
$\m{K} \cdot x$, $\m{K}$, $x$, and $y$.

A \emph{substitution} is a mapping $\sigma$ from variables to terms such
that $\{ x \in \VV \mid \sigma(x) \neq x \}$ is finite.  The application
$t\sigma$ of a substitution $\sigma$ to a term $t$ is inductively defined
as follows:
$t\sigma = \sigma(t)$ if $t$ is a variable, and 
$t\sigma = f(t_1\sigma,\ldots,t_n\sigma)$ if $t = f(\seq{t})$.
We write $\{ x_1 \mapsto t_1, \ldots, x_k \mapsto t_k \}$ for the substitution $\sigma$
such that $\sigma(y) = t_i$ if $y = x_i$ for some $i$, and $\sigma(y) = y$ otherwise.
For example, $((\m{K} \cdot x) \cdot y) \sigma = (\m{K} \cdot \m{K}) \cdot \m{K}$
for $\sigma = \{ x \mapsto \m{K}, y \mapsto \m{K}\}$.

A pair $(\ell,r)$ of terms is said to be a \emph{rewrite rule} if
$\ell$ is not a variable and every variable in $r$ occurs in $\ell$.
Rewrite rules $(\ell,r)$ are written as $\ell \to r$.  A set of rewrite
rules is called a \emph{term rewrite system} (TRS).
A famous example of a TRS is
combinatory logic, consisting of the rewrite rules
$((\m{S} \cdot x) \cdot y) \cdot z \to (x \cdot z) \cdot (y \cdot z)$ and 
$(\m{K} \cdot x) \cdot y \to x$.
Let $\RR$ be a TRS\@.
The relation $\to_\RR$ is defined on terms as follows: $s \to_\RR t$
if there exist a rule $\ell \to r \in \RR$, a context $C$, and a
substitution $\sigma$ such that $s = C[\ell\sigma]$ and $t = C[r\sigma]$
hold. When $C = \hole$, it is written as $s \rto_\RR t$.
For instance, in the TRS $\RR$ of combinatory logic we have
\(
((\m{S} \cdot \m{K}) \cdot \m{K}) \cdot x
\rto_{\RR} 
(\m{K} \cdot x) \cdot (\m{K} \cdot x)
\rto_{\RR}
x
\).

Let $\hookrightarrow$ be a relation on terms. We say that
$\hookrightarrow$ is \emph{closed under contexts}
or just \emph{monotone} if $C[s] \hookrightarrow C[t]$ holds whenever $s  \hookrightarrow t$ and $C$ is a context;
$\hookrightarrow$ is \emph{closed under substitutions}
or just \emph{stable} if
$s\sigma \hookrightarrow t\sigma$ holds whenever $s \hookrightarrow t$ and $\sigma$ is a substitution;
$\hookrightarrow$ is a \emph{rewrite relation} if it is stable and monotone; and
$\hookrightarrow$ has the \emph{subterm property}
if $s \hookrightarrow t$ whenever $s \prsuperterm t$.
For example, the relation $\to_\RR$ induced by a TRS $\RR$ is by definition closed under contexts and substitutions.
We say that a pair $({\geqslant},{>})$ of relations on terms is \emph{stable} if $\geqslant$ and $>$ are stable.

A term $t$ is \emph{terminating} with respect to a relation
$\hookrightarrow$ on terms if there is no infinite sequence
$t \hookrightarrow t_1 \hookrightarrow t_2 \hookrightarrow \cdots$
starting from $t$.
\emph{Termination} of the relation $\hookrightarrow$ is defined as the absence
of non-terminating terms. In other words, $\hookrightarrow$ is well-founded if and only if it is terminating.
A TRS $\RR$ is \emph{terminating} if $\to_\RR$ is.
For example, the TRS consisting of $((\m{S} \cdot x) \cdot y) \cdot z \to (x \cdot z) \cdot (y \cdot z)$ and 
$(\m{K} \cdot x) \cdot y \to x$ is non-terminating, see for example~\cite{Bar84}.
In contrast, the TRS consisting of only $(\m{K} \cdot x) \cdot y \to x$ is terminating.

Termination of TRSs is often shown by orders.
We say that a rewrite relation is a
\emph{rewrite preorder} or \emph{reduction order} if
it is a preorder or a well-founded order, respectively.

\begin{prop}
\label{P:reduction-order-termination}
A TRS $\RR$ is terminating if $\RR \subseteq {>}$ for some reduction order $>$.
\end{prop}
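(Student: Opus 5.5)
We argue by contraposition: assuming $\RR$ is non-terminating, we produce an infinite descending chain in $>$, which contradicts well-foundedness. The whole argument rests on one inclusion, ${\to_\RR} \subseteq {>}$, which we establish first.

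\textbf{Inclusion.} Let $s \to_\RR t$. By the definition of $\to_\RR$ in \autoref{SS:rewriting}, there are a rule $\ell \to r \in \RR$, a context $C$, and a substitution $\sigma$ with $s = C[\ell\sigma]$ and $t = C[r\sigma]$. We then derive $s > t$ in three steps.
\begin{enumerate}
\item The assumption $\RR \subseteq {>}$ gives $\ell > r$.
\item Since a reduction order is a rewrite relation, it is stable, so $\ell\sigma > r\sigma$.
\item Monotonicity (closure under contexts) then yields $C[\ell\sigma] > C[r\sigma]$, that is, $s > t$.
\end{enumerate}

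\textbf{Contradiction.} Suppose $\RR$ is not terminating. Then some term $t_0$ admits an infinite sequence $t_0 \to_\RR t_1 \to_\RR t_2 \to_\RR \cdots$. By the inclusion just shown, each step satisfies $t_i > t_{i+1}$. Hence $t_0 > t_1 > t_2 > \cdots$ is an infinite descending sequence for $>$. This contradicts the well-foundedness that is part of the definition of a reduction order, so $\RR$ is terminating.

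No step poses a real obstacle. The only point needing care is that the proof uses all three defining properties of a reduction order: stability, monotonicity, and well-foundedness. Stability and monotonicity are used exactly once each, in steps 2 and 3 of the inclusion. Neither transitivity nor irreflexivity of $>$ is needed.
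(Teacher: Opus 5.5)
Your proof is correct and is the standard argument: stability and closure under contexts give ${\to_\RR} \subseteq {>}$, and well-foundedness of $>$ then rules out infinite $\to_\RR$-sequences. The paper states this proposition without proof, as a well-known fact, and your argument is exactly what it takes for granted.
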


A powerful method for automated termination analysis is 
the dependency pair framework (to be explained in \autoref{SS:dp}).
Rather than reduction orders, this method 
uses \emph{reduction pairs},
namely  well-founded stable order pairs $({\geqslant}, {>})$ with $\geqslant$ closed under contexts.
A reduction pair $({\geqslant}, {>})$ is \emph{monotone}
if $>$ is also closed under contexts.
We note that $>$ is a reduction order for a monotone reduction pair $({\geqslant}, {>})$.
Conversely, given a reduction order $>$,
the pair $({\geqslant}, {>})$ of the reflexive closure $\geqslant$ and $>$ is a monotone reduction pair.

\subsection{Interpretation-Based Orders}
\label{SS:interpretation}

Algebras (or interpretations) are useful for constructing reduction orders/pairs.
Let $\FF$ be a set of function symbols.
An \emph{$\FF$-algebra} is a pair $(A, \{f_\AA\}_{f \in \FF})$,
where $A$ is a set called a \emph{carrier}, and 
$f_\AA$ is an $n$-ary function on $A$ (called an \emph{interpretation function})
associated with each $f^{(n)} \in \FF$.
Let $\AA = (A, \{f_\AA\}_{f \in \FF})$ be an algebra.
A mapping from $\VV$ to $A$ is called an \emph{assignment} for $\AA$.  The
interpretation $[\alpha]_\AA(t)$ of a term $t$ under an assignment $\alpha$
is inductively defined as follows:
$[\alpha]_\AA(t) = \alpha(t)$ if $t$ is a variable, and 
$[\alpha]_\AA(t) = f_\AA([\alpha]_\AA(t_1),\ldots,[\alpha]_\AA(t_n))$
if $t = f(\seq{t})$.

\begin{exa}
\label{E:algebra}
Let $\FF = \{ \cdot^{(2)}, \m{S}^{(0)}, \m{K}^{(0)} \}$ and 
consider the $\FF$-algebra $\AA$
whose carrier is $\NN$ and 
interpretation functions are given by 
$x \cdot_\AA y = x + y$ and
$\m{S}_\AA = \m{K}_\AA = 1$.
For the assignment $\alpha$ 
with $\alpha(x) = \alpha(y) = 1$
we have $[\alpha]_\AA((\m{K} \cdot x) \cdot y) = \alpha(x) + \alpha(y) + 1 = 3$.
\end{exa}

In the rest of this article,
we assume that the carrier $A$ of an algebra $\AA$ is always endowed with an order pair $({\geqslant},{>})$.
We write $s >_\AA t$ if
$[\alpha]_\AA(s) > [\alpha]_\AA(t)$ for all assignments $\alpha$.
The relation $>_\AA$ is a stable strict order. 
Similarly we write $s \geqslant_\AA t$  if $[\alpha]_\AA(s) \geqslant [\alpha]_\AA(t)$
holds for all assignments $\alpha$.
The relation $\geqslant_\AA$ is a stable quasi-order, and satisfies compatibility with $>_\AA$, namely
${\geqslant_\AA \cdot >_\AA} \subseteq {>_\AA}$ and ${>_\AA \cdot \geqslant_\AA} \subseteq {>_\AA}$.
The $i$-th argument position of $f^{(n)} \in \FF$ with $1 \leqslant i \leqslant n$ is said to be
\begin{itemize}
\item
\emph{strictly simple}  if
$f_\AA(a_1, \ldots, a_i, \ldots, a_n) > a_i$ for all $\seq{a} \in A$;
\item
\emph{weakly simple} if
$f_\AA(a_1, \ldots, a_i, \ldots, a_n) \geqslant a_i$ for all $\seq{a} \in A$;
\item
\emph{strictly monotone} if
\(
f_\AA(a_1,\ldots,a_i,\ldots,a_n) >
f_\AA(a_1,\ldots,b,\ldots,a_n)
\)
follows from $a_i > b$
for all $\seq[n]{a}, b \in A$; and
\item \emph{weakly monotone} if
\(
f_\AA(a_1,\ldots,a_i,\ldots,a_n) \geqslant
f_\AA(a_1,\ldots,b,\ldots,a_n)
\)
follows from $a_i \geqslant b$
for all $\seq[n]{a}, b \in A$.
\end{itemize}
The terminology above carries over to subsets $\GG \subseteq \FF$:
for example, we say that $\AA$ is strictly simple for $\GG$ if every argument position of
$f \in \GG$ is so.
Moreover, if $\GG = \FF$, we may simply call  $\AA$ strictly simple.
The algebra $\AA$ is \emph{well-founded} if $>$ is well-founded;
\emph{normal} if $> {\subseteq} \geqslant$; and
\emph{trivial} if the carrier $A$ is a singleton set.

\begin{exa}[continued from \autoref{E:algebra}]
\label{E:algebra2}
We endow $\AA$ with the order pair $({\geqslant_\NN}, {>_\NN})$.
Then $\AA$ is weakly simple and monotone, and strictly monotone,
but it is not strictly simple, since $x \cdot_\AA y = x$ when $y = 0$. 
Moreover, $\AA$ is well-founded and normal, but not trivial.
\end{exa}

It is known that if $>$ is well-founded, so is $>_\AA$. Similarly,
if $\AA$ is normal, so is $({\geqslant_\AA}, {>_\AA})$.
If $\AA$ is trivial, then $({\geqslant_\AA}, {>_\AA})$ degenerates to $({\TT \times \TT}, \emptyset)$.
If $\AA$ is a weakly monotone algebra, $\geqslant_\AA$ is a
rewrite preorder.
If $\AA$ is a strictly monotone algebra, $>_\AA$ is a reduction order.
If $\AA$ is strictly (resp.\ weakly) simple,
$>_\AA$ (resp. $\geqslant_\AA$) has the subterm property.
An important consequence is:

\begin{propC}[\cite{Zan94}]
\label{P:algebra-redpair}
If $\AA$ is a weakly monotone and well-founded algebra, then $({\geqslant_\AA}, {>_\AA})$ is a reduction pair.
If in addition $\AA$ is strictly monotone, then $({\geqslant_\AA}, {>_\AA})$ is a monotone reduction pair.
\end{propC}

\begin{exa}[continued from \autoref{E:algebra2}]
Consider the TRS $\RR = \{ (\m{K} \cdot x) \cdot y \to x \}$.
From \autoref{P:algebra-redpair}, the pair $({\geqslant_\AA}, {>_\AA})$ is a monotone reduction pair,
and therefore $>_\AA$ is a reduction order.
Furthermore, we  have $\RR \subseteq {>_\AA}$, as
all assignments $\alpha$ satisfy the inequality
\(
[\alpha]_\AA((\m{K} \cdot x) \cdot y) = \alpha(x) + \alpha(y) + 1 > \alpha(x) = [\alpha]_\AA(x)
\).
Hence the termination of $\RR$ is concluded by \autoref{P:reduction-order-termination}.
\end{exa}

The algebra used above is an
instance of the linear polynomial interpretation~\cite{Lan75}.
Formally, a \emph{linear polynomial interpretation} $\AA$ is an algebra 
such that the carrier is $\NN$ ordered by $({\geqslant_\NN}, {>_\NN})$ and
for each $f^{(n)} \in \FF$ its interpretation $f_\AA(x_1, \ldots, x_n)$ is of the form
$c_0 + c_1 x_1 + \cdots + c_n x_n$ with $c_0, c_1, \ldots, c_n \in \NN$.
Such an interpretation $\AA$ is well-founded, normal and also weakly monotone.
In general, the $i$-th argument of a function symbol $f$ is strictly monotone
if and only if $c_i > 0$.
Strict simplicity and weak simplicity are characterized in the same way.

A \emph{max/plus interpretation} $\AA$
associates a combination of $+$ and $\max$ for each function symbol.
{For simplicity, we only consider $\NN$ as the carrier,
and the interpretations are restricted to the following form (adapted from~\cite[Definition~7]{YKS15}):}
For each $f^{(n)} \in \FF$ its interpretation is of the form
$f_\AA(x_1, \ldots, x_n) = \max\{d_0, c_1 x_1 + d_1, \ldots, c_n x_n + d_n\}$
where $d_0$ is a natural number, $d_1, \ldots, d_n \in \ZZ$ and $c_1, \ldots, c_n  \in \{ 0, 1 \}$.
(See \cite{FGM+08} for a more general version.)
Every max/plus interpretation $\AA$ is weakly monotone, well-founded and normal.
The $i$-th argument of a function symbol $f$ is weakly simple 
if and only if $d_i \geqslant 0$ and $c_i > 0$. 
Consider $f_\AA(x_1, x_2) = \max \{1, x_1, x_2 - 1 \}$.
The first argument of $f_\AA$ is weakly simple but not strictly
simple, and the second is not even weakly simple. In fact,
$f_\AA(1, 0) = 1 \ngtr 1$ and $f_\AA(0, 2) = 1 \ngeqslant 2$.

\section{Reduction Orders}
\label{S:reduction-order}

This section studies the WPO, the MSPO, and their natural generalization,
the GWPO, in the reduction-order setting.
Through our analysis of the GWPO, we identify both a ground-total subclass of the MSPO
and a subclass that admits efficient term comparison.

More precisely, we first introduce a variant of (M)SPOs suited to our purposes in \autoref{SS:spo},
and then WPOs and GWPOs in \autoref{SS:GWPO}.
In \autoref{SS:simulation}, we establish a simulation result showing that
the GWPO and the MSPO coincide as classes of reduction orders.
We then discuss two subclasses of GWPOs, and hence of MSPOs:
an efficient subclass in \autoref{SS:gwpo-as-optimized-spo}
and a ground-total subclass in \autoref{SS:ground-totality}.

\subsection{Semantic Path Order}
\label{SS:spo}

Among the many variants of the semantic path order,
we use one based on order pairs and lexicographic extension,
as suggested in~\cite[Section~4.5]{Bor03}.
We employ order pairs to allow interpretations to serve as basic ingredients of the SPO,
and lexicographic extension to relate the SPO to the WPO, which is usually defined using lexicographic extension~\cite{YKS15}.
These choices are explained in due course.

\begin{defi}
\label{D:spo}
Let $({\sqsupseteq}, {\sqsupset})$ be a pair of relations on terms.\footnote{%
Observe that \autoref{D:spo} uses the order pair only when $s$ and $t$ are
headed by function symbols,
so behavior of the orders on variables does not matter.}
The \emph{semantic path order} $>_\spo$ is defined on terms as follows:
$s >_\spo t$ if $s = f(\seq[m]{s})$ and one of the following conditions holds.
\begin{enumerate}[a.]
    \item \label{spo:a}
    $s_i \geqslant_\spo t$ for some $1 \leqslant i \leqslant m$. 
    \item \label{spo:b}
    $t = g(\seq{t})$ and $s >_\spo t_j$ for all $1 \leqslant j \leqslant n$,
    and moreover
    \begin{enumerate}[(i)]
    \item \label{spo:b(i)}
    $s \sqsupset t$, or
    \item \label{spo:b(ii)}
    $s \sqsupseteq t$ and $(\seq[m]{s}) >_\spo^{\lex} (\seq{t})$.
    \end{enumerate}
\end{enumerate}
Here $\geqslant_\spo$ denotes the reflexive closure of $>_\spo$.
\end{defi}

When the signature is infinite,
an SPO induced by an order pair $({\sqsupseteq}, {\sqsupset})$ is not always well-founded,
even if $\sqsupset$ is well-founded.
This is already remarked in \cite[Section~4.5]{Bor03};
see also \cite[Section~3]{Toy04} and \cite[Section~3]{ST13} for related discussions.
Well-foundedness of $>_\spo$ is restored by assuming existence of
an upper bound of arities used in the signature
(cf.~\autoref{P:lex-ext}).
We refer to this condition as \emph{boundedness} of the signature.
Needless to say, a signature is bounded whenever it is finite.

\begin{asm}
The signature is assumed to be bounded in the rest of the paper.
\end{asm}

\begin{thm}
\label{T:spo}
If $({\sqsupseteq}, {\sqsupset})$ is a stable and well-founded order pair on terms,
the induced semantic path order is a stable well-founded order with the subterm property.
\end{thm}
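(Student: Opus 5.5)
Throughout, I write $\geqslant_\spo$ for the reflexive closure of $>_\spo$. The plan is to establish the properties in the order: subterm property, transitivity, irreflexivity, stability, well-foundedness. Each is proved by induction on sizes of terms, and each later step relies on the earlier ones.

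\emph{Subterm property.} First I show $f(\seq[m]{s}) >_\spo s_i$ directly by case~\SPO{a} with $s_i \geqslant_\spo s_i$. Together with transitivity this gives the full subterm property. Along the way I will need the auxiliary fact that $s >_\spo t$ and $t \prsuperterm u$ imply $s >_\spo u$. This follows by induction on $|t|$: if $s >_\spo t$ by~\SPO{b}, then $s >_\spo t_j$ for all $j$ directly; if by~\SPO{a}, then $s_i \geqslant_\spo t$ and induction applies to $s_i$.

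\emph{Transitivity.} I prove that $s >_\spo t >_\spo u$ implies $s >_\spo u$ by induction on $|s|+|t|+|u|$, with a case analysis on which clauses justify the two steps.
\begin{itemize}
\item If $s >_\spo t$ by~\SPO{a}, then $s_i \geqslant_\spo t >_\spo u$, so induction gives $s_i >_\spo u$, hence $s >_\spo u$ by~\SPO{a}.
\item If $s >_\spo t$ by~\SPO{b} and $t >_\spo u$ by~\SPO{a}, then $s >_\spo t_j \geqslant_\spo u$, and induction gives $s >_\spo u$.
\item If both steps use~\SPO{b}, then $s >_\spo u_k$ for each $k$ by induction from $s >_\spo t >_\spo u_k$. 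The order comparison combines $\sqsupset/\sqsupseteq$ using compatibility and transitivity of the order pair. In the doubly lexicographic case I use transitivity of $>_\spo^\lex$, which follows from the inductive hypothesis on the arguments.
\end{itemize}

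\emph{Irreflexivity.} I show $s \not>_\spo s$ by induction on $|s|$. Case~\SPO{a} would give $s_i \geqslant_\spo s$, hence $s_i >_\spo s_i$ by the subterm property and transitivity, contradicting the inductive hypothesis. Case~\SPO{b(i)} is impossible because $\sqsupset$ is irreflexive. Case~\SPO{b(ii)} requires $(\seq[m]{s}) >_\spo^\lex (\seq[m]{s})$, which forces some $s_i >_\spo s_i$, again contradicting the inductive hypothesis.

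\emph{Stability.} I show that $s >_\spo t$ implies $s\sigma >_\spo t\sigma$ by induction on $|s|+|t|$, using stability of $\sqsupseteq$ and $\sqsupset$ in case~\SPO{b}. The clauses transfer verbatim because the root symbols of $s$ and $t$ are preserved under substitution. When $t$ is a variable, only case~\SPO{a} applies, and it is preserved.

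\emph{Well-foundedness.} I expect this to be the main obstacle. I would use the standard minimal-counterexample argument. Suppose some term is non-terminating with respect to $>_\spo$, and consider terms all of whose proper subterms are terminating. Variables are minimal for $>_\spo$, so every infinite chain starts with function-headed terms. The core claim is: if every $s_i$ is terminating, then $s = f(\seq[m]{s})$ is terminating. I prove this by well-founded induction on the triple $(s, (\seq[m]{s}))$, ordered lexicographically by:
\begin{itemize}
\item $\sqsupset$ (well-founded, and compatible with $\sqsupseteq$), then
\item $>_\spo^\lex$ restricted to lists of terminating terms of length at most the arity bound, which is well-founded by \autoref{P:lex-ext} and boundedness of the signature.
\end{itemize}
The delicate point is that $(\sqsupseteq,\sqsupset)$ combined with $(\idrel, >_\spo^\lex)$ must form a well-founded lexicographic product even though $\sqsupseteq$ is only a quasi-order. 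I handle this by treating descending chains where $\sqsupseteq$ steps eventually dominate: compatibility lets a chain of $\sqsupseteq$ with infinitely many $\sqsupset$ collapse into an infinite $\sqsupset$ chain.

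Given this measure, take any $t$ with $s >_\spo t$ and show $t$ is terminating. In case~\SPO{a}, $t$ is below some terminating $s_i$, so it is terminating. In case~\SPO{b}, each $t_j$ satisfies $s >_\spo t_j$. A nested induction on $|t|$ shows the $t_j$ are terminating, which makes $t$ a valid candidate with a strictly smaller measure. The outer induction then applies, so $t$ is terminating. Finally, an induction on term structure yields termination of all terms.
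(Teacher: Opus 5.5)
Your proof is correct, but it is organized differently from the paper's. The paper declares transitivity, irreflexivity and stability straightforward (\autoref{L:spo}) and never proves well-foundedness of the reduction-order SPO directly. Instead, it notes that this order is contained in the strict component of the reduction-pair SPO with total status (\autoref{R:ro-rp}). That component is well-founded by \autoref{C:mspo-reduction-pair}, i.e.\ ultimately by the Buchholz-style \autoref{L:buchholz-gwpo} for the reduction-pair GWPO with trivial first pair.

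Your well-foundedness argument is essentially that lemma specialized by hand. Your measure $(s,(\seq[m]{s}))$, ordered by $(\sqsupseteq,\sqsupset)$ and then by $>_\spo^\lex$ on $\SN$-lists of bounded length, together with your nested induction on $|t|$, is the paper's quadruple $(s,s,\pi(f)(\seq[m]{s}),t)$. The vacuous $A$-component is dropped and the $\superterm$-component is replaced by term size. Your compatibility argument for the quasi-order first component is exactly what \autoref{P:lex-prod} packages.

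The paper's route buys economy: one well-foundedness proof serves the GWPO and the SPO in both the order and the pair setting. The price is that a result of \autoref{S:reduction-order} depends on the reduction-pair machinery of \autoref{S:reduction-pair}. Yours is self-contained, and it also supplies the inductions for transitivity, irreflexivity and stability that the paper omits.

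There is one minor slip. The auxiliary fact $s >_\spo t \prsuperterm u \Rightarrow s >_\spo u$ cannot be proved by induction on $|t|$ alone: in case~\SPO{a} with $s_i >_\spo t$, the term $t$ does not shrink, so you should induct on $|s|+|t|$. The fact is not needed anyway. The full subterm property follows by induction on $s$: if $s \prsuperterm u$ then $s_i \superterm u$ for some $i$, and case~\SPO{a} applies. Irreflexivity only uses $s >_\spo s_i$ together with transitivity.
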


We are not aware of a published proof of the theorem above.
So, for completeness, a proof of \autoref{T:spo} is included in \autoref{S:omitted}.
In general, semantic path orders are not closed under contexts, even when $\sqsupseteq$ is monotone.

\begin{exa}
\label{E:spo-not-monotone}
Let $\AA$ be the linear polynomial interpretation over $\NN$ with $\m{f}_\AA(x) = x + 1$ and $\m{g}_\AA(x) = 0$,
and $>_\spo$ the SPO induced by the reduction pair $({\geqslant_\AA}, {>_\AA})$.
Then we have $\m{g}(x) >_\spo x$ by case~\SPO{a}.
However, $\m{f}(\m{g}(x)) >_\spo \m{f}(x)$ does not hold:
case~\SPO{a} is not applicable, and
case~\SPO{b} cannot be applied either, since $\m{f}(\m{g}(x)) \geqslant_\AA \m{f}(x)$ does not hold.
\end{exa}

For a remedy, Borralleras et al.~\cite{BFR00,Bor03} have proposed to restrict the SPO $>_\spo$
by taking intersection with another preorder $\geqslant$ satisfying the \emph{harmony} property~\cite{Der04}
(originally called quasi-monotonicity~\cite{BFR00}).
Here, a relation $\hookrightarrow$ on terms is \emph{harmonious} with respect to another relation $\leadsto$ on terms if
\[
s_i \leadsto t \quad \Longrightarrow \quad f(s_1, \ldots, s_i, \ldots, s_n) \hookrightarrow f(s_1, \ldots, t, \ldots, s_n)
\]
for all $n$-ary function symbols $f$, integers $1 \leqslant i \leqslant n$, and terms $s_1, \ldots, s_n, t$.
It is not difficult to see that $\hookrightarrow$ is closed under contexts
if and only if $\hookrightarrow$ is harmonious with respect to itself.

\begin{defiC}[(\textnormal{\cite[Definition~4.1.20]{Bor03}})]
A triple $({\geqslant},{\sqsupseteq}, {\sqsupset})$  is a \emph{reduction triple}\footnote{%
Borralleras calls reduction triples reduction \emph{triplets}~\cite{Bor03}.
In the context of dependency pairs, there is another usage of the term in a different meaning~\cite{HM07}.
}
if $\geqslant$ is a rewrite preorder on terms,
$({\sqsupseteq},{\sqsupset})$ is a stable order pair on $\TT$ with $\sqsupset$ well-founded, and
$\sqsupseteq$ is harmonious with respect to $\geqslant$.
\end{defiC}

\begin{defi}
Let $({\geqslant},{\sqsupseteq}, {\sqsupset})$ be a reduction triple and 
$>_\spo$ the semantic path order induced from $({\sqsupseteq}, {\sqsupset})$.
The \emph{monotonic semantic path order} $s >_\mspo t$ is defined as
$s \geqslant t$ and $s >_\spo t$.
\end{defi}

\begin{thm}
\label{T:mspo}
Every monotonic semantic path order is a reduction order.
\end{thm}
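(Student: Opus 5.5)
We must show $>_\mspo$ is a strict order, well-founded, stable, and closed under contexts. All but the last follow quickly from \autoref{T:spo}. Since $({\sqsupseteq},{\sqsupset})$ is a stable order pair with $\sqsupset$ well-founded, it is a stable well-founded order pair. Hence $>_\spo$ is a stable well-founded strict order with the subterm property. As $\geqslant$ is a rewrite preorder, it is transitive and stable. So $>_\mspo = {\geqslant} \cap {>_\spo}$ is irreflexive and transitive as an intersection of a transitive relation with a strict order. It is well-founded as a subrelation of $>_\spo$, and stable as an intersection of stable relations.

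The main work is monotonicity. It suffices to handle one-hole contexts of depth one and iterate. So we assume $s >_\mspo t$ and put $u = f(s_1,\ldots,s_{i-1},s,s_{i+1},\ldots,s_n)$ and $v = f(s_1,\ldots,t,\ldots,s_n)$. We need $u \geqslant v$ and $u >_\spo v$.

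The first holds since $\geqslant$ is monotone. For the second, we apply case~\SPO{b} to $u$ and $v$, which has three parts.
\begin{enumerate}[(1)]
\item Since $s \geqslant t$ and $\sqsupseteq$ is harmonious with respect to $\geqslant$, we get $u \sqsupseteq v$.
\item The argument lists satisfy $(s_1,\ldots,s,\ldots,s_n) >_\spo^\lex (s_1,\ldots,t,\ldots,s_n)$. The two lists agree up to position $i$, and there $s >_\spo t$. Here the equal prefix is handled by reflexivity of $\sqsupseteq$ restricted to $>_\spo$'s lexicographic extension of $(=,>_\spo)$, or of $(\geqslant_\spo,>_\spo)$, whichever form \autoref{D:lex-ext} is read with. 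Both work since identical elements are related.
\item We need $u >_\spo v_j$ for every argument $v_j$ of $v$. For $j \neq i$, $v_j = s_j$ is a proper subterm of $u$, so the subterm property from \autoref{T:spo} gives it. For $j = i$, $u \prsuperterm s >_\spo t$. The subterm property gives $u >_\spo s$, and transitivity of $>_\spo$ gives $u >_\spo t$.
\end{enumerate}
Combining these, case~\SPO{b(ii)} yields $u >_\spo v$.

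The expected obstacle is exactly this monotonicity argument, specifically the requirement $u \sqsupseteq v$. It is the only place where the triple structure matters, and \autoref{E:spo-not-monotone} shows that it fails without the harmony assumption. A subtlety is that harmony is stated with respect to $\geqslant$ rather than $>_\mspo$. That is why the definition intersects with $\geqslant$: it lets us feed $s \geqslant t$ into harmony. Everything else reduces to the properties of $>_\spo$ already guaranteed by \autoref{T:spo}.
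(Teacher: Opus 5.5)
Your proposal is correct and follows the paper's proof. The order-theoretic properties are inherited from \autoref{T:spo}, and monotonicity comes from monotonicity of $\geqslant$, harmony giving $u \sqsupseteq v$, the subterm property plus transitivity of $>_\spo$ for the arguments, and case~\SPO{b(ii)}. One minor wording slip: in item (2) the equal prefix is handled by the reflexive weak component ($=$) of the lexicographic extension, not by reflexivity of $\sqsupseteq$, which plays no role in that step.
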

Although the proof due to Borralleras et al.~\cite[Theorem~2]{BFR00} goes through,
we present it here to explain the role of $\geqslant$ and the harmony property.
\begin{proof}
Since we have \autoref{T:spo}, the only interesting part is the proof that $>_\mspo$ is monotone.
Let $>_\mspo$ be the MSPO induced by a reduction triple $({\geqslant},{\sqsupseteq}, {\sqsupset})$,
and let $s >_\mspo t$ and $C = f(\bar{u}, \hole, \bar{v})$.
By definition $s \geqslant t$ and $s >_\spo t$.
From the monotonicity of $\geqslant$ we have  $C[s] \geqslant C[t]$, 
and from the harmony property we have $C[s] \sqsupseteq C[t]$.
So, $C[s] >_\spo C[t]$ follows by case~\SPO{b(ii)}, using the subterm property of $>_\spo$.
Hence $C[s] >_\mspo C[t]$.
\end{proof}

\begin{rem}
\label{R:incremental}
The MSPO is incremental in the following sense:
Let $({\geqslant_A}, {\sqsupseteq_A}, {\sqsupset_A})$ and 
$({\geqslant_B}, {\sqsupseteq_B}, {\sqsupset_B})$ be reduction triples.
Suppose that $A$ extends $B$, i.e., 
$\geqslant_A$ extends $\geqslant_B$ and also $({\sqsupseteq_A}, {\sqsupset_A})$ extends $({\sqsupseteq_B}, {\sqsupset_B})$.
Then, the MSPO $>_\mspo^{A}$ induced by $A$ extends the MSPO $>_\mspo^{B}$ induced by $B$.
Since the greater relation is the more profitable for termination proving in general (see \autoref{P:reduction-order-termination}),
we have no reason to use $B$ instead of $A$.
\end{rem}

In the sequel,
we recall constructions of reduction triples~\cite{BFR00,Bor03}.
A quasi-order on the signature is called a \emph{precedence},
and it naturally induces a reduction triple.
A precedence $\succeq$ is called \emph{well-founded} if its strict part $\succ$ is well-founded.

\begin{propC}[\cite{BFR00}]
\label{P:prec-redtriple}
Let $\succeq$ be a well-founded precedence, and define
$s \sqsupseteq_P t$ by $s = t$ or $s = f(\bar{s})$, $t = g(\bar{t})$ and $f \succeq g$, and
$\sqsupset_P$ by the strict part of $\sqsupseteq_P$.
Then $({\TT \times \TT}, {\sqsupseteq_P}, {\sqsupset_P})$ is a reduction triple.
\end{propC}

Note that the harmony property follows from the fact that
$\sqsupseteq_P$ equates terms with the same root symbol.
\emph{Lexicographic path orders} (LPOs)~\cite{KL80} are special instances of MSPOs
constructed with \autoref{P:prec-redtriple}.

\begin{rem}
\label{R:order-pair-precedence}
Generalizing precedences from quasi-orders to order pairs does not increase termination proving power
when the signature is finite: 
Let $({\succeq}, {\succ})$ be an order pair on the signature.
Then $\succeq'$ defined as $\succeq \cup \succ$ is a quasi-order
by the compatibility of $({\succeq}, {\succ})$.
Moreover, $\succeq'$ extends $\succeq$, and also the strict part $\succ'$ extends $\succ$.
So,  $\succeq'$ is preferred to $({\succeq}, {\succ})$ when using them with the construction of \autoref{P:prec-redtriple},
because the MSPO is incremental, as remarked in \autoref{R:incremental}.
\end{rem}

For a reduction pair $({\geqslant}, {>})$,
the quasi-order $\geqslant$ is 
monotone,
and therefore harmonious with respect to itself.
So, every reduction pair can be turned into a reduction triple as follows.

\begin{propC}[\cite{BFR00}]
\label{P:redpair-redtriple}
If $({\geqslant}, {>})$ is a reduction pair,
$({\geqslant}, {\geqslant}, {>})$ is a reduction triple.
\end{propC}

In particular, since $({\TT \times \TT}, {\emptyset})$ is a reduction pair,
$({\TT \times \TT}, {\TT \times \TT}, {\emptyset})$ is a reduction triple.
We call it the \emph{trivial} reduction triple.

We illustrate the use of \autoref{P:redpair-redtriple} with the MSPO by an example.

\begin{exa}
We show the termination of the TRS $\RR = \{ \m{f}(\m{f}(x)) \to \m{f}(\m{g}(\m{f}(x))) \}$
taken from \cite{Zan94}.
Let $\AA$ be the linear polynomial interpretation over $\NN$ with $\m{f}_\AA(x) = x + 1$ and $\m{g}_\AA(x) = 0$
(the same one as \autoref{E:spo-not-monotone}),
and $>_\mspo$ the MSPO induced by the reduction triple $({\geqslant_\AA}, {\geqslant_\AA}, {>_\AA})$.
The inequality $\m{f}(\m{f}(x)) \geqslant_\AA \m{f}(\m{g}(\m{f}(x)))$
boils down to $x + 2 > 1$ for all $x \in \NN$, which trivially holds.
In the same way, 
we can confirm $\m{f}(\m{f}(x)) >_\AA t$ for all subterms $t$ of $\m{f}(\m{g}(\m{f}(x)))$.
So, $\m{f}(\m{f}(x)) >_\spo \m{f}(\m{g}(\m{f}(x)))$ follows from successive application of case~\SPO{b(i)}.
Thus, we conclude $\RR \subseteq {>_\mspo}$ and thereby the termination.
\end{exa}

Note that the strict order $>$ in \autoref{P:redpair-redtriple} is not the strict part of $\geqslant$ in general.
This essentially explains why we adopt a formulation based on order pairs in \autoref{D:spo}.
The issue is most evident when they are induced by a polynomial interpretation.

\begin{exa}
\label{E:not-strict-part}
Consider the linear polynomial interpretation $\AA$ over $\NN$ with the interpretation $\m{f}_\AA(x) = 2 x$,
and the induced reduction triple $({\geqslant_\AA}, {\geqslant_\AA}, {>_\AA})$.
On the one hand we have $\m{f}(\m{f}(x)) \geqslant_\AA \m{f}(x)$ from
$4x \geqslant 2x$ for all $x \in \NN$,
but not $\m{f}(x) \geqslant_\AA \m{f}(\m{f}(x))$ as 
$2x \geqslant 4x$ does not hold for $x = 1$.
On the other hand $\m{f}(\m{f}(x)) >_\AA \m{f}(x)$ does not hold either,
because $4 x > 2x$ does not hold for $x = 0$.
\end{exa}

It is known that 
\autoref{P:redpair-redtriple}
can be improved by marking root symbols; see~\cite[Proposition~4.2]{BFR00}.
(A similar idea is used by the dependency pair method discussed in \autoref{S:reduction-pair}.)
In this article, we use a special case based on algebras.
We first introduce a distinguished copy $\FF^\sharp$ of the original signature $\FF$,
consisting of a fresh $n$-ary function symbol $f^\sharp$ for each $f^{(n)} \in \FF$.
For an $(\FF \cup \FF^\sharp)$-algebra $\AA$,
we define $s \geqslant_{\AA}^{\sharp} t$ on $\TT(\FF, \VV)$ by
$s = t$ or $s = f(\bar{s})$, $t = g(\bar{t})$ and $f^\sharp(\bar{s}) \geqslant_\AA g^\sharp(\bar{t})$.
Similarly, we define  $s >_{\AA}^{\sharp} t$ on $\TT(\FF, \VV)$ by
$s = f(\bar{s})$, $t = g(\bar{t})$ and $f^\sharp(\bar{s}) >_\AA g^\sharp(\bar{t})$.
The construction below is based on the observation that 
$\geqslant_\AA^\sharp$ is harmonious with respect to $\geqslant_\AA$
if $\AA$ is weakly monotone for $\FF^\sharp$.

\begin{prop}
\label{P:algebra-redtriple}
The triple $({\geqslant_\AA}, {\geqslant_{\AA}^\sharp}, {>_{\AA}^\sharp})$ is a reduction triple on $\TT(\FF, \VV)$
for every weakly monotone and well-founded  $(\FF \cup \FF^\sharp)$-algebra $\AA$. \qed
\end{prop}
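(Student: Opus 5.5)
We verify the three requirements of a reduction triple in turn, relying on \autoref{P:algebra-redpair} for the algebraic facts. Since $\AA$ is a weakly monotone and well-founded $(\FF \cup \FF^\sharp)$-algebra, $({\geqslant_\AA}, {>_\AA})$ is a reduction pair on $\TT(\FF \cup \FF^\sharp, \VV)$. Its restriction to $\TT(\FF, \VV)$ is again a reduction pair, because $\TT(\FF,\VV)$ is closed under $\FF$-contexts and under substitutions mapping into $\TT(\FF,\VV)$. In particular $\geqslant_\AA$ restricted to $\TT(\FF,\VV)$ is a rewrite preorder, which gives the first component.

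Next we show that $({\geqslant_\AA^\sharp}, {>_\AA^\sharp})$ is a stable order pair with $>_\AA^\sharp$ well-founded.
\begin{itemize}
\item Reflexivity of $\geqslant_\AA^\sharp$ is built in.
\item For transitivity, if $s \geqslant_\AA^\sharp t \geqslant_\AA^\sharp u$ with neither step an equality, then $s,t,u$ are all headed by function symbols. Writing $s = f(\bar s)$, $t = g(\bar t)$, $u = h(\bar u)$, we get $f^\sharp(\bar s) \geqslant_\AA g^\sharp(\bar t) \geqslant_\AA h^\sharp(\bar u)$, and transitivity of $\geqslant_\AA$ applies.
\item Irreflexivity and transitivity of $>_\AA^\sharp$, and compatibility of the pair, follow in the same way from the corresponding properties of $({\geqslant_\AA},{>_\AA})$. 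The equality case of $\geqslant_\AA^\sharp$ is trivial.
\item For stability, note that $(f(\bar s))\sigma = f(\bar s\sigma)$ and $(f^\sharp(\bar s))\sigma = f^\sharp(\bar s\sigma)$, so stability of $\geqslant_\AA$ and $>_\AA$ transfers directly. The equality case is again trivial.
\item For well-foundedness, an infinite chain $s_0 >_\AA^\sharp s_1 >_\AA^\sharp \cdots$ consists of function-headed terms $s_k = f_k(\bar s_k)$. It therefore yields an infinite chain $f_0^\sharp(\bar s_0) >_\AA f_1^\sharp(\bar s_1) >_\AA \cdots$, contradicting well-foundedness of $>_\AA$.
\end{itemize}

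Finally, harmony of $\geqslant_\AA^\sharp$ with respect to $\geqslant_\AA$ is the observation already made before the proposition. Suppose $s_i \geqslant_\AA t$. Weak monotonicity of $\AA$ at the $i$-th argument of $f^\sharp$ gives $f^\sharp(\ldots,s_i,\ldots) \geqslant_\AA f^\sharp(\ldots,t,\ldots)$, pointwise under every assignment. By definition this means $f(\ldots,s_i,\ldots) \geqslant_\AA^\sharp f(\ldots,t,\ldots)$.

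The argument is essentially routine. The only point requiring care is that the definitions of $\geqslant_\AA^\sharp$ and $>_\AA^\sharp$ involve both an equality case and the marked root symbols. Transitivity and compatibility therefore need a case split according to whether the equality disjunct was used. We also have to check that variables never participate in a strict step, which holds by definition, and this is what makes well-foundedness and stability go through cleanly.
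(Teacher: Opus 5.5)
Your proof is correct and takes essentially the paper's route. The paper states the proposition with only \qed, resting on the observation made just before it that $\geqslant_\AA^\sharp$ is harmonious with respect to $\geqslant_\AA$ because $\AA$ is weakly monotone on the marked symbols (exactly your harmony step); the remaining order-pair, stability and well-foundedness properties are inherited routinely from $({\geqslant_\AA},{>_\AA})$, as you check case by case.
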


There is a variant of the construction above that allows us to use non-weakly-monotone algebras,
at the cost of using the equivalence
relation $=_\AA$ as the first component of the reduction triple~\cite{HM04}.
Here, $s =_\AA t$ is defined on $\TT(\FF, \VV)$ as
$[\alpha]_\AA(s) = [\alpha]_\AA(t)$ for all assignments $\alpha$.

\begin{prop}
\label{P:model-redtriple}
The triple $({=_\AA}, {\geqslant_{\AA}^\sharp}, {>_{\AA}^\sharp})$ is a reduction triple on $\TT(\FF, \VV)$
for every well-founded  $(\FF \cup \FF^\sharp)$-algebra $\AA$.
\end{prop}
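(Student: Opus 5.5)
To establish \autoref{P:model-redtriple}, I would check the three defining conditions of a reduction triple directly. Throughout, write $s^\sharp$ for $f^\sharp(\bar{s})$ when $s = f(\bar{s})$. The plan is to verify that $=_\AA$ is a rewrite preorder, that $({\geqslant_\AA^\sharp}, {>_\AA^\sharp})$ is a stable order pair with well-founded strict part, and that harmony holds. No monotonicity of $\AA$ is needed anywhere, because the first component is an equivalence defined by equality of interpretations, and equality is preserved by every function.

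\emph{First component.} The relation $=_\AA$ is reflexive, symmetric and transitive pointwise in $\alpha$, so it is a preorder. For stability, I would use the standard substitution lemma $[\alpha]_\AA(t\sigma) = [\alpha_\sigma]_\AA(t)$, where $\alpha_\sigma(x) = [\alpha]_\AA(\sigma(x))$; this lemma is proved by induction on $t$. For monotonicity, I would argue by induction on the context. For a single layer $f(\bar u, \hole, \bar v)$, equal argument values give equal values of $f_\AA$, since $f_\AA$ is a function.

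\emph{Order pair.} Stability of $\geqslant_\AA^\sharp$ and $>_\AA^\sharp$ follows because substitution preserves the root symbol, $(s\sigma)^\sharp = s^\sharp\sigma$, and $\geqslant_\AA$ and $>_\AA$ are stable. Transitivity and compatibility come from the corresponding properties of $(\geqslant_\AA, >_\AA)$, with a small case split on the reflexive case $s = t$ in $\geqslant_\AA^\sharp$. Irreflexivity of $>_\AA^\sharp$ follows from irreflexivity of $>$. Well-foundedness is obtained by mapping an infinite chain $s_0 >_\AA^\sharp s_1 >_\AA^\sharp \cdots$ to $s_0^\sharp >_\AA s_1^\sharp >_\AA \cdots$, where all terms are non-variables. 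Since $>$ is well-founded, so is $>_\AA$; concretely, one fixes any assignment and obtains a descending chain in $A$. This assumes $A$ is nonempty; if $A$ were empty, $>_\AA$ would relate everything, so the argument needs a nonempty carrier, which I will take as implicit in the paper's convention.

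\emph{Harmony.} This is the main point. Suppose $s_i =_\AA t$. I must show $f(\ldots,s_i,\ldots) \geqslant_\AA^\sharp f(\ldots,t,\ldots)$, that is, $f^\sharp(\ldots,s_i,\ldots) \geqslant_\AA f^\sharp(\ldots,t,\ldots)$. For every $\alpha$ the two argument tuples have equal values, so $f^\sharp_\AA$ returns equal values. Hence $\geqslant$ holds provided $\geqslant$ is reflexive on $A$, which it is, since it is a quasi-order of the order pair.

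The only delicate step is ensuring that no monotonicity of $\AA$ is used in the harmony and monotonicity arguments. This is exactly what replacing $\geqslant_\AA$ by $=_\AA$ buys, since equality is a congruence for any interpretation function.
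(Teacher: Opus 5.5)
Your proposal is correct and follows the paper's approach. The paper's proof spells out only harmony, via the same observation that $s =_\AA t$ gives $f^\sharp(\bar{u}, s, \bar{v}) =_\AA f^\sharp(\bar{u}, t, \bar{v})$. It calls the remaining properties easy, and you have filled them in correctly; your nonempty-carrier caveat is already implicit in the paper's claim that $>_\AA$ is a strict order.
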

\begin{proof}
For the harmony property, let $s =_\AA t$ and $C = f(\bar{u}, \hole, \bar{v})$.
From $s =_\AA t$ we obtain
$f^\sharp(\bar{u}, s, \bar{v}) =_\AA f^\sharp(\bar{u}, t, \bar{v})$,
and thus
$C[s] \geqslant_\AA^\sharp C[t]$.
The other properties are easy to verify.
\end{proof}

\begin{exa}
\label{E:Dershowitz}
Consider the TRS $\RR = \{ \m{f}(\m{a}, \m{b}, x) \to \m{f}(x, x, x) \}$.
We mimic the termination proof of \cite[Example~2]{Zan95} by the finite algebra $\AA$ over $\{ 0, 1 \}$ defined as follows:
\begin{align*}
\m{f}_\AA(x, y, z) & = 0 & \m{a}_\AA &= 1 & \m{b}_\AA &= 0
&
\m{f}^\sharp_\AA(x, y, z) & = \begin{cases*}
1 & \text{if $x \neq y$} \\
0 & \text{otherwise}
\end{cases*}
&
\m{a}^\sharp_\AA &= 1 & \m{b}^\sharp_\AA &= 0
\end{align*}
We consider the MSPO $>_\mspo$ induced by the reduction triple $({=_\AA}, {\geqslant_{\AA}^\sharp}, {>_{\AA}^\sharp})$.
Obviously, we have $\m{f}(\m{a}, \m{b}, x) =_\AA \m{f}(x, x, x)$, and also $\m{f}^\sharp(\m{a}, \m{b}, x) >_\AA \m{f}^\sharp(x, x, x)$.
So $\m{f}(\m{a}, \m{b}, x) >_\spo \m{f}(x, x, x)$ follows from case~\SPO{b(i)}.
Finally, we conclude $\RR \subseteq {>_\mspo}$ and thereby termination.
\end{exa}

\subsection{Generalizing WPO}
\label{SS:GWPO}

The \emph{weighted path order}
(WPO) is a class of reduction orders introduced by Yamada, Kusakari and Sakabe~\cite{YKS15}.
The definition of WPOs is based on an algebra $\AA$ and
a precedence $\succeq$.\footnote{%
{In contrast to \cite[Definition~3]{YKS15}, we do not consider the status extension 
that allows permutation of arguments.
This is merely to simplify the presentation.}
}
A WPO compares terms $s$ and $t$ as a Knuth--Bendix order (KBO)~\cite{KB70} does:
If $s >_\AA t$, we can immediately conclude that $s$ is strictly greater than $t$.
If the weak inequality $s \geqslant_\AA t$ holds
then we can compare an argument of $s$ with $t$, or
$f = \rt(s)$ and $g = \rt(t)$ by the precedence $\succeq$.
When the weak inequality $f \succeq g$ holds, then the arguments of $s$ and $t$ can be compared lexicographically.

\begin{defiC}[\cite{YKS15}]
\label{D:wpo-reduction-order}
Let $\AA$ be an ordered $\FF$-algebra and $\succeq$ a precedence.
The \emph{weighted path order} $>_\wpo$ is defined
on terms over $\FF$ as follows: $s >_\wpo t$ if
\begin{enumerate}[1.]
\item \label{wpo:1} $s >_\AA t$, or
\item \label{wpo:2} $s \geqslant_\AA t$, $s = f(\seq[m]{s})$, and one of the following
conditions holds.
\begin{enumerate}[a.]
\item \label{wpo:2a}
$s_i \geqslant_\wpo t$ for some $1 \leqslant i \leqslant m$.
\item \label{wpo:2b} \smallskip
$t = g(\seq{t})$ and $s >_\wpo t_j$ for all $1 \leqslant j \leqslant n$, and
moreover
\smallskip
    \begin{enumerate}[(i)]
    \item \label{wpo:2b(i)}
    $f \succ g$, or
    \item \label{wpo:2b(ii)}
    $f \succeq g$ and $(\seq[m]{s}) >_\wpo^\lex (\seq{t})$.
    \end{enumerate}
\end{enumerate}
\end{enumerate}
Here $\geqslant_\wpo$ denotes the reflexive closure of $>_\wpo$.
\end{defiC}

We note that generalizing precedences to order pairs does not increase termination proving power of the WPO;
see~\autoref{R:order-pair-precedence}.

\begin{thmC}[\cite{YKS15}]
\label{T:wpo-reduction-order}
For every well-founded, normal, weakly simple and monotone algebra and well-founded precedence,
the induced relation $>_\wpo$ is a reduction order (with the subterm property).
\end{thmC}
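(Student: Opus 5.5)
The plan is to reduce the statement to the machinery already set up for semantic path orders, namely \autoref{T:spo} and the monotonicity argument in the proof of \autoref{T:mspo}. Given the algebra $\AA$ and the precedence $\succeq$, I define a pair of relations on terms by lexicographic combination. Let $s \sqsupseteq t$ hold iff $s \geqslant_\AA t$ and either $s=t$, or $s=f(\bar s)$, $t=g(\bar t)$ and $f \succeq g$. Let $s \sqsupset t$ hold iff $s >_\AA t$, or $s \geqslant_\AA t$, $s=f(\bar s)$, $t=g(\bar t)$ and $f\succ g$. I then check that the WPO coincides with the SPO induced by $({\sqsupseteq},{\sqsupset})$, at least on terms headed by function symbols, and treat variables separately.

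The key steps, in order, are as follows.
\begin{enumerate}[(1)]
\item Show that $({\sqsupseteq},{\sqsupset})$ is a stable order pair with $\sqsupset$ well-founded. Compatibility and stability are routine, using the compatibility of $({\geqslant_\AA},{>_\AA})$ and the fact that root symbols are unchanged by substitution. Normality of $\AA$ gives ${>_\AA}\subseteq{\geqslant_\AA}$, which is what makes transitivity of the mixed cases work. Well-foundedness is that of the lexicographic product of the well-founded $>_\AA$ and $\succ$.
\item Prove $>_\wpo \subseteq {>_\spo}$ by induction on $|s|+|t|$. Case~\WPO{1} needs the subterm property in the form $s >_\AA t \Rightarrow s >_\spo t$. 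For $t=g(\bar t)$, weak simplicity gives $s >_\AA t \geqslant_\AA t_j$, so $s >_\AA t_j$ and hence $s>_\wpo t_j$, and case~\SPO{b(i)} applies by induction. A variable $t$ must occur in $s$ (otherwise instantiating it with a large value contradicts $>_\AA$), so case~\SPO{a} applies via the subterm property of \autoref{T:spo}. Cases~\WPO{2a} and~\WPO{2b} map directly to~\SPO{a} and~\SPO{b}.
\item Conversely, prove $>_\spo \subseteq {>_\wpo}$ as well, or at least that $>_\wpo$ is a strict order. This requires showing $s >_\spo t \Rightarrow s\geqslant_\AA t$, which is a side condition of case~\WPO{2}. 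The route is induction: in case~\SPO{a} we have $s\geqslant_\AA s_i \geqslant_\AA t$ by weak simplicity, and in case~\SPO{b} the condition $\sqsupseteq$ or $\sqsupset$ already includes $\geqslant_\AA$. With the two inclusions, $>_\wpo = {>_\spo}$, so \autoref{T:spo} gives that $>_\wpo$ is a stable well-founded order with the subterm property.
\item Show monotonicity as in \autoref{T:mspo}. From $s>_\wpo t$ we get $s\geqslant_\AA t$, so by weak monotonicity $C[s]\geqslant_\AA C[t]$, the roots are equal, and the arguments are lexicographically greater. The arguments of $C[t]$ are below $C[s]$ by the subterm property and transitivity. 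Case~\WPO{2b(ii)} then gives $C[s]>_\wpo C[t]$.
\end{enumerate}

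The main obstacle I expect is step (3), namely the invariant that $>_\spo$, or the WPO's own recursion, implies $\geqslant_\AA$ and interacts correctly with variables and with case~\WPO{1}. The delicate point is ensuring that the side condition $s\geqslant_\AA t$ in case~\WPO{2} is not lost, so that the two recursions match exactly. If exact equality fails on some edge case involving variables, I would instead prove transitivity and irreflexivity of $>_\wpo$ directly by the same induction, and obtain well-foundedness from $>_\wpo\subseteq{>_\spo}$ alone.
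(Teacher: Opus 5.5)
Your proposal is correct and follows essentially the paper's derivation, just inlined: the paper identifies $>_\wpo$ with the MGWPO induced by $({\geqslant_\AA},{\geqslant_\AA},{>_\AA})$ and the precedence triple, using normality for ${>_\wpo}\subseteq{\geqslant_\AA}$ and weak simplicity for the subterm property, and then invokes \autoref{T:mgwpo}, i.e.\ \autoref{C:mgwpo-mspo} with \autoref{T:mspo}. So it too simulates the WPO by the SPO of the lexicographic combination of $({\geqslant_\AA},{>_\AA})$ with the precedence pair, which induces the same SPO as your slightly smaller $\sqsupseteq$.

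One detail is missing in your step~(2): case~\WPO{1} with $s$ a variable has to be ruled out, exactly as in \autoref{L:gwpo-spo}. If $s\in\Var(t)$, weak simplicity gives the contradiction; otherwise use the assignment with $\alpha(s)=[\alpha]_\AA(t)$. Likewise, for a variable $t\notin\Var(s)$ the right choice is $\alpha(t)=[\alpha]_\AA(s)$ rather than a ``large value''.
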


We note that the normality requirement was identified by \cite{TSSY20}.
Weak simplicity of the algebra is essential for well-foundedness of the WPO in \autoref{T:wpo-reduction-order}.
In fact, otherwise the order may not be irreflexive.

\begin{exa}
\label{E:wpo-weak-simple}
Let $\AA$ be the weakly monotone and normal algebra on $\NN$ given by $\m{a}_\AA = 1$ and $\m{f}_\AA(x) = 0$.
The WPO $>_\wpo$ induced by $\AA$ and any precedence satisfies $\m{f}(\m{a}) >_\wpo \m{f}(\m{a})$
which can be seen by applying case~\WPO{2a} and then case~\WPO{1}.
\end{exa}

In general, it is known that the WPO generalizes the LPO and the KBO in the following sense:
KBOs are WPOs induced by a restricted form of linear polynomial interpretations over $\NN$, and 
LPOs are WPOs induced by a restricted form of max/plus interpretations~\cite{YKS15}.
So every termination proof by an LPO or KBO can be simulated by a corresponding WPO.

We generalize the WPO in the spirit of Kamin and L{\'e}vy.
Their idea to obtain the SPO is to generalize precedence of the LPO
to order pairs on terms. Following this, we generalize the WPO so as to take 
two order pairs on terms:
one takes the role of algebra, and the other that of precedence.

\begin{defi}
\label{D:gwpo}
Let $({\sqsupseteq_A}, {\sqsupset_A})$ and $({\sqsupseteq_B}, {\sqsupset_B})$ be pairs of relations on terms.
The \emph{generalized weighted path order} $>_\gwpo$ (GWPO) is defined as follows: 
$s >_\gwpo t$ if
\begin{enumerate}[1.]
  \item \label{gwpo:1} $s \sqsupset_A t$, or
  \item \label{gwpo:2} $s = f(\seq[m]{s}) \sqsupseteq_A t$ and one of the following conditions holds:
  \begin{enumerate}[a.]
    \item \label{gwpo:2a} $s_i \geqslant_\gwpo t$ for some argument $s_i$.
    \item \label{gwpo:2b} $t = g(\seq[n]{t})$, $s >_\gwpo t_j$ for all arguments $t_j$, and moreover
    \begin{enumerate}[(i)]
      \item \label{gwpo:2b(i)} $f(\seq[m]{s}) \sqsupset_B g(\seq[n]{t})$, or
      \item \label{gwpo:2b(ii)} $f(\seq[m]{s}) \sqsupseteq_B g(\seq[n]{t})$ and $(\seq[m]{s}) >_\gwpo^\lex (\seq[n]{t})$.
    \end{enumerate}
  \end{enumerate}
\end{enumerate}
\end{defi}

It is a nice exercise to compare \autoref{D:gwpo} and \autoref{D:spo}:
an SPO induced by a pair $({\sqsupseteq}, {\sqsupset})$ is identical to the GWPO 
induced by  $({\TT \times \TT}, {\emptyset})$ and $({\sqsupseteq}, {\sqsupset})$.
It is also instructive to compare \autoref{D:gwpo} and \autoref{D:wpo-reduction-order}:
The order pair $({\sqsupseteq_A}, {\sqsupset_A})$ corresponds to $({\geqslant_{\AA}}, {>_{\AA}})$ induced by an algebra $\AA$,
and  $({\sqsupseteq_B}, {\sqsupset_B})$ corresponds to comparison by a precedence.
Interestingly, the version of WPO formalized in Isabelle/HOL (\cite[Section~3]{TSSY20}) is based on a similar idea,
but does not generalize precedences to order pairs on terms.

Not surprisingly, a sufficient condition for a GWPO to be a stable and well-founded order
is a merger of that for an SPO (\autoref{T:spo}) and that for a WPO (\autoref{T:wpo-reduction-order}).
We say that an order pair $({\sqsupseteq_A}, {\sqsupset_A})$  on terms has the subterm property
if $\prsuperterm \subseteq {(\sqsupseteq_A \cup \sqsupset_A)}$.
Hereafter, for brevity we may simply write the subscript $A$ for a pair $({\sqsupseteq_A}, {\sqsupset_A})$
or a triple $({\geqslant_A}, {\sqsupseteq_A}, {\sqsupset_A})$,
if it does not cause any confusion.

\begin{thm}
\label{T:gwpo}
If $({\sqsupseteq_A}, {\sqsupset_A})$ and $({\sqsupseteq_B}, {\sqsupset_B})$ are well-founded stable order pairs and
the order pair $A$ has the subterm property,
the induced relation $>_\gwpo$ is a stable and well-founded order with the subterm property.
\end{thm}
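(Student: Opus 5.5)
We would follow the usual strategy for the SPO (\autoref{T:spo}) and establish the properties in a fixed order: subterm property, then stability, then transitivity, irreflexivity and well-foundedness.

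\emph{Subterm property.} We show $f(\seq[m]{s}) >_\gwpo s_i$ by induction on the term. By the subterm property of $A$, either $s \sqsupset_A s_i$, and case~\GWPO{1} applies, or $s \sqsupseteq_A s_i$, and case~\GWPO{2a} applies with $s_i \geqslant_\gwpo s_i$. Proper subterms deeper down then follow once transitivity is available. Alternatively, one can prove directly by induction that $s_i \geqslant_\gwpo t$ implies $f(\ldots,s_i,\ldots) >_\gwpo t$; this is essentially the same argument using $\sqsupseteq_A \cup \sqsupset_A$ together with compatibility.

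\emph{Stability.} We argue by induction on the derivation of $s >_\gwpo t$. Each case uses only $\sqsupset_A$, $\sqsupseteq_A$, $\sqsupset_B$, $\sqsupseteq_B$ (all stable), recursive calls, and the lexicographic extension. Roots of non-variable terms are preserved by substitution, and the case $t \in \VV$ needs no special handling because the definition never inspects $t$ beyond its root.

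\emph{Transitivity.} We prove $s >_\gwpo t >_\gwpo u \Rightarrow s >_\gwpo u$ by induction on $|s|+|t|+|u|$, with a case analysis on the two derivations. The invariant to record is that $s >_\gwpo t$ always implies $s \sqsupseteq_A t$ or $s \sqsupset_A t$. This holds directly in cases~\GWPO{1} and~\GWPO{2}. Composing by compatibility then gives $s \sqsupset_A u$ when either step is strict, and $s \sqsupseteq_A u$ otherwise. So the $A$-precondition of case~\GWPO{2} for $s$ versus $u$ always holds unless we are already in case~\GWPO{1}.
\begin{itemize}
\item If $s >_\gwpo t$ is by~\GWPO{2a} via $s_i \geqslant_\gwpo t$, the induction hypothesis gives $s_i \geqslant_\gwpo u$, and we apply~\GWPO{2a}.
\item If $s >_\gwpo t$ is by~\GWPO{2b} and $t >_\gwpo u$ is by~\GWPO{2a} via $t_j \geqslant_\gwpo u$, then $s >_\gwpo t_j$ and the induction hypothesis gives $s >_\gwpo u$.
\item If both are by~\GWPO{2b}, then $s >_\gwpo u_k$ for all $k$ follows by induction from $t >_\gwpo u_k$. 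The $B$-comparison composes via compatibility of $B$, and the lexicographic part composes by the induction hypothesis on arguments.
\end{itemize}
The main obstacle is that $t >_\gwpo u$ may hold by~\GWPO{1} while $s >_\gwpo t$ holds by~\GWPO{2}. We then get $s \sqsupset_A u$ from compatibility of $A$, so case~\GWPO{1} applies and this case in fact resolves cleanly.

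\emph{Irreflexivity and well-foundedness.} This is the hard part. We would follow the proof of \autoref{T:spo}, that is, the standard minimal-bad-sequence (or computability) argument. Call a term \emph{SN} if it is $>_\gwpo$-terminating. We show by induction that $f(\seq[m]{s})$ is SN whenever all $s_i$ are SN, using a well-founded lexicographic measure on such terms:
\[
(s \text{ w.r.t. } \sqsupset_A,\ s \text{ w.r.t. } \sqsupset_B,\ (\seq[m]{s}) \text{ w.r.t. } {>_\gwpo^\lex} \text{ on SN arguments}).
\]
Because $A$ and $B$ are order pairs rather than strict parts of quasi-orders, this measure must be taken modulo the compatible quasi-orders. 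Concretely, we order pairs $(s,\cdot)$ by $\sqsupset_A \cdot$ closed under $\sqsupseteq_A$, and the compatibility conditions ${\sqsupseteq \cdot \sqsupset} \subseteq {\sqsupset}$ and ${\sqsupset \cdot \sqsupseteq} \subseteq {\sqsupset}$ let a descent be absorbed. The third component is well-founded on tuples of bounded length by boundedness of the signature and \autoref{P:lex-ext}.

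For a step $s >_\gwpo t$:
\begin{itemize}
\item Case~\GWPO{1} decreases the first component.
\item Case~\GWPO{2a} gives $s_i \geqslant_\gwpo t$, so $t$ is SN by the hypothesis on $s_i$ (using transitivity).
\item Case~\GWPO{2b} needs $t$'s arguments to be SN, which holds since $s >_\gwpo t_j$ with $s$ under inductive consideration (nested induction on $|t|$). It then decreases the second component, or the third if the first two are weak.
\end{itemize}
Irreflexivity follows from well-foundedness.
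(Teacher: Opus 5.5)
Your route differs from the paper's. The paper never argues about $>_\gwpo$ directly. It proves \autoref{T:gwpo-spo}: $>_\gwpo$ coincides with the SPO induced by the lexicographic combination $({\sqsupseteq_{AB}},{\sqsupset_{AB}})$, which is again a stable well-founded order pair. It then invokes \autoref{T:spo}, whose well-foundedness part ultimately rests on the Buchholz argument of \autoref{L:buchholz-gwpo} with trivial $A$. Your direct treatment of the subterm property, stability and transitivity is fine, and a direct Buchholz-style proof can work. As written, however, your well-foundedness argument has a hole exactly where the hypothesis on $A$ is needed. You use the subterm property of $A$ only to obtain the subterm property of $>_\gwpo$, never for termination, although \autoref{E:wpo-weak-simple} shows it is indispensable there.

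Concretely, two things are missing.

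First, in case~\GWPO{1}, saying ``the first component decreases'' does not suffice. To apply your outer hypothesis to $t = g(\seq{t})$ you need every $t_j$ to be SN, and unlike case~\GWPO{2b}, the definition does not hand you $s >_\gwpo t_j$. You must derive it: from $s \sqsupset_A t$, the subterm property of $A$ and compatibility give $s \sqsupset_A t_j$. Hence $s >_\gwpo t_j$ by case~\GWPO{1}, and only then does your nested induction on $|t|$ apply. This is precisely what fails in \autoref{E:wpo-weak-simple}. There $\m{a} \sqsupset_A \m{f}(\m{a})$ holds, but $\m{a} >_\gwpo \m{a}$ is not available, so establishing that the argument $\m{a}$ is SN becomes circular.

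Second, you never treat variables. Case~\GWPO{1} imposes no root condition on $s$, so a priori $x >_\gwpo u$ holds whenever $x \sqsupset_A u$. You need variables to be SN in two places: for successors $t \in \VV$ reached via case~\GWPO{1}, and as the base of the final structural induction. They are SN because $x \sqsupset_A u$ is impossible:
\begin{itemize}
\item if $x$ occurs in $u$, the subterm property of $A$ and compatibility give $x \sqsupset_A x$;
\item otherwise, stability under $\{x \mapsto u\}$ gives $u \sqsupset_A u$.
\end{itemize}
Either way this contradicts irreflexivity of $\sqsupset_A$.

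With these two additions your direct proof goes through. They are exactly the uses of the subterm property packaged in the first case of \autoref{L:gwpo-spo}. Your route then gives a self-contained argument. The paper's route additionally yields the simulation result, which it uses later to transfer properties to MSPOs.
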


In the next section we show the theorem using a simulation result.
The subterm property of $A$ is necessary for
well-foundedness. The necessity is witnessed by the WPO in
\autoref{E:wpo-weak-simple}, as it can be regarded as a GWPO induced by $({\geqslant_\AA},{>_\AA})$.
By definition, the subterm property of $\sqsupseteq_A$ is a sufficient condition for that of the order pair $(\sqsupseteq_A, \sqsupset_A)$.
Here, we allow the weaker condition $\prsuperterm \subseteq {(\sqsupseteq_A \cup \sqsupset_A)}$
for convenience in \autoref{SS:gwpo-as-optimized-spo}
(so that \autoref{C:kbo-like-spo} can be derived using \autoref{T:gwpo-spo}).

As the SPO has the monotonic variant MSPO,
so does the GWPO\@.
We call it the \emph{monotonic generalized weighted path order},
or MGWPO for short.

\begin{defi}
\label{D:mgwpo}
Let $({\geqslant_A}, {\sqsupseteq_A}, {\sqsupset_A})$ and $({\geqslant_B}, {\sqsupseteq_B}, {\sqsupset_B})$ be triples of relations on terms.
The \emph{monotonic generalized weighted path order} $>_\mgwpo$ is the intersection
${\geqslant_A} \cap {\geqslant_B} \cap {>_\gwpo}$ where $>_\gwpo$ is induced by  $({\sqsupseteq_A}, {\sqsupset_A})$ and $({\sqsupseteq_B}, {\sqsupset_B})$.
\end{defi}

Now, we arrive at the main theorem of this section.

\begin{thm}
\label{T:mgwpo}
The relation $>_\mgwpo$ is a reduction order
for reduction triples $({\geqslant_A}, {\sqsupseteq_A}, {\sqsupset_A})$ and $({\geqslant_B}, {\sqsupseteq_B}, {\sqsupset_B})$
with $(\sqsupseteq_A, \sqsupset_A)$ having the subterm property. 
\end{thm}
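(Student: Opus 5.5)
We prove \autoref{T:mgwpo} along the lines of the proof of \autoref{T:mspo}. Stability and well-foundedness should come for free. By definition, a reduction triple $({\geqslant_A},{\sqsupseteq_A},{\sqsupset_A})$ contains a stable order pair $({\sqsupseteq_A},{\sqsupset_A})$ with $\sqsupset_A$ well-founded, and likewise for $B$. Since $A$ has the subterm property, \autoref{T:gwpo} applies: $>_\gwpo$ is a stable well-founded strict order with the subterm property. The relations $\geqslant_A$ and $\geqslant_B$ are rewrite preorders, hence stable. The intersection ${\geqslant_A}\cap{\geqslant_B}\cap{>_\gwpo}$ is then irreflexive and well-founded, being contained in $>_\gwpo$. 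It is transitive because each of the three components is, and it is stable because each component is. So the only real task is closure under contexts.

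For monotonicity, it suffices to consider one-hole contexts of depth one, $C = f(\bar u,\hole,\bar v)$; induction on the context then gives the general case. Suppose $s >_\mgwpo t$, so $s \geqslant_A t$, $s \geqslant_B t$, and $s >_\gwpo t$. Monotonicity of the rewrite preorders gives $C[s] \geqslant_A C[t]$ and $C[s]\geqslant_B C[t]$. By the harmony of $\sqsupseteq_A$ with respect to $\geqslant_A$, we get $C[s] \sqsupseteq_A C[t]$, and by the harmony of $\sqsupseteq_B$ with respect to $\geqslant_B$, we get $C[s]\sqsupseteq_B C[t]$.

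It remains to derive $C[s] >_\gwpo C[t]$ using case~\GWPO{2b(ii)}. Its first requirement $C[s] = f(\bar u, s,\bar v) \sqsupseteq_A C[t]$ is already established. Next we need $C[s] >_\gwpo w$ for every argument $w$ of $C[t]$.
\begin{itemize}
\item For $w \in \bar u,\bar v$, the subterm property of $>_\gwpo$ gives $C[s] >_\gwpo w$ directly.
\item For $w = t$, the subterm property gives $C[s] >_\gwpo s$. Transitivity with $s >_\gwpo t$ then gives $C[s] >_\gwpo t$.
\end{itemize}
Alternatively, case~\GWPO{2a} applies directly, since $s \geqslant_\gwpo t$ and $C[s]\sqsupseteq_A t$; the latter holds because $C[s] \sqsupseteq_A C[t]$ combined with the subterm property of $A$ on $C[t] \prsuperterm t$ gives $C[s] \sqsupseteq_A t$ by compatibility and transitivity. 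The transitivity route is cleaner, so we use that one.

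Finally, $C[s]\sqsupseteq_B C[t]$ holds by harmony as above, and the argument lists satisfy $(\bar u,s,\bar v) >_\gwpo^\lex (\bar u,t,\bar v)$: they agree on the prefix $\bar u$, which is compared by identity since $\geqslant_\gwpo$ is reflexive, and $s >_\gwpo t$ at the hole position. Case~\GWPO{2b(ii)} therefore yields $C[s] >_\gwpo C[t]$, and hence $C[s] >_\mgwpo C[t]$.

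I expect no step to be a real obstacle, since the heavy lifting (well-foundedness and transitivity of $>_\gwpo$) is delegated to \autoref{T:gwpo}. The one point to watch is that \autoref{T:gwpo} really delivers transitivity and the subterm property of $>_\gwpo$. If it did not, I would instead obtain $C[s]>_\gwpo t$ via case~\GWPO{2a} as sketched above, which only needs reflexivity of $\geqslant_\gwpo$ and the subterm property of $A$.
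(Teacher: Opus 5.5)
Your proof is correct, but it is organized differently from the paper's. The paper obtains \autoref{T:mgwpo} entirely by transfer. \autoref{C:mgwpo-mspo} identifies $>_\mgwpo$ with the MSPO induced by $({\geqslant_A \cap \geqslant_B}, {\sqsupseteq_{AB}}, {\sqsupset_{AB}})$. \autoref{P:lex-reduction-triple} shows that this is a reduction triple. \autoref{T:mspo} then delivers all reduction-order properties at once, monotonicity included.

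You take only the static properties from \autoref{T:gwpo}: strict order, stability, well-foundedness and the subterm property. The paper itself derives \autoref{T:gwpo} from \autoref{T:gwpo-spo} and \autoref{T:spo}. You then verify closure under contexts directly on the GWPO definition via case~\GWPO{2b(ii)}, using the harmony of each triple separately.

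What each route buys:
\begin{itemize}
\item The paper's route costs nothing once the simulation is in place. It also makes the conceptual point of the section, namely that every MGWPO literally is an MSPO.
\item Beyond what \autoref{T:gwpo} already provides, your route needs neither \autoref{C:mgwpo-mspo} nor \autoref{P:lex-reduction-triple}.
\item Your route works verbatim under the statement's subterm hypothesis on the pair $(\sqsupseteq_A, \sqsupset_A)$. By contrast, \autoref{C:mgwpo-mspo} is phrased with the subterm property of $\sqsupseteq_A$.
\item Your argument is the one the paper uses later for \autoref{L:gwpo-monotone} in the reduction-pair setting. There the simulation is not available for general partial statuses.
\end{itemize}

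One small correction to your aside on case~\GWPO{2a}. Suppose the subterm property of $A$ yields $C[t] \sqsupset_A t$. Then compatibility gives $C[s] \sqsupset_A t$, not $C[s] \sqsupseteq_A t$, because the pair need not be normal. In that subcase you would conclude by case~\GWPO{1} instead. Since you use the transitivity route, this does not affect your proof.
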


The theorem is also a consequence of the simulation result
in the next section.
Next,  we confirm that our MGWPO indeed generalizes the WPO.

\begin{prop}
Let $>_{\wpo}$ be the WPO induced by a normal algebra $\AA$ and a precedence $\succeq$.
Then $>_{\wpo}$ is identical to $>_\mgwpo$
induced by $({\geqslant_\AA}, {\geqslant_\AA}, {>_\AA})$ and $P$,
where $P$ is constructed from $\succeq$ as in \autoref{P:prec-redtriple}.
\end{prop}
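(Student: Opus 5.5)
We unfold the definitions and compare them. The MGWPO here is induced by $A = ({\geqslant_\AA}, {\geqslant_\AA}, {>_\AA})$ and $B = ({\TT \times \TT}, {\sqsupseteq_P}, {\sqsupset_P})$. Hence $>_\mgwpo = {\geqslant_\AA} \cap (\TT\times\TT) \cap {>_\gwpo} = {\geqslant_\AA} \cap {>_\gwpo}$, where $>_\gwpo$ is the GWPO of $({\geqslant_\AA},{>_\AA})$ and $({\sqsupseteq_P},{\sqsupset_P})$. So it suffices to prove two things: that $>_\gwpo$ coincides with $>_\wpo$, and that $>_\wpo \subseteq {\geqslant_\AA}$.

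For the first claim we show, by well-founded induction on $|s|+|t|$, that $s >_\gwpo t \iff s >_\wpo t$. The clauses of \autoref{D:gwpo} and \autoref{D:wpo-reduction-order} match one for one.
\begin{itemize}
\item Case 1 in both definitions is $s >_\AA t$.
\item Case 2 requires $s = f(\bar s) \geqslant_\AA t$ in both definitions.
\item Case 2a compares a proper subterm $s_i$ with $t$, so the induction hypothesis applies; for $\geqslant$ this also uses the reflexive closure.
\item In case 2b, the recursive comparisons $s > t_j$ and the lexicographic comparison of arguments involve pairs of strictly smaller total size, so the induction hypothesis applies again.
\end{itemize}
It remains to relate the precedence tests. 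Since $s$ and $t$ are headed by function symbols, $f(\bar s) \sqsupseteq_P g(\bar t)$ holds iff $f \succeq g$. Note that $s = t$ already gives $f \succeq f$ by reflexivity. Likewise $f(\bar s) \sqsupset_P g(\bar t)$ holds iff $f(\bar s)\sqsupseteq_P g(\bar t)$ and not $g(\bar t) \sqsupseteq_P f(\bar s)$. The latter fails exactly when $g \succeq f$, because $g \succeq f$ already gives $\sqsupseteq_P$ even if the terms are equal. Hence it is equivalent to $f \succ g$. No well-foundedness or other assumptions are needed for this part.

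For the second claim we show $s >_\wpo t \Rightarrow s \geqslant_\AA t$. Case 2 gives this directly. Case 1 gives $s >_\AA t$, and normality of $\AA$ (that is, $> \subseteq \geqslant$ on the carrier) yields $s \geqslant_\AA t$ pointwise for every assignment. This is where the normality hypothesis is used, and it is the only mildly delicate point. Without normality, case 1 could produce pairs outside $\geqslant_\AA$, and the intersection in $>_\mgwpo$ would cut them away.

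Combining the two claims gives ${>_\mgwpo} = {\geqslant_\AA} \cap {>_\wpo} = {>_\wpo}$. The main obstacle is only bookkeeping: the induction measure has to cover the lexicographic comparisons, and $\sqsupset_P$ must be correctly identified with $\succ$ on terms with function-symbol roots.
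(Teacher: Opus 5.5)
Your proof is correct and takes the same route as the paper. You identify $>_\gwpo$ with $>_\wpo$ clause by clause, including the check that $\sqsupset_P$ coincides with $\succ$ on function-headed terms. You then use normality of $\AA$ to get ${>_\wpo} \subseteq {\geqslant_\AA}$, so the intersection ${\geqslant_\AA} \cap {>_\gwpo}$ defining $>_\mgwpo$ removes nothing. The only difference is that you spell out the induction on $|s|+|t|$ and the precedence bookkeeping that the paper dismisses as ``easy to see''.
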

\begin{proof}
It is easy to see that $>_{\wpo}$ is identical to $>_\gwpo$, and therefore ${>_\mgwpo} \subseteq {>_\wpo}$.
Moreover, by the normality of $\AA$, one can show that ${>_\wpo} \subseteq {\geqslant_\AA}$.
This entails the converse ${>_\wpo} \subseteq {>_\mgwpo}$.
\end{proof}

Using the same construction, we can see that \autoref{T:wpo-reduction-order} follows from \autoref{T:mgwpo}:
When $\AA$ is weakly simple, we have the subterm property of $\geqslant_\AA$ and thereby that of $({\geqslant_\AA}, {>_\AA})$.

Also, we can see that our MGWPO indeed generalizes MSPO.

\begin{prop}
\label{P:mspo-mgwpo}
Let $>_{\mspo}$ be the MSPO induced by a reduction triple $({\geqslant}, {\sqsupseteq}, {\sqsupset})$.
Then $>_{\mspo}$ is identical to $>_\mgwpo$
induced by the reduction triples $({\TT \times \TT}, {\TT \times \TT}, {\emptyset})$ and 
$({\geqslant}, {\sqsupseteq}, {\sqsupset})$. \qed
\end{prop}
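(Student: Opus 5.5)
The plan is to unfold both definitions and check that the two relations coincide. Write $A = ({\TT \times \TT}, {\TT \times \TT}, {\emptyset})$ for the trivial reduction triple and $B = ({\geqslant}, {\sqsupseteq}, {\sqsupset})$. By \autoref{D:mgwpo}, $>_\mgwpo$ is ${\geqslant_A} \cap {\geqslant_B} \cap {>_\gwpo}$. Here $\geqslant_A = \TT \times \TT$, so this intersection reduces to ${\geqslant} \cap {>_\gwpo}$, where $>_\gwpo$ is induced by $({\TT \times \TT}, {\emptyset})$ and $({\sqsupseteq}, {\sqsupset})$. Since $>_\mspo$ is by definition ${\geqslant} \cap {>_\spo}$ with $>_\spo$ induced by $({\sqsupseteq}, {\sqsupset})$, it suffices to show that this $>_\gwpo$ coincides with $>_\spo$. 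The paper already remarks this right after \autoref{D:gwpo}; I would verify it formally.

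To show $>_\gwpo = {>_\spo}$, I would prove $s >_\gwpo t \iff s >_\spo t$ by well-founded induction on the pair $(s,t)$, ordered by, say, $|s|+|t|$. Unfolding \autoref{D:gwpo} with $\sqsupset_A = \emptyset$ makes case~\GWPO{1} vacuous. In case~\GWPO{2}, the side condition $s \sqsupseteq_A t$ always holds, so the only remaining requirement is that $s = f(\seq[m]{s})$. The subcases are then literally the conditions of \autoref{D:spo}:
\begin{itemize}
\item case~\GWPO{2a} corresponds to case~\SPO{a};
\item case~\GWPO{2b(i)} corresponds to case~\SPO{b(i)};
\item case~\GWPO{2b(ii)} corresponds to case~\SPO{b(ii)}.
\end{itemize}
The only difference is that the recursive occurrences involve $>_\gwpo$ versus $>_\spo$.

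Those recursive calls all concern strictly smaller pairs: $(s_i,t)$, $(s,t_j)$, and, inside the lexicographic comparison, pairs $(s_k,t_k)$. The induction hypothesis therefore lets us replace $>_\gwpo$ by $>_\spo$ in each of them, and likewise $\geqslant_\gwpo$ by $\geqslant_\spo$ and $>_\gwpo^\lex$ by $>_\spo^\lex$. The lexicographic extension only ever compares argument pairs componentwise, so the replacement is valid.

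There is no real obstacle here. The one point needing care is that both orders are defined recursively, so the identification must go through the induction on term sizes rather than being a direct rewriting of definitions; the lexicographic case is where this matters.
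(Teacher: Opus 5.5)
Your proposal is correct and follows the paper's reasoning. The paper gives no separate proof: it relies on the remark after \autoref{D:gwpo} that the GWPO induced by $({\TT \times \TT}, {\emptyset})$ and $({\sqsupseteq}, {\sqsupset})$ is literally the SPO, plus the fact that $\geqslant_A = \TT \times \TT$ drops out of the intersection; your induction on $|s|+|t|$ simply makes that identification of the two recursive definitions explicit.
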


Now, we illustrate the use of GWPOs by examples.

\begin{exa}
\label{E:SK}
Consider the rewrite system for combinatory logic:
\begin{align*}
((\m{S} \cdot x) \cdot y) \cdot z & \to (x \cdot z) \cdot (y \cdot z)
&
(\m{K} \cdot x) \cdot y & \to x
\end{align*}
While the system is well known to be non-terminating (see for example~\cite{Bar84}),
the following system $\RR$ obtained by flipping the $\m{S}$ rule is actually terminating.\footnote{%
The observation
that the rule $(x \cdot z) \cdot (y \cdot z) \to ((\m{S} \cdot x) \cdot y) \cdot z$ is terminating dates back to \cite{BK79}.}
\begin{align*}
(x \cdot z) \cdot (y \cdot z) & \to ((\m{S} \cdot x) \cdot y) \cdot z
&
(\m{K} \cdot x) \cdot y & \to x
\end{align*}
To see it,
let $\AA$ be the linear polynomial interpretation over $\NN$
defined as follows:
\begin{align*}
x \cdot_\AA y &= x + y + 1 & \m{S}_\AA &= 0 & \m{K}_\AA &= 0
&
x \cdot^\sharp_\AA y &= y & \m{S}^\sharp_\AA &= 0 & \m{K}^\sharp_\AA &= 0
\end{align*}
Intuitively, $\AA$ captures the decrease of the number of occurrences of
$\cdot$ in the second rule with $x \cdot_\AA y = x + y + 1$, and the
decrease of the depth of the rightmost $z$ in the first rule
with $x \cdot^\sharp_\AA y = y$.
The monotone generalized weighted path order $>_\mgwpo$ induced by
the reduction triples $({\geqslant_\AA}, {\geqslant_\AA}, {>_\AA})$ and
$({\geqslant_\AA}, {\geqslant_\AA^\sharp}, {>_\AA^\sharp})$ satisfies 
$(x \cdot z) \cdot (y \cdot z) >_\mgwpo ((\m{S} \cdot x) \cdot y) \cdot z$.
The comparison proceeds as follows: First,
$(x \cdot z) \cdot (y \cdot z) \geqslant_\AA ((\m{S} \cdot x) \cdot y) \cdot z$ holds,
as it boils down to the valid inequality $x + y + 2z + 3 \geqslant x + y + z + 3$.
So it remains  to confirm a decrease in $>_\gwpo$. Let $t = (x \cdot z) \cdot (y \cdot z)$.
Then we have the following derivation.
\begin{center}
\begin{prooftree}
  \hypo{(x \cdot z) \cdot^\sharp (y \cdot z) >_\AA ((\m{S} \cdot x) \cdot y) \cdot^\sharp z}
    \hypo{t >_\AA (\m{S} \cdot x) \cdot y}
  \infer{1}[\GWPO{1}]{t >_\gwpo (\m{S} \cdot x) \cdot y}
    \hypo{t >_\AA z}
  \infer{1}[\GWPO{1}]{t >_\gwpo z}
\infer{3}[\GWPO{2b(i)}]{t >_\gwpo ((\m{S} \cdot x) \cdot y) \cdot z}
\end{prooftree}
\end{center}
Here
$(x \cdot z) \cdot^\sharp (y \cdot z) >_\AA ((\m{S} \cdot x) \cdot y) \cdot^\sharp z$ 
follows from the validity of $y + z + 1 > z$.
In a similar way, $(\m{K} \cdot x) \cdot y >_\mgwpo x$ is verified.  As
$\RR \subseteq {>_\mgwpo}$ holds, the termination of $\RR$ is concluded.
\end{exa}

The termination of $\RR$ above cannot be shown by the original weighted path order
if algebras are restricted to be linear polynomial interpretations over $\NN$,
see~\autoref{P:S-WPO}.

\begin{exa}
\label{E:cge2}
Consider the TRS $\RR$ for groups with two commuting
endomorphisms~\cite{SL06}:
\begin{align*}
(- x) + x & \xto{1} \m{0}
&
(x + y) + z & \xto{3} x + (y + z) 
&
\m{g}(x) + \m{g}(y) & \xto{5} \m{g}(x + y)
\\
\m{0} + x & \xto{2} x
&
\m{f}(x) + \m{f}(y) & \xto{4} \m{f}(x + y)
&
\m{f}(x) + \m{g}(y) & \xto{6} \m{g}(y) + \m{f}(x)
\end{align*}
A difficulty is rule 6, which is
a terminating instance of the non-terminating rule $x + y \to y + x$.
We consider the linear polynomial interpretation $\AA$ 
over $\NN$  with
\begin{align*}
\m{0}_{\AA} &= 1
&
\m{f}_{\AA}(x) &= x + 1
&
\m{g}_{\AA}(x) &= x + 1
&
-_{\AA}(x) &= x + 1
&
x +_{\AA} y &= x + y + 1
\end{align*}
and the linear polynomial interpretation $\BB$ over $\NN$ for the extended signature $\FF \cup \FF^\sharp$:
\begin{align*}
\m{0}_{\BB} &= 0
&
\m{f}_{\BB}(x) &= 1
&
\m{g}_{\BB}(x) &= 0
&
-_{\BB}(x) &= x + 1
&
x +_{\BB} y &= x + y + 1
\\
\m{0}^\sharp_{\BB} &= 0
&
\m{f}^\sharp_{\BB}(x) &= 1
&
\m{g}^\sharp_{\BB}(x) &= 0
&
-^\sharp_{\BB}(x) &= x + 1
&
x +^\sharp_{\BB} y &= x
\end{align*}
Roughly, $\AA$ captures the decreases in term size
while $\BB$ captures the change from $\m{f}$ to $\m{g}$ in the first argument of $+$ for rule~6.
This can be seen in $x +^\sharp_{\BB} y = x$, $\m{f}_{\BB}(x) = 1$ and $\m{g}_{\BB}(x) = 0$.
From \autoref{T:mgwpo} the MGWPO $>_\mgwpo$ induced by the reduction triples $({\geqslant_\AA}, {\geqslant_\AA}, {>_\AA})$ and $({\geqslant_\BB}, {\geqslant_\BB^\sharp}, {>_\BB^\sharp})$
is a reduction order.
To conclude the termination of $\RR$ with \autoref{P:reduction-order-termination},
we verify $\RR \subseteq {>_\mgwpo}$.
Checking $\RR \subseteq {\geqslant_\AA}$ and $\RR \subseteq {\geqslant_\BB}$ is routine work.
So, it remains to verify $\RR \subseteq {>_\gwpo}$.
For rule 6,
we use case~\GWPO{2b(i)} with
\(
\m{f}(x) +^\sharp \m{g}(y) >_\BB \m{g}(y) +^\sharp \m{f}(x)
\)
and then case~\GWPO{1} for recursive comparison.
For rule 3 we 
similarly use case~\GWPO{2b(i)} with $(x + y) +^\sharp z >_\BB x +^\sharp (y + z)$
and then case~\GWPO{1}.
The other rules are oriented by $>_\AA$ and therefore by case~\GWPO{1}.
\end{exa}

\begin{rem}
\label{R:algebra-redtriple}
We cannot naively use \autoref{P:algebra-redtriple} for 
the first reduction triple $({\geqslant_A}, {\sqsupseteq_A}, {\sqsupset_A})$ of an MGWPO,
since \autoref{T:mgwpo} demands the subterm property of $(\sqsupseteq_A, \sqsupset_A)$.
For example, $\m{f}(\m{a}) \geqslant_\AA^\sharp \m{a}$ does not hold for a weakly simple algebra $\AA$ on $\NN$ with
\( \m{a}_\AA = 0 \),
\( \m{a}^\sharp_\AA = 1 \), and
\( \m{f}^\sharp_\AA(x) = x \).
So a typical choice for the first component is
to use $({\geqslant_\AA}, {\geqslant_\AA}, {>_\AA})$ with an algebra $\AA$ on the original signature,
as in the examples above. (We however have more to say about this in the next section.)
Moreover,  if $\AA$ is normal,
then ${>_\gwpo} \subseteq {\geqslant_\AA}$ holds
and therefore  $s >_\mgwpo t$ is equivalent to $s \geqslant_B t$ and $s >_\gwpo t$.
\end{rem}

\subsection{Simulating GWPOs by SPOs}
\label{SS:simulation}

We present a simulation result between (M)GWPOs and (M)SPOs.
This result allows us to establish desired properties of (M)SPOs
via those of (M)GWPOs, and vice versa.

To this end, we recall \emph{lexicographic combination}, which provides a way 
to combine multiple order pairs on the same set $X$ into a more complex one on $X$.

\begin{defi}
\label{D:lex-comb}
Let $({\geqslant_A}, {>_A})$ and $({\geqslant_B}, {>_B})$ be 
pairs of relations on the same set $X$.  The \emph{lexicographic combination}
$({\geqslant_{AB}}, {>_{AB}})$ is the pair of relations on $X$ defined
as follows:
\begin{itemize}
\item
$x \geqslant_{AB} y$ if $x >_A y$, or both $x \geqslant_A y$ and
$x \geqslant_B y$.
\item
$x >_{AB} y$ if $x >_A y$, or both $x \geqslant_A y$ and $x >_B y$.
\end{itemize}    
\end{defi}

Let $({\geqslant_A}, {>_A})$ and $({\geqslant_B}, {>_B})$ be order pairs on the same set.
Then $({\geqslant_{AB}}, {>_{AB}})$ is again an order pair.
Moreover, it preserves  well-foundedness:
If both  $({\geqslant_A}, {>_A})$ and $({\geqslant_B}, {>_B})$ are well-founded,
so is $({\geqslant_{AB}}, {>_{AB}})$.
The same goes for stability.

The simulation result is based on lexicographic combination:
Given a GWPO $>_\gwpo$ induced by two order pairs $({\sqsupseteq_A}, {\sqsupset_A})$ and $({\sqsupseteq_B}, {\sqsupset_B})$,
we consider the SPO $>_\spo$ induced by the lexicographic combination $({\sqsupseteq_{AB}},{\sqsupset_{AB}})$.
Our goal is to show that $>_\gwpo$ and $>_\spo$ are identical.

Before presenting proofs, we illustrate how the derivation of 
$t = (x \cdot z) \cdot (y \cdot z) >_\gwpo ((\m{S} \cdot x) \cdot y) \cdot z$
in \autoref{E:SK} is simulated by the semantic path order.

\begin{exa}[continued from \autoref{E:SK}]
\label{E:spo}
Under the construction above, $({\sqsupseteq_{AB}},{\sqsupset_{AB}})$ is the lexicographic combination of
$({\geqslant_\AA}, {>_\AA})$ and $({\geqslant_\AA^{\sharp}}, {>_\AA^{\sharp}})$.
Since ${>_\AA} \subseteq {\sqsupset_{AB}}$,
we can first apply case~\SPO{b(i)} of $>_\spo$ instead of case~\GWPO{2b(i)} of $>_\gwpo$,
which gives rise to two obligations, namely $(x \cdot z) \cdot (y \cdot z) >_\spo (\m{S} \cdot x) \cdot y$ and
$(x \cdot z) \cdot (y \cdot z) >_\spo z$.
The former is derived by applying case~\SPO{b(i)} and case~\SPO{a} repeatedly:
\begin{center}
\begin{prooftree}
  \hypo{t \sqsupset_{AB} (\m{S} \cdot x) \cdot y}
    \hypo{t \sqsupset_{AB} \m{S} \cdot x}
      \hypo{t \sqsupset_{AB} \m{S}}
    \infer{1}[\SPO{a}]{t >_\spo \m{S}}
        \hypo{x \geqslant_\spo x}
      \infer{1}[\SPO{a}]{x \cdot z \geqslant_\spo x}
    \infer{1}[\SPO{a}]{t >_\spo x}
  \infer{3}[\SPO{b(i)}]{t >_\spo \m{S} \cdot x}
      \hypo{y \geqslant_\spo y}
    \infer{1}[\SPO{a}]{y \cdot z \geqslant_\spo y}
  \infer{1}[\SPO{a}]{t >_\spo y}
\infer{3}[\SPO{b(i)}]{t >_\spo (\m{S} \cdot x) \cdot y}
\end{prooftree}
\end{center}
Notice that, $t >_\AA (\m{S} \cdot x) \cdot y$ and the weak simplicity of $\AA$ guarantees
$t >_\AA u$ and thereby $t \sqsupset_{AB} u$ for all subterms $u$ of $(\m{S} \cdot x) \cdot y$ in the simulating derivation.
Similarly, the latter obligation $(x \cdot z) \cdot (y \cdot z) >_\spo z$ is derived by repeatedly applying case~\SPO{a}.
\end{exa}

As shown in the example, the subterm property of $({\sqsupseteq_A}, {\sqsupset_A})$ allows us to simulate case~\GWPO{1} of $>_\gwpo$
by case~\SPO{b(i)} and case~\SPO{a} of $>_\spo$.

\begin{lem}
\label{L:gwpo-spo}
Suppose that $({\sqsupseteq_A}, {\sqsupset_A})$ is a stable order pair with the subterm property.
If $s >_\gwpo t$ then $s >_\spo t$.
\end{lem}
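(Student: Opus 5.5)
The plan is to prove the statement by well-founded induction on $|s| + |t|$, following the case structure of \autoref{D:gwpo}. Throughout, $>_\spo$ is the SPO induced by the lexicographic combination $({\sqsupseteq_{AB}}, {\sqsupset_{AB}})$. The only non-routine case is~\GWPO{1}. Example~\ref{E:spo} suggests handling it with an auxiliary claim.

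\emph{Claim:} if $s \sqsupset_A t$ then $s >_\spo t$. I will prove this by induction on $|t|$.

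First, $s$ cannot be a variable. If $s = y$, apply the substitution $\{x \mapsto y\}$ for a variable $x$ of $t$, or simply use $t$ itself when $t$ is a variable. Stability and compatibility with the subterm property then lead to $y \sqsupset_A y$, contradicting irreflexivity. A cleaner route is to use that $s \sqsupset_A t$ is preserved by substitutions and to pick a substitution collapsing $t$ onto $s$. I expect this variable bookkeeping to be the fiddliest part, so I will handle it carefully. So let $s = f(\seq[m]{s})$.

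If $t$ is a variable $x$, then $x \in \Var(s)$. Otherwise the substitution $\{x \mapsto s\}$ gives $s \sqsupset_A s$ by stability, contradicting irreflexivity. Hence $x$ is a proper subterm of $s$, so $s_i \superterm x$ for some $i$. An easy induction using case~\SPO{a} gives $s_i \geqslant_\spo x$, and then $s >_\spo x$ by~\SPO{a}.

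If $t = g(\seq{t})$, each $t_j$ is a proper subterm of $t$. The subterm property of $A$ gives $t \mathrel{(\sqsupseteq_A \cup \sqsupset_A)} t_j$, and compatibility (or transitivity) yields $s \sqsupset_A t_j$. The induction hypothesis then gives $s >_\spo t_j$ for all $j$. Moreover $s \sqsupset_A t$ implies $s \sqsupset_{AB} t$. Hence $s >_\spo t$ by case~\SPO{b(i)}.

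\emph{Main induction.} Suppose $s >_\gwpo t$.
\begin{itemize}
\item In case~\GWPO{1}, apply the Claim.
\item In case~\GWPO{2a}, we have $s_i \geqslant_\gwpo t$. Either $s_i = t$, or $s_i >_\spo t$ by the induction hypothesis since $|s_i| + |t|$ is smaller. Either way, case~\SPO{a} gives $s >_\spo t$.
\item In case~\GWPO{2b}, the induction hypothesis turns $s >_\gwpo t_j$ into $s >_\spo t_j$. In subcase~(i), $s \sqsupseteq_A t$ and $s \sqsupset_B t$ give $s \sqsupset_{AB} t$, so~\SPO{b(i)} applies. In subcase~(ii), $s \sqsupseteq_A t$ and $s \sqsupseteq_B t$ give $s \sqsupseteq_{AB} t$.
\end{itemize}

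For subcase~(ii) it remains to transfer $(\seq[m]{s}) >^\lex_\gwpo (\seq{t})$ to $>^\lex_\spo$. This holds because the lexicographic extension is monotone in the underlying strict relation. The decisive pair $s_k >_\gwpo t_k$ has smaller size, so the induction hypothesis makes it $s_k >_\spo t_k$. The preceding pairs are equal, and equality is the same in both extensions. Then case~\SPO{b(ii)} concludes.
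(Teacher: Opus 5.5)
Your proof is correct and follows the paper's: induction on $|s|+|t|$ along the cases of \autoref{D:gwpo}, with case~\GWPO{1} discharged by case~\SPO{b(i)} using the subterm property and compatibility of $A$. The paper does this inline, observing that $s \sqsupset_A t_j$ yields $s >_\gwpo t_j$ by case~\GWPO{1} and then calling the main induction hypothesis, rather than through your separate claim by induction on $|t|$. One tidy-up: in the variable case $s = y$, neither of your two routes works on its own. The substitution $\{x \mapsto y\}$ needs $t$ to contain a variable, and no collapsing substitution exists when $y$ occurs properly in $t$. So split, as the paper does, on whether $y \in \Var(t)$: if so, the subterm property and compatibility give $y \sqsupset_A y$; if not, applying $\{y \mapsto t\}$ gives $t \sqsupset_A t$.
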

\begin{proof}
We prove the claim by induction on $|s| + |t|$.
Let $s >_\gwpo t$. We analyze the derivation of $s >_\gwpo t$.
\begin{enumerate}[1.]
\item Assume that $s >_\gwpo t$ is derived from $s \sqsupset_A t$. We distinguish three cases.
\begin{itemize}
\item Suppose that $s$ is a variable. We further analyze whether $t$ contains $s$ or not.
If so, by the subterm property of $A$, we have $s \sqsupset_A t \sqsupseteq_A s$ or 
$s \sqsupset_A t \sqsupset_A s$, which is a contradiction; otherwise, by substituting $t$
into $s$ we obtain $t \sqsupset_A t$, which is again a contradiction.

\item Suppose that $s$ is not a variable while $t$ is a variable.
If $t$ does not occur in $s$ then by substituting $s$ into $t$ we obtain $s \sqsupset_A s$, a contradiction.
So $t$ occurs in $s$. We obtain $s >_\spo t$ by the
subterm property of $>_\spo$.

\item
Suppose that $s$ is not a variable, and $t = g(\seq{t})$.  Since $A$ has the subterm property,
for every $1 \leqslant j \leqslant n$ we have $s \sqsupset_A t \sqsupseteq_A t_j$ or $s \sqsupset_A t \sqsupset_A t_j$,
which leads to $s >_\gwpo t_j$. 
Thus, the induction hypothesis yields $s >_\spo t_j$ for all $j$.
Moreover, $s \sqsupset_{AB} t$ follows from $s \sqsupset_A t$.
Hence, $s >_\spo t$ follows from case~\SPO{b(i)}.
\end{itemize}

\item Assume that $s = f(\seq[m]{s})$ and $s \sqsupseteq_A t$.
\begin{enumerate}[a.]
    \item Suppose that $s >_\gwpo t$ is derived from $s \sqsupseteq_A t$
    and $s_i \geqslant_\gwpo t$ for some $i$.
    Then $s >_\spo t$ is obtained by case~\SPO{a} using the induction hypothesis.
    \item Suppose that $t = g(\seq[n]{t})$ and $s >_\gwpo t_j$ for all $1 \leqslant j \leqslant n$.
    By the induction hypothesis $s >_\spo t_j$ for all $1 \leqslant j \leqslant n$.
    \begin{enumerate}[(i)]
    \item
    If $s >_\gwpo t$ is derived from case~\GWPO{2b(i)},
    then we have $s \sqsupset_{AB} t$. So $s >_\spo t$ is obtained by case~\SPO{b(i)}. 
    \item If $s >_\gwpo t$ is derived from case~\GWPO{2b(ii)},
    then we have $s \sqsupseteq_{AB} t$ and $(\seq[m]{s}) >_\spo^\lex (\seq[n]{t})$ by the induction hypothesis.
    So $s >_\spo t$ is obtained by case~\SPO{b(ii)}. \qedhere 
    \end{enumerate}
\end{enumerate}
\end{enumerate}
\end{proof}

Next we prove the converse direction of \autoref{L:gwpo-spo}.

\begin{lem}
\label{L:spo-gwpo}
Suppose that $({\sqsupseteq_A}, {\sqsupset_A})$ is an order pair with the subterm property.
If $s >_\spo t$ then $s >_\gwpo t$.
\end{lem}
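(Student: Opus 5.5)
We prove the claim by induction on $|s| + |t|$, analyzing the last rule used in the derivation of $s >_\spo t$, where $>_\spo$ is induced by the lexicographic combination $({\sqsupseteq_{AB}}, {\sqsupset_{AB}})$. Here $s = f(\seq[m]{s})$ always. The one idea needed beyond unfolding definitions is this: whenever $>_\spo$ descends into a subterm, the subterm property of $A$ supplies the weak $A$-comparison that case~\GWPO{2} demands. Only the order-pair axioms and the subterm property are used; stability plays no role, consistent with the hypotheses of the lemma.

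In case~\SPO{a} we have $s_i \geqslant_\spo t$ for some $i$. By the induction hypothesis (or trivially if $s_i = t$) we get $s_i \geqslant_\gwpo t$. To apply case~\GWPO{2a} we must establish $s \sqsupseteq_A t$. For this I would first prove an auxiliary claim: $u >_\gwpo v$ implies $u \sqsupseteq_A v$ or $u \sqsupset_A v$. This goes by induction on the derivation. Case~\GWPO{1} is immediate, and case~\GWPO{2} gives $\sqsupseteq_A$ directly.

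With the auxiliary claim, $s_i$ is related to $t$ by $\sqsupseteq_A \cup \sqsupset_A$ (or $s_i = t$). The subterm property gives $s \mathrel{(\sqsupseteq_A \cup \sqsupset_A)} s_i$. Chaining by transitivity and compatibility yields $s \sqsupseteq_A t$ or $s \sqsupset_A t$. In the latter case case~\GWPO{1} applies at once; in the former, case~\GWPO{2a} applies.

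In case~\SPO{b} we have $t = g(\seq{t})$ and $s >_\spo t_j$ for all $j$, so the induction hypothesis gives $s >_\gwpo t_j$ for all $j$.
\begin{itemize}
\item Under \SPO{b(i)}, $s \sqsupset_{AB} t$ holds. If this comes from $s \sqsupset_A t$, case~\GWPO{1} applies. Otherwise $s \sqsupseteq_A t$ and $s \sqsupset_B t$, and case~\GWPO{2b(i)} applies.
\item Under \SPO{b(ii)}, $s \sqsupseteq_{AB} t$ holds and $(\seq[m]{s}) >_\spo^\lex (\seq{t})$. Applying the induction hypothesis componentwise converts the lexicographic comparison into $>_\gwpo^\lex$, since every compared pair consists of proper subterms and so has smaller size. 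Again, if $s \sqsupset_A t$ we use case~\GWPO{1}. Otherwise $s \sqsupseteq_A t$ and $s \sqsupseteq_B t$, and case~\GWPO{2b(ii)} applies.
\end{itemize}

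The main obstacle is the case~\SPO{a} step: recovering $s \sqsupseteq_A t$ from $s_i \geqslant_\gwpo t$. This is exactly where the auxiliary claim and the weakened subterm property $\prsuperterm \subseteq (\sqsupseteq_A \cup \sqsupset_A)$ are needed. The compatibility bookkeeping in the chaining step is also where care is required, because the definition of the lexicographic combination does not assume normality.
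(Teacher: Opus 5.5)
Your proof is correct and follows essentially the same route as the paper. Both argue by induction on $|s|+|t|$. In case~\SPO{a}, both chain the subterm property of $A$ with transitivity/compatibility to get $s \sqsupseteq_A t$ (then \GWPO{2a}) or $s \sqsupset_A t$ (then \GWPO{1}). In case~\SPO{b}, both split $\sqsupset_{AB}$ and $\sqsupseteq_{AB}$ into the $\sqsupset_A$ subcase, handled by \GWPO{1}, and the remaining subcase, handled by \GWPO{2b}. Your auxiliary claim that $u >_\gwpo v$ implies $u \sqsupseteq_A v$ or $u \sqsupset_A v$ needs no induction: it is immediate from the definition, which is how the paper uses it.
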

\begin{proof}
We prove the claim by induction on $|s| + |t|$.
Let $s = f(\seq[m]{s})$.
We analyze the derivation of $s >_\spo t$.
\begin{enumerate}[a.]
\item
Suppose that $s >_\spo t$ is derived by case~\SPO{a}.
The induction hypothesis yields $s_i \geqslant_\gwpo t$ for some $i$.
Then, by the definition of the GWPO, $s_i \sqsupseteq_A t$ or $s_i \sqsupset_A t$ holds.
\begin{itemize}
  \item In the former case, using the subterm property of $A$, we obtain $s \sqsupseteq_A s_i \sqsupseteq_A t$
  or $s \sqsupset_A s_i \sqsupseteq_A t$.
  So we have $s >_\gwpo t$ by case~\GWPO{2a} or case~\GWPO{1}. 
  \item In the latter case, using the subterm property of $A$, we obtain $s \sqsupseteq_A s_i \sqsupset_A t$
  or $s \sqsupset_A s_i \sqsupset_A t$. Therefore $s >_\gwpo t$ by case~\GWPO{1}.
\end{itemize}
\item
Suppose that $s >_\spo t$ is derived by case~\SPO{b}.
Let $t = g(\seq[n]{t})$ and assume $s >_\spo t_j$ for all $1 \leqslant j \leqslant n$.
By the induction hypothesis we have  $s >_\gwpo t_j$ for all $1 \leqslant j \leqslant n$.
\begin{enumerate}[(i)]
    \item If $s \sqsupset_{AB} t$, we further analyze how it follows:
    If $s \sqsupset_A t$ then we have $s >_\gwpo t$ by case~\GWPO{1}.
    Otherwise, we conclude with case~\GWPO{2b(i)}.
    \item If $s \sqsupseteq_{AB} t$, we further analyze how it follows:
    If $s \sqsupset_A t$ then we have $s >_\gwpo t$ by case~\GWPO{1}.
     Otherwise, we conclude with case~\GWPO{2b(ii)} using the induction hypothesis. \qedhere
\end{enumerate}
\end{enumerate}
\end{proof}

From \autoref{L:gwpo-spo} and \autoref{L:spo-gwpo}, we establish the simulation result.

\begin{thm}
\label{T:gwpo-spo}
Let $>_\gwpo$ be the generalized weighted path order
induced by stable order pairs $({\sqsupseteq_A}, {\sqsupset_A})$ and $({\sqsupseteq_B}, {\sqsupset_B})$
with the subterm property of $({\sqsupseteq_A}, {\sqsupset_A})$.
Then $>_\gwpo$ and $>_\spo$ are identical,
where $>_\spo$ is induced by $({\sqsupseteq_{AB}}, {\sqsupset_{AB}})$. \qed
\end{thm}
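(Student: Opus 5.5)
The theorem is the conjunction of the two preceding lemmas, so the plan is to check that the hypotheses of the theorem supply what each lemma needs. Identity of ${>_\gwpo}$ and ${>_\spo}$ means ${>_\gwpo} \subseteq {>_\spo}$ and ${>_\spo} \subseteq {>_\gwpo}$. I will derive the first inclusion from \autoref{L:gwpo-spo} and the second from \autoref{L:spo-gwpo}, both applied with the same pairs $A$, $B$ and the same lexicographic combination $({\sqsupseteq_{AB}}, {\sqsupset_{AB}})$ from \autoref{D:lex-comb} that induces $>_\spo$.

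For the inclusion ${>_\gwpo} \subseteq {>_\spo}$, \autoref{L:gwpo-spo} requires that $({\sqsupseteq_A}, {\sqsupset_A})$ be a stable order pair with the subterm property. Both facts are assumptions of the theorem. Stability matters here because the variable cases of that lemma substitute a term for a variable to reach a contradiction with irreflexivity of $\sqsupset_A$. That lemma's proof also uses the subterm property of $>_\spo$ (its case for variable $t$). This holds directly from case~\SPO{a} by induction on the context, so well-foundedness of $B$ and \autoref{T:spo} are not needed.

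For the inclusion ${>_\spo} \subseteq {>_\gwpo}$, \autoref{L:spo-gwpo} only needs $A$ to be an order pair with the subterm property, which is weaker than what we assume. The compatibility of $({\sqsupseteq_A}, {\sqsupset_A})$ is what lets chains such as $s \sqsupseteq_A s_i \sqsupset_A t$ collapse to a single $\sqsupseteq_A$ or $\sqsupset_A$ step.

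The point that most needs care is not a hard step but a bookkeeping one. In both lemmas the case analysis unfolds $\sqsupset_{AB}$ and $\sqsupseteq_{AB}$ according to \autoref{D:lex-comb}, and in case~\GWPO{2b} the condition $s \sqsupseteq_A t$ is already available. So $s \sqsupset_B t$ together with $s \sqsupseteq_A t$ yields $s \sqsupset_{AB} t$, and conversely a $\sqsupseteq_{AB}$ step not coming from $\sqsupset_A$ yields both $\sqsupseteq_A$ and $\sqsupseteq_B$. Nothing about $B$ beyond its being a pair of relations is used. Combining the two inclusions gives the claimed identity.
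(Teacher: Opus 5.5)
Your proposal is correct and follows the paper's route exactly: the paper obtains the theorem by combining \autoref{L:gwpo-spo} (which needs $A$ to be a stable order pair with the subterm property) with \autoref{L:spo-gwpo} (which needs only that $A$ is an order pair with the subterm property). Your side remarks are also accurate and consistent with the lemma statements. The subterm property of $>_\spo$ holds unconditionally via case~\SPO{a}, and no property of $B$ is used.
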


As a consequence, $>_\gwpo$ and $>_\spo$ coincide under the preconditions of \autoref{T:gwpo}.
This proves \autoref{T:gwpo} via \autoref{T:spo}.
Moreover, the simulation result can be easily extended to the monotonic versions
by using lexicographic combination of reduction triples.

\begin{prop}
\label{P:lex-reduction-triple}
If $({\geqslant_A}, {\sqsupseteq_{A}}, {\sqsupset_{A}})$ and 
$({\geqslant_B}, {\sqsupseteq_{B}}, {\sqsupset_{B}})$ are reduction triples,
so is the triple $({\geqslant_A \cap \geqslant_B}, {\sqsupseteq_{AB}}, {\sqsupset_{AB}})$.
\end{prop}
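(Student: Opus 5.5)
We verify the three defining conditions of a reduction triple for $({\geqslant_A \cap \geqslant_B}, {\sqsupseteq_{AB}}, {\sqsupset_{AB}})$ directly from \autoref{D:lex-comb}. The plan is to check the first component, then the order-pair component, and finally harmony.

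\emph{First component.} The intersection of two rewrite preorders is again a rewrite preorder. Reflexivity, transitivity, stability and monotonicity are all preserved under intersection, because each is a universally quantified closure condition.

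\emph{Order-pair component.} By the facts recalled after \autoref{D:lex-comb}, the lexicographic combination of two order pairs is an order pair, and stability is preserved. I will briefly re-verify the compatibility cases to be safe. For example, take $x \sqsupset_{AB} y \sqsupseteq_{AB} z$. If either step is strict in $A$, we get $x \sqsupset_A z$ by compatibility of $A$. Otherwise we have $x \sqsupseteq_A z$ and $x \sqsupset_B y \sqsupseteq_B z$, hence $x \sqsupset_B z$. Stability is immediate, since each disjunct is stable. Well-foundedness of $\sqsupset_{AB}$ needs only well-foundedness of $\sqsupset_A$ and $\sqsupset_B$ together with compatibility, not of $\sqsupseteq$. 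The argument runs on an infinite chain $x_0 \sqsupset_{AB} x_1 \sqsupset_{AB} \cdots$. Each step is $\sqsupseteq_A$ or $\sqsupset_A$, so by compatibility only finitely many steps can be $\sqsupset_A$. Otherwise, composing the weak steps into the strict ones yields an infinite $\sqsupset_A$-chain. Hence from some index on every step is $\sqsupseteq_A$ together with $\sqsupset_B$, which contradicts well-foundedness of $\sqsupset_B$.

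\emph{Harmony.} This is the step requiring the most care. Suppose $s_i \geqslant_A t$ and $s_i \geqslant_B t$. By the harmony of each triple, $f(\dots,s_i,\dots) \sqsupseteq_A f(\dots,t,\dots)$ and $f(\dots,s_i,\dots) \sqsupseteq_B f(\dots,t,\dots)$. The second disjunct in the definition of $\sqsupseteq_{AB}$ then gives $f(\dots,s_i,\dots) \sqsupseteq_{AB} f(\dots,t,\dots)$. This is exactly why the first component must be the intersection rather than $\geqslant_A$ alone: harmony for $B$ requires the hypothesis $\geqslant_B$. With that observed, the proof is complete.
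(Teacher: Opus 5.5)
Your proof is correct and follows the paper's. The paper likewise obtains harmony by applying each triple's harmony to one conjunct of $\geqslant_A \cap \geqslant_B$ and then using the second disjunct of $\sqsupseteq_{AB}$; it dismisses the remaining properties as readily confirmed, and your checks of the intersection, compatibility, stability and well-foundedness correctly fill in those routine details.
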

\begin{proof}
For the harmony property, let $s \geqslant_\AA t$ and $s \geqslant_\BB t$,
and let $C = f(\bar{u}, \hole, \bar{v})$.
Since $\sqsupseteq_A$ and $\sqsupseteq_B$ are harmonious with respect to $\geqslant_A$ and $\geqslant_B$, respectively,
we have $C[s] \sqsupseteq_A C[t]$ and  $C[s] \sqsupseteq_B C[t]$.
So, we obtain $C[s] \sqsupseteq_{AB} C[t]$.
The other properties are readily confirmed.
\end{proof}

Essentially, \cite[Proposition~4.3]{BFR00} is the special case when
$\geqslant_A$ and $\geqslant_B$ are identical.

\begin{cor}
\label{C:mgwpo-mspo}
Let $>_\mgwpo$ be the monotone generalized weighted path order
induced by reduction triples $({\geqslant_A}, {\sqsupseteq_A}, {\sqsupset_A})$ and $({\geqslant_B}, {\sqsupseteq_B}, {\sqsupset_B})$
with the subterm property of $\sqsupseteq_A$.
Then $>_\mgwpo$ and $>_\mspo$ are identical,
where $>_\mspo$ is induced by $({\geqslant_A \cap \geqslant_B}, {\sqsupseteq_{AB}}, {\sqsupset_{AB}})$. \qed
\end{cor}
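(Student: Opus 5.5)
The corollary follows by unfolding the two definitions and invoking \autoref{T:gwpo-spo}. By \autoref{D:mgwpo}, $s >_\mgwpo t$ holds exactly when $s \geqslant_A t$, $s \geqslant_B t$ and $s >_\gwpo t$. Here $>_\gwpo$ is induced by $({\sqsupseteq_A}, {\sqsupset_A})$ and $({\sqsupseteq_B}, {\sqsupset_B})$. On the other side, $s >_\mspo t$ holds exactly when $s \mathrel{(\geqslant_A \cap \geqslant_B)} t$ and $s >_\spo t$, where $>_\spo$ is the SPO induced by $({\sqsupseteq_{AB}}, {\sqsupset_{AB}})$. The two intersections therefore share the same first factor ${\geqslant_A} \cap {\geqslant_B}$. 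So it suffices to show ${>_\gwpo} = {>_\spo}$.

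This identity is precisely \autoref{T:gwpo-spo}, and I will check that its hypotheses hold. Since both triples are reduction triples, $({\sqsupseteq_A}, {\sqsupset_A})$ and $({\sqsupseteq_B}, {\sqsupset_B})$ are stable order pairs by definition. The corollary assumes the subterm property of $\sqsupseteq_A$, i.e.\ ${\prsuperterm} \subseteq {\sqsupseteq_A}$. This immediately gives ${\prsuperterm} \subseteq {(\sqsupseteq_A \cup \sqsupset_A)}$, which is the subterm property of the order pair $A$ demanded by the theorem. Hence \autoref{T:gwpo-spo} applies and yields ${>_\gwpo} = {>_\spo}$. Intersecting both sides with ${\geqslant_A} \cap {\geqslant_B}$ gives ${>_\mgwpo} = {>_\mspo}$.

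To confirm that $>_\mspo$ really is an MSPO in the sense of the definition, I will note that $({\geqslant_A \cap \geqslant_B}, {\sqsupseteq_{AB}}, {\sqsupset_{AB}})$ is a reduction triple by \autoref{P:lex-reduction-triple}. This is not needed for the set-theoretic identity itself. It does mean, via \autoref{T:mspo}, that the corollary also recovers \autoref{T:mgwpo}.

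I expect no real obstacle here. The only point needing care is matching hypotheses: \autoref{T:gwpo-spo} only needs stability and the weak subterm property of the pair $A$, both of which follow from the reduction-triple and subterm assumptions of the corollary. Apart from that, the argument is bookkeeping: both monotonic orders intersect with the same relation ${\geqslant_A} \cap {\geqslant_B}$.
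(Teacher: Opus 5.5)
Your proposal is correct and matches the paper's intended argument. The paper also obtains the corollary by applying \autoref{T:gwpo-spo}, whose hypotheses follow from the reduction-triple assumptions and from ${\prsuperterm} \subseteq {\sqsupseteq_A}$, and then intersecting both sides with ${\geqslant_A} \cap {\geqslant_B}$; as you note, \autoref{P:lex-reduction-triple} only serves to ensure that $>_\mspo$ is induced by a genuine reduction triple.
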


Together with \autoref{T:mspo} and \autoref{P:lex-reduction-triple},
this simulation result proves \autoref{T:mgwpo}.
We remark that some of the assumptions of \autoref{C:mgwpo-mspo} could be dropped.
For example, as in \autoref{T:gwpo-spo}, well-foundedness of the reduction triples is not necessary for simulation. 

It is worth examining this simulation result a little more.
The MGWPO (\autoref{T:mgwpo}) provides a way to turn two reduction triples, say $({\geqslant_A}, {\sqsupseteq_A}, {\sqsupset_A})$
and $({\geqslant_B}, {\sqsupseteq_B}, {\sqsupset_B})$, into a reduction order,
provided that $({\sqsupseteq_A}, {\sqsupset_A})$ has the subterm property.
If this property is not satisfied (cf.~\autoref{R:algebra-redtriple}), as in the proof of the simulation result,
we can instead use \autoref{P:lex-reduction-triple} to obtain the single reduction triple 
$({\geqslant_A \cap \geqslant_B}, {\sqsupseteq_{AB}}, {\sqsupset_{AB}})$
and then turn it into a reduction order via the MSPO (\autoref{T:mspo}).
Bypassing the simplicity requirement in this way gives us more termination proving power.

\begin{rem}
\label{R:frocos}
We add a few words about the preliminary paper~\cite{SH23} of this chapter.
There, the main result is a special case of \autoref{T:gwpo-spo} corresponding to the original 
WPO, namely when $A$ is induced by a weakly simple and monotone algebra and $B$ by a precedence (via \autoref{P:prec-redtriple}).
Based on this result and the observation above, a generalized version of the WPO is obtained
by using \autoref{P:algebra-redtriple} (with possibly non-weakly-simple algebras) for $A$.
Somewhat confusingly, although the generalization is merely a special version of the MSPO, 
they call it the GWPO. 
\end{rem}

The simulation result does not hold, for instance,
when the SPO is defined using multiset extension as in \cite{BFR00}
and the (G)WPO is defined using lexicographic extension
as in this article and in the original article of the WPO~\cite{YKS15}.
This is why our definitions of the path orders are aligned with lexicographic extension.

\subsection{GWPO as Optimized SPO}
\label{SS:gwpo-as-optimized-spo}
Another way of seeing the simulation result is an optimization technique of term comparison. 
Unlike the SPO, the GWPO allows us to avoid
recursive comparison with arguments of $t$ if $s \sqsupset_A t$,
see \autoref{E:SK} and \autoref{E:spo}.
This is reminiscent of the relationship between the KBO and the WPO:
the former is a special case of the latter with the strict simplicity of algebras, where recursive comparison 
with arguments of $t$ can be further avoided~\cite{YKS15}.
As one can expect, we also have a more KBO-like variant of GWPO
that assumes the subterm property of $\sqsupset_A$ and thereby avoids unnecessary comparison.

To present the KBO-like variant, we need a few definitions.
First, following \cite[Definition~3.2]{Yam14}, we extend admissibility of weight for the KBO to order pairs:
an order pair $({\sqsupseteq_A}, {\sqsupset_A})$ is \emph{admissible}
for another order pair $({\sqsupseteq_B}, {\sqsupset_B})$ 
if one of the following conditions holds for all $n$-ary function symbols $f$.
\begin{itemize}
\item $f(t_1, \ldots, t_n) \sqsupset_A t_i$ for all terms $t_1, \ldots, t_n$ and integers $1 \leqslant i \leqslant n$
\item $n = 1$ and $f(t) \sqsupseteq_B g(\bar{u})$ for all terms $t, \bar{u}$ and function symbols $g$
\end{itemize}
If the latter holds for a function symbol $f$, we call it a \emph{special unary symbol}.
By definition, if $\sqsupset_A$ has the subterm property then $A$ is admissible regardless of $B$.
Moreover, we say that an order pair $({\sqsupseteq}, {\sqsupset})$ is \emph{sensible}
if $\Var(s) \supseteq \Var(t)$ whenever $s \sqsupseteq t$.
This is a technical requirement to be explained later.

\begin{thm}
\label{T:kbo-like-gwpo}
Let $({\sqsupseteq_A}, {\sqsupset_A})$ and $({\sqsupseteq_B}, {\sqsupset_B})$ be order pairs 
and $>_\gwpo$ the GWPO induced by them.
Suppose that $A$ has the subterm property, and $A$ is sensible and admissible for $B$.
Then $s >_\gwpo t$ if and only if
\begin{enumerate}[1.]
  \item \label{kbogwpo:1} $s \sqsupset_A t$, or
  \item \label{kbogwpo:2} {$s = f(\seq[m]{s}) \sqsupseteq_A t$ and one of the following conditions holds:
  \begin{enumerate}[A.]
    \item \label{kbogwpo:2A} $t$ is a variable\footnote{%
    {This case sits oddly but goes back to \cite[Definition~3.1]{Yam14}.}}
    \item \label{kbogwpo:2B} $t = g(\seq[n]{t})$ and
    \begin{enumerate}[(I)]
      \item \label{kbogwpo:2B(I)} $s \sqsupset_B t$, or
      \item \label{kbogwpo:2B(II)} $s \sqsupseteq_B t$ and $(\seq[m]{s}) >_\gwpo^\lex (\seq[n]{t})$
    \end{enumerate}
  \end{enumerate}}
\end{enumerate}
\end{thm}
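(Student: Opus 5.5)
Write $>_\kbo$ for the relation defined by cases \KBOGWPO{1}–\KBOGWPO{2B(II)}. I would prove ${>_\gwpo} \subseteq {>_\kbo}$ and ${>_\kbo} \subseteq {>_\gwpo}$ separately. Two auxiliary facts about $>_\gwpo$ come first.

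\begin{enumerate}[(1)]
\item ${>_\gwpo} \subseteq {\sqsupseteq_A \cup \sqsupset_A}$. This is immediate from \autoref{D:gwpo}.
\item $>_\gwpo$ has the subterm property. Let $t$ be an argument of $u$. By the subterm property of $A$, either $u \sqsupset_A t$, which is case~\GWPO{1}, or $u \sqsupseteq_A t$, which gives case~\GWPO{2a} with $t \geqslant_\gwpo t$. For deeper subterms I would argue by induction, combining this with compatibility of $A$ and case~\GWPO{2a}.
\end{enumerate}

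I will also need transitivity of $>_\gwpo$. Since \autoref{T:gwpo} assumes stability and well-foundedness, which are not available here, I would prove it directly by the usual induction on $|s|+|t|+|u|$, with a case analysis on the two derivations. This is the step I expect to be the main obstacle.

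\textbf{From $>_\gwpo$ to $>_\kbo$.} Case~\GWPO{1} is case~\KBOGWPO{1}. Case~\GWPO{2b} becomes case~\KBOGWPO{2B} by dropping the side condition on the $t_j$. It remains to treat case~\GWPO{2a}, so assume $s = f(\seq[m]{s}) \sqsupseteq_A t$ and $s_i \geqslant_\gwpo t$. If $t$ is a variable, case~\KBOGWPO{2A} applies. Otherwise $t = g(\seq{t})$, and I use admissibility at $f$.

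\begin{itemize}
\item If $f$ is strictly simple for $\sqsupset_A$, then $s \sqsupset_A s_i$. Fact~(1), or reflexivity when $s_i = t$, gives $s_i \sqsupseteq_A t$ or $s_i \sqsupset_A t$. Compatibility then yields $s \sqsupset_A t$, which is case~\KBOGWPO{1}.
\item If $f$ is a special unary symbol, then $m = 1$ and $s \sqsupseteq_B t$. I show $(s_1) >_\gwpo^\lex (\seq{t})$. For $n = 0$ this is trivial. For $n \geqslant 1$, fact~(2) gives $t >_\gwpo t_1$, and then $s_1 \geqslant_\gwpo t$ together with transitivity gives $s_1 >_\gwpo t_1$. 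Hence case~\KBOGWPO{2B(II)} applies.
\end{itemize}

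\textbf{From $>_\kbo$ to $>_\gwpo$.} I proceed by induction on $|t|$. Case~\KBOGWPO{1} is case~\GWPO{1}.

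In case~\KBOGWPO{2A}, $s$ is not a variable and $s \sqsupseteq_A t$. Sensibility gives $t \in \Var(s)$, so $t$ occurs in some argument $s_i$. Fact~(2) gives $s_i \geqslant_\gwpo t$, and case~\GWPO{2a} applies.

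In case~\KBOGWPO{2B}, it suffices to show $s >_\gwpo t_j$ for every $j$; then case~\GWPO{2b} applies. By the subterm property of $A$ and compatibility, $s \sqsupset_A t_j$ or $s \sqsupseteq_A t_j$. In the first case we are done by case~\GWPO{1}. In the second case I use admissibility at $g$.

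\begin{itemize}
\item If $g$ is strictly simple for $\sqsupset_A$, then $t \sqsupset_A t_j$, hence $s \sqsupset_A t_j$.
\item Otherwise $g$ is a special unary symbol and $t = g(t_1)$. I split on how $s >_\kbo t$ was obtained.
\begin{itemize}
\item Under \KBOGWPO{2B(II)}, the relation $(\seq[m]{s}) >_\gwpo^\lex (t_1)$ forces $m \geqslant 1$ and $s_1 \geqslant_\gwpo t_1$. Case~\GWPO{2a} then gives $s >_\gwpo t_1$.
\item Under \KBOGWPO{2B(I)}, if $t_1$ is a variable then $s >_\kbo t_1$ by case~\KBOGWPO{2A}. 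If $t_1 = h(\bar{u})$, then $s \sqsupset_B g(t_1) \sqsupseteq_B t_1$, because $g$ is special. Compatibility gives $s \sqsupset_B t_1$, so $s >_\kbo t_1$ by case~\KBOGWPO{2B(I)}. In both subcases the induction hypothesis on $|t_1| < |t|$ yields $s >_\gwpo t_1$.
\end{itemize}
\end{itemize}
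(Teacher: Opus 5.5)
Your proposal is correct and is essentially the paper's proof. It has the same two directions, splits on admissibility at $f$ (respectively $g$) into the strictly simple and special unary cases, and relies, as the paper does, on the subterm property and transitivity of $>_\gwpo$; you are right to prove these directly, since the paper uses them tacitly even though \autoref{T:gwpo} needs stability and well-foundedness.

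The only simplification you miss is that a special unary $g$ cannot occur in case~\KBOGWPO{2B(I)}: there $t = g(t_1) \sqsupseteq_B s \sqsupset_B t$ would give $t \sqsupset_B t$. Your induction on $|t|$ in that subcase is therefore valid but unnecessary.
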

\begin{proof}
For the ``if'' direction, it suffices to show that each of case~\KBOGWPO{1}, case~\KBOGWPO{2A} and case~\KBOGWPO{2B} entails $s >_\gwpo t$.
If case~\KBOGWPO{1} holds then trivially $s >_\gwpo t$. So let $s = f(\seq[m]{s}) \sqsupseteq_A t$. 
\begin{enumerate}[A.]
\item Assume that $t$ is a variable. Since $A$ is sensible, $t$ occurs in $s$.
So $s >_\gwpo t$ follows from the subterm property.
\item Assume that $t = g(\seq[n]{t})$.
\begin{enumerate}[I.]
\item Suppose $s \sqsupset_B t$. Then $g$ is not a special unary symbol,
because otherwise we obtain a contradiction from $t \sqsupseteq_B s \sqsupset_B t$.
So $s \sqsupseteq_A t \sqsupset_A t_j$ and therefore $s >_\gwpo t_j$ for all $j$. Hence, $s >_\gwpo t$ by case~\GWPO{2b(i)}.  
\item Suppose $s \sqsupseteq_B t$ and $(\seq[m]{s}) >_\gwpo^\lex (\seq[n]{t})$. If $g$ is not a special unary symbol, then
by the same argument as the previous case we can show that $s >_\gwpo t$ by case~\GWPO{2b(ii)}.
So assume that $g$ is a special unary symbol.
From $(\seq[m]{s}) >_\gwpo^\lex (t_1)$ we obtain $s_1 \geqslant_\gwpo t_1$.
Since $s >_\gwpo s_1$ by the subterm property,
we now have $s >_\gwpo t_1$ by transitivity.
Hence, $s >_\gwpo t$ by case~\GWPO{2b(ii)}.
\end{enumerate}
\end{enumerate}
For the ``only-if'' direction, it suffices to show that each case of $s >_\gwpo t$ entails
one of case~\KBOGWPO{1}, case~\KBOGWPO{2A}, and case~\KBOGWPO{2B}.
Trivially, case~\GWPO{1} entails case~\KBOGWPO{1}, case~\GWPO{2b(i)} entails case~\KBOGWPO{2B(I)}, and case~\GWPO{2b(ii)} entails case~\KBOGWPO{2B(II)}.
So assume that case~\GWPO{2a} holds, namely $s = f(\seq[m]{s}) \sqsupseteq_A t$ and $s_i \geqslant_\gwpo t$ for some $i$.
If $t$ is a variable, then case~\KBOGWPO{2A} applies. So let $t = g(\seq[n]{t})$.
If $f$ is not a special unary symbol, then we have $s \sqsupset_A s_i \sqsupseteq_A t$ or $s \sqsupset_A s_i \sqsupset_A t$,
and therefore case~\KBOGWPO{1} applies.
Otherwise, $s \sqsupseteq_B t$ and $m = i = 1$.
Since $t >_\gwpo t_1$ follows from the subterm property,
we obtain $s_1 >_\gwpo t_1$ and thereby
$(s_1) >_\gwpo^\lex (\seq[n]{t})$. Hence, case~\KBOGWPO{2B(II)} holds.
\end{proof}

\begin{exa}[continued from \autoref{E:SK}]
Since
$x \cdot_\AA y = x + y + 1$ is strictly simple,
$>_\AA$ satisfies the subterm property, and thus
$({\geqslant_\AA},{>_\AA})$ is admissible for 
$({\geqslant_\AA^\sharp},{>_\AA^\sharp})$.  So \autoref{T:kbo-like-gwpo} is
available.  Recall the derivation of 
\(
t = (x \cdot z) \cdot^\sharp (y \cdot z) >_\gwpo
((\m{S} \cdot x) \cdot y) \cdot^\sharp z = u
\).
As $t \geqslant_\AA u$ and $t >_\AA^\sharp u$, it is now derived by 
case~\KBOGWPO{2B(I)} of the theorem, without checking 
$t >_\gwpo (\m{S} \cdot x) \cdot y$ or $t >_\gwpo z$.
\end{exa}

\autoref{T:kbo-like-gwpo} allows us to improve the complexity of deciding $s >_\gwpo t$ in the following sense:
Consider deciding $s >_\gwpo t$ for fixed order pairs $A$ and $B$.
If we measure the number of comparisons by $A$ and $B$,
the best algorithm known so far (simply following that for the LPO~\cite{Loc06LPO}) requires
a number of comparisons quadratic in $|s| + |t|$.
If fortunately $A$ is sensible and admissible for $B$, \autoref{T:kbo-like-gwpo} suggests that
the number of comparisons can be reduced to linear in $|s| + |t|$.
This is an analog of the fact that KBO comparison $s >_\kbo t$  admits an implementation
of linear time complexity (with respect to $|s| + |t|$)~\cite{Loc06KBO}.
We note that the simulation result (\autoref{T:gwpo-spo}) allows us to apply the same optimization 
to the SPO induced by the lexicographic combination $({\sqsupseteq_{AB}}, {\sqsupset_{AB}})$.

The requirement for $A$ to be sensible cannot be dropped from
\autoref{T:kbo-like-gwpo}.

\begin{exa}
\label{E:sensible}
Consider the signature $\FF = \{ \m{c}^{(0)} \}$. 
Let $A$ and $B$ be the trivial order pair $({\TT \times \TT}, {\emptyset})$.
Since $\prsuperterm$ is the empty relation $\emptyset$ in this signature,
$A$ has the subterm property and is admissible for $B$, but it is not sensible.
However, the GWPO $>_\gwpo$ induced by $A$ and $B$ does not satisfy $\m{c} >_\gwpo x$
while case~\KBOGWPO{2A} follows from $\m{c} \sqsupseteq_A x$.
This pathological example also serves as a
counterexample to \cite[Theorem~3.3]{Yam14} by considering the algebra
whose carrier
is the set of all terms and ordered by the reduction pair $A$.
\end{exa}

A precursor of the observation underlying Theorems~\ref{T:gwpo-spo} and~\ref{T:kbo-like-gwpo}
appears in the technical report by
Geser~\cite{Ges92}. More precisely, \cite[Theorem~5]{Ges92} states that a variant of KBOs (which he
attributes to Dershowitz~\cite{Der82}) can be simulated by SPOs.
For comparison, we present his observation in our order-pair based formulation.
We first define our version of Dershowitz's KBO.

\begin{defi}
\label{D:Dershowitz}
Let $({\sqsupseteq}, {\sqsupset})$ be an order pair on terms.
The relation $>_\Dershowitz$ is inductively defined as follows:
$s >_\Dershowitz t$ if
\begin{enumerate}[1.]
\item
\label{dershowitz:1}
$s \sqsupset t$, or
\item
\label{dershowitz:2}
$s = f(\seq[m]{s}) \sqsupseteq t$ and one of the following conditions holds.
\begin{enumerate}[a.]
\item
\label{derwotiz:2a}
$t$ is a variable
\item
\label{dershowitz:2b}
$t = g(\seq[n]{t})$ and $(\seq[m]{s}) >_\Dershowitz^\lex (\seq[n]{t})$
\end{enumerate}
\end{enumerate}
\end{defi}

One can find several differences between \autoref{D:Dershowitz} and 
Geser's original formulation (\cite[Definition~8.1]{Ges92}).
There, the order $>_\Dershowitz$ is formulated as a quasi-order parameterized by another quasi-order;
case~\DER{2} demands $s \equiv t$ for the equivalence part $\equiv$ of $\sqsupseteq$; and
case~\DER{2b} allows a more general status extension including multiset extension.
(The same goes for comparing his definition of the SPO with ours.)
Now, Geser's observation is stated as follows.

\begin{cor}
\label{C:kbo-like-spo}
Let $({\sqsupseteq}, {\sqsupset})$ be a stable and sensible order pair with
$\prsuperterm \subseteq {\sqsupset}$.
The relations $>_\spo$ and $>_\Dershowitz$ are identical
when both are induced by $({\sqsupseteq}, {\sqsupset})$.
\end{cor}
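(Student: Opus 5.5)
The plan is to obtain the corollary by instantiating Theorems~\ref{T:gwpo-spo} and~\ref{T:kbo-like-gwpo} with a suitable choice of $A$ and $B$. Take $A = ({\sqsupseteq}, {\sqsupset})$ and let $B$ be the trivial order pair $({\TT \times \TT}, {\emptyset})$. Both are stable order pairs.

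\emph{Step 1: the lexicographic combination is $A$ itself.} By \autoref{D:lex-comb}, $x \sqsupseteq_{AB} y$ iff $x \sqsupset y$, or $x \sqsupseteq y$ and $x \mathrel{(\TT\times\TT)} y$. This reduces to $x \sqsupset y$ or $x \sqsupseteq y$, which is just $x \sqsupseteq y$ by compatibility and reflexivity. Likewise $x \sqsupset_{AB} y$ iff $x \sqsupset y$, or $x \sqsupseteq y$ and $x \mathrel{\emptyset} y$, which is just $x \sqsupset y$. Hence $(\sqsupseteq_{AB},\sqsupset_{AB}) = (\sqsupseteq,\sqsupset)$. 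Since $\prsuperterm \subseteq {\sqsupset}$, the pair $A$ has the subterm property, so \autoref{T:gwpo-spo} shows that the SPO induced by $(\sqsupseteq,\sqsupset)$ coincides with the GWPO $>_\gwpo$ induced by $A$ and $B$.

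\emph{Step 2: apply \autoref{T:kbo-like-gwpo} to this GWPO.} The hypotheses hold as follows.
\begin{itemize}
\item $A$ has the subterm property, as noted in Step 1.
\item $A$ is sensible by assumption.
\item $A$ is admissible for $B$, because $\sqsupset_A = {\sqsupset}$ has the subterm property; the paper notes this suffices regardless of $B$.
\end{itemize}
So $s >_\gwpo t$ iff one of the following holds: case~\KBOGWPO{1}, $s \sqsupset t$; case~\KBOGWPO{2A}, $s = f(\bar s) \sqsupseteq t$ with $t$ a variable; or case~\KBOGWPO{2B}, $s = f(\bar s) \sqsupseteq t$, $t = g(\bar t)$, and (I) or (II) holds.

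With $B$ trivial, these subcases simplify. Condition (I), $s \sqsupset_B t$, never holds. Condition (II) reduces to $(\seq[m]{s}) >_\gwpo^\lex (\seq[n]{t})$, because $s \sqsupseteq_B t$ always holds. The resulting characterization is precisely \autoref{D:Dershowitz}, except that the lexicographic comparison uses $>_\gwpo$ rather than $>_\Dershowitz$.

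\emph{Step 3: identify $>_\gwpo$ with $>_\Dershowitz$.} Show $s >_\gwpo t \iff s >_\Dershowitz t$ by induction on $|s|+|t|$. The characterizations from Step 2 and \autoref{D:Dershowitz} match case by case. In the lexicographic case, the argument pairs $(s_i,t_i)$ have strictly smaller combined size, so the induction hypothesis makes $>_\gwpo^\lex$ and $>_\Dershowitz^\lex$ agree on $(\seq[m]{s})$ and $(\seq[n]{t})$. Combining this with Step 1 gives ${>_\spo} = {>_\gwpo} = {>_\Dershowitz}$.

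The only real obstacle is bookkeeping. One must check that the lexicographic combination with the trivial pair collapses exactly to $A$, which needs compatibility and reflexivity of $\sqsupseteq$. One must also check that the inductive comparison respects the lexicographic recursion; this is fine since only argument pairs are compared. Note that stability of $A$ is what \autoref{T:gwpo-spo} requires.
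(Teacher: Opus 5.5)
Your route is the paper's: you instantiate \autoref{T:gwpo-spo} with $B = ({\TT \times \TT}, {\emptyset})$ and then read off \autoref{D:Dershowitz} from \autoref{T:kbo-like-gwpo} with $B$ trivial. Steps~2 and~3 are correct, and Step~3 merely spells out the induction behind the paper's ``observing the correspondence''.

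The one step that fails as written is in Step~1. Compatibility and reflexivity of $\sqsupseteq$ do not yield ${\sqsupset} \subseteq {\sqsupseteq}$, because the order pair is not assumed to be normal. Hence in general ${\sqsupseteq_{AB}} = {\sqsupseteq} \cup {\sqsupset}$, which can differ from $\sqsupseteq$. For example, take $\sqsupseteq$ to be the identity relation and $\sqsupset$ to be $\prsuperterm$. This pair satisfies every hypothesis of the corollary: it is a stable and sensible order pair with $\prsuperterm \subseteq {\sqsupset}$. Yet here ${\sqsupseteq_{AB}} = {\superterm}$. So your claim that the lexicographic combination ``collapses exactly to $A$'' is false. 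Taken literally, \autoref{T:gwpo-spo} identifies $>_\gwpo$ with the SPO induced by $({\sqsupseteq} \cup {\sqsupset}, {\sqsupset})$, not with the one induced by $({\sqsupseteq}, {\sqsupset})$.

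The repair is short: these two SPOs coincide. Suppose case~\SPO{b(ii)} for the pair $({\sqsupseteq} \cup {\sqsupset}, {\sqsupset})$ is justified by $s \sqsupset t$ rather than $s \sqsupseteq t$. Then case~\SPO{b(i)} applies instead, with the identical remaining side condition ($t = g(\seq{t})$ and $s >_\spo t_j$ for all $j$). An induction on $|s| + |t|$ turns this into equality of the two relations. The paper's own proof passes over this identification silently. With this observation in place of your collapse claim, the rest of your argument goes through unchanged.
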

\begin{proof}
From  $\prsuperterm \subseteq {\sqsupset}$
the subterm property of $({\sqsupseteq}, {\sqsupset})$ is guaranteed.
So, by \autoref{T:gwpo-spo}
$>_\spo$ is identical to the GWPO $>_\gwpo$ induced by $({\sqsupseteq}, {\sqsupset})$ and
the trivial order pair $(\TT \times \TT,\emptyset)$.
Thanks to ${\prsuperterm} \subseteq {\sqsupset}$,
the order pair $({\sqsupseteq}, {\sqsupset})$ is admissible for $(\TT \times \TT,\emptyset)$.
So we conclude by observing the correspondence between case~\DER{2b} of \autoref{D:Dershowitz} and
case~\KBOGWPO{2B} of \autoref{T:kbo-like-gwpo} with $B$ trivial.
\end{proof}

We note that the subterm property of the strict order $\sqsupset$
cannot be replaced with that of the quasi-order $\sqsupseteq$ in \autoref{C:kbo-like-spo}.
Even worse, $>_\Dershowitz$ may not be well-founded.

\begin{exa}
\label{E:geser}
Let $({\sqsupseteq}, {\sqsupset})$ be the trivial order
pair $({\TT \times \TT}, {\emptyset})$.
The corresponding relation $>_\Dershowitz$ admits the
infinite sequence:
\[
\m{f}(\m{g}(x), x)
>_\Dershowitz
\m{f}(x, \m{f}(\m{g}(x), x))
>_\Dershowitz
\m{f}(x, \m{f}(x, \m{f}(\m{g}(x), x)))
>_\Dershowitz
\cdots
\]
Here \DER{2b} is repeatedly applied together with $\m{g}(x) >_\Dershowitz x$.
Also, $>_\Dershowitz$ is not identical to any SPO $>_\spo$, because
$\m{f}(x,\m{f}(\m{g}(x),x)) >_\spo \m{f}(\m{g}(x),x)$ holds by definition.
\end{exa}

In fact, the same example shows that \cite[Theorems~4 and~5]{Ges92}, which
state that his version of $>_\Dershowitz$ is well-founded and coincides
with his SPO, do not hold as stated.  The subterm property should instead
be assumed for $\sqsupset$ rather than $\sqsupseteq$.

\subsection{Ground Totality}
\label{SS:ground-totality}

A strict order $>$ on terms is \emph{ground total} if 
either $s > t$ or $t > s$ holds for all \emph{ground} terms $s \neq t$.
Here, ground terms are terms that contain no variable.
Ground totality of reduction orders is of interest since refutational completeness of
theorem proving procedures like ordered completion~\cite{BDP89} relies on it.
This subsection investigates the property for the (G)WPO and the SPO.

We begin with preliminary definitions.
We say that a strict order $>$ is \emph{total} on a set $X$ if
either $x > y$ or $y > x$ holds for all different elements $x, y \in X$.
(So, ground totality is totality restricted to ground terms.) 
We say that a quasi-order $\geqslant$ is \emph{total} on $X$ if its strict part is total.
An order pair $({\geqslant}, {>})$ is \emph{almost total} on $X$ if
$x > y$, $x \equiv y$ or $y > x$ holds for all $x, y \in X$.
Here, $\equiv$ is the equivalence part ${\geqslant} \cap {\leqslant}$ of $\geqslant$.
For example, 
the usual orders $\geqslant_\NN$ and $>_\NN$ on $\NN$ are total,
and also the order pair $({\geqslant_\NN}, {>_\NN})$ is almost total.

We say an algebra $\AA$ is \emph{almost total} if
the underlying order pair $({\geqslant}, {>})$ is almost total on the domain.
Polynomial interpretations and max/plus interpretations are such examples.
The order pair $({\geqslant_\AA}, {>_\AA})$ induced by an almost total algebra $\AA$ is almost total on ground terms.
We note that $>_\AA$ may not be ground total even when $>$ is total:
for the polynomial interpretation $\AA$ of \autoref{E:algebra} with $\m{S}_\AA = \m{K}_\AA = 1$
we have neither $\m{S} >_\AA \m{K}$ nor $\m{K} >_\AA \m{S}$.

The WPO gives rise to a ground-total reduction order, under reasonable assumptions.

\begin{thm}
\label{T:wpo-ground-total}
Let $\AA$ be an algebra that is well-founded, normal, weakly simple and monotone,
and almost total.
Let $\succeq$ be a well-founded total precedence.
Then
the WPO $>_\wpo$ induced by $\AA$ and $\succeq$ is a ground-total reduction order.
\end{thm}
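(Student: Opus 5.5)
The reduction-order part is \autoref{T:wpo-reduction-order}, whose hypotheses are all assumed here. It remains to show ground totality. We prove, by induction on $|s| + |t|$, that for all ground terms $s \neq t$ either $s >_\wpo t$ or $t >_\wpo s$.

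Let $s = f(\seq[m]{s})$ and $t = g(\seq{t})$ be distinct ground terms. For ground terms, $s >_\AA t$ reduces to the single comparison $[\alpha]_\AA(s) > [\alpha]_\AA(t)$, whose value does not depend on $\alpha$. So almost totality of $\AA$ gives three cases: $s >_\AA t$, $t >_\AA s$, or $s \equiv_\AA t$ (meaning $s \geqslant_\AA t$ and $t \geqslant_\AA s$). In the first two cases case~\WPO{1} settles the claim. So assume $s \equiv_\AA t$; then case~\WPO{2} is available in both directions.

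\textbf{Subterm shortcut.} If $s_i \geqslant_\wpo t$ for some $i$, then $s >_\wpo t$ by case~\WPO{2a}. Symmetrically, if $t_j \geqslant_\wpo s$ for some $j$, then $t >_\wpo s$. So assume neither holds. Since $s_i \neq t$ and $|s_i| + |t| < |s| + |t|$, the induction hypothesis gives $t >_\wpo s_i$ for all $i$. Likewise $s >_\wpo t_j$ for all $j$. This supplies exactly the side conditions of case~\WPO{2b} in both directions.

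\textbf{Root and argument comparison.} Since $\succeq$ is total, either $f \succ g$, or $g \succ f$, or $f$ and $g$ are equivalent. In the first two cases case~\WPO{2b(i)} applies. In the equivalent case we must show that $(\seq[m]{s})$ and $(\seq{t})$ are comparable by $>_\wpo^\lex$. I expect this to be the main obstacle, and it is handled as follows.

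\begin{itemize}
\item If $m = n$ and the arguments are pairwise equal, then $f \neq g$ (because $s \neq t$). Here I would interpret ``total precedence'' as a total strict order, i.e.\ with trivial equivalence classes, so this case cannot occur. If the intended notion allows nontrivial equivalence, this subcase breaks, and the assumption must be read this way.
\item Otherwise, take the first index $k$ with $s_k \neq t_k$. The induction hypothesis (applicable since $|s_k| + |t_k| < |s| + |t|$) orients this pair. Reflexivity of $\geqslant_\wpo$ handles the equal prefix, so $>_\wpo^\lex$ orients the tuples in that direction.
\item If there is no such $k$ but the lengths differ, one tuple is a proper prefix of the other. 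The longer one is then $>_\wpo^\lex$-greater by the empty-list clause of \autoref{D:lex-ext}.
\end{itemize}

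In every subcase case~\WPO{2b(ii)} yields $s >_\wpo t$ or $t >_\wpo s$, which completes the induction.
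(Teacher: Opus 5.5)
Your proof is correct and follows essentially the paper's route: the paper proves the general GWPO statement (\autoref{L:gwpo-ground-totality}) by exactly this induction on $|s|+|t|$ (compare by $A$, try case~2a, use the induction hypothesis for the side conditions, compare by $B$, then compare the argument lists lexicographically), and it obtains the theorem by instantiating $A$ with $({\geqslant_\AA}, {>_\AA})$ and $B$ with the precedence-induced pair, together with \autoref{T:wpo-reduction-order}. Your hedge about ``total precedence'' is unnecessary: the paper calls a quasi-order total when its strict part is total, so equivalent symbols coincide, and this is exactly why the precedence-induced pair discriminates root symbols in the paper's argument.
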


Although the fact above is easy to prove and implicitly used in
\cite{JK20}, the authors are not aware of a published proof.  Later we
prove it by a more general result.

Unfortunately, in contrast to WPOs, MGWPOs may not be ground total, even
when they are constructed from total algebras and precedences.  The same
goes for MSPOs by the simulation result (\autoref{C:mgwpo-mspo}).

\begin{exa}[adapted from \textnormal{\cite{Zan94}}]
\label{E:mgwpo-non-ground-total}
Let $\FF = \{ \m{a}^{(0)}, \m{b}^{(0)}, \m{f}^{(1)}, \m{g}^{(1)} \}$, and
let $\AA$ be the linear polynomial interpretation for $\FF$ with
\begin{align*}
\m{a}_\AA &= 1
&
\m{b}_\AA &= 1
&
\m{f}_\AA(x) &= x + 1 
&
\m{g}_\AA(x) &= x + 1
\end{align*}
and $\BB$ the following linear polynomial interpretation for $\FF \cup \FF^\sharp$.
\begin{align*}
\m{a}_\BB &= 1
&
\m{b}_\BB &= 0
&
\m{f}_\BB(x) &= x
&
\m{g}_\BB(x) &= 0
&
\m{a}^\sharp_\BB &= 0
&
\m{b}^\sharp_\BB &= 1
&
\m{f}^\sharp_\BB(x) &= x
&
\m{g}^\sharp_\BB(x) &= 0
\end{align*}
The MGWPO $>_\mgwpo$ induced by $({\geqslant_\AA}, {\geqslant_\AA}, {>_\AA})$ and $({\geqslant_\BB}, {\geqslant_\BB^\sharp}, {>_\BB^\sharp})$
satisfies $\m{f}(\m{a}) >_\mgwpo \m{f}(\m{b})$ and $\m{g}(\m{b}) >_\mgwpo \m{g}(\m{a})$.
If $>_\mgwpo$ were ground total, then we would have either $\m{a} >_\mgwpo \m{b}$ or $\m{b} >_\mgwpo \m{a}$,
from which we obtain a contradiction using the monotonicity.
Even worse, by the same argument we can show that there is no ground-total reduction order 
that extends $>_\mgwpo$.
Therefore, combining $({\geqslant_\BB}, {\geqslant_\BB^\sharp}, {>_\BB^\sharp})$
with a precedence (via \autoref{P:prec-redtriple}) lexicographically does not resolve the issue.
\end{exa}

An observation here is that even though $\m{a}$ and $\m{b}$
are incomparable with respect to $>_\mgwpo$, we actually have $\m{b} >_\gwpo \m{a}$.
Indeed, we can show that $>_\gwpo$ is ground total under a few reasonable assumptions.
We say that $({\sqsupseteq}, {\sqsupset})$ \emph{discriminates root symbols}
if $f = g$ whenever $f(\bar{t}) \equiv g(\bar{t})$.
For instance, let $({\sqsupseteq_P}, {\sqsupset_P})$ be an order pair
induced by a precedence $\succeq$ (as in \autoref{P:prec-redtriple}).  If
$\succeq$ is total, then $({\sqsupseteq_P}, {\sqsupset_P})$ discriminates
root symbols. With this in mind, \autoref{T:wpo-ground-total} is proved
from \autoref{T:wpo-reduction-order} and the following lemma.

\begin{lem}
\label{L:gwpo-ground-totality}
Let $({\sqsupseteq_A},{\sqsupset_A})$ and $({\sqsupseteq_B},{\sqsupset_B})$
be order pairs that are almost total on ground terms.  If $B$ discriminates
root symbols, then the induced GWPO $>_\gwpo$ satisfies
$s >_\gwpo t$ or $t >_\gwpo s$ for all distinct ground terms $s$ and $t$.
\end{lem}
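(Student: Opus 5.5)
We prove the claim by induction on $|s| + |t|$. Let $s = f(\seq[m]{s})$ and $t = g(\seq[n]{t})$ be distinct ground terms; ground terms are never variables, so both are headed by function symbols. The proof has two steps: first reduce to the case where $s$ and $t$ are $A$-equivalent and every argument is beaten by the other term, and then use $B$ and the root discrimination to finish.

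\textbf{Step 1: reduce to $s \equiv_A t$.} Since $A$ is almost total on ground terms, one of $s \sqsupset_A t$, $t \sqsupset_A s$ or $s \equiv_A t$ holds. In the first two cases we conclude immediately by case~\GWPO{1}. So assume $s \equiv_A t$, which gives both $s \sqsupseteq_A t$ and $t \sqsupseteq_A s$; the side conditions of case~\GWPO{2} are then available in both directions.

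\textbf{Step 2: handle the arguments.} Suppose some argument $s_i$ satisfies $s_i \geqslant_\gwpo t$. Then $s >_\gwpo t$ by case~\GWPO{2a}, and symmetrically if $t_j \geqslant_\gwpo s$ for some $j$. Otherwise, for every $i$ we have $s_i \neq t$ and $s_i \not>_\gwpo t$. Since $|s_i| + |t| < |s| + |t|$, the induction hypothesis gives $t >_\gwpo s_i$. Symmetrically, $s >_\gwpo t_j$ for all $j$.

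\textbf{Step 3: compare by $B$.} $B$ is also almost total on ground terms. If $s \sqsupset_B t$ or $t \sqsupset_B s$, case~\GWPO{2b(i)} applies in the corresponding direction. The remaining case is $s \equiv_B t$, and this is the main obstacle: we need the argument lists to be lexicographically comparable, which requires the same root symbol. This is where discrimination of root symbols enters, but the definition is stated only for terms with identical argument lists, $f(\bar{t}) \equiv g(\bar{t})$, whereas here the arguments differ.

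What I would try first is to use $s \equiv_B t$ together with discrimination, possibly combined with stability or the almost-total comparisons of arguments, to derive $f = g$. If the definition does not yield this directly, I would argue it from the intended reading of the hypothesis, which is that $B$-equivalent terms share their root symbol, as with precedence-induced pairs. This is the point where the exact form of the assumption matters.

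\textbf{Step 4: lexicographic comparison.} Once $f = g$, we have $m = n$, and $(\seq[m]{s}) \neq (\seq[m]{t})$ because $s \neq t$. Let $k$ be the first index with $s_k \neq t_k$. By the induction hypothesis applied to $s_k$ and $t_k$, either $s_k >_\gwpo t_k$ or $t_k >_\gwpo s_k$. Since all earlier components are equal, the lexicographic extension (of $(\geqslant_\gwpo, >_\gwpo)$, where equal components satisfy $\geqslant_\gwpo$) yields $(\seq[m]{s}) >_\gwpo^\lex (\seq[m]{t})$ or the reverse. Case~\GWPO{2b(ii)} then concludes, using the argument comparisons from Step 2 together with $s \sqsupseteq_B t$, or respectively $t \sqsupseteq_B s$.
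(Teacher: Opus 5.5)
There is a genuine gap in Step~3. You do not need $f = g$ at all. The lexicographic extension of \autoref{D:lex-ext} is defined on lists of arbitrary length: a nonempty list beats the empty one, so a proper prefix is strictly smaller. Hence, by the induction hypothesis applied componentwise, any two \emph{distinct} argument lists $(\seq[m]{s})$ and $(\seq[n]{t})$ are related by $>_\gwpo^\lex$ in one direction or the other, whatever $f$, $g$, $m$, $n$ are. Discrimination of root symbols is used only to obtain this distinctness, by contraposition. If $(\seq[m]{s}) = (\seq[n]{t})$, then $s \equiv_B t$ reads $f(\bar{s}) \equiv_B g(\bar{s})$. This forces $f = g$ and hence $s = t$, a contradiction. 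Case~\GWPO{2b(ii)} then applies, together with the argument comparisons from your Step~2. This is exactly how the paper concludes; your Steps~1 and~2 coincide with its proof.

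Your fallback of reading the hypothesis as ``$B$-equivalent terms share their root symbol'' is not available, because it is strictly stronger than what is assumed. Take the signature $\{\m{a}, \m{b}, \m{f}^{(1)}, \m{g}^{(1)}\}$ and let $B$ compare terms by total symbol weight, with $w(\m{a}) = 1$, $w(\m{b}) = 2$, $w(\m{f}) = 3$, $w(\m{g}) = 2$. This order pair is almost total and discriminates root symbols, yet $\m{f}(\m{a}) \equiv_B \m{g}(\m{b})$. Now let $A$ be trivial. The required $\m{g}(\m{b}) >_\gwpo \m{f}(\m{a})$ then holds only by case~\GWPO{2b(ii)}, from $(\m{b}) >_\gwpo^\lex (\m{a})$ under different root symbols. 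Your Step~4 assumes $f = g$, $m = n$, and a first differing index, so it never reaches this situation.
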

\begin{proof}
Let $s = f(\seq[m]{s})$ and $t = g(\seq[n]{t})$ be different ground terms.
We prove $s >_\gwpo t$ or $t >_\gwpo s$ by induction on $|s| + |t|$.
If $s \sqsupset_A t$ or $t \sqsupset_A s$ then we are done by case~\GWPO{1}.
So hereafter we assume $s \sqsupseteq_A t$ and $t \sqsupseteq_A s$
as $A$ is almost total on ground terms.
If $s_i \geqslant_\gwpo t$ or $t_i \geqslant_\gwpo s$ for some $i$,
then we are done by case~\GWPO{2a}.
Otherwise, by the induction hypothesis,
$s >_\gwpo t_j$ for all $1 \leqslant j \leqslant n$ and 
$t >_\gwpo s_i$ for all $1 \leqslant i \leqslant m$.
Then, we again use the assumption that $B$ is almost total on ground terms.
If $s \sqsupset_B t$ or $t \sqsupset_B  s$ then we are done by case~\GWPO{2b(i)}.
Otherwise, $s \sqsupseteq_B t$ and $t \sqsupseteq_B s$.
Since $B$ discriminates root symbols and $s \neq t$,
we have $(\seq[m]{s}) \neq (\seq[n]{t})$. By the induction hypothesis
together with the fact that lexicographic extension preserves totality,
we have $(\seq[m]{s}) >_\gwpo^\lex (\seq[n]{t})$ or $(\seq[n]{t}) >_\gwpo^\lex (\seq[m]{s})$.\footnote{%
As the proof suggests, the lemma does not hold if we adopt multiset extension in the definition of the GWPO.}
So case~\GWPO{2b(ii)} applies.
\end{proof}

Note that $({\geqslant_\BB^\sharp}, {>_\BB^\sharp})$ of \autoref{E:mgwpo-non-ground-total}
does not discriminate root symbols, since both $\m{f}(\m{b}) \geqslant_\BB^\sharp \m{g}(\m{b})$
and $\m{g}(\m{b}) \geqslant_\BB^\sharp \m{f}(\m{b})$ hold.
In such a case, we can simply combine it lexicographically with an order pair induced by a total precedence.
This leads to the following construction:
Let $\succeq$ be a well-founded and total precedence, and 
$\AA$ an  $(\FF \cup \FF^\sharp)$-algebra that is almost total, normal, weakly monotone, well-founded,
and weakly simple for $\FF$.
Define $({\sqsupseteq}, {\sqsupset})$ as the lexicographic combination of 
$({\geqslant_\AA^\sharp}, {>_\AA^\sharp})$ and the order pair induced by $\succeq$.
Consider the GWPO $>_\gwpo$ induced by $({\geqslant_\AA}, {>_\AA})$ and $({\sqsupseteq}, {\sqsupset})$.
Then we have:

\begin{thm}
\label{T:gwpo-ground-total}
The GWPO $>_\gwpo$ defined as above is a ground-total reduction order.
\end{thm}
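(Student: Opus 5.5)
The proof splits into two parts: the GWPO is a reduction order, and it is ground total. For the first part we use \autoref{T:mgwpo}, and for the second \autoref{L:gwpo-ground-totality}.

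\emph{Reduction order.} Let $P$ be the reduction triple $({\TT \times \TT}, {\sqsupseteq_P}, {\sqsupset_P})$ of \autoref{P:prec-redtriple} induced by $\succeq$. By \autoref{P:algebra-redtriple}, $({\geqslant_\AA}, {\geqslant_\AA^\sharp}, {>_\AA^\sharp})$ is a reduction triple. \autoref{P:lex-reduction-triple} then shows that $({\geqslant_\AA}, {\sqsupseteq}, {\sqsupset})$ is a reduction triple, since $\geqslant_\AA \cap (\TT\times\TT) = {\geqslant_\AA}$. By \autoref{P:algebra-redpair} restricted to $\FF$, $({\geqslant_\AA}, {\geqslant_\AA}, {>_\AA})$ is a reduction triple via \autoref{P:redpair-redtriple}. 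Because $\AA$ is weakly simple for $\FF$, $\geqslant_\AA$ has the subterm property on $\TT(\FF,\VV)$. Hence \autoref{T:mgwpo} shows that $>_\mgwpo = {\geqslant_\AA} \cap {\geqslant_\AA} \cap {>_\gwpo}$ is a reduction order.

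To transfer this to $>_\gwpo$ we need ${>_\gwpo} \subseteq {\geqslant_\AA}$, so that $>_\gwpo = {>_\mgwpo}$. This is the remark in \autoref{R:algebra-redtriple}. We prove it by induction on the derivation of $s >_\gwpo t$, using normality of $\AA$:
\begin{itemize}
\item Case~\GWPO{1} gives $s >_\AA t$, hence $s \geqslant_\AA t$ by normality.
\item Every case~\GWPO{2} has $s \geqslant_\AA t$ as a premise.
\end{itemize}
So the inclusion is immediate. Note that the first reduction triple is the one with the subterm property. I expect this to be bookkeeping rather than a real obstacle.

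\emph{Ground totality.} We apply \autoref{L:gwpo-ground-totality} with $A = ({\geqslant_\AA}, {>_\AA})$ and $B = ({\sqsupseteq}, {\sqsupset})$, which requires three facts.
\begin{enumerate}
\item $A$ is almost total on ground terms. Ground terms have a single value independent of $\alpha$, and $\AA$ is almost total.
\item $B$ is almost total on ground terms. $({\geqslant_\AA^\sharp}, {>_\AA^\sharp})$ is almost total on ground terms for the same reason (both terms are non-variables, compared via their $\sharp$-versions). The precedence pair $({\sqsupseteq_P}, {\sqsupset_P})$ is almost total because $\succeq$ is total. The lexicographic combination of two almost total pairs is almost total: either the first component decides strictly, or it gives equivalence and the second decides.
\item $B$ discriminates root symbols. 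If $f(\bar t) \equiv g(\bar t)$ in the combination, then in particular $f \succeq g$ and $g \succeq f$, so $f = g$ by totality (antisymmetry of the total precedence).
\end{enumerate}

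The main point to check is that equivalence in a lexicographic combination implies equivalence in each component. Suppose $x \sqsupseteq y$ and $y \sqsupseteq x$. If $x >_\AA^\sharp y$ held, then $y \sqsupseteq x$ would give either $y >_\AA^\sharp x$, contradicting irreflexivity via compatibility, or $y \geqslant_\AA^\sharp x$ together with $x >_\AA^\sharp y$, which by compatibility gives $x >_\AA^\sharp x$, a contradiction. Hence both directions fall into the second clause, giving $x \sqsupseteq_P y$ and $y \sqsupseteq_P x$. Under the reading that a total precedence is antisymmetric (total strict part), this yields $f = g$. Combining the three facts with \autoref{L:gwpo-ground-totality} and the first part completes the proof.
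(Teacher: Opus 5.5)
Your proposal is correct and follows the same route as the paper's proof. In both, you identify $>_\gwpo$ with the MGWPO induced by $({\geqslant_\AA}, {\geqslant_\AA}, {>_\AA})$ and $({\geqslant_\AA}, {\sqsupseteq}, {\sqsupset})$, using the subterm property of $\geqslant_\AA$ and using normality to get ${>_\gwpo} \subseteq {\geqslant_\AA}$; you then invoke \autoref{T:mgwpo}, and finish with \autoref{L:gwpo-ground-totality}, additionally spelling out the almost-totality and root-discrimination checks for the lexicographic combination that the paper leaves implicit.
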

\begin{proof}
To show that $>_\gwpo$ is a reduction order,
the key is that $({\geqslant_\AA}, {\geqslant_\AA}, {>_\AA})$ and $({\geqslant_\AA}, {\sqsupseteq}, {\sqsupset})$
are reduction triples with the subterm property of $\geqslant_\AA$.
Using the normality of $\AA$, we can show that $>_\gwpo$ is identical to 
the MGWPO induced by those reduction triples, and therefore a reduction order.
Finally, \autoref{L:gwpo-ground-totality} shows that $>_\gwpo$ is
ground total.
\end{proof}

We remark that we could show the termination of \autoref{E:SK} with a ground-total reduction order
by additionally using a total precedence as in \autoref{T:gwpo-ground-total}.
Also, we note that \autoref{L:gwpo-ground-totality} and \autoref{T:gwpo-ground-total}
can be used for the SPO via the simulation result (\autoref{T:gwpo-spo}).

\section{Reduction Pairs}
\label{S:reduction-pair}

While we have considered the reduction-order versions of the MSPO and the (M)GWPO so far,
the original WPO has been extended to reduction pairs
so it can be used more effectively in the dependency pair framework.
After recalling the dependency pair framework in \autoref{SS:dp},
we introduce a reduction-pair version of the GWPO in \autoref{SS:gwpo-redpair-simple}.
As a special case of it, in \autoref{SS:spo-redpair} we also present the SPO as a reduction pair,
and discuss simulation of GWPOs (which does not hold as in \autoref{S:reduction-order}).
In \autoref{SS:rp-refinements} we discuss how to use the reduction pairs more effectively.
Finally, \autoref{SS:experiment-reduction-pair}
presents experimental data of GWPOs in the context of the dependency pair framework.

\subsection{Dependency Pairs}
\label{SS:dp}

We recall the \emph{dependency pair framework}~\cite{AG00,GTS05LPAR,HM05,HM07}.
Let $\RR$ be a TRS\@.
We write $\DD_\RR$
for the set $\{ f \mid f(\seq{\ell}) \to r \in \RR \}$ of \emph{defined symbols}.
Other symbols are called \emph{constructor symbols}.
Given a term $t$ of the form $f(\seq[n]{t})$
with $f \in \DD_\RR$, we write $t^\sharp$ for $f^\sharp(\seq[n]{t})$.
Here $f^\sharp$ is a fresh $n$-ary function symbol corresponding to $f$.
The set of such terms $t^\sharp$ is denoted by $\TT^\sharp$.
The TRS $\DP(\RR)$ is defined as follows:
\[
\DP(\RR) =
\{ \ell^\sharp \to t^\sharp \mid \text{$\ell \to r \in \RR$, 
$r \superterm t$, $\rt(t) \in \DD_\RR$, and $\ell \nprsuperterm t$} \}
\]
We note that the condition  $\ell \nprsuperterm t$ is Dershowitz's refinement~\cite{Der04}.
Rules in $\DP(\RR)$ are called \emph{dependency pairs}.
\emph{Dependency pair problems} are pairs $(\PP,\RR)$ of TRSs with
$\PP \subseteq \TT^\sharp \times \TT^\sharp$ and
$\RR \subseteq \TT \times \TT$.
A dependency pair problem $(\PP,\RR)$ is \emph{finite} if
there exists no infinite sequence of the form
$s_1 \to_\RR^* t_1 \rto_\PP s_2 \to_\RR^* t_2 \rto_\PP \cdots$,
where each $s_i$ is terminating with respect to $\RR$.

\begin{thm}
\label{T:dp}
A TRS $\RR$ is terminating if and only if $(\DP(\RR),\RR)$ is finite.
\end{thm}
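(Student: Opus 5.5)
We prove the two directions separately. The standard argument goes through minimal non-terminating terms and the set of ``tails'' of such terms.

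For the ``if'' direction, suppose $\RR$ is not terminating. We build an infinite chain witnessing that $(\DP(\RR),\RR)$ is not finite.

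\begin{enumerate}[1.]
\item Among the non-terminating terms, choose a minimal one with respect to $\prsuperterm$: a term $u$ that is non-terminating while all its proper subterms are terminating. Such a $u$ exists because $\prsuperterm$ is well-founded. Call such terms \emph{minimal non-terminating}. A minimal non-terminating term is not a variable, and its root symbol is defined, since rewriting only below a constructor root would make some argument non-terminating.
\item Take an infinite $\RR$-reduction from $u$. Because all arguments of $u$ are terminating, the reduction must eventually take a root step. So $u \to_\RR^* \ell\sigma \rto_\RR r\sigma$ with all steps before the root step below the root. Hence $u^\sharp \to_\RR^* (\ell\sigma)^\sharp$, and every $x\sigma$ with $x \in \Var(\ell)$ is terminating.
\item Since $r\sigma$ is non-terminating, pick a $\prsuperterm$-minimal non-terminating subterm of $r\sigma$. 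It cannot lie inside some $x\sigma$, because those are terminating. So it has the form $t\sigma$ with $r \superterm t$ and $t$ not a variable. By step~1 its root is defined, so $\rt(t) \in \DD_\RR$.
\item Check that $\ell \nprsuperterm t$. If $\ell \prsuperterm t$, then $t\sigma$ would be a proper subterm of $\ell\sigma$. But it is non-terminating, while every proper subterm of $\ell\sigma$ is terminating: the arguments of $\ell\sigma$ are reducts of the terminating arguments of $u$. This is a contradiction. Hence $\ell^\sharp \to t^\sharp \in \DP(\RR)$, and $(\ell\sigma)^\sharp \rto_{\DP(\RR)} (t\sigma)^\sharp$.
\item Start again from the minimal non-terminating term $t\sigma$. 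Iterating produces an infinite sequence $s_1 \to_\RR^* t_1 \rto_{\DP(\RR)} s_2 \to_\RR^* \cdots$. Each $s_i$ has terminating arguments, so it is terminating with respect to $\RR$, since $f^\sharp$ is not defined in $\RR$. Thus $(\DP(\RR),\RR)$ is not finite.
\end{enumerate}

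For the ``only if'' direction, suppose $\RR$ is terminating and there is an infinite chain $s_1 \to_\RR^* t_1 \rto_{\DP(\RR)} s_2 \to_\RR^* \cdots$. Replace each marked symbol $f^\sharp$ by $f$. Then every $\DP$ step $\ell^\sharp\sigma \rto t^\sharp\sigma$ becomes $\ell\sigma \rto_\RR r\sigma \superterm t\sigma$. Reductions inside a subterm can be lifted into the surrounding context. This yields an infinite $\to_\RR$-reduction, provided the chain contains infinitely many $\DP$ steps. It does: by the definition of a chain, after every finite $\to_\RR^*$ segment another $\rto_{\DP(\RR)}$ step follows. This contradicts the termination of $\RR$.

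The main obstacle is the bookkeeping in steps~2--4: making sure the steps before the root step stay strictly below the root, and that the chosen subterm $t$ is neither a variable position of $r$ nor a proper subterm of $\ell$. The latter is exactly what makes Dershowitz's refinement $\ell \nprsuperterm t$ sound.
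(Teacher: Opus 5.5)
Your proof is correct. The paper states this theorem without proof and cites the dependency pair literature (Arts and Giesl, with Dershowitz's refinement $\ell \nprsuperterm t$); your minimal-non-terminating-term construction, together with the unmarking-and-lifting argument for the converse, is exactly the standard proof found there. One point is left implicit: step~3 relies on the variable condition $\Var(r) \subseteq \Var(\ell)$, which is built into the paper's definition of rewrite rules, to know that every $x\sigma$ with $x \in \Var(r)$ is terminating.
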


Whenever we apply the dependency pair framework to a concrete TRS,
we stick to the following notational convention:
for a function symbol $f$ in the original signature $\FF$,
the capitalized symbol $F$ (instead of $f^\sharp$) denotes the corresponding copy introduced by the dependency pair framework.
This is reasonable, e.g., when we use the dependency pair framework with the GWPO
and \autoref{P:algebra-redtriple},
both of which perform marking with $\sharp$.

\begin{exa}
\label{E:factorial}
Consider the fragment $\RR$ of the TRS encoding of factorial:
\begin{align*}
\m{p}(\m{s}(x)) & \xrightarrow{1} x
&
\m{f}(\m{s}(x)) & \xrightarrow{2} \m{s}(x) \times \m{f}(\m{p}(\m{s}(x)))
\end{align*}
The defined symbols are $\m{f}$ and $\m{p}$, so 
$\DP(\RR) = \{ \m{F}(\m{s}(x)) \to \m{F}(\m{p}(\m{s}(x))), \m{F}(\m{s}(x)) \to \m{P}(\m{s}(x)) \}$.
\end{exa}

Reduction pairs $({\geqslant}, {>})$ are common decreasing measures for the dependency pair framework.
The next theorem is known as \emph{reduction pair processor}.

\begin{thm}
\label{T:rp}
Let $({\geqslant},{>})$ be a reduction pair.  A dependency pair problem
$(\PP,\RR)$ with $\PP \cup \RR \subseteq {\geqslant}$ is finite if and only
if $(\PP \setminus {>},\RR)$ is finite.
\end{thm}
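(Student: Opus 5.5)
The plan is to prove the two implications separately. The ``only if'' direction is immediate from the definitions, and the ``if'' direction is a standard infinite-chain argument that uses the well-foundedness of $>$ together with compatibility.

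For ``only if'', assume $(\PP,\RR)$ is finite. Since $\PP \setminus {>} \subseteq \PP$, every infinite chain $s_1 \to_\RR^* t_1 \rto_{\PP \setminus >} s_2 \to_\RR^* \cdots$ with each $s_i$ terminating with respect to $\RR$ is also an infinite chain for $(\PP,\RR)$. So none exists, and $(\PP \setminus {>},\RR)$ is finite.

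For ``if'', I argue by contraposition. Suppose $(\PP,\RR)$ admits an infinite chain $s_1 \to_\RR^* t_1 \rto_\PP s_2 \to_\RR^* t_2 \rto_\PP \cdots$ with each $s_i$ terminating.
\begin{enumerate}[1.]
\item First I show ${\to_\RR} \subseteq {\geqslant}$. If $s = C[\ell\sigma]$ and $t = C[r\sigma]$ with $\ell \to r \in \RR \subseteq {\geqslant}$, then stability gives $\ell\sigma \geqslant r\sigma$ and closure under contexts gives $C[\ell\sigma] \geqslant C[r\sigma]$. Reflexivity and transitivity of the quasi-order $\geqslant$ then give ${\to_\RR^*} \subseteq {\geqslant}$, so $s_i \geqslant t_i$ for all $i$.
\item Each root step $t_i \rto_\PP s_{i+1}$ has the form $t_i = \ell\sigma$, $s_{i+1} = r\sigma$ with $\ell \to r \in \PP \subseteq {\geqslant}$. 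Stability gives $t_i \geqslant s_{i+1}$. If moreover $\ell > r$, stability of $>$ gives $t_i > s_{i+1}$.
\item Suppose that infinitely many indices $i_1 < i_2 < \cdots$ use rules of $\PP \cap {>}$. Between consecutive such indices all steps are $\geqslant$-steps, so transitivity gives $t_{i_k} \geqslant t_{i_{k+1}}$ and $t_{i_k} > s_{i_k+1}$. Hence $t_{i_k} > s_{i_k+1} \geqslant t_{i_{k+1}}$, and compatibility (${> \cdot \geqslant} \subseteq {>}$) yields $t_{i_k} > t_{i_{k+1}}$. This gives an infinite $>$-descending sequence, contradicting well-foundedness of the reduction pair.
\item Therefore only finitely many root steps use rules in $\PP \cap {>}$. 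From some index $N$ on, every root step uses a rule of $\PP \setminus {>}$. The tail $s_N \to_\RR^* t_N \rto_{\PP\setminus>} s_{N+1} \cdots$ is then an infinite chain for $(\PP \setminus {>},\RR)$. Its starting terms are still terminating with respect to $\RR$, so $(\PP\setminus{>},\RR)$ is not finite, as required.
\end{enumerate}

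The only point needing care is step 3, where the mixed $\geqslant$/$>$ chain has to be collapsed into a strictly decreasing one. Merging the intermediate $\geqslant$-segments by transitivity and then applying compatibility once per strict step handles this. Note that termination of the $s_i$ is not used in this direction; it is simply carried along into the tail chain.
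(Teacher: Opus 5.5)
Your proof is correct. The paper states this theorem without proof as a known result of the dependency pair framework, and your argument is exactly the standard one: $\to_\RR^* \subseteq {\geqslant}$ via stability and monotonicity, only finitely many strict $\PP$-steps by compatibility and well-foundedness of $>$, and hence a tail that is a $(\PP \setminus {>}, \RR)$-chain whose $s_i$ remain $\RR$-terminating. One cosmetic slip: in step~3 the inequality you need from transitivity is $s_{i_k+1} \geqslant t_{i_{k+1}}$ (from the $\geqslant$-steps between the two strict steps), not $t_{i_k} \geqslant t_{i_{k+1}}$, although your displayed chain already uses the right one.
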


Since $({\emptyset}, \RR)$ is trivially finite,
the theorem above implies a simple method for showing termination by a reduction pair.

\begin{cor}
\label{C:DP}
A TRS $\RR$ is terminating if
there is a reduction pair $({\geqslant},{>})$ with $\RR \subseteq {\geqslant}$
and $\DP(\RR) \subseteq {>}$.
\end{cor}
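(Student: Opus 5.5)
This is a direct consequence of \autoref{T:dp} and \autoref{T:rp}; no new order-theoretic argument is needed. Let $({\geqslant},{>})$ be a reduction pair with $\RR \subseteq {\geqslant}$ and $\DP(\RR) \subseteq {>}$. By \autoref{T:dp}, it suffices to show that the dependency pair problem $(\DP(\RR),\RR)$ is finite.

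To apply \autoref{T:rp} with $\PP = \DP(\RR)$, we need $\PP \cup \RR \subseteq {\geqslant}$. The inclusion $\RR \subseteq {\geqslant}$ is a hypothesis. For $\DP(\RR) \subseteq {\geqslant}$ we have only $\DP(\RR) \subseteq {>}$, and reduction pairs in this paper need not be normal. I will get around this using compatibility and reflexivity. For $s > t$, reflexivity of the quasi-order gives $t \geqslant t$, so $s > t \geqslant t$. Compatibility then yields $s > t$ again, which does not directly give $s \geqslant t$. This is the one step needing care.

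If this detour fails, I will fall back on the equivalence stated in \autoref{T:rp}, read through the notion of finiteness. Any infinite chain $s_1 \to_\RR^* t_1 \rto_\PP s_2 \to_\RR^* t_2 \rto_\PP \cdots$ for $(\DP(\RR),\RR)$ is stable under the pair. Each $\rto_\PP$ step is a root step $\ell^\sharp\sigma \to r^\sharp\sigma$ with $\ell^\sharp > r^\sharp$, so stability gives $t_i > s_{i+1}$. Each $\to_\RR^*$ segment gives $s_i \geqslant t_i$, using monotonicity and stability of $\geqslant$ together with its reflexivity and transitivity for zero or several steps. Compatibility then gives $t_1 > t_2 > \cdots$, contradicting well-foundedness of $>$. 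Hence $(\DP(\RR),\RR)$ is finite, and termination follows from \autoref{T:dp}.

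Alternatively, if \autoref{T:rp} is meant to be used literally, one can replace $\geqslant$ by ${\geqslant} \cup {>}$. This is still a reduction pair: transitivity and compatibility hold by the compatibility of the original pair, and it is stable. Monotonicity is the delicate point, since $>$ is not closed under contexts. So I prefer the direct chain argument above, which avoids the issue entirely. With $\PP \setminus {>} = \emptyset$ and $(\emptyset,\RR)$ trivially finite, the same conclusion follows.
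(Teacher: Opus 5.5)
Your final chain argument is correct, but it takes a slightly different route from the paper. The paper gets the corollary in one line by instantiating \autoref{T:rp} with $\PP = \DP(\RR)$. Then $\PP \setminus {>} = \emptyset$, the problem $(\emptyset,\RR)$ is trivially finite, and \autoref{T:dp} finishes. You instead inline the soundness argument behind the reduction pair processor. From a hypothetical infinite chain you obtain $s_i \geqslant t_i$ by stability, monotonicity, reflexivity and transitivity of $\geqslant$. You obtain $t_i > s_{i+1}$ by stability of $>$ and $\DP(\RR) \subseteq {>}$. Hence $t_i > t_{i+1}$ by ${> \cdot \geqslant} \subseteq {>}$, contradicting well-foundedness.

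The paper's route is shorter and modular. However, \autoref{T:rp} as literally stated requires $\PP \cup \RR \subseteq {\geqslant}$. The hypothesis $\DP(\RR) \subseteq {>}$ yields $\DP(\RR) \subseteq {\geqslant}$ only for normal pairs. The paper explicitly applies the corollary to non-normal pairs such as $({\geqslant_\mgwpo},{>_\gwpo})$ (cf.\ \autoref{E:gwpo-normal}). So it tacitly relies on the standard form of the processor with $\PP \subseteq {\geqslant} \cup {>}$. Your direct argument sidesteps this and is self-contained. You were right that the detour through reflexivity and compatibility is circular.

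In the write-up, keep only the chain argument and cut both side routes. The first attempt proves nothing. The closing sentence about $({\geqslant} \cup {>}, {>})$ is not merely delicate: it fails as written. The relation ${\geqslant} \cup {>}$ need not be closed under contexts. For example, with the reduction pair $({=}, {\prsuperterm})$, the union is $\superterm$, which is not monotone. So that pair is in general not a reduction pair, and \autoref{T:rp} cannot be invoked with it.
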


\begin{exa}[continued from \autoref{E:factorial}]
Let $\AA$ be the max/plus interpretation defined by
$\m{f}_\AA(x) = \m{F}_\AA(x) = \m{s}_\AA(x) = x \times_\AA y = x + 1$ and 
$\m{p}_\AA(x) = \m{P}_\AA(x) = \max \{ 0, x - 1\}$.
Then we have $\RR \subseteq {\geqslant_\AA}$ and $\DP(\RR) \subseteq {>_\AA}$.
For instance, $\m{F}(\m{s}(x)) >_\AA \m{F}(\m{p}(\m{s}(x)))$
boils down to the valid inequality $x + 2 > x + 1$ for all natural numbers $x$.
The others are also verified in a similar fashion.
So, from \autoref{C:DP} we conclude the termination of $\RR$.
\end{exa}

If a monotone reduction pair is employed, \autoref{T:rp} can be strengthened in a way that
some rules in $\RR$ can also be removed.
This technique is known as \emph{rule removal processor},
or just rule removal.

\begin{thm}
\label{T:rule-removal}
Let $({\geqslant},{>})$ be a monotone reduction pair. A dependency pair problem
$(\PP,\RR)$ with $\PP \cup \RR \subseteq {\geqslant}$ is finite if and only
if $({\PP},{\RR \setminus {>}})$ is finite.
\end{thm}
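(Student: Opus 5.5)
The plan is to prove the two directions separately. Throughout, write $\RR' = \RR \setminus {>}$. Since $({\geqslant},{>})$ is a monotone reduction pair, both $\geqslant$ and $>$ are rewrite relations. Hence $\RR \subseteq {\geqslant}$ gives ${\to_{\RR'}} \subseteq {\geqslant}$ and ${\to_{\RR \setminus \RR'}} \subseteq {>}$. The essential point is that every $\RR$-step is either a $\geqslant$-step or a $>$-step. The sufficiency direction (finiteness of $(\PP,\RR')$ implies finiteness of $(\PP,\RR)$) is the soundness part. The necessity direction is the subtle one, because the minimality condition in the definition of finiteness refers to termination with respect to the second component of the problem.

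\emph{Soundness.} Take an infinite chain $s_1 \to_\RR^* t_1 \rto_\PP s_2 \to_\RR^* t_2 \rto_\PP \cdots$ with every $s_i$ terminating with respect to $\RR$. I would first count the $(\RR\setminus\RR')$-steps used in the segments $s_i \to_\RR^* t_i$. Every step in the whole sequence lies in $\geqslant \cup >$; the $\PP$-steps lie in $\geqslant$ because $\PP \subseteq {\geqslant}$ and $\geqslant$ is stable. So if infinitely many steps are strict, compatibility (${\geqslant \cdot >} \subseteq {>}$ and ${> \cdot \geqslant} \subseteq {>}$) collapses the sequence into an infinite $>$-descent, contradicting well-foundedness. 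Hence from some index $k$ on all rewrite steps are $\RR'$-steps, and the tail from $s_k$ is a $(\PP,\RR')$-chain. Each $s_i$ with $i \geqslant k$ is $\RR$-terminating and therefore $\RR'$-terminating, since ${\to_{\RR'}} \subseteq {\to_\RR}$. So the tail is an infinite minimal $(\PP,\RR')$-chain, contradicting finiteness of $(\PP,\RR')$.

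\emph{Completeness.} Take an infinite $(\PP,\RR')$-chain whose terms $s_i$ are $\RR'$-terminating. It is trivially an $(\PP,\RR)$-chain, so what remains is to show that each $s_i$ is $\RR$-terminating. The natural route is a lemma: every $\RR'$-terminating term is $\RR$-terminating. I would try to prove it by well-founded induction on $>$, in the same way as above. Given an infinite $\RR$-reduction $u = u_0 \to u_1 \to \cdots$, only finitely many steps can be strict, by the compatibility argument. If none is strict, $u$ is $\RR'$-nonterminating, a contradiction. Otherwise the first strict step $u_k \to_{\RR\setminus\RR'} u_{k+1}$ gives $u \geqslant u_k > u_{k+1}$, hence $u > u_{k+1}$, and the induction hypothesis applies to $u_{k+1}$.

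This is exactly the step I expect to be the main obstacle. The induction hypothesis requires $u_{k+1}$ to be $\RR'$-terminating, and that is not evident, because $u_{k+1}$ was reached by a strict step. My first attempt would be to strengthen the induction statement to ``every term $v$ with $u > v$ or $u \geqslant v$ reachable by $\to_\RR^*$ is $\RR$-terminating''. I would then analyse an eventual infinite $\RR'$-tail starting from such a $v$ and use monotonicity of $>$ to push the strict step into a context, hoping to contradict well-foundedness. If this fails, I would fall back on the chain itself: the terms of the chain are related by $\geqslant$ along the whole sequence. I would try to show that a non-$\RR$-terminating $s_i$ yields an infinite $\RR$-reduction containing infinitely many strict steps, which would again contradict well-foundedness via compatibility.
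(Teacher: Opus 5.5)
Your soundness direction is correct: every step of a chain lies in ${\geqslant} \cup {>}$, compatibility and well-foundedness allow only finitely many strict steps, and the tail is a minimal $(\PP,\RR')$-chain because ${\to_{\RR'}} \subseteq {\to_\RR}$. The gap is in the converse, and it cannot be closed. The paper states this theorem without proof, and under its notion of finiteness, where minimality refers to the second component, the ``only if'' direction is false. So neither your lemma nor your fallbacks can succeed.

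The obstacle you identified is real: a strict step may lead into a loop of non-strict rules. Take $\RR = \{ \m{f}(x) \to x,\ \m{a} \to \m{b},\ \m{b} \to \m{b} \}$ and $\PP = \{ \m{F}(\m{a}) \to \m{F}(\m{a}) \}$. Use the linear polynomial interpretation $\AA$ with $\m{f}_\AA(x) = x + 1$, $\m{F}_\AA(x) = x$, $\m{a}_\AA = 1$, $\m{b}_\AA = 0$. Then $({\geqslant_\AA}, {>_\AA})$ is a monotone reduction pair, $\PP \cup \RR \subseteq {\geqslant_\AA}$, and $\RR' = \RR \setminus {>_\AA} = \{ \m{b} \to \m{b} \}$. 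Every $\PP$-step yields $s_{i+1} = \m{F}(\m{a})$. This term is not $\RR$-terminating, since $\m{F}(\m{a}) \to_\RR \m{F}(\m{b}) \to_\RR \m{F}(\m{b}) \to_\RR \cdots$, so $(\PP,\RR)$ is finite. However, $\m{F}(\m{a}) \rto_\PP \m{F}(\m{a}) \rto_\PP \cdots$ is an infinite $(\PP,\RR')$-chain whose terms are all $\RR'$-normal forms, so $(\PP,\RR')$ is not finite. The term $\m{a}$ alone already refutes your lemma: it is $\RR'$-normal but not $\RR$-terminating.

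What can actually be proved is the ``if'' direction, which is all that termination proofs use. Completeness holds only under extra assumptions. One is that $\RR$ is terminating. Another is to call a problem infinite when $\RR$ is non-terminating or there is an infinite, not necessarily minimal, chain. Under that reading, your two easy observations suffice: a $(\PP,\RR')$-chain is a $(\PP,\RR)$-chain, and non-termination of $\RR'$ implies that of $\RR$.
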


\emph{Argument filtering}~\cite{AG00,KNT06} is a popular transformation for reduction pairs.
As the name suggests, this transformation
filters out arguments from a given term. Formally,
an \emph{argument filter} $\pi$ is a mapping that 
associates
each $n$-ary function symbol $f$ to an integer $i$ or a list
$[i_1,\ldots,i_m]$ of integers with $1 \leqslant i_1 < \cdots < i_m \leqslant n$.
If $\pi(f)$ is a list for every function symbol $f$, then the argument filtering
is called \emph{non-collapsing}.
Non-collapsing argument filters are also called \emph{partial statuses}.\footnote{%
There are versions of partial status that allow permutation and/or duplication of arguments~\cite{YKS15,TSSY20,Yam22IJCAR}.
Such a version could be considered with additional care of boundedness of arity.}
The \emph{argument filtering} $\hat{\pi}$ is defined on terms as follows:
\[
\hat{\pi}(t) =
\begin{cases}
t & \text{if $t$ is a variable}
\\
\hat{\pi}(t_i)
& \text{if $t = f(t_1, \ldots, t_n)$ and $\pi(f) = i$}
\\
f(\hat{\pi}(t_{i_1}),\ldots,\hat{\pi}(t_{i_m}))
& \text{if $t = f(\seq{t})$ and $\pi(f) = [\seq[m]{i}]$}
\end{cases}
\]
Note that by applying
$\hat{\pi}$, arities of function symbols may change, or some function symbols
may be even eliminated. So more precisely,
terms after argument filtering are of a different signature.
We write $\FF^\pi$ for it to distinguish from the original signature $\FF$.
Given a binary relation $R$ on terms, we write 
$s \mathrel{R}^\pi t$ if $\hat{\pi}(s) \mathrel{R} \hat{\pi}(t)$.

\begin{prop}
\label{P:af}
Let $({\geqslant}, {>})$ be a reduction pair and $\pi$ an argument filter.
Then $({\geqslant^\pi}, {>^\pi})$ is a reduction pair.
\end{prop}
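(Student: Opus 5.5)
The plan is to reduce all properties of $({\geqslant^\pi}, {>^\pi})$ to the corresponding properties of $({\geqslant}, {>})$ on terms over $\FF^\pi$. The central tool is a pair of commutation lemmas for $\hat{\pi}$. First, \emph{substitution}: for every substitution $\sigma$, let $\sigma^\pi$ be the substitution $x \mapsto \hat{\pi}(\sigma(x))$. I will show $\hat{\pi}(t\sigma) = \hat{\pi}(t)\sigma^\pi$ by induction on $t$. The variable case is by definition. For $t = f(\seq{t})$ the two cases $\pi(f) = i$ and $\pi(f) = [\seq[m]{i}]$ follow directly from the induction hypothesis. Second, \emph{contexts}: for a one-hole context $C$, $\hat{\pi}(C[s])$ is either $\hat{\pi}(C)[\hat{\pi}(s)]$ for some context $\hat{\pi}(C)$ over $\FF^\pi$, or it does not depend on $s$ at all. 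I will prove this by induction on $C$, treating $\hole$ as a variable that is kept by $\hat{\pi}$. The two situations arise according to whether the hole position is filtered away.

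With these lemmas the individual properties become routine. \textbf{Order pair}: transitivity, reflexivity and irreflexivity transfer pointwise, since $\geqslant^\pi$ and $>^\pi$ are preimages of $\geqslant$ and $>$ under the function $\hat{\pi}$. Compatibility transfers in the same way: if $\hat{\pi}(s) > \hat{\pi}(u) \geqslant \hat{\pi}(t)$, then $\hat{\pi}(s) > \hat{\pi}(t)$. \textbf{Well-foundedness}: an infinite chain $s_0 >^\pi s_1 >^\pi \cdots$ maps to the infinite chain $\hat{\pi}(s_0) > \hat{\pi}(s_1) > \cdots$, which is impossible. \textbf{Stability}: from $\hat{\pi}(s) \geqslant \hat{\pi}(t)$, stability of $\geqslant$ gives $\hat{\pi}(s)\sigma^\pi \geqslant \hat{\pi}(t)\sigma^\pi$, and the substitution lemma rewrites this as $\hat{\pi}(s\sigma) \geqslant \hat{\pi}(t\sigma)$. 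The argument for $>$ is identical. \textbf{Monotonicity of $\geqslant^\pi$}: in the first case of the context lemma we use that $\geqslant$ is closed under contexts. In the second case $\hat{\pi}(C[s]) = \hat{\pi}(C[t])$, and reflexivity of $\geqslant$ suffices. Notably, this is the one place where reflexivity is used, and it is also the reason $>^\pi$ need not be monotone even when $>$ is.

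I expect the main obstacle to be bookkeeping about signatures rather than any real difficulty. The pair $({\geqslant}, {>})$ must be a reduction pair on $\TT(\FF^\pi, \VV)$, and $\FF^\pi$ contains the symbols $f$ with the new arity $m$. For this to fit the standing assumption, $\FF^\pi$ must be bounded; it is, since $m \leqslant n$. The other point to state carefully is the context lemma: $\hat{\pi}(C)$ may contain the hole zero times (when it is filtered out) or exactly once, because $\hat{\pi}$ never duplicates arguments. This is exactly why the non-duplicating form of argument filters suffices and a one-hole context is obtained. I will fix these conventions first, then prove the two lemmas, and finally list the properties as above.
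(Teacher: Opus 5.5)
Your proposal is correct. The paper states this proposition without proof, as a standard fact about argument filtering from \cite{AG00,KNT06}, and your argument is the standard one: the substitution lemma $\hat{\pi}(t\sigma) = \hat{\pi}(t)\sigma^\pi$, the context lemma (which relies on $\hat{\pi}$ never duplicating arguments), and reflexivity of $\geqslant$ for the case where the hole is filtered away. Your remark on boundedness of $\FF^\pi$ is harmless but not needed.
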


Although \autoref{P:af} can be used with the WPO as a reduction order (\autoref{D:wpo-reduction-order}),
a dedicated version of the WPO as a reduction pair is known.
Let $\pi$ be a partial status.
We abuse $i \in \pi(f)$ to mean that $i$ occurs in the list $\pi(f)$.
{Similarly, we may identify the list $\pi(f)$ and the corresponding set,
because here we only consider an increasing list of argument positions for $\pi(f)$.}
If $\pi(f) = [1, \ldots, n]$ for all $f^{(n)} \in \FF$,
we call $\pi$ \emph{total}.
Similarly, if $\pi(f) = \emptylist$ for all $f \in \FF$,
we call $\pi$ \emph{empty}.
Note that when the signature contains only constants,
the unique partial status is both empty and total.
If $\pi(f) = [ i_1, \ldots, i_n ]$ with 
$i_1 \leqslant \cdots \leqslant i_n$ then
we write $(t_{i_1}, \ldots, t_{i_n})$ as $\pi(f)(t_1, \ldots, t_m)$.

Below, we define a reduction-pair version of the WPO,
reusing the notation \(>_\wpo\) for brevity. The same convention applies to the upcoming GWPO/SPO.

\begin{defi}
\label{D:wpo-reduction-pair}
Let $\pi$ be a partial status, $\AA$ an algebra, and $\succeq$ a precedence.
The \emph{weighted path order} $({\geqslant_\wpo}, {>_\wpo})$ is the pair of relations
on terms defined simultaneously as follows: $s \geqslant_\wpo t$ if
\begin{enumerate}[1.]
\item \label{wpoRP:1}
$s >_\AA t$, or
\item \label{wpoRP:2}
$s \geqslant_\AA t$ and one of the following conditions holds:
\begin{enumerate}[a.]
\item \label{wpoRP:2a}
$s = f(\seq[m]{s})$ and $s_i \geqslant_\wpo t$ for some $i \in
\pi(f)$.
\item \label{wpoRP:2b}
$s = f(\seq[m]{s})$, $t = g(\seq[n]{t})$, $s >_\wpo t_j$ for all 
$j \in \pi(g)$, and
\begin{enumerate}[(i)]
\item \label{wpoRP:2b(i)}
$f \succ g$ or
\item \label{wpoRP:2b(ii)}
$f \succeq g$ and
$\pi(f)(\seq[m]{s}) \geqslant_\wpo^\lex \pi(g)(\seq[n]{t})$.
\end{enumerate}
\item \label{wpoRP:2c}
$s \in \VV$ and $s = t$
\end{enumerate}
\end{enumerate}
The relation $>_\wpo$ is defined by cases~\WPORP{1}, \WPORP{2a} and \WPORP{2b}
where $\geqslant_\wpo^\lex$ is replaced by $>_\wpo^\lex$.
Here $({\geqslant_\wpo^\lex}, {>_\wpo^\lex})$ is the lexicographic extension of
$({\geqslant_\wpo}, {>_\wpo})$.
\end{defi}

We say that an algebra $\AA$ is \emph{weakly (resp.\ strictly) $\pi$-simple}
if the argument position $i$ is weakly (resp.\ strictly) simple for all
function symbols $f$ and $i \in \pi(f)$.

\begin{thmC}[\cite{YKS15}]
\label{T:wpo-reduction-pair}
Let $\pi$ be a partial status, $\succeq$ a well-founded precedence,
and $\AA$ an algebra that is well-founded, normal, weakly $\pi$-simple and weakly monotone.
Then $({\geqslant_\wpo}, {>_\wpo})$ is a reduction pair.
If in addition $\pi$ is total, then  $({\geqslant_\wpo}, {>_\wpo})$ is a monotone reduction pair.
\end{thmC}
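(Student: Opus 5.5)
We prove \autoref{T:wpo-reduction-pair} directly, by checking the properties of a reduction pair one at a time. Everything is done by induction on $|s|+|t|$, or on $|s|+|t|+|u|$ for three-term statements, with a case analysis on the clause that derives each comparison.

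\emph{Preliminary facts.} First we establish two auxiliary facts by induction.
\begin{enumerate}[(1)]
\item ${>_\wpo} \subseteq {\geqslant_\wpo}$. This holds because every clause for $>_\wpo$ is also a clause for $\geqslant_\wpo$, once we use ${>^\lex}\subseteq{\geqslant^\lex}$ inductively.
\item ${\geqslant_\wpo}\subseteq{\geqslant_\AA}$. This uses normality: case~\WPORP{1} gives $>_\AA\subseteq\geqslant_\AA$, and the other cases require $s\geqslant_\AA t$ explicitly.
\end{enumerate}
We then show reflexivity of $\geqslant_\wpo$. For a variable this is case~\WPORP{2c}. For $s=f(\bar s)$, the induction hypothesis gives $s_j\geqslant_\wpo s_j$. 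Weak $\pi$-simplicity, together with (2) and a nested induction, gives $s\geqslant_\wpo s_j$ via \WPORP{2a}. In fact we need $s>_\wpo s_j$ for the recursive comparisons in \WPORP{2b}, so we prove the subterm property $f(\bar s)>_\wpo s_j$ for $j\in\pi(f)$ by \WPORP{2a}. Finally \WPORP{2b(ii)} applies, with $f\succeq f$ and the lexicographic $\geqslant$ coming from reflexivity of the arguments.

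\emph{Order-pair properties.} Next we prove transitivity and compatibility simultaneously: if $s\mathrel{R_1}t\mathrel{R_2}u$ with $R_i\in\{\geqslant_\wpo,>_\wpo\}$, then $s\mathrel{R}u$, where $R$ is strict if either $R_i$ is. This is the familiar WPO/LPO argument. If either step is by \WPORP{1}, we use compatibility of $(\geqslant_\AA,>_\AA)$ together with (2). If $t\geqslant_\wpo u$ is derived by \WPORP{2a} from $t_i\geqslant_\wpo u$, we first show $s>_\wpo t_i$ (from $s\mathrel{R_1}t$ and the subterm property, using the induction hypothesis) and then combine. The case \WPORP{2b}/\WPORP{2b} uses transitivity of the precedence and of the lexicographic extension, which holds by the induction hypothesis on smaller terms. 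Irreflexivity of $>_\wpo$ follows from well-foundedness, which we prove next.

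\emph{Stability and well-foundedness.} Stability is proved by induction on the derivation. The only delicate clause is \WPORP{2c}: $x\geqslant_\wpo x$ becomes $x\sigma\geqslant_\wpo x\sigma$, which is reflexivity. The other clauses transfer because $>_\AA$ and $\geqslant_\AA$ are stable and the precedence conditions do not depend on $\sigma$.

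Well-foundedness of $>_\wpo$ is \textbf{the main obstacle}. The plan is a minimal-bad-sequence argument, or equivalently a computability argument in the style of the LPO proof. Call a term \emph{strongly normalizing} (SN) if no infinite $>_\wpo$-chain starts from it. We show $f(\bar s)$ is SN whenever all $s_i$ with $i\in\pi(f)$ are SN. The induction is on the triple consisting of
\begin{itemize}
\item the value of $f(\bar s)$ under $>_\AA$ (well-founded by assumption, and compatible with $\geqslant_\AA$ by (2));
\item $f$ with respect to $\succ$;
\item the filtered argument list with respect to $>_\wpo^\lex$, restricted to SN terms and of bounded length.
\end{itemize}
The third component is well-founded by \autoref{P:lex-ext} and boundedness of the signature.

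Note that $\AA$-values are not terms. So instead of $\AA$-values themselves, we use the relation $>_\AA$ on terms, which is well-founded, and exploit that every $>_\wpo$-step is a $\geqslant_\AA$-step. Each step $f(\bar s)>_\wpo u$ then falls into one of three cases:
\begin{itemize}
\item it is strict in $\AA$;
\item it passes through an argument $s_i\geqslant_\wpo u$, so $u$ is SN by compatibility;
\item it is a \WPORP{2b} step to $g(\bar u)$ whose filtered arguments are all below $f(\bar s)$, hence SN by an inner induction, and which decreases in $\succ$ or in the lexicographic order.
\end{itemize}

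\emph{Monotonicity.} Finally we prove weak monotonicity of $\geqslant_\wpo$. Given $s\geqslant_\wpo t$, we show $f(\dots,s,\dots)\geqslant_\wpo f(\dots,t,\dots)$ by case \WPORP{2b(ii)}. The $\AA$-part holds by weak monotonicity of $\AA$ and (2). The recursive obligations $f(\dots s\dots)>_\wpo u_j$ for the arguments $u_j$ on the right follow from the subterm property combined with compatibility. The lexicographic comparison follows from reflexivity plus $s\geqslant_\wpo t$, and if the position is not in $\pi(f)$ it is simply unaffected. When $\pi$ is total, the same construction with $s>_\wpo t$ yields a strict lexicographic decrease, which gives the monotone case.
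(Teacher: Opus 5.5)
Your proposal is correct and is essentially the paper's argument: the paper obtains this theorem by instantiating \autoref{T:gwpo-reduction-pair} (and \autoref{T:gwpo-monotone-normal} for total $\pi$) with $({\geqslant_\AA},{\geqslant_\AA},{>_\AA})$ and the precedence triple, using normality of $\AA$ (your fact (2)) to identify $\geqslant_\mgwpo$ with $\geqslant_\wpo$ and $>_\mgwpo$ with $>_\wpo$. The appendix proof of that theorem is exactly your sequence of lemmas, and your inner induction on the target term appears there as the last component $({\superterm},{\prsuperterm})$ of the lexicographic product in \autoref{L:buchholz-gwpo}.

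There are two small points to tighten. First, in the transitivity argument, when $t\geqslant_\wpo u$ comes from \WPORP{2a}, obtain $s>_\wpo t_i$ by splitting on how $s\mathrel{R_1}t$ was derived: under \WPORP{2b} it is a premise, and under \WPORP{2a} you should apply the hypothesis to $(s_k,t,u)$ instead. This is needed because the triple $(s,t,t_i)$ need not be smaller than $(s,t,u)$. Second, in the well-foundedness argument, the case $f(\bar s)>_\AA g(\bar u)$ also needs the inner induction, and this is exactly where weak $\pi$-simplicity is used: $g(\bar u)\geqslant_\AA u_j$ gives $f(\bar s)>_\wpo u_j$ for $j\in\pi(g)$. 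You also need a base case for variables. They are SN because $x>_\wpo u$ forces $x>_\AA u$, and iterating $\{x\mapsto u\}$ would then give an infinite $>_\AA$-chain.
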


The assumption of weak $\pi$-simplicity is essential for well-foundedness; see~\autoref{E:wpo-weak-simple}.

\begin{exa}[Continued from \autoref{E:factorial}]
\label{E:factorial-wpo}
To show the termination of $\RR$, consider the linear polynomial interpretation $\AA$
with 
\begin{align*}
\m{s}_\AA(x) &= x + 1
&
x \times_\AA y &= x
&
\m{f}_\AA(x) = \m{p}_\AA(x) = \m{F}_\AA(x) = \m{P}_\AA(x) &= x
\end{align*}
and the partial status $\pi$ with
$\pi(\m{p}) = \emptylist$ and $\pi(g) = [1]$ for other function symbols $g$.
The WPO $({\geqslant_\wpo}, {>_\wpo})$ induced from $\AA$, $\pi$ and
the precedence with $\m{s} \succ \m{f} \succ \m{F} \succ \m{p}  \succ \m{P} \succ \times$
satisfies $\RR \subseteq {\geqslant_\wpo}$ and also $\DP(\RR) \subseteq {>_\wpo}$.
To elaborate,
$\m{p}(\m{s}(x)) \geqslant_\wpo x$ follows from case~\WPORP{1},
$\m{f}(\m{s}(x)) \geqslant_\wpo \m{s}(x) \times \m{f}(\m{p}(\m{s}(x)))$ from case~\WPORP{2b(i)},
and $\m{F}(\m{s}(x)) >_\wpo \m{P}(\m{s}(x))$ from case~\WPORP{2b(i)}.
The remaining inequality $\m{F}(\m{s}(x)) >_\wpo \m{F}(\m{p}(\m{s}(x)))$ is
verified as follows:
\begin{center}
\begin{prooftree}[separation=1em]
 \hypo{\m{F}(\m{s}(x)) \geqslant_\AA \m{F}(\m{p}(\m{s}(x)))}
  \hypo{\m{F}(\m{s}(x)) \geqslant_\AA \m{p}(\m{s}(x))}
  \hypo{\m{F} \succ \m{p}}
 \infer{2}[\WPORP{2b(i)}]{\m{F}(\m{s}(x)) >_\wpo \m{p}(\m{s}(x))}
  \hypo{\m{s}(x) \geqslant_\AA \m{p}(\m{s}(x))}
  \hypo{\m{s} \succ \m{p}}
 \infer{2}[\WPORP{2b(i)}]{\m{s}(x) >_\wpo \m{p}(\m{s}(x))}
\infer{3}[\WPORP{2b(ii)}]{\m{F}(\m{s}(x)) >_\wpo \m{F}(\m{p}(\m{s}(x)))}
\end{prooftree}
\end{center}
Here $\pi(\m{p}) = \emptylist$ is used for avoiding the undesired
comparison $\m{s}(x) >_\wpo \m{s}(x)$ for the second occurrence of
\WPORP{2b(i)}.  Using \autoref{C:DP} we conclude the termination of $\RR$.
\end{exa}

\subsection{GWPO as Reduction Pair}
\label{SS:gwpo-redpair-simple}

As in the reduction-order setting, 
a reduction-pair version of the GWPO is obtained by 
generalizing the parameters of the WPO to order pairs.

\begin{defi}
\label{D:gwpo-reduction-pair}
Let $\pi$ be a partial status, and
let $({\sqsupseteq_A}, {\sqsupset_A})$ and $({\sqsupseteq_B}, {\sqsupset_B})$ be pairs of relations on terms.
The \emph{generalized weighted path order}
$({\geqslant_\gwpo}, {>_\gwpo})$ is a pair of relations on terms defined simultaneously as follows:
$s \geqslant_\gwpo t$ if 
\begin{enumerate}[1.]
\item \label{gwpoRP:1} $s \sqsupset_A t$ or
\item \label{gwpoRP:2} $s \sqsupseteq_A t $ and one of the following conditions holds.
\begin{enumerate}[a.]
\item \label{gwpoRP:2a}
$s = f(\seq[m]{s})$ and $s_i \geqslant_\gwpo t$ for some $i \in
\pi(f)$.
\item \label{gwpoRP:2b}
$s = f(\seq[m]{s})$, $t = g(\seq[n]{t})$, $s >_\gwpo t_j$ for all 
$j \in \pi(g)$, and
\begin{enumerate}[(i)]
\item \label{gwpoRP:2b(i)}
$s \sqsupset_B t$ or
\item \label{gwpoRP:2b(ii)}
$s \sqsupseteq_B t$ and
$\pi(f)(\seq[m]{s}) \geqslant_\gwpo^\lex \pi(g)(\seq[n]{t})$.
\end{enumerate}
\item \label{gwpoRP:2c}
$s \in \VV$ and $s = t$
\end{enumerate}
\end{enumerate}
The relation $>_\gwpo$ is defined by cases~\GWPORP{1}, \GWPORP{2a} and \GWPORP{2b}
where $\geqslant_\gwpo^\lex$ is replaced by $>_\gwpo^\lex$.
Here $({\geqslant_\gwpo^\lex}, {>_\gwpo^\lex})$ is the lexicographic extension of
$({\geqslant_\gwpo}, {>_\gwpo})$.
\end{defi}

As is done for the original weighted path order in \cite[Section~4.2]{YKS15},
it is possible to incorporate more refinements in \autoref{D:gwpo-reduction-pair}.
This is considered in the first author's dissertation~\cite{Sai26}, see~\autoref{S:conclusion} for a brief discussion.

As in the reduction-order setting, we define a monotonic version of the generalized weighted path order.

\begin{defi}
\label{D:mgwpo-reduction-pair}
Let $\pi$ be a partial status, and
let $({\geqslant_A}, {\sqsupseteq_A}, {\sqsupset_A})$ and $({\geqslant_B}, {\sqsupseteq_B}, {\sqsupset_B})$ be triples of relations on terms.
The \emph{monotonic generalized weighted path order}
$({\geqslant_\mgwpo}, {>_\mgwpo})$
is defined as the pair of
${{\geqslant_A} \cap {\geqslant_B} \cap {\geqslant_\gwpo}}$ and ${{\geqslant_A} \cap {\geqslant_B} \cap {>_\gwpo}}$
where $({\geqslant_\gwpo}, {>_\gwpo})$ is induced by  $({\sqsupseteq_A}, {\sqsupset_A})$ and $({\sqsupseteq_B}, {\sqsupset_B})$.
\end{defi}

We say that
an order pair $({\sqsupseteq}, {\sqsupset})$ on terms is \emph{simple with respect to} 
a partial status $\pi$
if $f(t_1, \ldots, t_n) \sqsupseteq t_i$ or $f(t_1, \ldots, t_n) \sqsupset t_i$
holds for all function symbols $f^{(n)}$, terms $\seq[n]{t}$ and $i \in \pi(f)$.
This is an analog of simplicity of algebras with respect to partial statuses
which is demanded for the reduction-pair version of WPO\@.
Here we arrive at our main theorem.

\begin{thm}
\label{T:gwpo-reduction-pair}
Let $\pi$ be a partial status, and
$({\geqslant_A}, {\sqsupseteq_A}, {\sqsupset_A})$ and $({\geqslant_B}, {\sqsupseteq_B}, {\sqsupset_B})$ reduction triples.
If $({\sqsupseteq_A}, {\sqsupset_A})$ is simple with respect to $\pi$,
then $({\geqslant_\mgwpo}, {>_\gwpo})$ is a reduction pair.
\end{thm}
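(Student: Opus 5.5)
We have to show four things for the pair $({\geqslant_\mgwpo}, {>_\gwpo})$, where ${\geqslant_\mgwpo} = {\geqslant_A} \cap {\geqslant_B} \cap {\geqslant_\gwpo}$: it is an order pair, it is stable, $>_\gwpo$ is well-founded, and $\geqslant_\mgwpo$ is closed under contexts. We establish the basic order-theoretic facts for $({\geqslant_\gwpo}, {>_\gwpo})$ and then transfer them to the intersection with $\geqslant_A \cap \geqslant_B$.

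\textbf{Basic properties of the GWPO pair.} First we prove ${>_\gwpo} \subseteq {\geqslant_\gwpo}$ by induction on $|s|+|t|$; this is immediate, since ${>^\lex} \subseteq {\geqslant^\lex}$ follows inductively. Next we prove the subterm-type fact: if $s = f(\bar s)$, $i \in \pi(f)$ and $s_i \geqslant_\gwpo t$, then $s >_\gwpo t$. The argument runs as in \autoref{L:spo-gwpo}(a). By definition, $s_i \geqslant_\gwpo t$ gives $s_i \sqsupseteq_A t$ or $s_i \sqsupset_A t$. Simplicity of $A$ with respect to $\pi$ gives $s \sqsupseteq_A s_i$ or $s \sqsupset_A s_i$. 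Compatibility of $A$ then yields $s \sqsupseteq_A t$ or $s \sqsupset_A t$, and we conclude by case~\GWPORP{1} or~\GWPORP{2a}.

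\textbf{Compatibility and transitivity.} We prove ${\geqslant_\gwpo}\cdot{\geqslant_\gwpo} \subseteq {\geqslant_\gwpo}$, together with the two mixed compositions into $>_\gwpo$, by simultaneous induction on $|s|+|t|+|u|$ for $s \mathrel{R} t \mathrel{R'} u$. The case analysis follows the standard WPO argument of \cite{YKS15}. The $A$-component composes by compatibility of $A$. When the first step is \GWPORP{2a}, we apply the induction hypothesis to the argument. When the second step is \GWPORP{2a} via $t_j$ with $j \in \pi(g)$, we use $s >_\gwpo t_j$ and the induction hypothesis. Two \GWPORP{2b} steps compose through compatibility of $B$, and the lexicographic parts compose by the induction hypothesis. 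Case~\GWPORP{2c} is trivial.

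\textbf{Stability and well-foundedness.} Stability is proved by induction, using stability of $A$ and $B$. The only subtle point is case~\GWPORP{2c} with $s = t = x$, which maps to $x\sigma \geqslant_\gwpo x\sigma$; this needs reflexivity of $\geqslant_\gwpo$ on all terms. Reflexivity itself is proved by induction: $f(\bar s) \sqsupseteq_A f(\bar s)$ and $f(\bar s) \sqsupseteq_B f(\bar s)$ hold by reflexivity of the quasi-orders, $s >_\gwpo s_j$ for $j \in \pi(f)$ holds by the subterm fact, and the lexicographic comparison holds by the induction hypothesis. Irreflexivity of $>_\gwpo$ follows from well-foundedness.

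For well-foundedness we adapt the proof of \autoref{T:spo} in the appendix, i.e.\ a Kamin--Lévy style minimal-counterexample argument. Alternatively, we pass through a simulation: apply the argument filter $\pi$ and relate $>_\gwpo$ to an SPO over the lexicographic combination of $A$ and $B$ (cf.\ \autoref{T:gwpo-spo}). We assume an infinite descending chain and take terms that are minimal with respect to size among non-terminating ones, so that all their $\pi$-arguments are terminating. A \GWPORP{2a} step at the head then yields an infinite chain from a smaller argument, contradicting minimality, so eventually only \GWPORP{1} and \GWPORP{2b} steps occur. Each such step weakly decreases the lexicographic combination $AB$, so by well-foundedness of $\sqsupset_A$ and $\sqsupset_B$ (via compatibility) the chain eventually consists only of \GWPORP{2b(ii)} steps. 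The filtered argument tuples then descend in $>^\lex_\gwpo$ over terminating arguments; this tuple descent is well-founded by \autoref{P:lex-ext}, which applies because the signature is bounded. We expect this to be the main obstacle: making the minimality bookkeeping work with argument filters and the mixed $\geqslant/>$ steps, where compatibility must be used to splice weak steps into the chain.

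\textbf{Monotonicity of $\geqslant_\mgwpo$.} Suppose $s \geqslant_A t$, $s \geqslant_B t$, $s \geqslant_\gwpo t$, and let $C = f(\bar u, \hole, \bar v)$. Monotonicity of $\geqslant_A$ and $\geqslant_B$ gives $C[s] \geqslant_A C[t]$ and $C[s] \geqslant_B C[t]$, and harmony gives $C[s] \sqsupseteq_A C[t]$ and $C[s] \sqsupseteq_B C[t]$. For every $j \in \pi(f)$, the subterm fact gives $C[s] >_\gwpo$ the $j$-th argument of $C[t]$: for the hole position this uses $C[s] >_\gwpo s \geqslant_\gwpo t$ and compatibility, and for the other positions $C[s] >_\gwpo u_j$ directly. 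The filtered argument lists satisfy $\geqslant^\lex_\gwpo$ by reflexivity and $s \geqslant_\gwpo t$. Hence case~\GWPORP{2b(ii)} gives $C[s] \geqslant_\gwpo C[t]$. Finally, $({\geqslant_\mgwpo}, {>_\gwpo})$ is compatible, since $\geqslant_\mgwpo \subseteq \geqslant_\gwpo$, and $\geqslant_\mgwpo$ is a stable quasi-order as an intersection of stable quasi-orders.
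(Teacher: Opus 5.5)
Your proposal is correct. Apart from well-foundedness, it follows the paper's appendix closely. Normality, the $\pi$-subterm fact with reflexivity (\autoref{L:gwpo-refl-pi-subterm}), transitivity and compatibility by simultaneous induction on $|s|+|t|+|u|$, stability (using reflexivity for case~\GWPORP{2c}), and monotonicity of $\geqslant_\mgwpo$ via harmony and case~\GWPORP{2b(ii)} are exactly the paper's lemmas.

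\textbf{Well-foundedness.} Here the routes genuinely differ. The paper uses Buchholz's method. It proves the local statement that $t \in \SN(>_\gwpo)$ whenever $f(\seq[m]{s}) >_\gwpo t$ and all $s_i$ with $i \in \pi(f)$ lie in $\SN(>_\gwpo)$. This is a single induction on $(s,s,\pi(f)(\seq[m]{s}),t)$ under $({\sqsupseteq_A},{\sqsupset_A}) \lexprod ({\sqsupseteq_B},{\sqsupset_B}) \lexprod ({\geqslant_\gwpo^\lex},{>_\gwpo^\lex}) \lexprod ({\superterm},{\prsuperterm})$, followed by structural induction. Variables on the left are excluded separately by the substitution trick.

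You use a global minimal-bad-sequence argument instead. For it to work, minimality must be imposed at every step, not only on the first term:
\begin{itemize}
\item Choose $t_0$ of minimal size among non-terminating terms, and each $t_{i+1}$ of minimal size among non-terminating $u$ with $t_i >_\gwpo u$.
\item Then every argument $t_{i+1,j}$ with $j \in \pi(\rt(t_{i+1}))$ is terminating. Otherwise $t_i >_\gwpo t_{i+1} >_\gwpo t_{i+1,j}$ (subterm fact, reflexivity, transitivity) contradicts the minimality of $t_{i+1}$.
\item Consequently case~\GWPORP{2a} never occurs, since an argument $t_{i,k} \geqslant_\gwpo t_{i+1}$ would start an infinite chain by compatibility.
\item Only finitely many steps are strict in the lexicographic combination of $A$ and $B$. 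Variables occur only in case-\GWPORP{1} steps, so they need no special treatment.
\item \autoref{P:lex-ext} on $\SN(>_\gwpo)^{\leqslant k}$ then finishes.
\end{itemize}
The paper's route packages everything into one induction yielding a local, reusable lemma, which is convenient for formalization. Yours is the classical Kamin--L\'evy style and handles variables uniformly, but it depends on this careful chain construction, which your sketch leaves implicit.

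\textbf{Side remarks to drop.} Adapting ``the proof of \autoref{T:spo} in the appendix'' would be circular: the paper obtains well-foundedness of the SPO from this very theorem, via \autoref{C:mspo-reduction-pair} and \autoref{R:ro-rp}. The simulation alternative also fails for non-total $\pi$. GWPOs with partial status are not SPOs in general (\autoref{E:non-simulation}). Moreover, $A$ may depend on arguments outside $\pi$, e.g.\ $\m{p}_\AA(x,y)=x+y+1$ with $\pi(\m{p})=\emptylist$ in \autoref{E:gwpo-quot}, so $>_\gwpo$ does not factor through $\hat{\pi}$.
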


In fact, the proof is easily done by replaying that for the original WPO\@.
For completeness, the proofs are included in \autoref{S:omitted}.

The original weighted path order as a reduction pair (\autoref{T:wpo-reduction-pair})
is obtained by instantiating
$({\geqslant_A}, {\sqsupseteq_A}, {\sqsupset_A})$ with $({\geqslant_\AA}, {\geqslant_\AA}, {>_\AA})$,
and $({\geqslant_B}, {\sqsupseteq_B}, {\sqsupset_B})$ with the reduction pair induced by precedence.
As we expect,
the weak $\pi$-simplicity of $\AA$ guarantees that $({\sqsupseteq_A}, {\sqsupset_A})$ is simple with respect to $\pi$.

Now, let us look at a termination proof by \autoref{T:gwpo-reduction-pair}.

\begin{exa}
\label{E:gwpo-quot}
Consider the TRS $\RR$ taken from \cite[Example~3.4]{AG01}:
\begin{align*}
\m{m}(x, \m{0}) & \xrightarrow{1} \m{0}
&
\m{p}(\m{0}, y) & \xrightarrow{4} y
&
\m{q}(\m{0}, \m{s}(y)) & \xrightarrow{6} \m{0}
\\
\m{m}(\m{s}(x), \m{s}(y)) & \xrightarrow{2} \m{m}(x, y)
&
\m{p}(\m{s}(x), y) & \xrightarrow{5} \m{s}(\m{p}(x, y))
&
\m{q}(\m{s}(x), \m{s}(y)) & \xrightarrow{7} \m{s}(\m{q}(\m{m}(x, y), \m{s}(y)))
\\
\m{m}(\m{m}(x, y), z) & \xrightarrow{3} \m{m}(x, \m{p}(y, z))
\end{align*}
Here $\m{m}$ stands for minus, $\m{p}$ plus, and $\m{q}$ quotient.
The set $\DP(\RR)$ consists of six rules:
\begin{align*}
\m{M}(\m{s}(x), \m{s}(y)) & \xrightarrow{8} \m{M}(x, y)
&
\m{M}(\m{m}(x, y), z) & \xrightarrow{9} \m{M}(x, \m{p}(y, z))
&
\m{M}(\m{m}(x, y), z) & \xrightarrow{10} \m{P}(y, z)
\\
\m{P}(\m{s}(x), y) & \xrightarrow{11} \m{P}(x, y)
&
\m{Q}(\m{s}(x), \m{s}(y)) & \xrightarrow{12} \m{Q}(\m{m}(x, y), \m{s}(y))
&
\m{Q}(\m{s}(x), \m{s}(y)) & \xrightarrow{13} \m{M}(x, y)
\end{align*}
Let $\AA$ be the linear polynomial interpretation over $\NN$ defined by
\begin{align*}
\m{0}_\AA &= 0
&
\m{s}_\AA(x) &= x + 1
&
\m{0}_\AA^\sharp &= 0
&
\m{s}^{\sharp}_\AA(x) &= 0
\\
\m{m}_\AA(x, y) &= x
&
\m{M}_\AA(x, y) &= x + 1
&
\m{m}^{\sharp}_\AA(x, y) &= x
&
\m{M}^{\sharp}_\AA(x, y) &= x
\\
\m{p}_\AA(x, y) &= x + y + 1
&
\m{P}_\AA(x, y) &= 0
&
\m{p}^{\sharp}_\AA(x, y) &= 1
&
\m{P}^\sharp_\AA(x, y) &= x
\\
\m{q}_\AA(x, y) &= x + 1
&
\m{Q}_\AA(x, y) &= x + 1
&
\m{q}^\sharp_\AA(x, y) &= 1
&
\m{Q}^{\sharp}_\AA(x, y) &= x
\end{align*}
and $\pi$ the partial status with $\pi(\m{M}) = \pi(\m{m}) = [1]$ and $\pi(f) = \emptylist$ for other function symbols $f$.
The order pair $({\geqslant_\AA}, {>_\AA})$ is simple with respect to $\pi$
since the first arguments of $\m{m}$ and $\m{M}$ are taken into account in $\AA$.
So, we construct the reduction pair $({\geqslant_\mgwpo}, {>_\gwpo})$ by \autoref{T:gwpo-reduction-pair}
with the reduction triples $({\geqslant_\AA}, {\geqslant_\AA}, {>_\AA})$ and 
$({\geqslant_\AA}, {\geqslant_\AA^\sharp}, {>_\AA^\sharp})$.
To conclude that $\RR$ is terminating, it suffices to show 
 $\RR \subseteq {\geqslant_\mgwpo}$ and $\DP(\RR) \subseteq {>_\gwpo}$.
This can be seen as follows. Since $\{ 1, 2, 4, 6, 8, 10, 12, 13 \} \subseteq {>_\AA}$, these are handled by case~\GWPORP{1}.
The other rules are oriented by $\geqslant_\AA$, so we compare them after marking the root symbols.
\begin{align*}
3\colon && \m{m}^\sharp(\m{m}(x, y), z) & \geqslant_\AA \m{m}^\sharp(x, \m{p}(y, z))
&
9\colon && \m{M}^\sharp(\m{m}(x, y), z) & \geqslant_\AA \m{M}^\sharp(x, \m{p}(y, z))
\\
5\colon && \m{p}^\sharp(\m{s}(x), y) & >_\AA \m{s}^\sharp(\m{p}(x, y))
&
11\colon && \m{P}^\sharp(\m{s}(x), y) & >_\AA \m{P}^\sharp(x, y)
\\
7\colon && \m{q}^\sharp(\m{s}(x), \m{s}(y)) & >_\AA \m{s}^\sharp(\m{q}(\m{m}(x, y), \m{s}(y)))
\end{align*}
For rules 3 and 9 we further apply recursive comparison by case~\GWPORP{2b(ii)}.
\end{exa}

In the example above, some arguments ignored by $\pi$ are 
used by the interpretation $\AA$, see for example $\m{p}_\AA(x, y) = x + y + 1$.
This is an advantage of having partial status built in
compared to applying the GWPO with the total status followed by argument filtering.
As with the case of the WPO~\cite[Proposition~1]{YKS15},
the latter with a non-collapsing argument filter can be simulated by the former,
although the two approaches are incomparable in general (as argument filtering allows collapsing).
Formally, let $\FF$ be the signature, $\pi$ a non-collapsing argument filter (therefore a partial status), and
$({\geqslant_A}, {\sqsupseteq_A}, {\sqsupset_A})$
and $({\geqslant_B}, {\sqsupseteq_B}, {\sqsupset_B})$ reduction triples on $\TT(\FF^\pi, \VV)$.
Define  $({\geqslant_{A'}}, {\sqsupseteq_{A'}}, {\sqsupset_{A'}})$ on $\TT(\FF, \VV)$ as 
$({\geqslant_A^\pi}, {\sqsupseteq_A^\pi}, {\sqsupset_A^\pi})$,
and similarly define $({\geqslant_{B'}}, {\sqsupseteq_{B'}}, {\sqsupset_{B'}})$ as 
$({\geqslant_B^\pi}, {\sqsupseteq_B^\pi}, {\sqsupset_B^\pi})$.

\begin{prop}
If $(\sqsupseteq_A, \sqsupset_A)$ has the subterm property,
then $(\sqsupseteq_{A'}, \sqsupset_{A'})$ is simple with respect to $\pi$.
Moreover,
$({\geqslant_{\mgwpo}^\pi}, {>_{\gwpo}^\pi})$ is identical to 
$({\geqslant_{\mgwpo'}}, {>_{\gwpo'}})$,
where $({\geqslant_{\mgwpo}}, {>_{\gwpo}})$ is induced by the total status and the reduction triples $A$ and $B$,
and $({\geqslant_{\mgwpo'}}, {>_{\gwpo'}})$ is induced by $\pi$ and the reduction triples $A'$ and $B'$.
\end{prop}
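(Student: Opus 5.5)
The plan has two parts, matching the two claims. For simplicity, take $f^{(n)} \in \FF$, terms $\seq[n]{t}$ and $i \in \pi(f)$ with $\pi(f) = [\seq[k]{i}]$ and $i = i_l$. Since $\pi$ is non-collapsing, $\hat{\pi}(f(\seq[n]{t})) = f(\hat{\pi}(t_{i_1}), \ldots, \hat{\pi}(t_{i_k}))$, which has $\hat{\pi}(t_i)$ as a proper subterm. The subterm property of $A$ then gives $\hat{\pi}(f(\bar{t})) \sqsupseteq_A \hat{\pi}(t_i)$ or $\hat{\pi}(f(\bar{t})) \sqsupset_A \hat{\pi}(t_i)$. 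By the definition of $R^\pi$, this is exactly simplicity of $A'$ with respect to $\pi$.

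For the identity, the intersection parts are immediate: $s \geqslant_{A'} t$ iff $\hat{\pi}(s) \geqslant_A \hat{\pi}(t)$, and likewise for $B$. So it suffices to show, for all $s,t \in \TT(\FF,\VV)$,
\[
\hat{\pi}(s) \geqslant_\gwpo \hat{\pi}(t) \iff s \geqslant_{\gwpo'} t
\qquad\text{and}\qquad
\hat{\pi}(s) >_\gwpo \hat{\pi}(t) \iff s >_{\gwpo'} t ,
\]
where the left-hand sides use the total status over $\FF^\pi$. I would prove both equivalences simultaneously by induction on $|s|+|t|$, comparing the definitions case by case (\autoref{D:gwpo-reduction-pair}).

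The cases line up as follows.
\begin{itemize}
\item Case~\GWPORP{1} and the side condition $\sqsupseteq_A$ of case~\GWPORP{2} translate literally, since $s \sqsupset_{A'} t$ means $\hat{\pi}(s) \sqsupset_A \hat{\pi}(t)$.
\item Case~\GWPORP{2c} corresponds because $\hat{\pi}$ maps variables to themselves and non-variables to non-variables.
\item For case~\GWPORP{2a}, the arguments of $\hat{\pi}(s)$ are exactly $\hat{\pi}(s_i)$ for $i \in \pi(f)$, so the induction hypothesis transfers $\hat{\pi}(s_i) \geqslant_\gwpo \hat{\pi}(t)$ to $s_i \geqslant_{\gwpo'} t$ and back.
\item For case~\GWPORP{2b}, the requirement $s >_{\gwpo'} t_j$ for $j \in \pi(g)$ corresponds to $\hat{\pi}(s) >_\gwpo \hat{\pi}(t_j)$ for all arguments of $\hat{\pi}(t)$. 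The $B$-comparison corresponds directly, since $s \sqsupset_{B'} t$ means $\hat{\pi}(s) \sqsupset_B \hat{\pi}(t)$. The lexicographic comparison of $\pi(f)(\seq[m]{s})$ with $\pi(g)(\seq[n]{t})$ in $\gwpo'$ becomes, under $\hat{\pi}$, the lexicographic comparison of the argument lists of $\hat{\pi}(s)$ and $\hat{\pi}(t)$ under the total status. The induction hypothesis applies componentwise, as all components are proper subterms.
\end{itemize}

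The main obstacle is the induction measure. The induction hypothesis is needed at pairs such as $(s, t_j)$ and $(s_i, t)$, and these have smaller $|s|+|t|$ on the original side. On the filtered side, however, sizes may not decrease at all, since dropped arguments make $|\hat{\pi}(s)|$ insensitive to them. So I would induct on $|s|+|t|$ of the unfiltered terms and state the claim in terms of $\hat{\pi}$. Each recursive call in either direction is then at an original pair $(s_i,t)$, $(s,t_j)$ or $(s_i,t_j)$ with $i \in \pi(f)$ or $j \in \pi(g)$. The point to verify carefully is that every subterm of $\hat{\pi}(s)$ arising as an argument is literally $\hat{\pi}$ of an argument of $s$; non-collapsingness is what guarantees this. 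The only other subtlety is that in $\FF^\pi$ the symbol $f$ has arity $|\pi(f)|$, so the boundedness assumption and the definitions over $\FF^\pi$ are well-posed. This is immediate because filtered arities do not exceed the original ones.
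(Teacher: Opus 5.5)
Your proof is correct and takes the same route as the paper. Simplicity of $A'$ comes directly from the subterm property of $A$. The identity comes from a simultaneous induction on $|s|+|t|$ of the unfiltered terms showing $s \geqslant_{\gwpo}^\pi t \iff s \geqslant_{\gwpo'} t$ and $s >_{\gwpo}^\pi t \iff s >_{\gwpo'} t$, and the intersections with $\geqslant_A$ and $\geqslant_B$ transfer trivially. One minor remark: the ``obstacle'' you raise does not arise, because for non-collapsing $\pi$ each $\hat{\pi}(s_i)$ with $i \in \pi(f)$ is a proper subterm of $\hat{\pi}(s)$, so the filtered sizes also decrease at every recursive call and either measure works.
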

\begin{proof}
Our proof follows the same line as that of \cite[Proposition~1]{YKS15}:
The simplicity of $(\sqsupseteq_{A'}, \sqsupset_{A'})$ is easy to verify.
The other claim follows by showing that
$s \geqslant_{\gwpo}^\pi t$ if and only if $s \geqslant_{\gwpo'} t$
and $s >_{\gwpo}^\pi t$ if and only if $s >_{\gwpo'} t$  for all terms $s$ and $t$,
by induction on $|s| + |t|$.
\end{proof}

\subsection{SPO as Reduction Pair}
\label{SS:spo-redpair}

To the best of the authors' knowledge,
there is no existing work that formulates the semantic path order as a reduction pair
incorporating mutual recursion and partial status as in \autoref{D:gwpo-reduction-pair}.
Such a version of the semantic path order is naturally obtained from the GWPO\@.

\begin{defi}
\label{D:spo-reduction-pair}
Let $\pi$ be a partial status and 
let $({\sqsupseteq}, {\sqsupset})$ be a pair of relations on terms.
The \emph{semantic path order} $({\geqslant_\spo}, {>_\spo})$
induced by $\pi$ and $({\sqsupseteq}, {\sqsupset})$
is defined inductively as follows: $s \geqslant_\spo t$ if
\begin{enumerate}[a.]
\item \label{spoRP:a}
$s = f(\seq[m]{s})$ and $s_i \geqslant_\spo t$ for some $i \in
\pi(f)$.
\item \label{spoRP:b}
$s = f(\seq[m]{s})$, $t = g(\seq[n]{t})$, $s >_\spo t_j$ for all 
$j \in \pi(g)$, and
\begin{enumerate}[(i)]
\item \label{spoRP:b(i)}
$s \sqsupset t$ or
\item \label{spoRP:b(ii)}
$s \sqsupseteq t$ and
$\pi(f)(\seq[m]{s}) \geqslant_\spo^\lex \pi(g)(\seq[n]{t})$.
\end{enumerate}
\item \label{spoRP:c}
$s \in \VV$ and $s = t$
\end{enumerate}
The relation $>_\spo$ is defined by cases~\SPORP{a} and \SPORP{b},
where $\geqslant_\spo^\lex$ is replaced by $>_\spo^\lex$.
Here $({\geqslant_\spo^\lex}, {>_\spo^\lex})$ is the lexicographic extension of
$({\geqslant_\spo}, {>_\spo})$.
\end{defi}

Obviously, we could define $({\geqslant_\spo}, {>_\spo})$
as  $({\geqslant_\gwpo}, {>_\gwpo})$ induced from $({\TT \times \TT}, {\emptyset})$
and $({\sqsupseteq}, {\sqsupset})$.
The monotone version is defined as follows.

\begin{defi}
Let $\pi$ be a partial status, and
let $({\geqslant}, {\sqsupseteq}, {\sqsupset})$ be a triple of relations on terms.
The \emph{monotonic semantic path order}
$({\geqslant_\mspo}, {>_\mspo})$
is defined as the pair of
${{\geqslant} \cap {\geqslant_\spo}}$ and ${{\geqslant} \cap {>_\spo}}$
where $({\geqslant_\spo}, {>_\spo})$ is induced by  $({\sqsupseteq}, {\sqsupset})$.
\end{defi}

Again, we could define $({\geqslant_\mspo}, {>_\mspo})$
as  $({\geqslant_\mgwpo}, {>_\mgwpo})$ induced from the reduction triple $({\TT \times \TT}, {\TT \times \TT}, {\emptyset})$
and $({\geqslant}, {\sqsupseteq}, {\sqsupset})$.
This results in the following corollary.

\begin{cor}
\label{C:mspo-reduction-pair}
Let $\pi$ be a partial status, and
$({\geqslant}, {\sqsupseteq}, {\sqsupset})$ a reduction triple.
Then $({\geqslant_\mspo}, {>_\spo})$ is a reduction pair.
\qed
\end{cor}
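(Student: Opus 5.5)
The plan is to derive \autoref{C:mspo-reduction-pair} from \autoref{T:gwpo-reduction-pair} by instantiating the first reduction triple with the trivial one. Let $A = ({\TT \times \TT}, {\TT \times \TT}, {\emptyset})$ and let $B = ({\geqslant}, {\sqsupseteq}, {\sqsupset})$ be the given reduction triple.

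First we check that $A$ is a reduction triple. By \autoref{P:redpair-redtriple} applied to the trivial reduction pair $({\TT \times \TT}, {\emptyset})$, this holds; it was already noted after that proposition. Next, the simplicity hypothesis of \autoref{T:gwpo-reduction-pair} asks that $f(\seq{t}) \sqsupseteq_A t_i$ or $f(\seq{t}) \sqsupset_A t_i$ for $i \in \pi(f)$. Since $\sqsupseteq_A = \TT \times \TT$, this holds trivially for every partial status $\pi$. Hence \autoref{T:gwpo-reduction-pair} yields that $({\geqslant_\mgwpo}, {>_\gwpo})$ is a reduction pair, where the orders are induced by $\pi$, $A$ and $B$.

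It remains to identify this pair with $({\geqslant_\mspo}, {>_\spo})$. We show by simultaneous induction on $|s|+|t|$ that the GWPO pair $({\geqslant_\gwpo}, {>_\gwpo})$ induced by $({\TT\times\TT},{\emptyset})$ and $({\sqsupseteq},{\sqsupset})$ coincides with the SPO pair of \autoref{D:spo-reduction-pair}:
\begin{itemize}
\item Case~\GWPORP{1} never applies, because $\sqsupset_A = \emptyset$.
\item The side condition $s \sqsupseteq_A t$ in case~\GWPORP{2} always holds.
\item Consequently, cases~\GWPORP{2a}, \GWPORP{2b} and \GWPORP{2c} match cases~\SPORP{a}, \SPORP{b} and \SPORP{c} verbatim.
\end{itemize}
The lexicographic extensions agree by the induction hypothesis, since they only compare proper subterms. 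Finally, ${\geqslant_\mgwpo} = {(\TT\times\TT)} \cap {\geqslant} \cap {\geqslant_\gwpo} = {\geqslant} \cap {\geqslant_\spo} = {\geqslant_\mspo}$.

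No step is a real obstacle. The only point needing care is that the identification of the two pairs is by mutual induction, since ${\geqslant}$ and ${>}$ are defined simultaneously. The paper already remarks this equality, so we can simply cite it.
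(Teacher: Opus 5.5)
Your proposal is correct and follows the paper's own route. The paper also instantiates \autoref{T:gwpo-reduction-pair} with the trivial reduction triple $({\TT \times \TT}, {\TT \times \TT}, {\emptyset})$ as $A$, which makes simplicity with respect to $\pi$ automatic, and notes that the resulting $({\geqslant_\mgwpo}, {>_\gwpo})$ is by definition $({\geqslant_\mspo}, {>_\spo})$; your case-by-case check of \GWPORP{2a}--\GWPORP{2c} against \SPORP{a}--\SPORP{c} just spells out what the paper calls obvious.
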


The example below illustrates a key step in a termination proof 
of the benchmark \texttt{Saito\_26/AProVE\_08\_id\_inc\_p} in the Termination Problem Database~\cite{TPDB},
with some simplification.
As mentioned in \autoref{SS:experiment-reduction-pair},
the SPO plays a crucial role.

\begin{exa}
\label{E:inc}
Consider the set $\PP$ of the
dependency pairs
\begin{align*}
\m{G}(\m{c}(\m{s}(x),y))
&\to \m{G}(\m{c}(y,\m{p}(\m{s}(x))))
&
\m{G}(\m{c}(x,\m{s}(y)))
&\to \m{G}(\m{c}(\m{p}(\m{s}(y)),x))
\end{align*}
and the TRS $\RR = \{ \m{p}(\m{s}(x)) \to x \}$.
To show that the dependency pair problem $(\PP, \RR)$ is finite,
we use the max/plus interpretations $\AA$ and $\BB$ defined by:
\begin{align*}
\m{G}^{(\sharp)}_\AA(x) &= \max\{0, x - 1\}
&
\m{c}_\AA(x,y) &= \max\{x + 2, y + 1\}
&
\m{c}^\sharp_\AA(x,y) &= \max\{x, y + 1\}
\\
\m{p}^{(\sharp)}_\AA(x) &= \max\{0,x - 2\}
&
\m{s}^{(\sharp)}_\AA(x) &= x + 2
\\
\m{G}^{(\sharp)}_\BB(x) &= x + 1
&
\m{c}_\BB(x,y) &= y + 2
&
\m{c}^\sharp_\BB(x,y) &= x + 2
\\
\m{p}^{(\sharp)}_\BB(x) &= \max\{0,x - 2\}
&
\m{s}^{(\sharp)}_\BB(x) &= x + 2
\end{align*}
Let $({\geqslant}, {\sqsupseteq}, {\sqsupset})$ be the lexicographic combination of $({\geqslant_\AA}, {\geqslant_\AA^\sharp}, {>_\AA^\sharp})$ and 
$({\geqslant_\BB}, {\geqslant_\BB^\sharp}, {>_\BB^\sharp})$, and
let  $({\geqslant_\mspo}, {>_\spo})$ be the SPO induced by  $({\geqslant}, {\sqsupseteq}, {\sqsupset})$ and 
the partial status $\pi$ with
$\pi(\m{G}) = \pi(\m{p}) = \pi(\m{s}) = [1]$ and $\pi(\m{c})  = \emptylist$.
We can easily verify  $\RR \subseteq {\geqslant_\mspo}$. For the first dependency pair, we have the following derivation.
\begin{center}
\begin{prooftree}
    \hypo{\m{c}(\m{s}(x),y) \sqsupset \m{G}(\m{c}(y,\m{p}(\m{s}(x))))}
      \hypo{\m{c}(\m{s}(x),y) \sqsupset \m{c}(y,\m{p}(\m{s}(x)))}
    \infer{1}[\SPORP{b(i)}]{\m{c}(\m{s}(x),y) >_\spo \m{c}(y,\m{p}(\m{s}(x)))}
  \infer{2}[\SPORP{b(i)}]{\m{c}(\m{s}(x),y) \geqslant_\spo \m{G}(\m{c}(y,\m{p}(\m{s}(x))))}
\infer{1}[\SPORP{a}]{\m{G}(\m{c}(\m{s}(x),y)) >_\spo \m{G}(\m{c}(y,\m{p}(\m{s}(x))))}
\end{prooftree}
\end{center}
Here, $\m{c}(\m{s}(x),y) \sqsupset \m{G}(\m{c}(y,\m{p}(\m{s}(x))))$ follows from the decrease
\begin{align*}
\overbrace{\m{c}^\sharp(\m{s}(x),y)}^{\max \{ x + 2, y + 1 \}} & \geqslant_\AA \overbrace{\m{G}^\sharp(\m{c}(y,\m{p}(\m{s}(x))))}^{\max \{ y + 1, x\}}
&
\overbrace{\m{c}^\sharp(\m{s}(x),y)}^{x + 4} & >_\BB \overbrace{\m{G}^\sharp(\m{c}(y,\m{p}(\m{s}(x))))}^{x + 3}
\end{align*}
in $\BB$, and $\m{c}(\m{s}(x),y) \sqsupset \m{c}(y,\m{p}(\m{s}(x)))$ follows from the decrease
\[
\m{c}^\sharp_\AA(\m{s}_\AA(x),y) = \max \{ x + 2, y + 1 \} > \max \{ y, x + 1\} = \m{c}^\sharp_\AA(y,\m{p}_\AA(\m{s}_\AA(x)))
\]
in $\AA$.
Similarly, we have the following derivation for the second dependency pair.
\begin{center}
\begin{prooftree}
    \hypo{\m{G}(\m{c}(x,\m{s}(y))) \sqsupset \m{G}(\m{c}(\m{p}(\m{s}(y)),x))}
      \hypo{\m{G}(\m{c}(x,\m{s}(y))) \sqsupset \m{c}(\m{p}(\m{s}(y)),x)}
    \infer{1}[\SPORP{b(i)}]{\m{G}(\m{c}(x,\m{s}(y))) >_\spo \m{c}(\m{p}(\m{s}(y)),x)}
  \infer{2}[\SPORP{b(i)}]{\m{G}(\m{c}(x,\m{s}(y))) >_\spo \m{G}(\m{c}(\m{p}(\m{s}(y)),x))}
\end{prooftree}
\end{center}
Here, $\m{G}(\m{c}(x,\m{s}(y))) \sqsupset \m{G}(\m{c}(\m{p}(\m{s}(y)),x))$ follows from the decrease
\[
\m{G}^\sharp_\AA(\m{c}_\AA(x,\m{s}_\AA(y))) = \max \{ x + 1, y + 2 \} > \max \{ y + 1, x \} = \m{G}^\sharp_\AA(\m{c}_\AA(\m{p}_\AA(\m{s}_\AA(y)),x))
\]
in $\AA$, and $\m{G}(\m{c}(x,\m{s}(y))) \sqsupset \m{c}(\m{p}(\m{s}(y)),x)$ follows from the decrease
\begin{align*}
\overbrace{\m{G}^\sharp(\m{c}(x,\m{s}(y)))}^{\max \{ x + 1, y + 2 \} } & \geqslant_\AA \overbrace{\m{c}^\sharp(\m{p}(\m{s}(y)),x)}^{\max \{ y, x + 1 \}}
&
\overbrace{\m{G}^\sharp(\m{c}(x,\m{s}(y)))}^{y + 5} & >_\BB \overbrace{\m{c}^\sharp(\m{p}(\m{s}(y)),x)}^{y + 2}
\end{align*}
in $\BB$.
Hence, we conclude that $(\PP, \RR)$ is finite.
\end{exa}

Next, following \autoref{SS:simulation} we compare the generalized weighted path order with the semantic path order
as a reduction pair.
Let $({\geqslant_\gwpo}, {>_\gwpo})$ be the GWPO induced by
a partial status $\pi$ and stable order pairs $({\sqsupseteq_A}, {\sqsupset_A})$ and  $({\sqsupseteq_B}, {\sqsupset_B})$.
We consider the SPO $({\geqslant_\spo}, {>_\spo})$ induced by the lexicographic combination $({\sqsupseteq_{AB}}, {\sqsupset_{AB}})$ and $\pi$.
In this setting, an analog of \autoref{L:spo-gwpo} holds.

\begin{lem}
\label{L:spo-gwpo2}
Suppose that  $A$ is simple with respect to $\pi$.
For all terms $s$ and $t$,
it holds that $s \geqslant_\spo t$ implies $s \geqslant_\gwpo t$,
and $s >_\spo t$ implies $s >_\gwpo t$.
\end{lem}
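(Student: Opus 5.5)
We prove both implications simultaneously by induction on $|s| + |t|$, following the proof of \autoref{L:spo-gwpo}. The first step is an auxiliary fact: $s \geqslant_\gwpo t$ implies $s \sqsupseteq_A t$ or $s \sqsupset_A t$. This holds because every case of \autoref{D:gwpo-reduction-pair} starts with case~\GWPORP{1} or case~\GWPORP{2}; in case~\GWPORP{2c} reflexivity of $\sqsupseteq_A$ gives $s \sqsupseteq_A s$. Since $>_\gwpo$ is defined by a subset of the cases for $\geqslant_\gwpo$, the same holds for $>_\gwpo$. We then analyse the derivation of $s \geqslant_\spo t$ or $s >_\spo t$ according to the case of \autoref{D:spo-reduction-pair} used. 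The two relations are handled uniformly, except that in case~\SPORP{b(ii)} the lexicographic comparison is weak for $\geqslant$ and strict for $>$.

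\emph{Case~\SPORP{c}.} Here $s = t \in \VV$, so $s \sqsupseteq_A t$ by reflexivity and case~\GWPORP{2c} applies. This case only concerns $\geqslant$.

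\emph{Case~\SPORP{a}.} We have $s = f(\seq[m]{s})$ and $s_i \geqslant_\spo t$ with $i \in \pi(f)$. The induction hypothesis gives $s_i \geqslant_\gwpo t$, and the auxiliary fact yields $s_i \sqsupseteq_A t$ or $s_i \sqsupset_A t$. Simplicity of $A$ with respect to $\pi$ is needed here, since it applies exactly because $i \in \pi(f)$. It gives $s \sqsupseteq_A s_i$ or $s \sqsupset_A s_i$. Composing, by transitivity and compatibility of the order pair $A$, we get either $s \sqsupset_A t$, so $s >_\gwpo t$ (hence also $\geqslant$) by case~\GWPORP{1}, or $s \sqsupseteq_A t$, in which case case~\GWPORP{2a} gives the required relation (strict or weak) from $s_i \geqslant_\gwpo t$.

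\emph{Case~\SPORP{b}.} We have $s = f(\seq[m]{s})$, $t = g(\seq[n]{t})$, and $s >_\spo t_j$ for all $j \in \pi(g)$. The induction hypothesis gives $s >_\gwpo t_j$ for these $j$. Now we unfold the lexicographic combination. If $s \sqsupset_{AB} t$ holds via $s \sqsupset_A t$, then case~\GWPORP{1} applies. Otherwise $s \sqsupseteq_A t$ and $s \sqsupset_B t$, and case~\GWPORP{2b(i)} applies. If instead $s \sqsupseteq_{AB} t$ holds, then again either $s \sqsupset_A t$, handled by case~\GWPORP{1}, or $s \sqsupseteq_A t$ and $s \sqsupseteq_B t$. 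In the latter situation the lexicographic comparison $\pi(f)(\seq[m]{s}) \geqslant_\spo^\lex \pi(g)(\seq[n]{t})$ (resp.\ $>_\spo^\lex$) transfers to the corresponding $\gwpo$ comparison. This uses the induction hypothesis componentwise, since all components are proper subterms of $s$ and $t$, and it relies on the fact that lexicographic extension is monotone in both underlying relations. Case~\GWPORP{2b(ii)} then concludes.

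The only delicate point is case~\SPORP{a}, where we must pass from $s_i$ to $s$ through the pair $A$. This requires the auxiliary fact that both $\geqslant_\gwpo$ and $>_\gwpo$ are contained in ${\sqsupseteq_A} \cup {\sqsupset_A}$, together with the observation that simplicity is needed only for positions $i \in \pi(f)$, which are exactly the positions at which case~\SPORP{a} is allowed. No stability or well-foundedness is used, and the remaining steps are routine.
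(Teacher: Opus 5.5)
Your proof is correct and follows the paper's own argument: simultaneous induction on $|s|+|t|$, with case~\SPORP{c} handled by reflexivity of $\sqsupseteq_A$ and case~\GWPORP{2c}, case~\SPORP{a} handled via simplicity of $A$, and case~\SPORP{b} handled by unfolding the lexicographic combination into either case~\GWPORP{1} or cases~\GWPORP{2b(i)}/\GWPORP{2b(ii)}. In case~\SPORP{a} you are slightly more explicit than the paper: when $s_i \sqsupseteq_A t$, you also compose with $s \sqsupseteq_A s_i$ or $s \sqsupset_A s_i$, which is what yields either the premise $s \sqsupseteq_A t$ of case~\GWPORP{2a} or else case~\GWPORP{1}, a step the paper leaves implicit.
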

\begin{proof}
We prove the two implications simultaneously by induction on $|s| + |t|$.
Here we only prove the first claim, as the other is analogously shown.
Let $s \geqslant_\spo t$. We analyze how it is derived.
\begin{enumerate}[a.]
\item If it follows from case~\SPORP{a}, then $s = f(\seq[m]{s})$ and $s_i \geqslant_\spo t$ for some $i \in \pi(f)$.
By the induction hypothesis we have $s_i \geqslant_\gwpo t$.
By definition, $s_i \sqsupset_A t$ or $s_i \sqsupseteq_A t$ holds.
In the former case, we use the simplicity assumption of $A$ to obtain
$s \sqsupseteq_A s_i \sqsupset_A t$ or $s \sqsupset_A s_i \sqsupset_A t$,
so $s \geqslant_\gwpo t$ by case~\GWPORP{1}.
In the latter case, we are done by case~\GWPORP{2a}.

\item If it follows from case~\SPORP{b}, we have $s = f(\seq[m]{s})$, $t = g(\seq[n]{t})$
and $s >_\spo t_j$ for all $j \in \pi(g)$.
By the induction hypothesis we have $s >_\gwpo t_j$ for all $j \in \pi(g)$.
\begin{enumerate}[(i)]
    \item If $s \sqsupset_{AB} t$, we further analyze how it follows:
    If $s \sqsupset_A t$ then we have $s >_\gwpo t$ by case~\GWPORP{1}.
    Otherwise, we conclude with case~\GWPORP{2b(i)}.
    \item If $s \sqsupseteq_{AB} t$, we further analyze how it follows:
    If $s \sqsupset_A t$ then we have $s >_\gwpo t$ by case~\GWPORP{1}.
     Otherwise, we conclude with case~\GWPORP{2b(ii)} using the induction hypothesis.
\end{enumerate}

\item If it follows from case~\SPORP{c}, we simply apply the corresponding case~\GWPORP{2c} with the reflexivity of $\sqsupseteq_A$. \qedhere
\end{enumerate}
\end{proof}

The assumption that $A$ is simple with respect to $\pi$ is essential for the lemma above.

\begin{exa}
\label{E:spo-gwpo2}
Consider the signature $\{ \m{a}^{(0)}, \m{f}^{(1)} \}$,
the partial status $\pi$ with $\pi(\m{f}) = [1]$,
and the algebra $\AA$ on $\NN$ defined by $\m{a}_\AA = 1$ and $\m{f}_\AA(x) = 0$.
Let $({\sqsupseteq_A}, {\sqsupset_A})$ be $({\geqslant_\AA}, {>_\AA})$ and 
let $({\sqsupseteq_B}, {\sqsupset_B})$ be $({\TT \times \TT}, {\emptyset})$.
Then we have $\m{f}(\m{a}) >_\spo \m{a}$ by case~\SPORP{a}, but $\m{f}(\m{a}) >_\gwpo \m{a}$ does not hold.
\end{exa}

The converse is shown by additionally assuming that $\pi$ is total.
The proof follows the same line as \autoref{L:gwpo-spo}.

\begin{lem}
\label{L:gwpo-spo2}
Suppose that $A$ is simple with respect to $\pi$, and $\pi$ is total.
For all terms $s$ and $t$,
it holds that $s \geqslant_\gwpo t$ implies $s \geqslant_\spo t$,
and $s >_\gwpo t$ implies $s >_\spo t$. \qed
\end{lem}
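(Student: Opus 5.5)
We prove both implications simultaneously by induction on $|s| + |t|$, following the case analysis of \autoref{L:gwpo-spo} and adapting it to the mutual recursion and the partial status. Since $\pi$ is total, every argument position counts. Cases~\SPORP{a} and~\SPORP{b} of \autoref{D:spo-reduction-pair} are then exactly cases~\SPO{a} and~\SPO{b} of \autoref{D:spo}, together with the reflexive case~\SPORP{c} for variables.

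Before the induction we record an auxiliary fact: under totality of $\pi$, $\geqslant_\spo$ contains the subterm relation $\superterm$. More precisely, $s \superterm u$ gives $s \geqslant_\spo u$, and $s \prsuperterm u$ gives $s >_\spo u$. We first show $u \geqslant_\spo u$ by induction on $u$. For a variable this is case~\SPORP{c}. For $u = g(\seq{u})$ we use case~\SPORP{b(ii)}: reflexivity of $\sqsupseteq_{AB}$ gives $u \sqsupseteq_{AB} u$, the inner claim gives $(\seq{u}) \geqslant^\lex_\spo (\seq{u})$, and the side condition $u >_\spo u_j$ follows from case~\SPORP{a} applied to $u_j \geqslant_\spo u_j$. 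Case~\SPORP{a} then propagates these facts to arbitrary subterms.

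The main induction considers a derivation of $s \geqslant_\gwpo t$; the strict case is identical with $\geqslant^\lex$ replaced by $>^\lex$.

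Case~\GWPORP{1}, $s \sqsupset_A t$. If $s$ is a variable, we use stability and simplicity exactly as in \autoref{L:gwpo-spo} to derive $t \sqsupset_A t$ or a cyclic contradiction. Simplicity with respect to a total $\pi$ is precisely the subterm property, and it transfers through the proper subterm chain by compatibility. If $t$ is a variable, it must occur in $s$ by the same substitution argument, so the auxiliary fact gives $s >_\spo t$. If $t = g(\seq{t})$, simplicity yields $t \sqsupseteq_A t_j$ or $t \sqsupset_A t_j$, so by compatibility $s \sqsupset_A t_j$ and hence $s >_\gwpo t_j$ by case~\GWPORP{1}. The induction hypothesis gives $s >_\spo t_j$, and since $s \sqsupset_{AB} t$, case~\SPORP{b(i)} yields $s >_\spo t$, hence also $s \geqslant_\spo t$ (the strict conclusion suffices for both claims).

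Case~\GWPORP{2a} follows directly from the induction hypothesis and case~\SPORP{a}.

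Case~\GWPORP{2b}. The side conditions $s >_\gwpo t_j$ transfer by the induction hypothesis. In subcase~(i), $s \sqsupseteq_A t$ and $s \sqsupset_B t$ give $s \sqsupset_{AB} t$. In subcase~(ii), $s \sqsupseteq_A t$ and $s \sqsupseteq_B t$ give $s \sqsupseteq_{AB} t$, and the lexicographic comparison transfers componentwise by the induction hypothesis, since each compared pair has smaller total size.

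Case~\GWPORP{2c} is case~\SPORP{c}.

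The main obstacle is Case~\GWPORP{1}, which needs both the auxiliary subterm fact and the variable analysis. This is where totality of $\pi$ is indispensable: with a non-total $\pi$, an argument $t_j$ with $j \notin \pi(g)$ could hide the only occurrence of a variable, and \SPORP{a} would be unable to reach it.
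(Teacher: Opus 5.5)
Your proof is correct and matches the paper's approach: the paper only says the argument follows the same line as \autoref{L:gwpo-spo}, i.e., induction on $|s|+|t|$ with the same case analysis, and your auxiliary fact that $\superterm \subseteq {\geqslant_\spo}$ and $\prsuperterm \subseteq {>_\spo}$ under a total $\pi$ is exactly what replaces the subterm property of $>_\spo$ used there (with case~\GWPORP{2c} handled by case~\SPORP{c}). One small slip is in your closing remark: the variable that \SPORP{a} cannot reach is hidden in an argument of $s$ not listed in $\pi(f)$, as in \autoref{E:non-simulation}, not in an argument $t_j$ of $t$.
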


The assumption that $\pi$ is total cannot be dropped.
This contrasts with the simulation result (\autoref{T:gwpo-spo}) in the reduction-order setting.

\begin{exa}
\label{E:non-simulation}
Suppose that the signature only contains a unary function symbol $\m{f}$.
Let $\pi$ be the partial status defined by $\pi(\m{f}) = \emptylist$, 
let $\AA$ be the algebra on $\NN$ defined by $\m{f}_\AA(x) = x + 1$, and 
let $({\sqsupseteq_B}, {\sqsupset_B})$ be an arbitrary pair of relations on terms
(e.g., one induced by a precedence to consider the original WPO).
On one hand, the GWPO $({\geqslant_\gwpo}, {>_\gwpo})$ induced by $({\geqslant_\AA}, {>_\AA})$
and $({\sqsupseteq_B}, {\sqsupset_B})$ satisfies $\m{f}(x) \geqslant_\gwpo x$ and $\m{f}(x) >_\gwpo x$ 
by case~\GWPORP{1}. On the other hand, no SPO $({\geqslant_\spo}, {>_\spo})$ (regardless of the underlying order pair)
satisfies $\m{f}(x) \geqslant_\spo x$, because
the only possible case, case~\SPORP{a}, is blocked by $\pi(\m{f}) = \emptylist$.
\end{exa}

The cause behind the example above is very simple: if $s \sqsupset_A t$,
the GWPO satisfies $s \geqslant_\gwpo t$ and $s >_\gwpo t$ with case~\GWPORP{1} regardless of the underlying partial status,
while $s \geqslant_\spo t$ or $s >_\spo t$ may not be possible if the partial status does not allow it.

As a consequence of \autoref{L:spo-gwpo2} and \autoref{L:gwpo-spo2}, we can show an analog of \autoref{T:gwpo-spo} and \autoref{C:mgwpo-mspo}:
Let $({\geqslant_\mgwpo}, {>_\mgwpo})$ be the monotone generalized weighted path order
induced by a partial status $\pi$ and
reduction triples $({\geqslant_A}, {\sqsupseteq_A}, {\sqsupset_A})$ and $({\geqslant_B}, {\sqsupseteq_B}, {\sqsupset_B})$, and
let $({\geqslant_\mspo}, {>_\mspo})$ be the monotone semantic path order induced by $\pi$ and
$({\geqslant_A \cap \geqslant_B}, {\sqsupseteq_{AB}}, {\sqsupset_{AB}})$.

\begin{thm}
\label{T:mgwpo-mspo2}
If $({\sqsupseteq_A}, {\sqsupset_A})$ is simple with respect to $\pi$,
then we have ${\geqslant_\mspo} \subseteq {\geqslant_\mgwpo}$ and 
${>_\mspo} \subseteq {>_\mgwpo}$ as well as ${>_\spo} \subseteq {>_\gwpo}$. 
If additionally $\pi$ is total, then 
${\geqslant_\mspo}$, ${>_\mspo}$ and $>_\spo$
are identical to ${\geqslant_\mgwpo}$, ${>_\mgwpo}$ and $>_\gwpo$, respectively.
\qed
\end{thm}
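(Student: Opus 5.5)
The plan is to derive the statement directly from \autoref{L:spo-gwpo2} and \autoref{L:gwpo-spo2}. These lemmas concern the GWPO $({\geqslant_\gwpo},{>_\gwpo})$ induced by $\pi$, $A$ and $B$, and the SPO $({\geqslant_\spo},{>_\spo})$ induced by $\pi$ and the lexicographic combination $({\sqsupseteq_{AB}},{\sqsupset_{AB}})$. By \autoref{D:mgwpo-reduction-pair}, the MGWPO is obtained by intersecting with ${\geqslant_A}\cap{\geqslant_B}$. The MSPO induced by $({\geqslant_A \cap \geqslant_B}, {\sqsupseteq_{AB}}, {\sqsupset_{AB}})$ is obtained by intersecting the same SPO with the same relation ${\geqslant_A}\cap{\geqslant_B}$. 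So both monotonic orders arise from their non-monotonic counterparts by one and the same intersection, and the claims reduce to comparisons of the unrestricted relations.

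For the first part, assume only that $A$ is simple with respect to $\pi$. \autoref{L:spo-gwpo2} gives ${\geqslant_\spo}\subseteq{\geqslant_\gwpo}$ and ${>_\spo}\subseteq{>_\gwpo}$. The third inclusion ${>_\spo}\subseteq{>_\gwpo}$ is exactly the second of these. Intersecting both sides with ${\geqslant_A}\cap{\geqslant_B}$ yields ${\geqslant_\mspo}\subseteq{\geqslant_\mgwpo}$ and ${>_\mspo}\subseteq{>_\mgwpo}$, since intersection with a fixed relation is monotone with respect to inclusion.

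For the second part, assume additionally that $\pi$ is total. \autoref{L:gwpo-spo2} supplies the reverse inclusions ${\geqslant_\gwpo}\subseteq{\geqslant_\spo}$ and ${>_\gwpo}\subseteq{>_\spo}$, so the SPO and GWPO components coincide. Intersecting again with ${\geqslant_A}\cap{\geqslant_B}$ gives equality of the monotonic versions.

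One point needs checking: the lemmas are stated for stable order pairs, and the triples here are reduction triples. This is fine, because their second and third components are stable order pairs by definition, which is all the lemmas use. The simplicity hypothesis is used only through the lemmas.

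The only genuine content is hidden in \autoref{L:gwpo-spo2}, which the paper states without proof. If it had to be supplied, I would replay the proof of \autoref{L:gwpo-spo} by simultaneous induction on $|s|+|t|$ for $\geqslant$ and $>$. The main obstacle there is case~\GWPORP{1}, $s \sqsupset_A t$, which must be simulated in the SPO. The approach splits as follows.
\begin{itemize}
\item If $t$ is a variable, stability forces $t\in\Var(s)$, and totality of $\pi$ lets case~\SPORP{a} descend to $t$.
\item If $s$ is a variable, the situation is contradictory, exactly as in \autoref{L:gwpo-spo}.
\item Otherwise, simplicity together with totality gives $s \sqsupset_A t_j$ for all $j$. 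Then $s \sqsupset_{AB} t$ and the induction hypothesis allow case~\SPORP{b(i)}.
\end{itemize}
Totality is essential precisely at this step, as \autoref{E:non-simulation} shows.
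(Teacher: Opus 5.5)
Your proposal is correct and follows the paper's route: the paper states the theorem as an immediate consequence of \autoref{L:spo-gwpo2} and \autoref{L:gwpo-spo2}, and both monotonic versions arise by intersecting with the same relation ${\geqslant_A} \cap {\geqslant_B}$. Your sketch of \autoref{L:gwpo-spo2} is also sound; in your third bullet, simplicity with respect to $\pi$ already suffices (case~\SPORP{b} only needs $j \in \pi(g)$), so totality is needed only when $s$ or $t$ is a variable, and essentially only for $t \in \VV$, as in \autoref{E:non-simulation}.
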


So, in the setting of \autoref{T:mgwpo-mspo2} the GWPO is preferred over the SPO\@.
In other words, the SPO should be used only if the simplicity requirement is unsatisfied.

\autoref{T:mgwpo-mspo2} does not allow SPOs to simulate GWPOs with partial statuses.
So, it is natural to seek a better construction.
However, we note that any such construction must essentially assume well-foundedness of the order pair $A$,
in contrast to \autoref{T:gwpo-spo} and \autoref{T:mgwpo-mspo2}.
This can be seen from the following example.

\begin{exa}
\label{E:non-simulation2}
Suppose that the signature only contains a unary function symbol $\m{f}$.
Let $\pi$ be the partial status defined by $\pi(\m{f}) = \emptylist$, 
let $\AA$ be the algebra on non-positive integers
$\{ 0, -1, -2, \ldots \}$ defined by $\m{f}_\AA(x) = x - 1$, and 
let $({\sqsupseteq_B}, {\sqsupset_B})$ be an arbitrary pair of relations on terms.
Here, the carrier of $\AA$ is endowed with the ordinary order pair $({\geqslant}, {>})$ on integers,
so $>$ is not well-founded.
On one hand, the GWPO $({\geqslant_\gwpo}, {>_\gwpo})$ induced by $({\geqslant_\AA}, {>_\AA})$
and $({\sqsupseteq_B}, {\sqsupset_B})$ satisfies $x \geqslant_\gwpo \m{f}(x)$ and $x  >_\gwpo \m{f}(x)$ 
by case~\GWPORP{1} and $x >_\AA \m{f}(x)$.
On the other hand, no SPO (regardless of the underlying order pair and partial status)
satisfies $x \geqslant_\spo \m{f}(x)$, let alone $x >_\spo \m{f}(x)$,
as can be seen from the definition.
\end{exa}

We remark that non-well-founded algebras like the one above have been used for non-reachability proofs
with WPOs~\cite{Yam22IJCAR}.

\subsection{Using GWPO with Refinements to the Reduction Pair Processor}
\label{SS:rp-refinements}

There are many refinements to use reduction pairs more effectively in the dependency pair framework.
Among others, here we consider 
\emph{rule removal}, \emph{usable rule} and \emph{lexicographic combination},
providing some necessary facts to use GWPOs with them.

For rule removal~(\autoref{T:rule-removal}), we should use $({\geqslant_\mgwpo}, {>_\mgwpo})$ rather than $({\geqslant_\mgwpo}, {>_\gwpo})$,
because monotonicity is required for the reduction pair.

\begin{thm}
\label{T:gwpo-monotone-normal}
Let $\pi$ be a partial status, and
$({\geqslant_A}, {\sqsupseteq_A}, {\sqsupset_A})$ and $({\geqslant_B}, {\sqsupseteq_B}, {\sqsupset_B})$ reduction triples
with $({\sqsupseteq_A}, {\sqsupset_A})$ simple with respect to $\pi$.
Then $({\geqslant_\mgwpo}, {>_\mgwpo})$ is a normal reduction pair.
If in addition $\pi$ is total, then $({\geqslant_\mgwpo}, {>_\mgwpo})$  is a monotone reduction pair.
\end{thm}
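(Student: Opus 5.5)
Most of the work is already done by \autoref{T:gwpo-reduction-pair}, which says that $({\geqslant_\mgwpo}, {>_\gwpo})$ is a reduction pair. Write $\geqslant_\cap$ for ${\geqslant_A} \cap {\geqslant_B}$, so that ${>_\mgwpo} = {\geqslant_\cap} \cap {>_\gwpo}$ and ${\geqslant_\mgwpo} = {\geqslant_\cap} \cap {\geqslant_\gwpo}$. The plan is to check, one by one, what changes when $>_\gwpo$ is replaced by the smaller relation $>_\mgwpo$.

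\emph{Reduction pair.} Stability of $>_\mgwpo$ holds because it is an intersection of stable relations: $\geqslant_A$ and $\geqslant_B$ are rewrite preorders, and $>_\gwpo$ is stable by \autoref{T:gwpo-reduction-pair}. Well-foundedness and irreflexivity follow from ${>_\mgwpo} \subseteq {>_\gwpo}$. Transitivity holds since both $\geqslant_\cap$ and $>_\gwpo$ are transitive. For compatibility, suppose $s \geqslant_\mgwpo t >_\mgwpo u$. Then $s \geqslant_\cap u$ by transitivity of the preorders, and $s >_\gwpo u$ by compatibility of $({\geqslant_\mgwpo}, {>_\gwpo})$. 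The symmetric composition is handled in the same way. The quasi-order $\geqslant_\mgwpo$ is unchanged, so its monotonicity and stability are inherited from \autoref{T:gwpo-reduction-pair}.

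\emph{Normality.} We need ${>_\mgwpo} \subseteq {\geqslant_\mgwpo}$, which reduces to ${>_\gwpo} \subseteq {\geqslant_\gwpo}$. I would prove this by induction on $|s| + |t|$, comparing the case lists of \autoref{D:gwpo-reduction-pair}.
\begin{itemize}
\item Case~\GWPORP{1} is shared by both relations.
\item Case~\GWPORP{2a} uses $s_i \geqslant_\gwpo t$ in both relations, so it transfers verbatim.
\item Case~\GWPORP{2b(i)} is identical for both relations.
\item Case~\GWPORP{2b(ii)} for $>_\gwpo$ requires $>_\gwpo^\lex$, whereas for $\geqslant_\gwpo$ it requires $\geqslant_\gwpo^\lex$. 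So we need ${>_\gwpo^\lex} \subseteq {\geqslant_\gwpo^\lex}$ on the argument lists. From \autoref{D:lex-ext}, this follows from ${>_\gwpo} \subseteq {\geqslant_\gwpo}$ on the arguments, which is supplied by the induction hypothesis, since the arguments are smaller.
\end{itemize}
This inclusion may already be part of the appendix proof of \autoref{T:gwpo-reduction-pair}, in which case it can simply be cited.

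\emph{Monotonicity when $\pi$ is total.} Let $s >_\mgwpo t$ and $C = f(\bar u, \hole, \bar v)$, with the hole at position $i$. Then $C[s] \geqslant_\cap C[t]$ by monotonicity of $\geqslant_A$ and $\geqslant_B$. The harmony property gives $C[s] \sqsupseteq_A C[t]$ and $C[s] \sqsupseteq_B C[t]$. To obtain $C[s] >_\gwpo C[t]$ we apply case~\GWPORP{2b(ii)}.
\begin{itemize}
\item The argument lists compare as $(\bar u, s, \bar v) >_\gwpo^\lex (\bar u, t, \bar v)$: the positions before $i$ are compared by reflexivity of $\geqslant_\gwpo$, and position $i$ gives $s >_\gwpo t$.
\item We also need $C[s] >_\gwpo$ each argument of $C[t]$, and totality of $\pi$ is what makes this available. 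For $u_k$ we use $u_k \geqslant_\gwpo u_k$ together with the simplicity of $A$: $C[s] \sqsupseteq_A u_k$ or $C[s] \sqsupset_A u_k$, giving case~\GWPORP{2a} or case~\GWPORP{1}.
\item For $t$ we use $s >_\gwpo t$ together with a subterm-type step, $C[s] >_\gwpo s$. The step from $C[s] >_\gwpo s$ to $C[s] >_\gwpo t$ needs transitivity of $>_\gwpo$; transitivity of $>_\gwpo$ and reflexivity of $\geqslant_\gwpo$ come from \autoref{T:gwpo-reduction-pair}.
\end{itemize}

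I expect the main obstacle to be exactly this subterm-type step $C[s] >_\gwpo s$ under a merely simple $A$. I would first try to prove it from $s \geqslant_\gwpo s$ via case~\GWPORP{2a}. That route still requires $C[s] \sqsupseteq_A s$ in the case where only $C[s] \sqsupset_A s$ holds by simplicity, and in that case case~\GWPORP{1} settles it directly. Combined with compatibility, this closes the argument.
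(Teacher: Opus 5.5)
Your proof is correct and follows the paper's route. Normality reduces to ${>_\gwpo} \subseteq {\geqslant_\gwpo}$, proved by induction on $|s|+|t|$ (this is \autoref{L:gwpo-normal}); monotonicity replays the proof of \autoref{T:mspo} via case~\GWPORP{2b(ii)}, using reflexivity and the $\pi$-subterm step of \autoref{L:gwpo-refl-pi-subterm}; and the remaining reduction-pair properties are inherited from \autoref{T:gwpo-reduction-pair}. A minor simplification: ${>_\gwpo^\lex} \subseteq {\geqslant_\gwpo^\lex}$ already follows from \autoref{D:lex-ext} by induction on list length alone, since the clause $a_1 > b_1$ appears in both definitions, so the induction hypothesis is not needed in case~\GWPORP{2b(ii)}.
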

\begin{proof}
The normality is a consequence of ${>_\gwpo} \subseteq {\geqslant_\gwpo}$, 
which is verified by straightforward induction.
The monotonicity is shown in a similar way to the proof of \autoref{T:mspo}.
The other properties are consequences of \autoref{T:gwpo-reduction-pair}.
\end{proof}

However, while exploring the experimental configurations for \autoref{SS:experiment-reduction-pair}
we found no case where rule removal with $({\geqslant_\mgwpo}, {>_\mgwpo})$ was useful.
Rather, we find that \autoref{T:gwpo-monotone-normal} is useful to construct a normal reduction pair,
which is required for lexicographic combination (to be explained soon).
The next example shows that the reduction pairs $({\geqslant_\mgwpo}, {>_\gwpo})$ and $({\geqslant_\mspo}, {>_\spo})$
are not always normal.

\begin{exa}
\label{E:gwpo-normal}
Let $\AA$ be the linear polynomial interpretation with
$\m{f}_\AA(x) = 0$ and $\m{a}_\AA = 1$.
The SPO $({\geqslant_\mspo}, {>_\spo})$ induced by the total status
$\pi$ with $\pi(\m{f}) = [1]$ and the reduction triple
$({\geqslant_\AA}, {\geqslant_\AA}, {>_\AA})$
is not normal.  In fact, it satisfies $\m{f}(\m{a}) >_\spo \m{a}$ but
$\m{f}(\m{a}) \ngeqslant_\mspo \m{a}$ due to
$\m{f}(\m{a}) \ngeqslant_\AA \m{a}$.
\end{exa}

\begin{rem}
\label{R:ro-rp}
As a side remark, we relate the reduction-pair version and the reduction-order version:
Let $>$ be the reduction-order version of the MGWPO (\autoref{D:mgwpo})
induced by reduction triples $A$ and $B$,
and let $({\geqslant_\mgwpo}, {>_\mgwpo})$ be the reduction-pair version of the MGWPO (\autoref{D:mgwpo-reduction-pair})
induced by the total status, $A$ and $B$.
Then by easy simultaneous induction one can show that ${\geqslant} \subseteq {\geqslant_\mgwpo}$
and ${>} \subseteq {>_\mgwpo}$, where $\geqslant$ is the reflexive closure of $>$.
For termination proving, the greater relations are the more profitable,
so we should use the reduction-pair version.
\end{rem}

The usable rule technique~\cite{Urb04,GTS05LPAR,GTS05FroCoS,GTS+06,HM07}
is a popular refinement to \autoref{T:rp} which relaxes the requirement $\RR \subseteq {\geqslant}$
to $\UU \subseteq {\geqslant}$ for a possibly smaller set $\UU \subseteq \RR$ of usable rules;
see also \cite[Section~4]{ST10} for the most general version.
The technique demands the reduction pair $({\geqslant}, {>})$ to be \emph{$\Ce$-compatible},
meaning that 
the reduction pair $({\geqslant}, {>})$ can be extended to another one $({\geqslant'}, {>'})$
with $\Ce \subseteq {\geqslant'}$.
Here $\Ce = \{ \m{c}(x, y) \to x, \m{c}(x, y) \to y \}$ and $\m{c}$ is a fresh binary symbol.
Like usual reduction pairs, the GWPO satisfies $\Ce$-compatibility almost for free,
as we show in what follows.

We say that a reduction triple $({\geqslant}, {\sqsupseteq}, {\sqsupset})$ is \emph{$\Ce$-compatible}
if it can be extended to another one $({\geqslant'}, {\sqsupseteq'}, {\sqsupset'})$ with $\Ce \subseteq {\geqslant'}$.
We call it \emph{strongly $\Ce$-compatible}
if such an extension can be chosen to satisfy both $\Ce \subseteq {\geqslant'}$ and $\Ce \subseteq {\sqsupseteq'}$.
Let $({\geqslant_A}, {\sqsupseteq_A}, {\sqsupset_A})$ and $({\geqslant_B}, {\sqsupseteq_B}, {\sqsupset_B})$
be reduction triples such that $({\sqsupseteq_A}, {\sqsupset_A})$ is simple with respect to a partial status $\pi$.

\begin{prop}
\label{P:gwpo-ce-compat}
If $A$ is strongly  $\Ce$-compatible and $B$ is $\Ce$-compatible,
then $({\geqslant_\mgwpo}, {>_\mgwpo})$ and $({\geqslant_\mgwpo}, {>_\gwpo})$ induced by $A$, $B$ and $\pi$ are $\Ce$-compatible reduction pairs.
\end{prop}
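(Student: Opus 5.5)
The plan is to build the extended reduction pair directly as a GWPO over extended triples. By strong $\Ce$-compatibility of $A$, I fix a reduction triple $A' = ({\geqslant_{A'}}, {\sqsupseteq_{A'}}, {\sqsupset_{A'}})$ extending $A$ with $\Ce \subseteq {\geqslant_{A'}}$ and $\Ce \subseteq {\sqsupseteq_{A'}}$. By $\Ce$-compatibility of $B$, I fix a reduction triple $B'$ extending $B$ with $\Ce \subseteq {\geqslant_{B'}}$. Both triples live on terms over the signature extended with the fresh symbol $\m{c}$. I extend $\pi$ to $\pi'$ by $\pi'(\m{c}) = [1,2]$ and $\pi'(f) = \pi(f)$ otherwise. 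Let $({\geqslant'_\mgwpo}, {>'_\mgwpo})$ and ${>'_\gwpo}$ be the orders induced by $\pi'$, $A'$ and $B'$. The candidates for the required extensions are $({\geqslant'_\mgwpo}, {>'_\mgwpo})$ and $({\geqslant'_\mgwpo}, {>'_\gwpo})$.

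First, these are reduction pairs. This needs $A'$ to be simple with respect to $\pi'$, after which \autoref{T:gwpo-reduction-pair} and \autoref{T:gwpo-monotone-normal} apply.
\begin{itemize}
\item For $f \neq \m{c}$ and $i \in \pi(f)$, simplicity of $A$ gives $f(\bar t) \sqsupseteq_A t_i$ or $f(\bar t) \sqsupset_A t_i$, and this transfers to $A'$ because $A'$ extends $A$. One caveat: simplicity of $A$ is stated for terms of the original signature, while the extended statement needs it for terms possibly containing $\m{c}$. I will handle this by reading simplicity of $A$ as stated over the ambient term set, or by noting it is inherited because the extension is taken on the larger signature from the start. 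I flag this as a point to check.
\item For $\m{c}$, we have $\m{c}(t_1,t_2) \sqsupseteq_{A'} t_i$ by $\Ce \subseteq {\sqsupseteq_{A'}}$ and stability of $\sqsupseteq_{A'}$. This is exactly why \emph{strong} compatibility is demanded for $A$.
\end{itemize}

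Second, these pairs extend the original ones. The claim is ${\geqslant_\gwpo} \subseteq {\geqslant'_\gwpo}$ and ${>_\gwpo} \subseteq {>'_\gwpo}$. I prove this by simultaneous induction on $|s|+|t|$, following the case structure of \autoref{D:gwpo-reduction-pair}.
\begin{itemize}
\item Each case of the original derivation is matched by the same case for the primed order. This uses $\sqsupset_A \subseteq \sqsupset_{A'}$, $\sqsupseteq_A \subseteq \sqsupseteq_{A'}$, and similarly for $B$.
\item $\pi'$ agrees with $\pi$ on original symbols.
\item Monotonicity of lexicographic extension with respect to inclusion gives the case 2b(ii) step.
\end{itemize}
Intersecting with ${\geqslant_A} \subseteq {\geqslant_{A'}}$ and ${\geqslant_B} \subseteq {\geqslant_{B'}}$ then yields ${\geqslant_\mgwpo} \subseteq {\geqslant'_\mgwpo}$ and ${>_\mgwpo} \subseteq {>'_\mgwpo}$.

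Third, $\Ce \subseteq {\geqslant'_\mgwpo}$. We already have $\m{c}(x,y) \geqslant_{A'} x$ and $\m{c}(x,y) \geqslant_{B'} x$. For the GWPO component, $\m{c}(x,y) \sqsupseteq_{A'} x$ holds, $1 \in \pi'(\m{c})$, and $x \geqslant'_\gwpo x$ by case~\GWPORP{2c}. So case~\GWPORP{2a} gives $\m{c}(x,y) \geqslant'_\gwpo x$, and symmetrically for $y$.

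I expect no deep obstacle. The delicate points are making sure simplicity of $A'$ with respect to $\pi'$ holds on the enlarged term set, and carrying out the inclusion induction cleanly for the mutually recursive pair.
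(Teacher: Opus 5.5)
Your proposal is correct and follows the paper's proof: extend $A$, $B$ and $\pi$ (with $\pi'(\m{c}) = [1,2]$), check $\pi'$-simplicity of $A'$, invoke \autoref{T:gwpo-reduction-pair} and \autoref{T:gwpo-monotone-normal}, and obtain $\Ce \subseteq {\geqslant'_\mgwpo}$ via case~\GWPORP{2a}. You also spell out the inclusion of the original orders in the primed ones, which the paper leaves implicit. The caveat you flag is closed by the same stability argument you already use for $\m{c}$: simplicity of $A$ at $f(x_1,\ldots,x_n)$ with distinct variables transfers to $A'$, and instantiating those variables gives simplicity of $A'$ for arbitrary terms over the extended signature.
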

\begin{proof}
Using the assumption, we can 
extend $A$ and $B$ to $A'$ and $B'$, respectively, so that the inclusion
$\Ce \subseteq {{\geqslant_{A'}} \cap {\geqslant_{B'}} \cap {\sqsupseteq_{A'}}}$ holds.
We can also
extend $\pi$ to $\pi'$ with $\pi'(\m{c}) = [1, 2]$.
The relation $\geqslant'_\mgwpo$ induced by $A'$, $B'$ and $\pi'$ satisfies $\Ce \subseteq {\geqslant'_\mgwpo}$.
Here, $\Ce \subseteq {\geqslant'_\gwpo}$ follows from case~\GWPORP{2a} using $\Ce \subseteq {\sqsupseteq_{A'}}$.
Moreover, $\pi'$-simplicity for $A'$ is satisfied since $\Ce \subseteq {\sqsupseteq_{A'}}$ and $({\sqsupseteq_A}, {\sqsupset_A})$ is $\pi$-simple.
Therefore, \autoref{T:gwpo-reduction-pair} and \autoref{T:gwpo-monotone-normal} can be applied.
\end{proof}

The assumption of the proposition above is rather mild in practice, because usual reduction triples are $\Ce$-compatible:
the trivial reduction triple and those induced by precedences (\autoref{P:prec-redtriple})
are $\Ce$-compat\-ible;
usual interpretations (like polynomial interpretations, max/plus interpretations, matrix interpretation~\cite{EWZ08}, and so on)
can be used to construct $\Ce$-compatible reduction triples via \autoref{P:algebra-redtriple}, and
strongly $\Ce$-compatible ones via \autoref{P:redpair-redtriple}; and 
lexicographic combination (\autoref{P:lex-reduction-triple}) preserves $\Ce$-compatibility.
Moreover, the trivial reduction triple is strongly $\Ce$-compatible,
so \autoref{P:gwpo-ce-compat} can be applied to the SPO if it is induced by a $\Ce$-compatible reduction triple.
However, $\Ce$-compatibility is not guaranteed for the model-based construction of \autoref{P:model-redtriple}.
Note that $\Ce$ admits only trivial models~\cite{HM07}.

Continuing with the usable rule technique,
we consider \emph{$\pi$-compatibility} of reduction pairs,
which is relevant to the refinement called usable rule w.r.t.\ argument filtering~\cite{GTS+06}.
Let us recall the formal definition:
The $i$-th argument position of a function symbol $f$ is \emph{invariant} w.r.t.\ a relation $\leadsto$
if $C[s] \leadsto C[t]$ for all terms $s$ and $t$ and contexts $C = f(\bar{u}, \hole, \bar{v})$ with the hole $\hole$
at the $i$-th argument of $f$~\cite{SH25}. 
A reduction pair $({\geqslant}, {>})$ is $\pi$-compatible for a partial status $\pi$
if the $i$-th argument position of $f$ is invariant 
w.r.t.\ $\geqslant$
whenever $i \notin \pi(f)$.
So, below we give a sufficient condition of invariance 
w.r.t.\ $\geqslant_\mgwpo$.
For convenience, we extend the notion of invariance to reduction triples:
The $i$-th argument position of a function symbol $f$ is invariant w.r.t.\ a reduction triple $({\geqslant}, {\sqsupseteq}, {\sqsupset})$ if 
it is so with respect to $\geqslant$ and $\sqsupseteq$.

\begin{prop}
\label{P:gwpo-invariance}
Let $\geqslant_\mgwpo$ be the MGWPO induced by a partial status $\pi$ and
reduction triples $({\geqslant_A}, {\sqsupseteq_A}, {\sqsupset_A})$ and $({\geqslant_B}, {\sqsupseteq_B}, {\sqsupset_B})$.
Suppose that $({\sqsupseteq_A}, {\sqsupset_A})$ is simple w.r.t.\ $\pi$.
The $i$-th argument position of a function symbol $f$
is invariant w.r.t.\ $\geqslant_\mgwpo$ if it is so w.r.t.\ $A$ and $B$,
and also $i \notin \pi(f)$.
\end{prop}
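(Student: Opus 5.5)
We must show: for $C = f(\bar{u}, \hole, \bar{v})$ with the hole at position $i \notin \pi(f)$, and arbitrary terms $s, t$, we have $C[s] \geqslant_\mgwpo C[t]$. Recall that ${\geqslant_\mgwpo} = {\geqslant_A} \cap {\geqslant_B} \cap {\geqslant_\gwpo}$. Invariance of position $i$ of $f$ with respect to $A$ and $B$ means that $C[s] \geqslant_A C[t]$ and $C[s] \geqslant_B C[t]$, and also $C[s] \sqsupseteq_A C[t]$ and $C[s] \sqsupseteq_B C[t]$. The first two memberships are therefore immediate, and the whole task reduces to proving $C[s] \geqslant_\gwpo C[t]$ directly from \autoref{D:gwpo-reduction-pair}.

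Write $S = C[s] = f(s_1,\ldots,s_m)$ and $T = C[t] = f(t_1,\ldots,t_m)$, where $s_k = t_k$ for $k \neq i$. We have $S \sqsupseteq_A T$, so we aim for case~\GWPORP{2b(ii)}. This case needs three things.
\begin{itemize}
\item $S \sqsupseteq_B T$, which holds by invariance.
\item $\pi(f)(s_1,\ldots,s_m) \geqslant_\gwpo^\lex \pi(f)(t_1,\ldots,t_m)$. Since $i \notin \pi(f)$, the two filtered lists are literally identical. So this follows once $\geqslant_\gwpo$ is reflexive, because then $\geqslant_\gwpo^\lex$ is reflexive as well (by induction on the list length, using the third clause of \autoref{D:lex-ext}).
\item $S >_\gwpo t_j$ for all $j \in \pi(f)$. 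For such $j$ we have $j \neq i$, so $t_j = s_j$, and we need $S >_\gwpo s_j$.
\end{itemize}

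Two auxiliary facts are therefore needed. The first is reflexivity of $\geqslant_\gwpo$, which I would prove by induction on the term. A variable is handled by case~\GWPORP{2c}. For $u = g(\bar{u})$, reflexivity of $\sqsupseteq_A$ and $\sqsupseteq_B$ gives the side conditions, the lexicographic part follows from the induction hypothesis, and the remaining obligations $u >_\gwpo u_j$ for $j \in \pi(g)$ come from the second fact. The second fact is $g(\bar{u}) >_\gwpo u_j$ for all $j \in \pi(g)$. Given reflexivity at the smaller term $u_j$, this follows from case~\GWPORP{2a}, whose side condition $g(\bar{u}) \sqsupseteq_A u_j$ we must supply. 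Simplicity of $A$ with respect to $\pi$ gives either $\sqsupseteq_A$, in which case \GWPORP{2a} applies with $u_j \geqslant_\gwpo u_j$, or $\sqsupset_A$, in which case case~\GWPORP{1} gives the strict relation directly.

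To avoid circularity, I would prove both facts simultaneously by induction on the size of the term: the strict subterm fact at $u$ uses reflexivity at $u_j$, and reflexivity at $u$ uses the strict subterm fact at $u$. This mirrors the paper's remark that these properties are replayed from the WPO proofs (\autoref{S:omitted}); if those lemmas are already available there, I would simply cite them. Applying the two facts to $S$ and $T$ then discharges all three requirements above and yields $S \geqslant_\gwpo T$.

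The main obstacle is just organizing this mutual induction correctly. The invariance step itself is essentially immediate, because positions outside $\pi(f)$ never appear in the recursive comparisons.
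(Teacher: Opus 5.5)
Your proof is correct and matches the paper's argument: invariance with respect to $A$ and $B$ gives $C[s] \geqslant_A C[t]$, $C[s] \geqslant_B C[t]$, $C[s] \sqsupseteq_A C[t]$ and $C[s] \sqsupseteq_B C[t]$, and $C[s] \geqslant_\gwpo C[t]$ then follows by case~\GWPORP{2b(ii)}. The two auxiliary facts you prove by simultaneous induction on term size are exactly \autoref{L:gwpo-refl-pi-subterm} from the appendix, which the paper's one-line proof uses without mentioning it, so you could simply cite that lemma.
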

\begin{proof}
Let $s$ and $t$ be terms and $C = f(\bar{u}, \hole, \bar{v})$ a context with the hole $\hole$
at the $i$-th argument of $f$.
By assumption, we have $C[s] \geqslant_A C[t]$,
$C[s] \geqslant_B C[t]$ and $C[s] \geqslant_\gwpo C[t]$ by case~\GWPORP{2b(ii)}.
So $C[s] \geqslant_\mgwpo C[t]$.
\end{proof}

We can use the criterion above as follows:
Given a partial status $\pi$ and reduction triples $A$ and $B$
with $({\sqsupseteq_A}, {\sqsupset_A})$ simple w.r.t.\ $\pi$,
we define the partial statuses $\rho$ by
\[
\rho(f) = \{ i \notin \pi(f) \mid \text{the $i$-th argument position of $f$ is invariant w.r.t.\ $A$ and $B$} \}
\]
for each function symbol $f$.
Also, we define $\bar{\rho}(f) = \{ 1, \ldots, n \} \setminus \rho(f)$ for each $n$-ary function symbol $f$.
Then $({\geqslant_\mgwpo}, {>_\gwpo})$ and $({\geqslant_\mgwpo}, {>_\mgwpo})$ induced by $\pi$, $A$ and $B$ are 
$\bar{\rho}$-compatible.

The remaining problem for using \autoref{P:gwpo-invariance} is
how to estimate invariant positions for the underlying reduction triples $A$ and $B$.
For those induced by algebras, this is done by adapting how to do it for reduction pairs,
see for example \cite{SH25}.
For the trivial reduction triple and those induced by precedences,
all argument positions are invariant.
Also, lexicographic combination (\autoref{P:lex-reduction-triple}) preserves invariance:
if an argument position is invariant w.r.t.\ reduction triples $A$ and $B$,
then it is so w.r.t.\ the lexicographic combination of $A$ and $B$.

Lexicographic combination of reduction pairs~\cite{SH25} is another technique to which invariance is relevant;
this can be done by \autoref{P:gwpo-invariance}.
Also, the technique demands normality of all reduction pairs but the last of the combination;
this can be done by using $({\geqslant_\mgwpo}, {>_\mgwpo})$ with \autoref{T:gwpo-monotone-normal}.
Lastly, it demands us to identify \emph{monotone} positions of the strict order $>_\mgwpo$:
Formally, the $i$-th argument position of a function symbol $f$ is monotone w.r.t.\ a relation $\leadsto$
if $f(t_1, \ldots, t_i, \ldots, t_n) \leadsto f(t_1, \ldots, t'_i, \ldots, t_n)$
for all terms $t_1, \ldots, t_n$ and $t'_i$ with $t_i \leadsto t'_i$.
We have a proposition for monotone positions in the style of \autoref{P:gwpo-invariance}.

\begin{prop}
\label{P:gwpo-mono-pos}
Let $>_\mgwpo$ be the MGWPO induced by a partial status $\pi$ and reduction triples 
$({\geqslant_A}, {\sqsupseteq_A}, {\sqsupset_A})$ and $({\geqslant_B}, {\sqsupseteq_B}, {\sqsupset_B})$.
Suppose that $({\sqsupseteq_A}, {\sqsupset_A})$ is simple w.r.t.\ $\pi$.
The $i$-th argument position of a function symbol $f$
is monotone w.r.t.\ $>_\mgwpo$ if $i \in \pi(f)$. \qed
\end{prop}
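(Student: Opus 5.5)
Fix $f^{(n)}$, $i \in \pi(f)$, and terms $t_1,\ldots,t_n$, $t'_i$ with $t_i >_\mgwpo t'_i$. Write $s = f(t_1,\ldots,t_i,\ldots,t_n)$ and $s' = f(t_1,\ldots,t'_i,\ldots,t_n)$. By \autoref{D:mgwpo-reduction-pair} we have $t_i \geqslant_A t'_i$, $t_i \geqslant_B t'_i$ and $t_i >_\gwpo t'_i$, and we must show $s \geqslant_A s'$, $s \geqslant_B s'$ and $s >_\gwpo s'$. The first two conditions come directly from monotonicity of the rewrite preorders $\geqslant_A$ and $\geqslant_B$, which are parts of reduction triples. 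So the real work is the $>_\gwpo$ decrease. As in the proof of \autoref{T:mspo}, I would derive it via case~\GWPORP{2b(ii)}, after first establishing $s \sqsupseteq_A s'$, which case~\GWPORP{2} requires.

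The preconditions for the GWPO clauses come from harmony. Since $\sqsupseteq_A$ is harmonious with respect to $\geqslant_A$ and $t_i \geqslant_A t'_i$, we get $s \sqsupseteq_A s'$. Likewise, harmony of $\sqsupseteq_B$ with respect to $\geqslant_B$ gives $s \sqsupseteq_B s'$.

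It remains to check the argument conditions of case~\GWPORP{2b}. The argument list $\pi(f)(t_1,\ldots,t_n)$ and $\pi(f)(t_1,\ldots,t'_i,\ldots,t_n)$ agree except at position $i$, which belongs to $\pi(f)$. We therefore need reflexivity of $\geqslant_\gwpo$ on the prefix, which I would prove by a routine induction using cases~\GWPORP{2b(ii)} and \GWPORP{2c} together with reflexivity of $\sqsupseteq_A$ and $\sqsupseteq_B$. Then $t_i >_\gwpo t'_i$ yields the lexicographic strict decrease $>_\gwpo^\lex$.

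The main obstacle is showing $s >_\gwpo u$ for every argument $u$ of $s'$ at a position $j \in \pi(f)$. For $j \neq i$ we have $u = t_j$. I would show $s >_\gwpo t_j$ via case~\GWPORP{2a} from $t_j \geqslant_\gwpo t_j$. This needs either $s \sqsupseteq_A t_j$ or $s \sqsupset_A t_j$; the first case gives~\GWPORP{2a} and the second gives case~\GWPORP{1}, and simplicity of $A$ with respect to $\pi$ supplies exactly one of these. For $j = i$ we have $u = t'_i$, and I would apply case~\GWPORP{2a} with the argument $t_i$ and $t_i \geqslant_\gwpo t'_i$. Here I need the normality fact ${>_\gwpo} \subseteq {\geqslant_\gwpo}$, noted in the proof of \autoref{T:gwpo-monotone-normal}. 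I also need $s \sqsupseteq_A t'_i$ or $s \sqsupset_A t'_i$. For this, $t_i >_\gwpo t'_i$ gives $t_i \sqsupseteq_A t'_i$ or $t_i \sqsupset_A t'_i$, since every GWPO clause imposes one of these. Composing with the simplicity step from $s$ to $t_i$, using transitivity and compatibility of the order pair $A$, yields the required relation.

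With all premises of case~\GWPORP{2b(ii)} in place, we get $s >_\gwpo s'$, and hence $s >_\mgwpo s'$.
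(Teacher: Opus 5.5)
Your proof is correct and follows the route the paper intends (cf.\ \autoref{L:gwpo-monotone}, which replays \autoref{T:mspo} using \autoref{L:gwpo-refl-pi-subterm}): monotonicity of $\geqslant_A,\geqslant_B$, harmony for $\sqsupseteq_A,\sqsupseteq_B$, then case~\GWPORP{2b(ii)} with reflexivity of $\geqslant_\gwpo$ and the $\pi$-subterm property. Your treatment of the new argument $t'_i$ via case~\GWPORP{2a} and composition in $A$ is a harmless substitute for transitivity of $>_\gwpo$; just note that your reflexivity induction must also establish $t >_\gwpo t_j$ for $j \in \pi(f)$ in each step, using the simplicity argument you give later, exactly as in \autoref{L:gwpo-refl-pi-subterm}.
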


\subsection{Evaluation with Benchmark Problems}
\label{SS:experiment-reduction-pair}

In order to evaluate GWPOs as reduction pairs, we implemented them
in the prototype termination tool \Mint\footnote{%
The tool and experimental data are available at: \url{https://www.jaist.ac.jp/project/mint/}}
based on the dependency pair framework.
Below we explain the experimental setting, and then discuss the results.

To describe the configuration of reduction pairs in \autoref{tbl:experiment}, we use the notation of \autoref{tbl:notation}.
We additionally put the following assumptions on interpretations.
\begin{itemize}
\item
Linear polynomial interpretations $\AA$ over $\NN$ interpret each function
symbol $f^{(n)}$ as
\[
f_\AA(x_1, \ldots, x_n) = c_0 + c_1 x_1 + \cdots + c_n x_n
\]
where $c_0 \in \NN$ and $c_i \in \{ 0, 1\}$ for all $1 \leqslant i \leqslant n$.
\item
Max/plus interpretations $\AA$ interpret each function symbol $f^{(n)}$ as
\[
f_\AA(x_1, \ldots, x_n) = \max \{ d_0, c_1 x_1 + d_1, \ldots, c_n x_n + d_n \}
\]
where $d_0 \in \NN$, $d_1, \ldots, d_n \in \ZZ$ and $c_1, \ldots, c_n  \in \{ 0, 1 \}$ for all $1 \leqslant i \leqslant n$
\end{itemize}
The restrictions above on coefficients $c_i$ allow us to keep the search for
parameters decidable, as explained in due course.
In contrast, arbitrary interpretations are allowed for $\Finite^\sharp$.
Moreover, by abuse of the notation, $\LinPol$ may denote not only the class of algebras, but
the class of reduction triples $({\geqslant_\AA}, {\geqslant_\AA}, {>_\AA})$
constructed from linear polynomial interpretations $\AA$ (via \autoref{P:redpair-redtriple});
the same goes for $\MaxPlus$.
In this setting, 
$\GWPOtt(\LinPol, \LinPol^\sharp)$ and $\GWPOtt(\LinPol, \MaxPlus^\sharp)$  generalize $\WPOtt(\LinPol)$ as classes of reduction pairs,
and also $\GWPOtt(\MaxPlus, \MaxPlus^\sharp)$ generalizes $\WPOtt(\MaxPlus)$.
These are consequences of \autoref{P:GWPOAA-generalizes-WPOA} and
\autoref{P:GWPOAM-generalizes-WPOA}.

\begin{table}[t]
  \centering
  \caption{Notation for reduction-pair configurations used in the experiment.}
  \label{tbl:notation}
  \begin{tabular}{lll}
    \toprule
    construct & reads as \\
    \midrule
    $\Trivial$ & trivial reduction triple $({\TT \times \TT}, {\TT \times \TT}, {\emptyset})$ \\
    $\Prec$ & precedence as reduction triple (\autoref{P:prec-redtriple}) \\
    $\LinPol$ & linear polynomial interpretation over $\NN$ (as algebra or reduction triple) \\
    $\LinPol^\sharp$ & linear polynomial interpretation over $\NN$ with \autoref{P:algebra-redtriple} \\
    $\MaxPlus$ & max/plus interpretation (as algebra or reduction triple) \\
    $\MaxPlus^\sharp$ & max/plus interpretation with \autoref{P:algebra-redtriple} \\
    $\Finite^\sharp$ & finite algebra over $\{ 0, 1 \}$ with \autoref{P:model-redtriple} \\
    $A \times B$ & lexicographic combination of reduction triples $A$ and $B$ (\autoref{P:lex-reduction-triple}) \\
    $\WPOtt(\AA)$ & WPO induced by algebra $\AA$  (\autoref{T:wpo-reduction-pair}), equals $\GWPOtt(\AA, \Prec)$  \\
    $\GWPOtt(A, B)$ & GWPO induced by reduction triples $A$ and $B$ (\autoref{T:gwpo-reduction-pair}) \\
    $\SPOtt(A)$ & SPO induced by reduction triple $A$ (\autoref{C:mspo-reduction-pair}), equals $\GWPOtt(\Trivial, A)$  \\
    \bottomrule
  \end{tabular}
\end{table}

Our termination tool \Mint is based on the standard techniques in the dependency pair framework,
namely, recursive cycle analysis based on strongly connected components of dependency graphs~\cite{AG00,GAO02,HM05}
and reduction pairs (\autoref{T:rp}).
In addition, we use the usable rule technique (explained in \autoref{SS:rp-refinements})
with tcap~\cite{GTS05FroCoS} and the refinement called usable rule w.r.t.\ argument filter~\cite{GTS+06}.
To handle applicative term rewrite systems more effectively, the tool also uses the uncurrying technique~\cite{HMZ13}.
The search for suitable reduction pairs is performed by encoding the search problem in the language of satisfiability modulo theory (SMT) solvers,
using the standard techniques~\cite{ZHM09,YKS15}.
Our restrictions on coefficients of linear polynomial or max/plus interpretations to $\{ 0, 1 \}$ allow
us to encode it into quantifier-free linear integer arithmetic with if-then-else expressions,
which is known to be decidable.
As discussed at the end of \autoref{SS:rp-refinements},
$\Ce$-compatibility for usable rules is guaranteed for free, except for the case of using finite algebras ($\Finite^\sharp$).
In this case, the reduction pair processor is applied without the usable rule technique.

The problems for the experiment consist of 1542 finite TRSs 
included below the directory \texttt{TRS\_Standard}
of version 11.6 of the Termination Problem Database (TPDB)~\cite{TPDB},
including the secret problems for the Termination Competition 2026.
The termination tool was run on 
a machine equipped with an Intel Core i7-11850H CPU (8 cores, 16 threads, 2.50--4.80 GHz)
and 16 GB RAM, and used Z3 version 4.13.0~\cite{dMB08}
as an SMT solver for constraint solving.
A one-minute timeout was set for each run of a configuration on a termination problem.
So, if a run neither proved (non-)termination nor timed out,
it means that it terminated after exhausting the search space,
or aborted due to some errors such as memory exhaustion.
Some of the problems
contain ill-formed rewrite rules like $x \to \m{f}(y)$.
Our tool reports them as non-terminating TRSs, as other tools do.

Let us explain how to read the result table (\autoref{tbl:experiment}) with an example:
the configuration $\SPOtt(\Finite^\sharp)$
proved termination of 246 TRSs in the problem set,
aborted on 11 problems (actually due to memory exhaustion), and 
reached the time limit for 19 problems. The total runtime on the whole problem set was 60 minutes.

\begin{table}[t]
  \centering
  \caption{Experimental results on 1542 TRSs from TPDB.}
  \label{tbl:experiment}
  \begin{tabular}{lccccr}
    \toprule
    configuration & proofs & timeouts & errors & total (min) & note \\
    \midrule
    $\WPOtt(\LinPol)$ & \emph{578} & 22 & 0 & 50 & equals $\GWPOtt(\LinPol, \Prec)$ \\
    $\WPOtt(\MaxPlus)$ & \emph{616} & 55 & 0 & 101 & equals $\GWPOtt(\MaxPlus, \Prec)$ \\
    $\GWPOtt(\LinPol, \LinPol^\sharp)$ & \emph{607} & 20 & 0 & 44 & generalizes $\WPOtt(\LinPol)$ \\
    $\GWPOtt(\LinPol, \MaxPlus^\sharp)$ & \emph{704} & 45 & 0 & 100 & generalizes $\WPOtt(\LinPol)$ \\
    $\GWPOtt(\MaxPlus, \LinPol^\sharp)$ & \emph{703} & 39 & 0 & 81 &  \\
    $\GWPOtt(\MaxPlus, \MaxPlus^\sharp)$ & \emph{639} & 61 & 0 & 127 & generalizes $\WPOtt(\MaxPlus)$ \\
    $\SPOtt(\LinPol^\sharp \times \LinPol^\sharp)$ & \emph{595} & 22 & 0 & 51 & equals $\GWPOtt(\Trivial, \LinPol^\sharp \times \LinPol^\sharp)$ \\
    $\SPOtt(\LinPol^\sharp \times \MaxPlus^\sharp)$ & \emph{700} & 55 & 0 & 110 & equals $\GWPOtt(\Trivial, \LinPol^\sharp \times \MaxPlus^\sharp)$ \\
    $\SPOtt(\MaxPlus^\sharp \times \LinPol^\sharp)$ & \emph{700} & 60 & 0 & 129 & equals $\GWPOtt(\Trivial, \MaxPlus^\sharp \times \LinPol^\sharp)$ \\
    $\SPOtt(\MaxPlus^\sharp \times \MaxPlus^\sharp)$ & \emph{640} & 79 & 0 & 167 & equals $\GWPOtt(\Trivial, \MaxPlus^\sharp \times \MaxPlus^\sharp)$ \\
    $\SPOtt(\Finite^\sharp)$ & \emph{246} & 19 & 11 & 60 & equals $\GWPOtt(\Trivial, \Finite^\sharp)$ \\
    \midrule
    \AProVE & \emph{1036} & 230 & 0 & 289 & \cite{GAB+17} \\
    \NaTT & \emph{888} & 40 & 1 & 91 & implements WPO~\cite{YKS14} \\
    \bottomrule
  \end{tabular}
\end{table}

Overall, we obtained more termination proofs by switching from the original WPO
to the GWPO\@. Combining different kinds of algebras
is quite effective, which is not possible for the WPO\@.
As in theory, every problem solved by $\WPOtt(\LinPol)$ was also solved by $\GWPOtt(\LinPol, \LinPol^\sharp)$,
and the same holds
for $\WPOtt(\MaxPlus)$ and $\GWPOtt(\MaxPlus, \MaxPlus^\sharp)$.
Due to the time limit, there was just one problem solved by $\WPOtt(\LinPol)$ 
but not by $\GWPOtt(\LinPol, \MaxPlus^\sharp)$.
The configuration $\SPOtt(\Finite^\sharp)$ proved termination of 9 TRSs that could not be proven by the other 10 constructions of reduction pairs.
A representative example is $\{ \m{f}(\m{a}, \m{b}, x) \to \m{f}(x, x, x) \}$ of \autoref{E:Dershowitz} (\texttt{SK90/4.50} in TPDB).

We also discuss the impact of the GWPO on state-of-the-art termination provers.
Due to space limitations, we only consider the top two tools in the \texttt{TRS\_Standard} category of 
the Termination Competition 2026, namely \AProVE~\cite{GAB+17} and \NaTT~\cite{YKS14}.
As shown in \autoref{tbl:experiment},
\AProVE proved termination of 1036 TRSs, and \NaTT proved termination of 888 TRSs, 
in the same experimental setting (with a one-minute time limit).
The configurations of the form $\GWPOtt(*, *)$ or $\SPOtt(*)$
collectively proved termination of 725 distinct TRSs. 
Among them, 
there were 9 TRSs whose termination could not be proved by \NaTT, namely:
\begin{center}
\begin{tabular}{@{}l@{\qquad}l@{}}
\texttt{AProVE\_04/forward\_instantiation}
&
\texttt{Saito\_26/AProVE\_08\_id\_inc\_p}
\\
\texttt{AProVE\_04/forward\_instantiation2}
&
\texttt{Strategy\_removed\_AG01/\#4.25}
\\
\texttt{AProVE\_08/id\_inc}
&
\texttt{Transformed\_CSR\_04/Ex4\_DLMMU04\_iGM}
\\
\texttt{SK90/4.50}
&
\texttt{Various\_04/11}
\\
\texttt{SK90/4.55}
\end{tabular}
\end{center} 
Among them, \AProVE failed to prove termination of
\texttt{Saito\_26/AProVE\_08\_id\_inc\_p}.
The TRS is given below.
\begin{align*}
\m{f}(\m{s}(x)) &\to \m{f}(\idinc(\m{c}(\m{p}(\m{s}(x)), \m{p}(\m{s}(x)))))
&
\idinc(\m{c}(x,y)) &\to \m{c}(\idinc(x),\idinc(y))
&
\m{p}(\m{s}(x))
&\to x
\\
\m{f}(\m{c}(\m{s}(x),y)) &\to \m{g}(\m{c}(\m{p}(\m{s}(x)),y))
&
\idinc(\m{s}(x)) &\to \m{s}(\idinc(\m{p}(\m{s}(x))))
&
\m{p}(\m{0}) &\to \m{0}
\\
\m{g}(\m{c}(\m{s}(x),y)) &\to \m{g}(\m{c}(y,\m{p}(\m{s}(x))))
&
\idinc(\m{0}) &\to \m{0}
\\
\m{g}(\m{c}(x,\m{s}(y))) &\to \m{g}(\m{c}(\m{p}(\m{s}(y)),x))
&
\idinc(\m{0}) &\to \m{s}(\m{0})
\\
\m{g}(\m{c}(x,x)) &\to \m{f}(x)
\end{align*}
This problem was crafted by the first author as a secret problem for the Termination Competition 2026.
There, our tool \Mint  was the only tool that proved the termination.
We refer to \autoref{E:inc} for the intuition behind the termination proof using $\SPOtt(\MaxPlus^\sharp \times \MaxPlus^\sharp)$.

\section{Conclusion}
\label{S:conclusion}

We addressed the issues of the (M)SPO through the lens of the GWPO\@.
In the reduction-order setting, we identified an efficient subclass (\autoref{T:kbo-like-gwpo}),
and also a ground-total subclass (\autoref{T:gwpo-ground-total}) for the GWPO\@.
These results are also applicable to the (M)SPO by appealing to the simulation result (\autoref{T:gwpo-spo} and \autoref{C:mgwpo-mspo}).
In the reduction-pair setting, we devised a reduction-pair version of the GWPO (\autoref{T:gwpo-reduction-pair}),
and thereby the first reduction-pair version of the SPO (\autoref{C:mspo-reduction-pair}).
The experiment with standard benchmarks (\autoref{tbl:experiment})
demonstrated a large improvement over the original WPO.

We conclude with related and future work.
In particular, we briefly discuss some related results presented in the first author's PhD thesis~\cite{Sai26} but not presented here.

Termination modulo associativity and commutativity (AC) is characterized by so-called AC reduction orders
in a similar way to \autoref{P:reduction-order-termination}.
There are already AC versions of KBOs~\cite{Ste90,KV03,YWHM16},
recursive path orders~\cite{Rub02}, multiset path orders~\cite{HM25}, and monotonic semantic path orders~\cite{BR03}.
(In fact, Ochiai, Aoto and Toyama pointed out a flaw in the monotonicity of Borralleras and Rubio's AC-MSPO, and proposed a remedy for it~\cite{OAT05}.)
It is interesting to investigate unified orders for AC rewriting, as has been done for plain rewriting.
As a first step, the first author has investigated an AC version of the GWPO (\cite[Section~3.4]{Sai26})
that corresponds to the E-AC-MSPO of \cite[Section~2.6]{Bor03} (a simpler version of the aforementioned AC-MSPO),
establishing a simulation result.

The WPO has been extended to co-rewrite pairs~\cite{Yam22IJCAR}, which are a generalization of reduction pairs
to prove non-reachability in term rewriting.
Not so surprisingly, the GWPO can be extended to co-rewrite pairs in a similar way;
the details are found in \cite[Chapter~5]{Sai26}.
We note that co-rewrite pairs can be used as discrimination pairs for proving
non-confluence~\cite{Aot13}.

There is an extension of the reduction-pair version of the WPO (found in \cite[Section~4.2]{YKS15})
to properly subsume interpretation-based reduction pairs, i.e., ${\geqslant_\AA} \subseteq {\geqslant_\wpo}$
and ${>_\AA} \subseteq {>_\wpo}$ can be satisfied by choosing a suitable partial status and precedence.
Although this extension does not greatly improve the power of the WPO in practice,
it is natural to investigate a similar extension for the GWPO\@.
In fact, it has turned out that the GWPO with a similar extension is possible,
and can simulate arbitrary non-trivial reduction pairs (\cite[Theorem~4.3.7]{Sai26}).

While the path orders considered here compare arguments only lexicographically,
it is well known that they can be extended to allow choice of
argument comparison method depending on the root function symbols.
In fact, the WPO formalized in Isabelle/HOL~\cite{STY21} extends \autoref{D:wpo-reduction-pair}
so that it allows a choice between multiset and lexicographic extensions
for each symbol, incorporating the well-known extension for the recursive path order.
It is very likely that the same extension is also possible for the GWPO,
because the proofs here seem independent of lexicographic extension.
The general path order (GPO)~\cite{DH95,Ges96} is a unifying framework for such extensions,
parameterizing the way to compare arguments.
The relationship between the GPO and our GWPO is worth investigating.

\section*{Acknowledgment}
We thank Ren\'{e} Thiemann and Akihisa Yamada for
valuable feedback on our work.  We also thank J\"{u}rgen Giesl for
interesting questions at WST 2023 and FroCoS 2023,
which led to \autoref{S:reduction-pair} and \autoref{SS:ground-totality}, respectively.
Special thanks go to Alfons Geser for providing a copy of his technical report~\cite{Ges92}.

\bibliographystyle{alphaurl}
\bibliography{references}

\appendix

\section{Omitted Proofs}
\label{S:omitted}

We show that the termination of \autoref{E:SK} cannot be shown by a WPO as
a reduction order, if the algebra is restricted to a linear polynomial
interpretation.

\begin{prop}
\label{P:S-WPO}
There is no weakly simple linear polynomial interpretation $\AA$ and
precedence such that the induced WPO $>_\wpo$ satisfies 
$(x \cdot z) \cdot (y \cdot z) >_\wpo ((\m{S} \cdot x) \cdot y) \cdot z$.
\end{prop}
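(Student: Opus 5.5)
Let the interpretation be $x \cdot_\AA y = c_0 + c_1 x + c_2 y$ with $c_1, c_2 \in \NN$, and write $S = \m{S}_\AA$. Weak simplicity of $\AA$ forces $c_1, c_2 \geqslant 1$. Let $s = (x \cdot z) \cdot (y \cdot z)$ and $t = ((\m{S} \cdot x) \cdot y) \cdot z$. The plan is to suppose $s >_\wpo t$ and derive a contradiction by analysing the cases of \autoref{D:wpo-reduction-order}.

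First I compute the two interpretations:
\begin{align*}
[\alpha]_\AA(s) &= c_0 + c_1(c_0 + c_1 x + c_2 z) + c_2(c_0 + c_1 y + c_2 z),\\
[\alpha]_\AA(t) &= c_0 + c_1(c_0 + c_1(c_0 + c_1 S + c_2 x) + c_2 y) + c_2 z.
\end{align*}
Comparing the coefficient of $x$ gives $c_1^2$ in $s$ against $c_1^2 c_2$ in $t$. Comparing the coefficient of $y$ gives $c_1 c_2$ in $s$ against $c_1 c_2$ in $t$. Since $s \geqslant_\AA t$ is needed in every case, letting $x \to \infty$ forces $c_2 \leqslant 1$, so $c_2 = 1$.

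With $c_2 = 1$, the $y$-coefficients agree. The constant of $s$ is $c_0 + 2c_1c_0$, while the constant of $t$ is at least $c_0 + c_1 c_0 + c_1^2 c_0 + c_1^3 S$, which is at least as large because $c_1 \geqslant 1$. The $z$-coefficient of $s$ is $c_1 + 1$, against $1$ for $t$. So at the assignment $z = 0$ we get $[s] \leqslant [t]$. Hence case~\WPO{1} fails, since $>_\AA$ would need strict decrease at every assignment, and only case~\WPO{2} remains.

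Case~\WPO{2a} would require $x \cdot z \geqslant_\wpo t$ or $y \cdot z \geqslant_\wpo t$. Either way this implies $\geqslant_\AA$, because every case of $>_\wpo$ demands $\geqslant_\AA$ or $>_\AA$ and $\AA$ is normal. But the argument lacks the variable $y$, respectively $x$, whose coefficient in $t$ is positive, so choosing that variable large refutes $\geqslant_\AA$.

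Case~\WPO{2b} has root symbols $\cdot$ on both sides, so case~\WPO{2b(i)} is impossible and case~\WPO{2b(ii)} is needed. That case requires $(x \cdot z, y \cdot z) >_\wpo^\lex ((\m{S}\cdot x)\cdot y, z)$. Its first component needs $x \cdot z \geqslant_\wpo (\m{S}\cdot x)\cdot y$, which fails by the same missing-variable argument, since $y$ occurs in the right-hand side with coefficient $c_1 c_2 > 0$.

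I expect the main obstacle to be the failure of case~\WPO{1}, namely finding an assignment where strictness fails; choosing $z = 0$ with $c_2 = 1$ should suffice. One more point needs checking: the missing-variable argument requires $\geqslant_\wpo \subseteq \geqslant_\AA$. This follows from normality, which every linear polynomial interpretation over $\NN$ has.
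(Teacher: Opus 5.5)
Your proof is correct and follows the paper's argument. Weak simplicity makes both coefficients positive, the $x$-coefficient forces $c_2 = 1$, and the constants then rule out strict decrease, so case~1 fails; cases~2a and~2b(ii) fail because a variable occurring on the right is missing on the left, and case~2b(i) fails because the root symbols are equal. There are two harmless slips: with $c_2 = 1$ the constant of $s$ is $(c_1+2)c_0$ rather than $c_0 + 2c_1c_0$, and the $y$-coefficient of $(\m{S}\cdot x)\cdot y$ is $c_2$, not $c_1c_2$, but neither affects any of your conclusions.
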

\begin{proof}
Let $\AA$ be a weakly simple linear polynomial interpretation of the
form $x \cdot_\AA y = a x + by + c$ and $\m{S}_\AA = s$.  As $\AA$ is
weakly simple, we have $a, b \geqslant 1$.
Suppose
\(
t = (x \cdot z) \cdot (y \cdot z) >_\wpo ((\m{S} \cdot x) \cdot y) \cdot z = u
\).  Consider $t >_\AA u$. It is unfolded to the inequality:
\[
a^2 x + aby + (ab + b^2) z + (a + b + 1) c >
a^2bx + aby + bz + a^3 s + (a^2 + a + 1) c
\]
This leads to $b = 1$ and $(a + b + 1)c > a^3 s + (a^2 + a + 1) c$, yielding a contradiction.
So case~\WPO{1} is impossible.
Moreover, cases~\WPO{2a} and \WPO{2b(ii)} are impossible because of the
variables, and case~\WPO{2b(i)} is also impossible because $t$ and $u$ have
the same root symbols.
\end{proof}

Next, we prove \autoref{T:spo}.
Let $({\sqsupseteq}, {\sqsupset})$ be a stable order pair on $\TT$ and
$>_\spo$ the induced SPO.
The transitivity, irreflexivity, and stability of $>_\spo$ are straightforward to prove.

\begin{lem}
\label{L:spo}
The SPO $>_\spo$ is a stable strict order with the subterm property. \qed
\end{lem}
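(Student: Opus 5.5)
We prove the four properties separately: the subterm property, irreflexivity, transitivity, and stability. For each we use induction on term size, or on the sum or triple-sum of sizes. Throughout we assume only that $({\sqsupseteq}, {\sqsupset})$ is a stable order pair; well-foundedness is not needed here. Two facts are used repeatedly. The first is that ${>_\spo}$ satisfies $f(\seq[m]{s}) >_\spo s_i$, which is immediate from case~\SPO{a} with $s_i \geqslant_\spo s_i$. The second, shown along the way, is that $s >_\spo t$ implies $s >_\spo u$ for every proper subterm $u$ of $t$.

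\emph{Subterm property.} If $s \prsuperterm t$, write $s = f(\seq[m]{s})$ with $s_i \superterm t$. Then $s_i \geqslant_\spo t$ by induction on $|s|$, or by reflexivity when $s_i = t$. Case~\SPO{a} then gives $s >_\spo t$.

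\emph{Transitivity.} We show that $s >_\spo t >_\spo u$ implies $s >_\spo u$, by induction on $|s|+|t|+|u|$.
\begin{itemize}
\item If $s >_\spo t$ holds by case~\SPO{a}, say $s_i \geqslant_\spo t$, then $s_i >_\spo u$ by the induction hypothesis (or directly if $s_i = t$), so $s >_\spo u$ by case~\SPO{a}.
\item Otherwise $s >_\spo t$ holds by case~\SPO{b}, so $t = g(\seq{t})$ and $s >_\spo t_j$ for all $j$. If $t >_\spo u$ holds by case~\SPO{a} via $t_j \geqslant_\spo u$, we combine $s >_\spo t_j$ with it by the induction hypothesis.
\item If both steps use case~\SPO{b}, with $u = h(\bar u)$, we first obtain $s >_\spo u_k$ for all $k$ from $s >_\spo t >_\spo u_k$ by the induction hypothesis. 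For the top comparison we combine the $\sqsupset/\sqsupseteq$ relations using compatibility and transitivity of the order pair. If both steps are of type~\SPO{b(ii)}, we also use transitivity of $>_\spo^\lex$ on the argument lists. This follows from the induction hypothesis, since all elements involved are proper subterms; mixed $\geqslant$/$>$ steps are handled by the reflexive-closure cases.
\end{itemize}

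\emph{Irreflexivity.} We show $s \not>_\spo s$ by induction on $|s|$. Variables are never $>_\spo$-greater than anything. For $s = f(\seq[m]{s})$, case~\SPO{a} would give $s_i \geqslant_\spo s$. Combined with $s >_\spo s_i$ and transitivity, this yields $s_i >_\spo s_i$, contradicting the induction hypothesis. Case~\SPO{b(i)} requires $s \sqsupset s$, which is impossible since $\sqsupset$ is a strict order. Case~\SPO{b(ii)} requires $(\seq[m]{s}) >_\spo^\lex (\seq[m]{s})$. This forces $s_i >_\spo s_i$ at the first position where the lists differ in the strict sense, again contradicting the induction hypothesis.

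\emph{Stability.} We show that $s >_\spo t$ implies $s\sigma >_\spo t\sigma$, by induction on the derivation, or on $|s|+|t|$. Case~\SPO{a} is immediate. In case~\SPO{b}, $t = g(\bar t)$ stays non-variable, and stability of $\sqsupseteq$ and $\sqsupset$ transfers the top comparison. The recursive comparisons transfer by the induction hypothesis, and the lexicographic comparison transfers because $>_\spo^\lex$ is defined pointwise from $>_\spo$ and $=$, both of which are stable.

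The main obstacle is transitivity in the \SPO{b}/\SPO{b} case. There one must track which of $\sqsupset$ and $\sqsupseteq$ is available, use compatibility of the order pair correctly, and establish transitivity of the lexicographic extension inside the same induction rather than invoking it as a separate fact. Without this, the argument would be circular, since lexicographic transitivity depends on transitivity of $>_\spo$ on smaller terms.
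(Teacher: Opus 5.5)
Your proof is correct and is essentially the argument the paper intends. The paper gives no proof of this lemma, only the remark that transitivity, irreflexivity and stability are straightforward. Your size inductions — $|s|+|t|+|u|$ for transitivity (with lexicographic transitivity handled inside the same induction), $|s|$ for irreflexivity via the subterm property and transitivity, and $|s|+|t|$ for stability — match the routine inductions the paper uses for the analogous GWPO lemmas in the appendix.
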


It remains to prove well-foundedness of $>_\spo$
under the assumption that $\sqsupset$ is well-founded and $\FF$ is bounded.
This follows from 
\autoref{C:mspo-reduction-pair} and \autoref{R:ro-rp}:
$>_\spo$ is well-founded
because it is contained in the well-founded order of the 
reduction-pair version of the SPO.

We present a proof of \autoref{T:gwpo-reduction-pair}
following the proofs for the WPO~\cite{YKS15,Yam22IJCAR}.
Except for well-foundedness, we only indicate the proof strategies by induction
because the rest is routine;
see \cite[Chapter~4]{Sai26} for the omitted details.
Let $({\geqslant_\gwpo}, {>_\gwpo})$ be the GWPO induced by a partial status $\pi$
and pairs $({\sqsupseteq_A}, {\sqsupset_A})$ and $({\sqsupseteq_B}, {\sqsupset_B})$
of relations on terms.

\begin{lem}
\label{L:gwpo-normal}
The inclusion ${>_\gwpo} \subseteq {\geqslant_\gwpo}$ holds.
\end{lem}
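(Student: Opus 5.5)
We show, simultaneously for all pairs of terms, that $s >_\gwpo t$ implies $s \geqslant_\gwpo t$. The induction is on $|s| + |t|$, together with the auxiliary fact that ${>_\gwpo^\lex} \subseteq {\geqslant_\gwpo^\lex}$ on the argument lists of $s$ and $t$. No property of $({\sqsupseteq_A}, {\sqsupset_A})$ or $({\sqsupseteq_B}, {\sqsupset_B})$ is needed: the definition of $>_\gwpo$ in \autoref{D:gwpo-reduction-pair} uses exactly cases~\GWPORP{1}, \GWPORP{2a} and \GWPORP{2b}, with the same side conditions as for $\geqslant_\gwpo$. The only difference is that $\geqslant_\gwpo^\lex$ is replaced by $>_\gwpo^\lex$ in case~\GWPORP{2b(ii)}.

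The case analysis on the derivation of $s >_\gwpo t$ runs as follows. If it comes from case~\GWPORP{1}, then $s \sqsupset_A t$ gives $s \geqslant_\gwpo t$ by the same case. If it comes from case~\GWPORP{2a}, the premise $s_i \geqslant_\gwpo t$ for some $i \in \pi(f)$ is literally the premise required for $\geqslant_\gwpo$, so the same rule applies. The rule for $\geqslant_\gwpo$ uses $\geqslant_\gwpo$ in its premise for both relations. If it comes from case~\GWPORP{2b(i)}, the premises $s \sqsupseteq_A t$, $s >_\gwpo t_j$ for $j \in \pi(g)$, and $s \sqsupset_B t$ are shared verbatim. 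The only nontrivial case is \GWPORP{2b(ii)}. There we have $\pi(f)(\seq[m]{s}) >_\gwpo^\lex \pi(g)(\seq[n]{t})$ and must derive $\pi(f)(\seq[m]{s}) \geqslant_\gwpo^\lex \pi(g)(\seq[n]{t})$.

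For that step I prove a small auxiliary claim about \autoref{D:lex-ext}: for lists $(\seq[k]{a})$ and $(\seq[l]{b})$, the relation $(\seq[k]{a}) >^\lex (\seq[l]{b})$ implies $(\seq[k]{a}) \geqslant^\lex (\seq[l]{b})$. The proof is by induction on $k$ and uses nothing about $({\geqslant}, {>})$ other than the shape of the definition. If $l = 0$, the first clause of $\geqslant^\lex$ applies. If $a_1 > b_1$, the second clauses coincide. If $a_1 \geqslant b_1$ and the tails are related by $>^\lex$, the inner induction hypothesis turns the tails into $\geqslant^\lex$, and the third clause applies.

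Notice that this auxiliary claim does not even need ${>} \subseteq {\geqslant}$ on elements, because the strict clause $a_1 > b_1$ appears in both definitions. Hence the outer induction hypothesis on $|s|+|t|$ is in fact not required for the lex step. The whole argument is a routine structural check, and the only mildly delicate point, the lexicographic clause, is discharged by the auxiliary claim.
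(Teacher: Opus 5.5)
Your proposal is correct and follows the paper's proof, which is just ``easy induction on $|s|+|t|$'': a case analysis in which every clause of $>_\gwpo$ is reused verbatim for $\geqslant_\gwpo$, except the lexicographic one. Your observation is also valid: under this paper's definition of the lexicographic extension, the clause $a_1 > b_1$ occurs in both $>^\lex$ and $\geqslant^\lex$, so ${>^\lex} \subseteq {\geqslant^\lex}$ holds purely structurally and the outer induction hypothesis is never actually used.
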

\begin{proof}
Easy induction on $|s| + |t|$ shows that $s >_\gwpo t$ implies $s \geqslant_\gwpo t$.
\end{proof}

\begin{lem}
\label{L:gwpo-compatibility}
If $({\sqsupseteq_A}, {\sqsupset_A})$ and $({\sqsupseteq_B}, {\sqsupset_B})$ are order pairs,
then $\geqslant_\gwpo$ and $>_\gwpo$ are transitive, and $({\geqslant_\gwpo}, {>_\gwpo})$ is compatible.
\end{lem}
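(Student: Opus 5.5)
We prove the three claims simultaneously, namely transitivity of $\geqslant_\gwpo$, transitivity of $>_\gwpo$, and the two compatibility inclusions ${\geqslant_\gwpo \cdot >_\gwpo} \subseteq {>_\gwpo}$ and ${>_\gwpo \cdot \geqslant_\gwpo} \subseteq {>_\gwpo}$. Concretely, for all terms $s,t,u$ with $s \mathrel{R_1} t \mathrel{R_2} u$ and $R_1,R_2 \in \{\geqslant_\gwpo, >_\gwpo\}$, we show $s \mathrel{R} u$, where $R$ is $>_\gwpo$ if at least one of $R_1,R_2$ is strict and $\geqslant_\gwpo$ otherwise. The induction is on $|s|+|t|+|u|$. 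By \autoref{L:gwpo-normal}, whenever the conclusion is $\geqslant_\gwpo$ we may downgrade strict hypotheses, so few of these combinations need separate treatment. We also need two auxiliary facts, both proved by a straightforward induction. First, ${s \mathrel{R} t}$ implies $s \sqsupseteq_A t$ or $s \sqsupset_A t$, which is immediate from cases~\GWPORP{1}, \GWPORP{2} together with reflexivity of $\sqsupseteq_A$ for case~\GWPORP{2c}. Second, the A-comparison composes via the compatibility of the order pair $A$: if $s \mathrel{(\sqsupseteq_A\cup\sqsupset_A)} t \mathrel{(\sqsupseteq_A\cup\sqsupset_A)} u$, then $s \sqsupseteq_A u$, and $s\sqsupset_A u$ if either step is strict.

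The case analysis runs as follows. If either step uses case~\GWPORP{1}, then $s \sqsupset_A u$ by the A-fact, and case~\GWPORP{1} gives $s \mathrel{R} u$. So assume $s \sqsupseteq_A t \sqsupseteq_A u$, which gives $s \sqsupseteq_A u$. If $t$ is a variable, then $t \mathrel{R_2} u$ forces case~\GWPORP{2c}, so $u=t$ and we are done; symmetrically, if $s$ is a variable then $s=t$. If $s \mathrel{R_1} t$ is by \GWPORP{2a}, with $s_i \geqslant_\gwpo t$ for some $i \in \pi(f)$, then the induction hypothesis gives $s_i \mathrel{R'} u$ with $R'$ at least as strong as $R_2$. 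Hence $s_i \geqslant_\gwpo u$, and \GWPORP{2a} yields $s \geqslant_\gwpo u$ and even $s>_\gwpo u$, since case~\GWPORP{2a} concludes both relations. If $s \mathrel{R_1} t$ is by \GWPORP{2b} and $t \mathrel{R_2} u$ is by \GWPORP{2a}, with $t_j \geqslant_\gwpo u$ for some $j \in \pi(g)$, then $s >_\gwpo t_j$ and the induction hypothesis gives $s >_\gwpo u$.

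The remaining case, which is the main bookkeeping, is when both steps use \GWPORP{2b}, with $u = h(\bar u)$. For each $k \in \pi(h)$ we have $t >_\gwpo u_k$, so $s \mathrel{R_1} t >_\gwpo u_k$ gives $s >_\gwpo u_k$ by the induction hypothesis; here the size measure decreases because $u_k$ is smaller than $u$. We then compose the B-parts. A strict $\sqsupset_B$ on either side, combined via the compatibility of $B$, gives $s \sqsupset_B u$ and hence case~\GWPORP{2b(i)}. Otherwise $s \sqsupseteq_B u$, and we compose the lexicographic comparisons of $\pi(f)(\bar s)$, $\pi(g)(\bar t)$, $\pi(h)(\bar u)$. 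This composition needs transitivity and compatibility of the lexicographic extension, which hold elementwise provided the base relations have them on the relevant triples of arguments. All those arguments are proper subterms, so the induction hypothesis supplies exactly these elementwise properties. The same inner induction along the lists handles the strict-versus-weak distinction.

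I expect the main obstacle to be organizing the simultaneous induction so that the measure truly decreases in every appeal. The delicate points are the mixed cases, where the hypothesis is applied to $(s_i,t,u)$, $(s,t_j,u)$, $(s,t,u_k)$, or to argument triples inside the lexicographic comparison. I also expect care to be needed in checking that the lexicographic extension lemma only needs the properties restricted to the elements actually compared, rather than globally.
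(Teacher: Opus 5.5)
Your proposal is correct and follows the same route as the paper, which (following Yamada et al.'s Lemma~7) proves the four composition properties by simultaneous induction on $|s|+|t|+|u|$; your case analysis spells out what the paper calls routine. One small slip: when a step is strict, your second auxiliary fact should conclude only $s \sqsupset_A u$, not also $s \sqsupseteq_A u$, since order pairs need not be normal; you only ever use it in that form, so nothing breaks.
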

\begin{proof}
As in \cite[Lemma~7]{YKS15},
the claim is proven by showing the next four properties for all terms $s, t$ and $u$:
if $s \geqslant_\gwpo t$ and $t \geqslant_\gwpo u$ then $s \geqslant_\gwpo u$;
if $s \geqslant_\gwpo t$ and $t >_\gwpo u$ then $s >_\gwpo u$;
if $s >_\gwpo t$ and $t \geqslant_\gwpo u$ then $s >_\gwpo u$; and
if $s >_\gwpo t$ and $t >_\gwpo u$ then $s >_\gwpo u$.
This is done by easy simultaneous induction on $|s| + |t| + |u|$.
\end{proof}

\begin{lem}
\label{L:gwpo-refl-pi-subterm}
If $\sqsupseteq_A$ and $\sqsupseteq_B$ are reflexive and $A$ is simple with respect to $\pi$,
then
\begin{itemize}
\item $t >_\gwpo t_i$ whenever $i \in \pi(f)$ and $t = f(\seq[n]{t})$, and
\item $t \geqslant_\gwpo t$ for all $t$ (i.e., $\geqslant_\gwpo$ is reflexive).
\end{itemize}
\end{lem}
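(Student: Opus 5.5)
We prove the two items in order, each by induction on $|t|$. In fact the first item will use the second for proper subterms, so it is cleanest to show both together by induction on $|t|$, establishing reflexivity for $t$ first and then the subterm item.

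\emph{Reflexivity.} If $t$ is a variable, then $t \geqslant_\gwpo t$ follows from case~\GWPORP{2c}, using reflexivity of $\sqsupseteq_A$. So let $t = f(\seq[n]{t})$. By reflexivity of $\sqsupseteq_A$ we have $t \sqsupseteq_A t$, and we aim to apply case~\GWPORP{2b(ii)}. By reflexivity of $\sqsupseteq_B$ we have $t \sqsupseteq_B t$.

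It remains to show two things:
\begin{itemize}
\item $t >_\gwpo t_j$ for all $j \in \pi(f)$, which is exactly the first item for $t$;
\item $\pi(f)(\seq[n]{t}) \geqslant_\gwpo^\lex \pi(f)(\seq[n]{t})$.
\end{itemize}
The lexicographic comparison follows componentwise from $t_j \geqslant_\gwpo t_j$, which holds by the induction hypothesis since $|t_j| < |t|$. Note that the weak lexicographic extension of a reflexive relation is reflexive: compare equal lists by the third clause down to the empty list, which is covered by the $n = 0$ clause.

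\emph{Subterm item.} Let $t = f(\seq[n]{t})$ and $i \in \pi(f)$. The key point is that deriving $t >_\gwpo t_i$ only needs $t_i \geqslant_\gwpo t_i$, i.e.\ reflexivity for the smaller term $t_i$, which is available from the induction hypothesis. Simplicity of $A$ with respect to $\pi$ gives $t \sqsupseteq_A t_i$ or $t \sqsupset_A t_i$.
\begin{itemize}
\item If $t \sqsupset_A t_i$, then $t >_\gwpo t_i$ by case~\GWPORP{1}.
\item Otherwise $t \sqsupseteq_A t_i$, and case~\GWPORP{2a} of the strict relation applies with the argument $t_i$, since $t_i \geqslant_\gwpo t_i$.
\end{itemize}
Thus the first item for $t$ uses only reflexivity of its arguments, and reflexivity of $t$ uses the first item for $t$. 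The induction is therefore well-ordered once reflexivity of all proper subterms is established first.

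The only subtle point, which I expect to be the main obstacle, is this circularity between the two items. It is resolved by the ordering above: prove the subterm item for $t$ from reflexivity of the $t_i$, then reflexivity of $t$. No transitivity, stability or well-foundedness is needed.
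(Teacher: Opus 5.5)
Your proof is correct and matches the paper's, which also proves both items by simultaneous induction on $|t|$ (following Yamada's Lemma~1): the subterm item for $t$ comes from $\pi$-simplicity of $A$ plus reflexivity of the smaller $t_i$, and reflexivity of $t$ comes via case~\GWPORP{2b(ii)} using the subterm item for $t$ itself. The only flaw is expository: your opening phrase ``establishing reflexivity for $t$ first and then the subterm item'' states the order backwards, but your final paragraph gives the correct, non-circular order.
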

\begin{proof}
As in \cite[Lemma~1]{Yam22IJCAR},
the claims are shown by simultaneous induction on $|t|$.
\end{proof}

Next we show stability via \autoref{L:gwpo-refl-pi-subterm}.

\begin{lem}
\label{L:gwpo-stable}
If $({\sqsupseteq_A}, {\sqsupset_A})$ and $({\sqsupseteq_B}, {\sqsupset_B})$ are stable,
$\sqsupseteq_A$ and $\sqsupseteq_B$ are reflexive, and
$A$ is simple with respect to $\pi$,
then $({\geqslant_\gwpo}, {>_\gwpo})$ is stable. 
\end{lem}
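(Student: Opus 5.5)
We prove, simultaneously for both relations, that $s \geqslant_\gwpo t$ implies $s\sigma \geqslant_\gwpo t\sigma$ and that $s >_\gwpo t$ implies $s\sigma >_\gwpo t\sigma$, for every substitution $\sigma$. The induction is on $|s| + |t|$, and in each step we analyse the case of \autoref{D:gwpo-reduction-pair} used to derive the relation. Stability of $A$ gives $s\sigma \sqsupset_A t\sigma$ from $s \sqsupset_A t$ and $s\sigma \sqsupseteq_A t\sigma$ from $s \sqsupseteq_A t$. Hence case~\GWPORP{1} transfers immediately, and in case~\GWPORP{2} the side condition $s\sigma \sqsupseteq_A t\sigma$ is available.

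Case~\GWPORP{2a}: here $s = f(\seq[m]{s})$ and $s_i \geqslant_\gwpo t$ with $i \in \pi(f)$ (resp.\ $>_\gwpo$ in the strict case). The induction hypothesis gives $s_i\sigma \geqslant_\gwpo t\sigma$. Since $s\sigma = f(s_1\sigma,\ldots,s_m\sigma)$, case~\GWPORP{2a} applies again.

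Case~\GWPORP{2b}: both $s$ and $t$ are headed by function symbols, so $s\sigma$ and $t\sigma$ have the same roots and argument positions. The induction hypothesis applied to $s >_\gwpo t_j$ for $j \in \pi(g)$ gives $s\sigma >_\gwpo t_j\sigma$. Stability of $B$ carries over $s \sqsupset_B t$ or $s \sqsupseteq_B t$. In subcase~\GWPORP{2b(ii)}, the lexicographic comparison $\pi(f)(\seq[m]{s}) \geqslant_\gwpo^\lex \pi(g)(\seq[n]{t})$ (or its strict version) is transported componentwise. Each step of the lexicographic extension uses only $s_i \geqslant_\gwpo t_j$ or $s_i >_\gwpo t_j$ on strictly smaller pairs, so the induction hypothesis applies, and a short induction on list length gives the lexicographic relation for the instantiated lists.

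Case~\GWPORP{2c}: here $s \in \VV$ and $s = t$, so $s\sigma = t\sigma$, which is an arbitrary term rather than a variable. This is the only place where anything beyond the induction hypothesis is needed, and it is the reason for the hypotheses: \autoref{L:gwpo-refl-pi-subterm} gives reflexivity of $\geqslant_\gwpo$, which uses reflexivity of $\sqsupseteq_A$ and $\sqsupseteq_B$ and simplicity of $A$ with respect to $\pi$. This case arises only for $\geqslant_\gwpo$, since the strict relation excludes~\GWPORP{2c}.

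The main obstacle I expect is bookkeeping. One must check that case~\GWPORP{2a} never needs to fire with a variable $s$, and that the measure $|s|+|t|$ is applied before substitution, with recursion always on the uninstantiated subterms, so the induction is well-founded.
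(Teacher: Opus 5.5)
Your proof is correct and follows the paper's own route. Like the paper, you use simultaneous induction on $|s|+|t|$ over the cases of \autoref{D:gwpo-reduction-pair}, with stability of $A$ and $B$ carrying over the side conditions and \autoref{L:gwpo-refl-pi-subterm} supplying reflexivity in case~\GWPORP{2c}. One harmless slip: for $>_\gwpo$, case~\GWPORP{2a} still only requires $s_i \geqslant_\gwpo t$ rather than $s_i >_\gwpo t$, and your argument handles it in exactly the same way.
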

\begin{proof}
It suffices to prove the following properties: for all terms $s$ and $t$ and substitutions $\sigma$,
if $s \geqslant_\gwpo t$ then $s\sigma \geqslant_\gwpo t\sigma$, and
if $s >_\gwpo t$ then $s\sigma >_\gwpo t\sigma$.
Following \cite[Lemma~2]{Yam22IJCAR},
we can prove these claims simultaneously by induction on $|s| + |t|$.
\end{proof}

For our well-foundedness proof,
we recall the notion of \emph{lexicographic product}.
In contrast to lexicographic combination, order pairs on possibly different sets are combined 
into a strict order on the set of tuples.

\begin{defi}
\label{D:lex-prod}
Let $({\geqslant_1},{>_1}), \ldots ,({\geqslant_n},{>_n})$ be order pairs
on sets $A_1, \ldots, A_n$, respectively.
The \emph{lexicographic product}
\(
({\geqslant_1},{>_1}) \lexprod \cdots \lexprod ({\geqslant_n},{>_n})
\)
is the strict order $>$ defined on $A_1 \times \cdots \times A_n$ as
follows: $(\seq{a}) > (\seq{b})$ if there exists an index 
$k \in \{ 1, \ldots, n \}$ such that $a_k >_k b_k$ and
$a_j \geqslant_j b_j$ for all $1 \leqslant j < k$.
\end{defi}

Like lexicographic combination, the product operation preserves well-foundedness.

\begin{prop}
\label{P:lex-prod}
The lexicographic product of order pairs  $({\geqslant_1},{>_1}), \ldots ,({\geqslant_n},{>_n})$
is well-founded if every $>_i$ is well-founded.
\end{prop}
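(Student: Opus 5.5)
The plan is the standard minimal-counterexample argument. Suppose, for contradiction, that there is an infinite descending sequence $\bar{a}^{(0)} > \bar{a}^{(1)} > \bar{a}^{(2)} > \cdots$ in $A_1 \times \cdots \times A_n$. We proceed by induction on $n$. For $n = 0$ the product has a single element and $>$ is empty, so there is nothing to prove. For $n = 1$ the product is just $>_1$, which is well-founded by assumption.

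For the inductive step, write $\bar{a}^{(i)} = (a^{(i)}_1, \bar{b}^{(i)})$ with $\bar{b}^{(i)} \in A_2 \times \cdots \times A_n$. Unfolding \autoref{D:lex-prod}, every step $\bar{a}^{(i)} > \bar{a}^{(i+1)}$ satisfies one of two conditions:
\begin{itemize}
\item $a^{(i)}_1 >_1 a^{(i+1)}_1$ (the witnessing index is $k = 1$); or
\item $a^{(i)}_1 \geqslant_1 a^{(i+1)}_1$ and $\bar{b}^{(i)} > \bar{b}^{(i+1)}$ in the product of the remaining $n-1$ order pairs (the witnessing index is $k > 1$).
\end{itemize}
In both cases $a^{(i)}_1 \mathrel{({\geqslant_1} \cup {>_1})} a^{(i+1)}_1$ holds.

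We now split on how often the first case occurs.
\begin{itemize}
\item If the first case occurs infinitely often, we compose the intermediate steps using compatibility of the order pair $({\geqslant_1}, {>_1})$. Each maximal block consisting of ${\geqslant_1}$-steps followed by one ${>_1}$-step collapses into a single ${>_1}$-step, since ${\geqslant_1 \cdot >_1} \subseteq {>_1}$ (transitivity of $\geqslant_1$ handles the chains of weak steps). Consecutive strict steps are chained by transitivity of $>_1$. This yields an infinite $>_1$-descending sequence, contradicting well-foundedness of $>_1$.
\item Otherwise, from some index onward only the second case occurs. The tails $\bar{b}^{(i)}$ then form an infinite descending sequence in the $(n-1)$-fold product, contradicting the induction hypothesis.
\end{itemize}

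The only delicate point is the block-collapsing step in the first case, where the weak and strict steps must be merged into genuine strict steps. Because the definition of an order pair supplies exactly the required compatibility and transitivity, I expect no real obstacle. No normality (${>} \subseteq {\geqslant}$) is needed, since in every step we only use ${\geqslant_1} \cup {>_1}$ on the first component.
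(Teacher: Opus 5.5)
Your proof is correct. The paper gives no proof of this proposition. It treats it as a standard fact, just as it does for the well-foundedness of lexicographic combination. Your argument is the standard one: induction on $n$, a split on whether the first component strictly decreases infinitely often, and collapsing each block ${\geqslant_1} \cdots {\geqslant_1}\, {>_1}$ into a single ${>_1}$-step using transitivity of $\geqslant_1$ and ${\geqslant_1 \cdot >_1} \subseteq {>_1}$. You are right that normality is not needed.

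Two cosmetic remarks. First, the block endpoints already form an infinite ${>_1}$-descending sequence, so transitivity of $>_1$ is never actually used. Second, the argument is an induction on $n$ with a case split, not a minimal-counterexample argument.

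If you want to connect the result to the paper, you can lift each $({\geqslant_i},{>_i})$ to the product set via the $i$-th projection. This turns the $n$-fold product into an iterated lexicographic combination (\autoref{D:lex-comb}). The statement then also follows from the well-foundedness-preservation fact the paper records for lexicographic combination, and the proof of that fact is essentially your inductive step.
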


For showing that $>_\gwpo$ is well-founded, we adopt Buchholz's method
\cite{Buc95}.  One can find a similar proof in \cite[Lemma 8]{YKS15}.  We
write $\SN(>_\gwpo)$ for the set of all terms $t$ such that there is no
infinite descending sequence
$t >_\gwpo t_1 >_\gwpo t_2 >_\gwpo \cdots$ starting from $t$.\footnote{SN
stands for strong normalization, which is another name for termination.}
We have the following basic properties:
\begin{itemize}
    \item The restriction of $>_\gwpo$ to $\SN(>_\gwpo)$ is a well-founded order on $\SN(>_\gwpo)$.
    \item $t \in \SN(>_\gwpo)$ if $u \in \SN(>_\gwpo)$ for all terms $u$ with $t >_\gwpo u$.
\end{itemize}
To exploit these properties,
we need the assumptions to use \autoref{L:gwpo-compatibility}.

\begin{lem}
\label{L:buchholz-gwpo}
Suppose that
$({\sqsupseteq_A}, {\sqsupset_A})$ and $({\sqsupseteq_B}, {\sqsupset_B})$ are well-founded stable order pairs, and
$A$ is simple with respect to $\pi$.
Then $t \in \SN(>_\gwpo)$ whenever $s = f(\seq[m]{s}) >_\gwpo t$
and $s_i \in \SN(>_\gwpo)$ for all $i \in \pi(f)$.
\end{lem}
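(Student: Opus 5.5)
We prove the claim by well-founded induction on a lexicographic product, following Buchholz's method as in \cite[Lemma~8]{YKS15}. To each pair consisting of a term $s = f(\seq[m]{s})$ with $s_i \in \SN(>_\gwpo)$ for all $i \in \pi(f)$ and an arbitrary term $t$ with $s >_\gwpo t$ we associate the tuple
\[
\bigl(s,\ s,\ \pi(f)(\seq[m]{s}),\ |t|\bigr).
\]
The tuple is compared by the lexicographic product (\autoref{D:lex-prod}) of four components: $({\sqsupseteq_A},{\sqsupset_A})$, $({\sqsupseteq_B},{\sqsupset_B})$, the lexicographic extension of $>_\gwpo$ restricted to $\SN(>_\gwpo)^{\leqslant N}$ (where $N$ bounds the arities), and $>_\NN$. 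The first two components are well-founded by hypothesis. The third is well-founded by \autoref{P:lex-ext}, because $>_\gwpo$ restricted to $\SN(>_\gwpo)$ is well-founded and compatible with $\geqslant_\gwpo$ by \autoref{L:gwpo-compatibility}; here the boundedness assumption is used. By \autoref{P:lex-prod} the product is well-founded, so we may assume the claim for all smaller tuples.

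To show $t \in \SN(>_\gwpo)$ it suffices, by the second basic property of $\SN$, to show $u \in \SN(>_\gwpo)$ whenever $t >_\gwpo u$. If $t$ is a variable, nothing is smaller than $t$, since every case of $>_\gwpo$ requires the left-hand side either to be headed by a function symbol or to satisfy $t \sqsupset_A u$; a variable $x \sqsupset_A u$ contradicts stability and the order-pair axioms, as in the proof of \autoref{L:gwpo-spo}. So let $t = g(\seq[n]{t})$. We first prove $t_j \in \SN(>_\gwpo)$ for $j \in \pi(g)$ by splitting on the derivation of $s >_\gwpo t$.
\begin{itemize}
\item In case~\GWPORP{2a} we have $s_i \geqslant_\gwpo t >_\gwpo t_j$ by \autoref{L:gwpo-refl-pi-subterm}, and $s_i \in \SN(>_\gwpo)$, so $t_j \in \SN(>_\gwpo)$ by compatibility.
\item In case~\GWPORP{2b} we have $s >_\gwpo t_j$ directly; the induction hypothesis applies to the pair $(s,t_j)$, whose first three components are unchanged and whose fourth, $|t_j| < |t|$, decreases.
\item In case~\GWPORP{1} we have $s \sqsupset_A t$. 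Simplicity of $A$ gives $t \sqsupseteq_A t_j$ or $t \sqsupset_A t_j$, hence $s \sqsupset_A t_j$ by compatibility, so $s >_\gwpo t_j$ by case~\GWPORP{1}. The induction hypothesis again applies through the fourth component (or the first, if that strictly decreases).
\end{itemize}

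Now take $u$ with $t >_\gwpo u$. We apply the induction hypothesis to the new pair $(t,u)$, which is admissible because its arguments at positions in $\pi(g)$ are in $\SN(>_\gwpo)$, as just shown; this yields $u \in \SN(>_\gwpo)$. For the pair $(t,u)$ to be smaller, we must show that $(t, t, \pi(g)(\seq[n]{t}))$ is smaller than $(s, s, \pi(f)(\seq[m]{s}))$. We split once more on the derivation of $s >_\gwpo t$.
\begin{itemize}
\item Case~\GWPORP{1}, $s \sqsupset_A t$: the first component decreases.
\item Case~\GWPORP{2a}: this case was already handled, since $t$ itself lies in $\SN(>_\gwpo)$ via $s_i \geqslant_\gwpo t$.
\item Case~\GWPORP{2b(i)}: $s \sqsupseteq_A t$ and $s \sqsupset_B t$, so the second component decreases.
\item Case~\GWPORP{2b(ii)}: $s \sqsupseteq_A t$, $s \sqsupseteq_B t$, and $\pi(f)(\bar s) >^\lex_\gwpo \pi(g)(\bar t)$, so the third component decreases. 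Both lists consist of $\SN$ terms.
\end{itemize}

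The main obstacle is this bookkeeping. The measure must be compared with the first two components weakly, via compatibility of the order pairs, so that the case $s \sqsupset_A t$ lowers the first component while the case split on $t >_\gwpo u$ is deferred to the recursive call. We must also ensure that the lists in the third component lie in $\SN(>_\gwpo)^{\leqslant N}$, so that \autoref{P:lex-ext} applies. Stability of the order pairs is needed only for excluding variables on the left and for the compatibility lemmas.
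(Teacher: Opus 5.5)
Your argument follows the paper's proof closely. It uses the same lexicographic measure; you use $|t|$ with $>_\NN$ where the paper uses $(\superterm,\prsuperterm)$, which makes no difference. It has the same preliminary step, establishing $t_j \in \SN(>_\gwpo)$ for $j \in \pi(g)$ by decreasing the last component. It ends with the same case analysis on $s >_\gwpo t$, showing that $(t,t,\pi(g)(\seq[n]{t}))$ lies below $(s,s,\pi(f)(\seq[m]{s}))$.

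The one step that does not go through as written is the variable case. You claim that $x \sqsupset_A u$ for a variable $x$ contradicts stability and the order-pair axioms ``as in the proof of \autoref{L:gwpo-spo}''. That proof, however, uses the full subterm property of $A$ to obtain $u \sqsupseteq_A x$ or $u \sqsupset_A x$ when $x \in \Var(u)$. Here you only have simplicity with respect to $\pi$. That says nothing when $x$ occurs in $u$ only below argument positions outside $\pi$. In that situation, stability and the order-pair axioms are perfectly consistent with $x \sqsupset_A u$. \autoref{E:non-simulation2} gives a stable order pair, vacuously simple for $\pi(\m{f}) = \emptylist$, with $x >_\AA \m{f}(x)$.

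What rules out this case is the well-foundedness of $\sqsupset_A$, which you never invoke here. The repair is the paper's. With $\sigma = \{x \mapsto u\}$, applying stability repeatedly to $x \sqsupset_A u$ yields $x \sqsupset_A u \sqsupset_A u\sigma \sqsupset_A u\sigma\sigma \sqsupset_A \cdots$, an infinite descent. If $x \notin \Var(u)$, this already gives $u \sqsupset_A u$. Your closing remark that stability is needed ``only for excluding variables'' should therefore be amended: the variable case needs stability together with well-foundedness of $A$.

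A minor point: in the second case split you say case~\GWPORP{2a} ``was already handled''. Earlier, though, you only derived $t_j \in \SN(>_\gwpo)$ in that case. You still need the one-line observation that $s_i \geqslant_\gwpo t >_\gwpo u$ gives $s_i >_\gwpo u$ by \autoref{L:gwpo-compatibility}, hence $u \in \SN(>_\gwpo)$.
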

\begin{proof}
We perform induction on the quadruple $(s, s, \pi(f)(\seq[m]{s}), t)$
with respect to $\gg$, where $\gg$ is the lexicographic product
\[
({\sqsupseteq_A}, {\sqsupset_A}) \lexprod
({\sqsupseteq_B}, {\sqsupset_B}) \lexprod
({\geqslant_\gwpo^\lex}, {>_\gwpo^\lex}) \lexprod
({\superterm}, {\prsuperterm})
\]
defined on $\TT \times \TT \times \SN(>_\gwpo)^{\leqslant k} \times \TT$ for an upper bound $k$ of arity.
It suffices to prove that $u \in \SN(>_\gwpo)$ whenever $t >_\gwpo u$.
If $t$ is a variable, then by definition $t \sqsupset_A u$ must hold.
By applying $\sigma = \{ t \mapsto u \}$ to $t \sqsupset_A u$ successively, we obtain
$u \sqsupset_A u\sigma \sqsupset_A (u\sigma)\sigma \sqsupset_A \cdots$, which contradicts the well-foundedness of $\sqsupset_A$.
So, letting $t = g(\seq[n]{t})$, we analyze the derivation of $s >_\gwpo t$.
\begin{enumerate}[1.]
\item Suppose $s \sqsupset_A t$.
By the simplicity assumption, we have $s \sqsupset_A t \sqsupseteq_A t_j$ or $s \sqsupset_A t \sqsupset_A t_j$,
and therefore  for all $j \in \pi(g)$ we have $s >_\gwpo t_j$ by case~\GWPORP{1}.
So, we apply the induction hypothesis with $(s, s, \pi(f)(\seq[m]{s}), \underline{t}) \gg (s, s, \pi(f)(\seq[m]{s}), \underline{t_j})$
to obtain $t_j \in \SN(>_\gwpo)$ for all $j \in \pi(g)$.
Here, for readability, the decreasing components are indicated by underlining.
Then, we apply the induction hypothesis with $(\underline{s}, s, \pi(f)(\seq[m]{s}), t) \gg (\underline{t}, t, \pi(g)(\seq[n]{t}), u)$
to conclude $u \in \SN(>_\gwpo)$.
\item Suppose $s \sqsupseteq_A t$.
\begin{enumerate}[a.]
    \item Assume  $s = f(\seq[m]{s})$ and $s_i \geqslant_\gwpo t$ for some $i \in \pi(f)$.
    Then we have $s_i >_\gwpo u$ by \autoref{L:gwpo-compatibility}, so $s_i \in \SN(>_\gwpo)$ yields $u \in \SN(>_\gwpo)$.
    \item Assume $s >_\gwpo t_j$ for all $j \in \pi(g)$.
    We apply the induction hypothesis with 
    \[
    (s, s, \pi(f)(\seq[m]{s}), \underline{t}) \gg (s, s, \pi(f)(\seq[m]{s}), \underline{t_j})
    \]
    to obtain $t_j \in {\SN(>_\gwpo)}$ for all $j \in \pi(g)$.
    Now, we apply the induction hypothesis with
    \[
        (s, s, \pi(f)(\seq[m]{s}), t) \gg (t, t, \pi(g)(\seq[n]{t}), u)
    \]
    to obtain $u \in \SN(>_\gwpo)$, where the inequality of $\gg$ is justified as follows: 
    if case~\GWPORP{2b(i)} holds, 
    the second component decreases;
    if case~\GWPORP{2b(ii)} holds, 
    the third decreases. \qedhere
\end{enumerate}
\end{enumerate}
\end{proof}

\begin{lem}
\label{L:gwpo-wf}
Suppose that
$({\sqsupseteq_A}, {\sqsupset_A})$ and $({\sqsupseteq_B}, {\sqsupset_B})$ are well-founded stable order pairs, and
$A$ is simple with respect to $\pi$.
Then $>_\gwpo$ is well-founded.
\end{lem}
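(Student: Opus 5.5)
We show that every term belongs to $\SN(>_\gwpo)$, by structural induction on $t$ (equivalently, induction on $|t|$).

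If $t$ is a variable, suppose $t >_\gwpo u$. As in the opening of the proof of \autoref{L:buchholz-gwpo}, only case~\GWPORP{1} can apply, so $t \sqsupset_A u$. If $u$ contained $t$, simplicity of $A$ would not immediately give a contradiction, because simplicity only concerns positions in $\pi$. So we argue uniformly with stability instead. Let $\sigma = \{ t \mapsto u \}$. Applying $\sigma$ repeatedly to $t \sqsupset_A u$ yields
\[
u \sqsupset_A u\sigma \sqsupset_A u\sigma\sigma \sqsupset_A \cdots,
\]
which contradicts well-foundedness of $\sqsupset_A$. Hence no $u$ with $t >_\gwpo u$ exists, and $t \in \SN(>_\gwpo)$ vacuously.

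If $t = f(\seq[m]{t})$, the induction hypothesis gives $t_i \in \SN(>_\gwpo)$ for all $i$, in particular for all $i \in \pi(f)$. By the second basic property of $\SN(>_\gwpo)$, it suffices to show that every $u$ with $t >_\gwpo u$ lies in $\SN(>_\gwpo)$. This is exactly the conclusion of \autoref{L:buchholz-gwpo}, applied with $s := t$. Therefore $t \in \SN(>_\gwpo)$.

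Since every term lies in $\SN(>_\gwpo)$, there is no infinite descending $>_\gwpo$-chain, so $>_\gwpo$ is well-founded.

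The substantive work is all in \autoref{L:buchholz-gwpo}, whose hypotheses coincide with ours. The only points needing care here are the variable case and the bookkeeping that the assumptions of \autoref{L:buchholz-gwpo} are met. In the variable case, positions outside $\pi$ prevent an appeal to simplicity, which is why the argument goes through stability instead. For the bookkeeping, the lexicographic product in that lemma is well-founded by \autoref{P:lex-prod} together with \autoref{P:lex-ext}. This uses boundedness of the signature and the fact that $({\geqslant_\gwpo}, {>_\gwpo})$ restricted to $\SN(>_\gwpo)$ is a well-founded order pair, which follows from \autoref{L:gwpo-compatibility}.
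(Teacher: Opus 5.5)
Your proof is correct and is essentially the paper's own argument. You use structural induction on $t$. You rule out the variable case by iterating the substitution $\{t \mapsto u\}$ on $t \sqsupset_A u$ to contradict well-foundedness of $\sqsupset_A$, exactly as at the start of the proof of \autoref{L:buchholz-gwpo}. Otherwise you invoke \autoref{L:buchholz-gwpo} with $s := t$, using the induction hypothesis on the arguments.
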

\begin{proof}
We show that $t \in \SN(>_\gwpo)$ for all terms $t$ by structural induction on $t$.
It suffices to show that $u \in \SN(>_\gwpo)$ whenever $t >_\gwpo u$.
By the same argument used in the proof of \autoref{L:buchholz-gwpo},
we can assume that $t$ is not a variable, so let $t = f(\seq[n]{t})$.
By the induction hypothesis $t_i \in \SN(>_\gwpo)$ for all $i$,
so \autoref{L:buchholz-gwpo} applies.
\end{proof}

To conclude \autoref{T:gwpo-reduction-pair}, it remains to prove monotonicity.
Let $({\geqslant_\mgwpo}, {>_\mgwpo})$ be the monotone generalized weighted path order
induced by a partial status $\pi$ and triples $({\geqslant_A}, {\sqsupseteq_A}, {\sqsupset_A})$ and $({\geqslant_B}, {\sqsupseteq_B}, {\sqsupset_B})$
of relations on terms.

\begin{lem}
\label{L:gwpo-monotone}
Suppose that 
$({\geqslant_A}, {\sqsupseteq_A}, {\sqsupset_A})$ and $({\geqslant_B}, {\sqsupseteq_B}, {\sqsupset_B})$ are reduction triples, and
$({\sqsupseteq_A}, {\sqsupset_A})$ is simple with respect to $\pi$.
Then $\geqslant_\mgwpo$ is monotone.
If in addition $\pi$ is total, then $>_\mgwpo$ is also monotone.
\end{lem}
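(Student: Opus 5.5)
We fix terms $s, t$ and a context $C = f(\bar{u}, \hole, \bar{v})$ with the hole at the $i$-th argument of $f$. We show that $s \geqslant_\mgwpo t$ implies $C[s] \geqslant_\mgwpo C[t]$. When $\pi$ is total, we also show that $s >_\mgwpo t$ implies $C[s] >_\mgwpo C[t]$. Closure under arbitrary contexts then follows by induction on the context.

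Since $\geqslant_A$ and $\geqslant_B$ are rewrite preorders, $s \geqslant_A t$ and $s \geqslant_B t$ give $C[s] \geqslant_A C[t]$ and $C[s] \geqslant_B C[t]$. By harmony of $\sqsupseteq_A$ with respect to $\geqslant_A$, and of $\sqsupseteq_B$ with respect to $\geqslant_B$, we also get $C[s] \sqsupseteq_A C[t]$ and $C[s] \sqsupseteq_B C[t]$. So it only remains to establish $C[s] \geqslant_\gwpo C[t]$ (respectively $>_\gwpo$), and we do this with case~\GWPORP{2b(ii)}; its premise $C[s] \sqsupseteq_A C[t]$ of case~\GWPORP{2} has just been obtained.

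Case~\GWPORP{2b(ii)} has two kinds of premises.

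First, we need $C[s] >_\gwpo w_j$ for every argument $w_j$ of $C[t]$ with $j \in \pi(f)$.
\begin{itemize}
\item For $j \neq i$, the argument $w_j$ is also the $j$-th argument of $C[s]$, so \autoref{L:gwpo-refl-pi-subterm} (which uses simplicity of $A$ with respect to $\pi$) gives $C[s] >_\gwpo w_j$.
\item For $j = i$ with $i \in \pi(f)$, the same lemma gives $C[s] >_\gwpo s$, and combining with $s \geqslant_\gwpo t$ via compatibility (\autoref{L:gwpo-compatibility}) yields $C[s] >_\gwpo t$.
\end{itemize}

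Second, we need the lexicographic comparison $\pi(f)(\bar{u}, s, \bar{v}) \geqslant_\gwpo^\lex \pi(f)(\bar{u}, t, \bar{v})$, respectively $>_\gwpo^\lex$.
\begin{itemize}
\item If $i \notin \pi(f)$, the two filtered lists are identical, and the weak comparison holds by reflexivity of $\geqslant_\gwpo$ (\autoref{L:gwpo-refl-pi-subterm}).
\item If $i \in \pi(f)$, the lists agree outside position $i$, and there we have $s \geqslant_\gwpo t$, respectively $s >_\gwpo t$. Hence the weak, respectively strict, lexicographic comparison holds: the entries before position $i$ are equated by reflexivity, position $i$ decides the comparison in the strict case, and in the weak case the remainder is handled by reflexivity again.
\end{itemize}

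This gives the first claim, $C[s] \geqslant_\mgwpo C[t]$. For the second claim, totality of $\pi$ guarantees $i \in \pi(f)$, so the strict lexicographic decrease is available. Case~\GWPORP{2b(ii)} for $>_\gwpo$ then gives $C[s] >_\gwpo C[t]$, and together with the $\geqslant_A$ and $\geqslant_B$ facts above we obtain $C[s] >_\mgwpo C[t]$.

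The only delicate point is that the weak lexicographic extension must be reflexive on lists of equal length, which requires reflexivity of $\geqslant_\gwpo$; here we rely on $\sqsupseteq_A$ and $\sqsupseteq_B$ being reflexive, which holds because they are quasi-orders in reduction triples.
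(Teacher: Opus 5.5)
Your proof is correct and follows the paper's own route: as in \autoref{T:mspo}, monotonicity of $\geqslant_A$ and $\geqslant_B$ and harmony give $C[s] \geqslant_A C[t]$, $C[s] \geqslant_B C[t]$, $C[s] \sqsupseteq_A C[t]$ and $C[s] \sqsupseteq_B C[t]$. You then close with case~\GWPORP{2b(ii)}, using \autoref{L:gwpo-refl-pi-subterm} for the argument comparisons and for reflexivity. One small point: in the strict case you only have $s >_\gwpo t$, so $C[s] >_\gwpo t$ should come from transitivity of $>_\gwpo$ (\autoref{L:gwpo-compatibility}) or from \autoref{L:gwpo-normal}, not directly from $s \geqslant_\gwpo t$.
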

\begin{proof}
The claim can be shown using \autoref{L:gwpo-refl-pi-subterm}, following the line of \autoref{T:mspo}.
\end{proof}

In \autoref{tbl:experiment} it is claimed that $\GWPOtt(\LinPol, \LinPol^\sharp)$ generalizes $\WPOtt(\LinPol)$
and $\GWPOtt(\MaxPlus, \MaxPlus^\sharp)$ generalizes $\WPOtt(\MaxPlus)$, as classes of reduction pairs.
To show this, let $\FF$ be the finite signature, and $({\geqslant_\wpo}, {>_\wpo})$ the WPO induced by an $\FF$-algebra $\AA$ on $\NN$ (ordered as usual),
a precedence $\succeq$ and a partial status $\pi$.
The signature $\FF$ can be partitioned into equivalence classes
$\seq{F}$ induced by $\succeq$.  We can assume that they are topologically
sorted. So we have $i > j$ whenever $f \in F_i$, $g \in F_j$, and $f \succ g$.
Now we extend $\AA$ to the $(\FF \cup \FF^\sharp)$-algebra $\AA^\sharp$
with $f_\AA^\sharp(\seq[n]{x}) = k$ for each $f^{(n)} \in F_k$.
Then we have:

\begin{prop}
\label{P:GWPOAA-generalizes-WPOA}
The GWPO $({\geqslant_\mgwpo}, {>_\gwpo})$ induced by the partial status $\pi$
and the reduction triples $({\geqslant_\AA}, {\geqslant_\AA}, {>_\AA})$ and $({\geqslant_\AA}, {\geqslant_\AA^\sharp}, {>_\AA^\sharp})$
extends  $({\geqslant_\wpo}, {>_\wpo})$.
\end{prop}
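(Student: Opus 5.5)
We show the two inclusions ${\geqslant_\wpo} \subseteq {\geqslant_\mgwpo}$ and ${>_\wpo} \subseteq {>_\gwpo}$ simultaneously by induction on $|s| + |t|$. Throughout, write $A$ for $({\geqslant_\AA}, {>_\AA})$ and $B$ for $({\geqslant_\AA^\sharp}, {>_\AA^\sharp})$. For $s = f(\bar{s})$ and $t = g(\bar{t})$ with $f \in F_i$ and $g \in F_j$, the marked terms $f^\sharp(\bar{s})$ and $g^\sharp(\bar{t})$ are interpreted as the constants $i$ and $j$. Hence $f \succ g$ implies $i > j$ and so $s >_\AA^\sharp t$. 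Likewise $f \succeq g$ without $f \succ g$ means $f, g$ lie in the same class, so $i = j$ and $s \geqslant_\AA^\sharp t$. Thus the precedence comparisons in cases~\WPORP{2b(i)} and \WPORP{2b(ii)} translate directly into cases~\GWPORP{2b(i)} and \GWPORP{2b(ii)}. Since $A$ is literally $({\geqslant_\AA}, {>_\AA})$, case~\WPORP{1} becomes \GWPORP{1}, and the side condition $s \geqslant_\AA t$ of case~\WPORP{2} is the same. Cases~\WPORP{2a} and \WPORP{2c} map to \GWPORP{2a} and \GWPORP{2c}, with the induction hypothesis applied to the recursive comparisons. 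The lexicographic comparison needs the inclusion of the extensions, which follows from the induction hypothesis componentwise. This establishes ${\geqslant_\wpo} \subseteq {\geqslant_\gwpo}$ and ${>_\wpo} \subseteq {>_\gwpo}$.

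It remains to show ${\geqslant_\wpo} \subseteq {\geqslant_\AA}$, so that $s \geqslant_\wpo t$ lies in ${\geqslant_A} \cap {\geqslant_B} \cap {\geqslant_\gwpo}$; here both first components are $\geqslant_\AA$. We argue by inspecting the cases. Cases~\WPORP{1} and \WPORP{2} yield $s >_\AA t$ or $s \geqslant_\AA t$, and normality of $\AA$ (carrier $\NN$ ordered as usual) turns $>_\AA$ into $\geqslant_\AA$. The variable case~\WPORP{2c} holds by reflexivity. No induction is needed for this part.

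We should also confirm that the construction is legitimate, although the extension claim itself is purely syntactic. Condition~\GWPORP{2a} refers to $i \in \pi(f)$, exactly as \WPORP{2a} does, so the same partial status works for both. For a reduction pair we need $({\geqslant_\AA}, {\geqslant_\AA}, {>_\AA})$ to be simple w.r.t.\ $\pi$, which follows from weak $\pi$-simplicity of $\AA$. We also need $({\geqslant_\AA}, {\geqslant_\AA^\sharp}, {>_\AA^\sharp})$ to be a reduction triple, which holds by \autoref{P:algebra-redtriple}: the marked symbols are interpreted by constant functions, hence weakly monotone, and $\AA$ is assumed weakly monotone on $\FF$.

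The main obstacle is bookkeeping rather than a genuine difficulty. The WPO's $\geqslant_\wpo^\lex$ in case~\WPORP{2b(ii)} must be matched against $\geqslant_\gwpo^\lex$. The delicate point is that the lists $\pi(f)(\bar{s})$ and $\pi(g)(\bar{t})$ coincide on both sides. This is immediate because both orders use the same $\pi$, so the lexicographic monotonicity under the induction hypothesis closes the case.
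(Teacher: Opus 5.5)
Your proposal is correct and takes the same approach as the paper. The paper's proof also rests on two facts: $({\geqslant_\gwpo}, {>_\gwpo})$ extends $({\geqslant_\wpo}, {>_\wpo})$ by construction (your simultaneous induction, with the precedence encoded by the constants $f^\sharp_\AA = k$, spells this out), and normality of $\AA$ gives ${\geqslant_\wpo} \subseteq {\geqslant_\AA}$, so that $\geqslant_\wpo$ lies in ${\geqslant_\AA} \cap {\geqslant_\AA} \cap {\geqslant_\gwpo} = {\geqslant_\mgwpo}$.
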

\begin{proof}
By construction, $({\geqslant_\gwpo}, {>_\gwpo})$ extends $({\geqslant_\wpo}, {>_\wpo})$.
Moreover, since $\AA$ is normal, 
${\geqslant_\AA}$ extends ${\geqslant_\wpo}$.
The claim immediately follows from these facts.
\end{proof}

By slightly changing the argument above, we can show that  
$\GWPOtt(\LinPol, \MaxPlus^\sharp)$ generalizes $\WPOtt(\LinPol)$:
After taking the partition $\seq{F}$, we define a max/plus $(\FF \cup \FF^\sharp)$-algebra $\BB$ by
\begin{align*}
f_\BB(\seq[n]{x}) &= \max \{ 0, x_1, \ldots, x_n \}
&
f^\sharp_\BB(\seq[n]{x}) &= k
\end{align*}
for each $f^{(n)} \in F_k$. Then we have:
\begin{prop}
\label{P:GWPOAM-generalizes-WPOA}
Let $({\geqslant_\mgwpo}, {>_\gwpo})$ be the GWPO induced by the partial status $\pi$
and the reduction triples $({\geqslant_\AA}, {\geqslant_\AA}, {>_\AA})$ and $({\geqslant_\BB}, {\geqslant_\BB^\sharp}, {>_\BB^\sharp})$,
and let $s$ and $t$ be terms with $\Var(s) \supseteq \Var(t)$.
Then $s \geqslant_\wpo t$ implies $s \geqslant_\mgwpo t$,
and also $s >_\wpo t$ implies $s >_\gwpo t$.
\end{prop}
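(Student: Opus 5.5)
The plan is to separate the two ingredients of $\geqslant_\mgwpo = {\geqslant_\AA} \cap {\geqslant_\BB} \cap {\geqslant_\gwpo}$. The path-order part $\geqslant_\gwpo$ (and $>_\gwpo$) will be handled by a simultaneous induction mirroring the WPO definition. The two semantic conjuncts $\geqslant_\AA$ and $\geqslant_\BB$ will be handled directly. For the second claim only the path-order part matters, since the statement concerns $>_\gwpo$ and not $>_\mgwpo$.

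First, by simultaneous induction on $|s|+|t|$, I show that $s \geqslant_\wpo t$ implies $s \geqslant_\gwpo t$, and that $s >_\wpo t$ implies $s >_\gwpo t$. This step needs no variable condition. Here $({\sqsupseteq_A},{\sqsupset_A}) = ({\geqslant_\AA},{>_\AA})$ and $({\sqsupseteq_B},{\sqsupset_B}) = ({\geqslant_\BB^\sharp},{>_\BB^\sharp})$, and the cases of \autoref{D:wpo-reduction-pair} map one-to-one onto those of \autoref{D:gwpo-reduction-pair}.
\begin{itemize}
\item Cases~\WPORP{1} and the $\geqslant_\AA$ side conditions coincide literally with \GWPORP{1} and with the $\sqsupseteq_A$ conditions.
\item Case~\WPORP{2a} goes to \GWPORP{2a} by the induction hypothesis.
\item Case~\WPORP{2c} goes to \GWPORP{2c}.
\item In case~\WPORP{2b}, the recursive obligations $s >_\wpo t_j$ for $j \in \pi(g)$ become $s >_\gwpo t_j$ by the induction hypothesis.
\end{itemize}
The only real content is translating the precedence. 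Let $f \in F_k$ and $g \in F_l$. Then $f^\sharp_\BB(\ldots) = k$ and $g^\sharp_\BB(\ldots) = l$ are constant functions, so $f(\bar s) \sqsupset_B g(\bar t)$ holds exactly when $k > l$. By the topological sorting, $f \succ g$ gives $k > l$, so case~\WPORP{2b(i)} yields \GWPORP{2b(i)}. Also $f \succeq g$ with $f \not\succ g$ means $f,g$ lie in the same class, so $k = l$ and $f(\bar s) \sqsupseteq_B g(\bar t)$. The lexicographic comparisons then transfer by the induction hypothesis (applied componentwise to the lexicographic extension), giving \GWPORP{2b(ii)} from \WPORP{2b(ii)}. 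This already establishes the second claim.

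For the first claim, it remains to show $s \geqslant_\AA t$ and $s \geqslant_\BB t$. The former follows from ${\geqslant_\wpo} \subseteq {\geqslant_\AA}$, by the normality of the linear polynomial interpretation $\AA$, exactly as in the proof of \autoref{P:GWPOAA-generalizes-WPOA}. The latter is where the hypothesis $\Var(s) \supseteq \Var(t)$ enters. Every unmarked symbol is interpreted by $\max\{0,x_1,\ldots,x_n\}$, so by an easy induction on terms $[\alpha]_\BB(u) = \max(\{0\} \cup \{\alpha(x) \mid x \in \Var(u)\})$ for every $u \in \TT(\FF,\VV)$. This quantity is monotone in $\Var(u)$ with respect to inclusion, so $\Var(s) \supseteq \Var(t)$ gives $[\alpha]_\BB(s) \geqslant [\alpha]_\BB(t)$ for all $\alpha$, i.e.\ $s \geqslant_\BB t$.

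I expect the main point of care to be this $\geqslant_\BB$ step rather than the induction. It is the reason the variable condition appears: without it, $s \geqslant_\BB t$ may fail even though $s \geqslant_\wpo t$ holds, which can happen when $s \geqslant_\wpo t$ is derived via \WPORP{1} with extra variables in $t$. A minor check is that $\BB$ is indeed a max/plus interpretation of the admitted shape, with $d_0 = 0$, $c_i = 1$ and $d_i = 0$, and that the marked constants $k$ lie in $\NN$.
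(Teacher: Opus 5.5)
Your proposal is correct and takes essentially the same approach as the paper, whose proof is a three-line sketch. It asserts that $({\geqslant_\gwpo},{>_\gwpo})$ extends $({\geqslant_\wpo},{>_\wpo})$ by construction, that ${\geqslant_\AA}$ extends ${\geqslant_\wpo}$ by normality, and that $s \geqslant_\BB t$ follows from $\Var(s) \supseteq \Var(t)$. You fill in the details the paper leaves implicit: the case-by-case induction, with the precedence encoded by the constant marked interpretations, and the explicit formula $[\alpha]_\BB(u) = \max(\{0\} \cup \{\alpha(x) \mid x \in \Var(u)\})$.
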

\begin{proof}
By construction,
$({\geqslant_\gwpo}, {>_\gwpo})$ extends $({\geqslant_\wpo}, {>_\wpo})$,
and ${\geqslant_\AA}$ extends ${\geqslant_\wpo}$,
as in the proof of \autoref{P:GWPOAA-generalizes-WPOA}.
Moreover, $s \geqslant_\BB t$ 
follows from $\Var(s) \supseteq \Var(t)$.
So the claim holds.
\end{proof}

Since the dependency pair framework assumes TRSs with the variable conditions,
the requirement $\Var(s) \supseteq \Var(t)$ in the statement is satisfied for free.

\end{document}